\newtheorem{theorem}{Theorem}[section]
\newtheorem{assump}{Assumption}[section]
\newtheorem{lemma}{Lemma}[section]
\theoremstyle{definition}
\newtheorem{definition}{Definition}[section]
\newtheorem{example}{Example}[section]
\newtheorem{remark}{Remark}[section]
\def\equationautorefname~#1\null{(#1)\null} % for autoref to work as eqref
\def\Snospace~{\S{}}
\def\thm@space@setup{
  \thm@preskip=15pt \thm@postskip=15pt % controls spacing before and
                                % after theorems
}
\def\indep{\perp\!\!\!\perp}
\newcommand{\cov}{\text{Cov}}
\newcommand{\var}{\text{Var}}
\newcommand{\E}{{\bf E}}
\newcommand{\R}{\mathbb{R}}
\newcommand{\RR}{\bm{R}}
\newcommand{\prob}{\mathbf{P}}
\newcommand{\plimarrow}{\stackrel{p}\longrightarrow}
\newcommand{\dlimarrow}{\stackrel{d}\longrightarrow}
\newcommand{\X}{\mathcal{X}}
\newcommand{\zero}{\bm{0}}
\newcommand{\W}{\bm{W}}
\newcommand{\cc}{\bm{c}}
\newcommand*{\medcap}{\mathbin{\scalebox{1.5}{\ensuremath{\cap}}}}
\newcommand*{\medcup}{\mathbin{\scalebox{1.5}{\ensuremath{\cup}}}}
\providecommand{\abs}[1]{\lvert#1\rvert} 
\providecommand{\norm}[1]{\lVert#1\rVert}
\let\emptyset\varnothing
\renewcommand{\qed}{\hfill \mbox{\raggedright \rule{0.08in}{0.08in}}} % black QED box
\renewenvironment{proof}[1][\proofname]{{\noindent\sc#1. }}{\qed\vspace{15pt}} % "proof" small caps
\numberwithin{equation}{section} % include section number in equation numbering
\title{Inference in Models of Discrete Choice with Social Interactions Using Network Data\bf\sc \thanks{This research is supported by National Science Foundation grant SES-1755100.}}
\author{Michael P.\ Leung\thanks{We thank audiences at the Network Econometrics Juniors' Conference at Northwestern and the North American Winter Meetings. Department of Economics, University of Southern California. E-mail: leungm@usc.edu.}}
\begin{document}
\maketitle

\begin{abstract}

  {\sc Abstract.} This paper studies inference in models of discrete choice with social interactions when the data consists of a single large network. We provide theoretical justification for the use of spatial and network HAC variance estimators in applied work, the latter constructed by using network path distance in place of spatial distance. Toward this end, we prove new central limit theorems for network moments in a large class of social interactions models. The results are applicable to discrete games on networks and dynamic models where social interactions enter through lagged dependent variables. We illustrate our results in an empirical application and simulation study.
  
  \vspace{15pt}

  \noindent {\sc JEL Codes}: C22, C31, C57

  \noindent {\sc Keywords}: social networks, peer effects, empirical games, HAC estimator
 
\end{abstract}

\addcontentsline{toc}{part}{Main Paper}
\newpage

%----------------------------------------------------------------------
\section{Introduction}\label{sintro}
%---------------------------------------------------------------------- 
\onehalfspacing

Threshold models of social influence are the subject of a large theoretical literature in the social sciences \citep{ellison1993learning,granovetter1978threshold,jackson2010,morris2000contagion}. These are models of social interactions with binary decisions, which have been used to study, for example, product adoption \citep{godinho2014peer}, risky behavior \citep{bauman1996importance}, health choices \citep{christakis2004social}, protests \citep{gonzalez2017collective}, and voting \citep{bond201261}. The empirical content of these and related models has long been of interest in the econometric literature \citep{brock2001discrete,manski1993identification}. Much of this work has focused on static models with cross-sectional data, where social interactions only operate within groups and the number of groups is large. This paper instead studies inference in static and dynamic models when the data consists of a single large network. The motivation for using network data is that it can be used to more accurately model social interactions, which is often heterogeneous by nature. For example, students in different classrooms may interact, and even within classrooms, they may only interact with a subset of their classmates. %The increasing availability of network data has enabled researchers to better study heterogeneity in social interactions and subject to empirical inquiry a large body of theoretical work on the impact of network topology on the diffusion of behaviors, products, and information \citep{jackson2017economic}.

The goal of this paper is to provide theoretically justified inference procedures for models of discrete choice with social interactions. We consider spatial and network heteroskedasticity and autocorrelation consistent (HAC) variance estimators, which have seen increasing use in applied work for inference in the single network setting. Our main theoretical results are new central limit theorems for static and dynamic models of social interactions and conditions under which the HAC estimators are consistent, or possibly conservative, for the variance. These results hold under an asymptotic approximation that sends the size of a single large network to infinity.

The motivation for large-network asymptotics is that network data commonly consists of observations on a single network, as opposed to many independent groups. This is theoretically challenging because social interactions inherently induce autocorrelation between different network subunits. Inference thus requires a large-sample theory under which the amount of ``independent information'' in a network grows with the network size. This is analogous to limit theory in time series that sends the number of time periods to infinity rather than the number of time series.

There are few inference procedures presently available in the single-network setting. In the medical literature, network autocorrelation is often ignored in practice, and i.i.d.\ standard errors are commonly used \citep{lee2019network}. In economics, the strategy of clustering on subnetworks is quite common. By this we mean dividing a single network into many subnetworks using, say, geographic boundaries or a community detection algorithm \citep[e.g.][]{blondel2008fast}, and then clustering standard errors on the subnetworks. However, there is often a sizeable share of links that bridge clusters, which implies agents can interact across subnetworks. This renders implausible the assumption necessary for the validity of clustered standard errors, that subnetworks are independent. HAC estimators provide a viable alternative, as they do not rely on partitioning or independence of clusters. Instead, they account for correlation between each agent and the alters within a neighborhood of the agent, which can be thought of as defining agent-specific, overlapping clusters.

To provide a clearer sense of the practical import of our results, consider the study of \cite{conley2010learning} on social learning among pineapple farmers in Ghana. Their main specification is a logistic model of the probability that a farmer's agricultural inputs change in response to above- or below-expectation returns from his network neighbors' crop yields. For the purposes of constructing clustered standard errors, it is unclear how to reasonably divide the network using geographic boundaries, as there are only three villages in their data. The authors instead use a spatial HAC estimator, which is valid when the data is spatially autocorrelated. We provide conditions for validity of the estimator when the data is potentially {\em network} autocorrelated, for example when the errors of network neighbors are dependent.

The spatial HAC estimator requires data on agents' spatial positions, but this is not always available. For instance, rather than physical space, the underlying space may correspond to a latent social space, whereby socially similar agents are more likely to form connections. Given network data, we can instead define an alternate notion of ``distance,'' namely {\em path distance}, which is the shortest number of links it takes to travel from one agent to another in the network, and use this in place of spatial distance to construct a network HAC estimator. This idea has been used in practice \citep[e.g.][]{acemoglu2015state,eckles2016estimating}. We provide conditions under which the estimator is valid. 

Our results apply to static and dynamic models of social interactions. In dynamic models, an agent's action is a function of her network neighbors' lagged actions. Such models are useful when a short time series on a large network is available. We discuss how our results can be used for parametric and nonparametric inference in dynamic models. Static models instead allow actions to depend on neighbors' contemporaneous actions. They formally correspond to discrete games of complete information, which are useful when the data consists of a single network snapshot. In the existing literature, \cite{xl2015} propose a simulated method of moments estimator and \cite{li2016partial} characterize the identified set using subnetwork moments. We discuss how our results can be used to conduct inference using their procedures.

We illustrate our results in a simulation study and empirical application. The simulation evidence shows that the normal approximation works well, and inference using the HAC estimators properly controls size, whereas naive i.i.d.\ standard errors can be highly anti-conservative. In the empirical application, we reanalyze data from \cite{conley2010learning} and find that both spatial and network HAC estimators deliver similar standard errors. 

%It should be noted that there are still reasons to use the spatial HAC, if the data is available. First, the network HAC requires good measurement of linkages to obtain path distances. Second, consistency of the network HAC requires the network to be exogenous in static settings, whereas no such assumption is needed for the spatial HAC. Third, we provide results under which the spatial HAC can be used when the usual first-order stationarity condition fails to hold, whereas corresponding results for the network HAC appear difficult to establish. This is important for moment inequality models.

Finally, we would like to highlight several technical contributions. 
\begin{enumerate}
  \item To our knowledge, all existing results on HAC estimators require first-order stationarity conditions that essentially require moments to be centered at their conditional expectations. This holds in GMM-type settings. However, it is not satisfied in moment inequality models and often violated in nonparametric settings, for example nonparametric inference on the conditional choice probability. For the spatial HAC, we provide two positive results. The first is a general (albeit strong) sufficient condition under which first-order stationarity holds asymptotically for any network moments, whether centered or not. In this case, the spatial HAC is asymptotically conservative, which appears to be a new result. Second, our results justify use of the generalized spatial HAC proposed by \cite{leung2019normal}, which is consistent regardless of whether stationarity holds.

  \item To prove our CLTs, we first establish a general CLT under high-level ``stabilization'' conditions, which is a modification of a result due to \cite{leung2019normal}. The difference is we consider an increasing domain setting to obtain consistency of the spatial HAC, whereas they study a quasi-infill setting.\footnote{Specifically, under our asymptotics, positions are scaled to be sufficiently spread out in space, which is the usual increasing domain asymptotics used to analyze spatial HAC estimators. Under the asymptotics of \cite{leung2019normal}, positions are instead sampled from a bounded region, but preferences are scaled so that agents increasingly prefer nearby alters as the network size grows. \label{quasiinfill}} To derive primitive conditions for stabilization in our setting, we follow the methodology in \cite{leung2019normal}, drawing on results in branching process theory. While the conditions we derive are qualitatively similar to those used in \cite{leung2019normal} to establish CLTs for strategic models of network formation, our results are {\em not} special cases of theirs. They consider applications to network formation and network regressions, but their results do not apply to social interactions models.

  \item Our results on the validity of the network HAC require an exogenous network in the static model. We provide intuition suggested by our proof strategy on why endogeneity is difficult to allow in general.
\end{enumerate}

{\bf Related Literature.} To our knowledge, the only prior work on network HAC estimators is a recent paper by \cite{kojevnikov2019limit}. \cite{kojevnikov2019bootstrap} proposes novel bootstrap procedures for network data. Both papers assume the data satisfies a new conditional $\psi$-weak dependence notion modified from the time series literature but do not discuss applications to models of social interactions. We utilize a different notion of weak dependence proposed by \cite{leung2019normal} and apply branching process results to verify weak dependence in our applications. See \autoref{relit} for further comparison of our approach and theirs.

There is a large literature on spatial autoregressive models, of which the widely used linear-in-means model of social interactions is a special case. For relevant results on spatial HAC estimators, see for example \cite{conley1999gmm}, \cite{conley2007spatial}, \cite{jenish2016}, and \cite{kelejian2007hac}. In the context of network formation, \cite{boucher2017my} and \cite{leung2019normal} show that spatial HAC estimators can be used for valid inference.

A growing literature in econometrics studies models with strategic interactions and many agents. \cite{menzel2015inference} and \cite{shang2011two}, among others, consider inference on discrete games of complete information. Unlike our setting, there is no network structure; instead, social interactions enter payoffs through a vector of aggregate statistics, such as the average action of all agents. \cite{li2016partial} and \cite{xl2015} study settings with networked interactions and complete information. \cite{eraslan2017identification} and \cite{xu2018social} derive results for corresponding games of incomplete information. Large-network asymptotics are different in this setting because actions are i.i.d.\ conditional on observed characteristics and the network. \cite{kuersteiner2018dynamic} develop large-sample theory for a dynamic version of the linear-in-means model where social interactions enter through lagged dependent variables. \cite{he2018measuring} propose inference methods for nonparametric measures of diffusion in dynamic models with binary outcomes.

{\bf Outline.} The next two sections respectively introduce dynamic and static models of social interactions with binary outcomes. Then \autoref{svar} presents the HAC estimators and conditions for their asymptotic validity. We state formal conditions for central limit theorems in \autoref{ssm} for the static model and \autoref{sdm} for the dynamic model. Next, \autoref{seill} discusses results from a simulation study and empirical application using data from \cite{conley2010learning}. Finally, \autoref{sconclude} concludes.

{\bf Notation.} If $f$ is a density function or random vector, let $\text{supp}(f)$ be its support. Given $n$ i.i.d.\ vectors $X_1, \dots, X_n$ and $H \subseteq \{1, \dots, n\}$, let $X_H$ be the submatrix $(X_i\colon i \in H)$ and $X_{-i} = (X_1, \dots, X_{i-1}, X_{i+1}, \dots, X_n)$. We use boldface letters to denote the entire collection $\bm{X} = (X_i)_{i=1}^n$ (as opposed to $X$ to avoid confusing this with a generic draw). For a symmetric $n\times n$ matrix $\bm{A}$, we let $A_{ij}$ denote the $ij$th entry, $A_i$ denote the $i$th column, and $A_{-i}$ denote $\bm{A}$ with the $i$th row and column deleted. Also for $H$ defined previously, let $A_H$ be the submatrix $(A_{ij}\colon i,j \in H)$ and $A_{H,i} = (A_{ij}\colon j \in H)$. For all of these submatrices and subvectors, rows and columns are ordered in the same way as the original matrices and vectors.

Throughout this paper we will only be concerned with undirected networks with no self-links. Accordingly, we represent a network $\bm{A}$ on $n$ agents as an $n\times n$ adjacency matrix with zeros on the diagonals. The $ij$th entry $A_{ij}$, which we call the {\em potential link} between $i$ and $j$, takes values in $\{0,1\}$. The {\em degree} of an agent $i$ in $\bm{A}$ is $\sum_j A_{ij}$. A {\em path} in $\bm{A}$ from agent $i$ to $j$ is a sequence of distinct agents starting with $i$ and ending with $j$ such that for each consecutive $k,k'$ in this sequence, $A_{kk'}=1$. The number of agents in this sequence minus one is the {\em length} of this path. The {\em path distance} between two agents is the length of the shortest path that connects them, assuming one exists; if one does not, then it is defined as $\infty$. The {\em $K$-neighborhood} of an agent $i$ in $\bm{A}$, denoted $\mathcal{N}_{\bm{A}}(i,K)$, is the set of all agents $j$ for which $\ell_{\bm{A}}(i,j) \leq K$. Note that this includes $i$. Finally the {\em component} of an agent $i$ in $\bm{A}$ is the set of all agents $j$ for which $\ell_{\bm{A}}(i,j) < \infty$. 

%----------------------------------------------------------------------
\section{Dynamic Model of Social Interactions}\label{sdynamic}
%----------------------------------------------------------------------

A large literature in microeconomics and computer science studies dynamic models of social influence \citep{ellison1993learning,kempe2003maximizing,montanari2010spread,morris2000contagion}. In these models, a subset of agents are initially seeded as adopters (choosing action 1), and then agents myopically best respond in subsequent periods to the number of neighbors who are adopters. For example, if the share of period $t-1$ adopters in $i$'s neighborhood exceeds some threshold $\tau_i$, then agent $i$ might adopt in period $t$. This section introduces an econometric version of the model with observed and unobserved heterogeneity. Models of this sort are widely used in applied work in economics, marketing, and network science.\footnote{E.g.\ \cite{ameri2017structural}, \cite{banerjee2013diffusion}, \cite{christakis2007spread}, \cite{iyengar2011opinion}, \cite{katona2011network}, and \cite{park2018social}.} They are useful when the econometrician observes a short time series of a large network.

%------------------------------------
\subsection{Model}
%------------------------------------ 

The econometrician observes a set of agents $\mathcal{N}_n = \{1, \dots, n\}$ connected through a time-invariant network $\bm{A}$. Agents interact over a small number of time periods $t = 0, 1, \dots, T$ for $0<T<\infty$. Associate each agent $i$ and period $t$ with a binary outcome $Y_i^t$ and type $\tau_i^t$. We decompose $\tau_i^t = (X_i^t, \alpha_i, \varepsilon_i^t)$, where $X_i^t$ is a vector of observed covariates, $\alpha_i$ a fixed effect, and $\varepsilon_i^t$ an idiosyncratic error. We assume $X_i \equiv (X_i^0, \dots, X_i^T, \alpha_i)$ is i.i.d.\ across agents, and independent of $\varepsilon_i^t$ for any $t$, while the errors are i.i.d.\ across agents and time periods. 

Outcomes are realized according to myopic best-response dynamics: for all agents $i$ and periods $t = 1, \dots, T$,
\begin{equation}
  Y_i^t = \bm{1}\left\{ U(S_i^t, \tau_i^t) > 0 \right\}, \label{dymodel}
\end{equation}

\noindent where $U(\cdot)$ is a real-valued function representing net utility. The first term $S_i^t$ is a finite-dimensional vector of statistics given by
\begin{equation*}
  S_i^t \equiv S(Y^{<t}, \tau_i^{\leq t}, \tau_{-i}^{\leq t}, A_i, A_{-i})
\end{equation*}

\noindent for a vector-valued function $S(\cdot)$ with $Y^{<t} = (Y_j^s\colon j \in \mathcal{N}_n, s < t)$, $\tau_i^{\leq t} = (\tau_i^s\colon s \leq t)$, and $\tau_{-i}^{\leq t}$ the collection of $\tau_i^{\leq t}$'s excluding $i$ (see the definition of the $-i$ subscript in the introduction). Hence, $S_i^t$ can capture peer effects through lagged outcomes and types of other agents in the network. As a function of $\tau^{\leq t}_{-i}$, $S_i^t$ can also depend on the unobserved subvector of neighbors' types, which can generate network autocorrelation in unobservables.

\begin{example}\label{e1}
  A typical payoff specification is the linear in parameters model
  \begin{equation*}
    U(S_i^t, \tau_i^t) = (X_i^t)'\beta_1 + \beta_2 \frac{\sum_j A_{ij} Y_j^{t-1}}{\sum_j A_{ij}} + \alpha_i + \nu_i^t.
  \end{equation*}

  \noindent Social interactions enter through the average outcome of neighbors in the previous period. The idiosyncratic term $\nu_i^t$ might simply equal $\varepsilon_i^t$, or it might be autocorrelated. For example, the errors may be jointly normal across agents with nonzero covariances for linked pairs. Alternatively, we could have
  \begin{equation*}
    \nu_i^t = \frac{\sum_j A_{ij} \varepsilon_j^t}{\sum_j A_{ij}} + \varepsilon_i^t,
  \end{equation*}

  \noindent where the first term captures exogenous peer effects in the errors and induces contemporaneous network autocorrelation between the $\nu_i^t$'s. In the latter case, $S_i^t$ is a two-dimensional vector consisting of this term and the term multiplying $\beta_2$.
\end{example}

\begin{example}\label{e2}
  A large literature dating back to at least \cite{granovetter1978threshold} studies threshold models of behavior, where for each agent $i$, $Y_i^t = 1$ if and only if the share of neighbors choosing action one in the previous period exceeds a threshold $\xi_i^t$ \citep{jackson2010,schelling2006micromotives}. This is captured in our framework by setting
  \begin{equation*}
    U(S_i^t, \tau_i^t) = \frac{\sum_{j\neq i} A_{ij} Y_j^{t-1}}{\sum_{j\neq i} A_{ij}} - \xi_i^t.
  \end{equation*}

  \noindent If $\xi_i^t$ only depends on own type $\tau_i^t$, then in this model, $S_i^t$ is a scalar corresponding to the average lagged outcome on the right-hand side. In practice, heterogeneity in the threshold is often of interest, since this determines the extent of diffusion, so a typical exercise would be to parametrize it as a function of type and estimate the parameters.
\end{example}

Both of these examples restrict the dependence of $S_i^t$ on the network. We next impose this restriction more generally. Let $\mathcal{N}_{\bm{A}}^-(i,M) = \mathcal{N}_{\bm{A}}(i,M)\backslash\{i\}$, which is $i$'s $M$-neighborhood, excluding $i$ herself. Let $Y_H^{<t} = (Y_j^s\colon j \in H, s < t)$, and recall the notation for submatrices in the introduction. 

\begin{assump}[Local Interactions]\label{S2}
  There exists $M \in \mathbb{N}$ such that, for all $n\in\mathbb{N}$, $i \in \mathcal{N}_n$, and $t = 1, \dots, T$,
  \begin{equation*}
    S_i^t = S(Y_{\mathcal{N}(i)}^{<t}, \tau_i^{\leq t}, \tau_{\mathcal{N}(i)}^{\leq t}, A_{\mathcal{N}(i)}) \quad\text{for}\quad \mathcal{N}(i) \equiv \mathcal{N}_{\bm{A}}^-(i,M).
  \end{equation*}
\end{assump}

\noindent This states that $S(\cdot)$ is only a function of its arguments through the $M$-neighbors of $i$, which is clearly satisfied in the previous examples for $M=1$.

Model \autoref{dymodel} governs the evolution of the process from period 1 onward. It remains to specify a model for the initial condition. This model need not be known in practice to use our results, but we will need to impose some (nonparametric) restrictions for weak dependence in order to establish a CLT. If this process is observed shortly after its inception, it is reasonable to draw $Y_i^0$ from a single-agent discrete choice model, e.g.\ $Y_i^0 = \bm{1}\{U(\zero, \tau_i^0) > 0\}$, where we zero out the regressors that are functions of lagged dependent variables because there is no previous time period. Alternatively, the initial condition might be viewed as the long-run outcome of a dynamic process. This can be reasonably approximated by a static model of social interactions, which is discussed in \autoref{smodel}. Since this nests the single-agent discrete choice model, we will assume a general static model for the initial condition, whose formal statement is postponed to \autoref{sdm}. 

%------------------------------------
\subsection{Network Moments}\label{sdynetmoms}
%------------------------------------

We next define the class of dynamic network moments, objects for which we seek to prove a CLT and construct variance estimators. Let $Y_i = (Y_i^t)_{t=0}^T$, $\tau_i$ be defined in the analogous way, and $Y_{-i}, \tau_{-i}$ be defined according to the notation in the introduction. We consider moments that are averages of {\em agent statistics} $\psi_i$, namely
\begin{equation*}
  \frac{1}{n} \sum_{i=1}^n \psi_i \quad\text{for}\quad \psi_i \equiv \psi(Y_i, Y_{-i}, A_i, A_{-i}, \tau_i, \tau_{-i}), 
\end{equation*}

\noindent where $\psi(\cdot)$ is $\R^m$-valued. The main technical contributions of this paper are twofold. In \autoref{sdm}, we provide conditions under which, as $n\rightarrow\infty$,
\begin{equation}
  \frac{1}{\sqrt{n}} \sum_{i=1}^n \bm{\Sigma}_n^{-1/2} \left( \psi_i - \E[\psi_i] \right) \dlimarrow \mathcal{N}(\zero, \bm{I}_m) \quad\text{and}\quad \liminf_{n\rightarrow\infty} \lambda_\text{min}(\bm{\Sigma}_n) > 0, \label{clt}
\end{equation}

\noindent where $\bm{\Sigma}_n$ is the variance and $\bm{I}_m$ the $m\times m$ identity matrix. In \autoref{svar}, we show that spatial and network HAC estimators are valid estimators for $\bm{\Sigma}_n$. The formal sequence of models along which we take these limits is defined in \autoref{lna}.

We consider agent statistics $\psi(\cdot)$ satisfying the following {\em $K$-locality} condition. Recall the notation introduced prior to \autoref{S2}.

\begin{assump}[$K$-Locality]\label{klocal}
  There exists $K \in \mathbb{N}$ such that, for any $n$ and $i\in\mathcal{N}_n$,
  \begin{equation*}
    \psi_i = \psi(Y_i, Y_{\mathcal{N}(i)}, A_{\mathcal{N}(i),i}, A_{\mathcal{N}(i)}, \tau_i, \tau_{\mathcal{N}(i)}) \quad\text{for}\quad \mathcal{N}(i) \equiv \mathcal{N}_{\bm{A}}^-(i,K).
  \end{equation*}
\end{assump}

\noindent This states that, for some positive constant $K$, the agent statistic of $i$ is only a function of its arguments through the agents in $i$'s $K$-neighborhood. We first walk through two basic illustrative examples and then discuss network moments useful for inference on social interactions. 

\begin{example}\label{ccp}
  Consider $\psi_i = Y_i^t$ for some specified time period $t$. Then the network moment corresponds to the average outcome or empirical choice probability at time $t$. A related example is the average outcome over all observed time periods, where $\psi_i = T^{-1} \sum_{t=0}^T Y_i^t$. Both of these satisfy \autoref{klocal} for $K=0$ because they are only functions of the first argument of $\psi(\cdot)$.
\end{example}

\begin{example}\label{subnetmoms}
  \autoref{klocal} also encompasses subnetwork moments, such as the number of dyads (linked pairs) such that both agents choose action 1 in period $t$. This corresponds to $\psi_i = \sum_j A_{ij}Y_i^tY_j^t$, which satisfies \autoref{klocal} for $K=1$, since it is only a function of $i$'s neighbors.\footnote{Note that the network moment is then $n^{-1} \sum_{i,j} A_{ij}Y_i^tY_j^t$. We always scale by $n^{-1}$ because our assumptions will ensure a sparse network, which implies that $\sum_j A_{ij} = O_p(1)$ for any $i$. See \autoref{rsparse}. \label{sparsescale}} Another example is the number of intransitive triads (triplets with only two links) such that all agents choose action 0. This corresponds (up to scale) to $\psi_i = \sum_j \sum_k A_{ij}A_{jk}(1-A_{ik}) (1-Y_i^t)(1-Y_j^t)(1-Y_k^t)$, which satisfies \autoref{klocal} for $K=2$, since $k$ may be a 2-neighbor of $i$.
\end{example}

\noindent We next provide examples of network moments useful for parametric and nonparametric inference on social interactions.

%------------------------
\subsubsection{Method of Moments}\label{sdysmm}
%------------------------

In practice, a common exercise is to parametrize payoffs $U(\cdot)$ and the conditional distributions of $\alpha_i$ and $\varepsilon_i^t$ given $X_i^t$. For example, consider a two-period version of \autoref{e1} with $\alpha_i = 0$ for all $i$ and $\nu_i^t \sim \mathcal{N}(0,1)$. We can estimate the model using a probit regression of $Y_i^1$ on $(X_i^1, \sum_j A_{ij} Y_j^0 / \sum_j A_{ij})$. However, the usual probit likelihood may not be a true likelihood because the errors $\nu_i^t$ are potentially network autocorrelated. Nonetheless, we can treat this as a pseudo-likelihood, as in \cite{poirier1988probit}.

To obtain a normal limit for the pseudo-MLE estimator, we need the average of the scores to obey a CLT, in addition to the usual regularity conditions. Here we take $\psi_i$ to be the score for agent $i$. This satisfies \autoref{klocal} for $K=1$ because the regressors are a function of $i$'s 1-neighbors. The information matrix equality does not hold, since this is a pseudo-likelihood, but a consistent estimate of the asymptotic variance can obtained using the sandwich formula with the variance of the scores estimated using either our proposed spatial or network HAC.

Since the model is fully specified up to a vector of parameters $\theta$, we can also apply simulated method of moments using, for example, the moments introduced in Examples \ref{ccp} and \ref{subnetmoms}. This consists of computing their empirical analogs from data and matching them to simulated analogs. For inference, our results can be applied to construct GMM standard errors that account for network autocorrelation using the sandwich formula.

%------------------------
\subsubsection{Nonparametric Inference}\label{snonparam}
%------------------------

We next consider nonparametric inference on the average structural function (ASF), a parameter which provides a nonparametric measure of social interactions. Partition $(S_i^t, \tau_i^t) = (\tilde X_i^t, \tilde \alpha_i, \tilde\varepsilon_i^t)$, where $\tilde X_i^t$ contains all observed quantities, $\tilde\alpha_i$ all time-invariant unobservables, and $\tilde\varepsilon_i^t$ all time-varying unobservables that are independent across time. Thus we rewrite $U(S_i^t, \tau_i^t) = U(\tilde X_i^t, \tilde \alpha_i, \tilde\varepsilon_i^t)$. Using this representation, the ASF is given by
\begin{equation*}
  \mu(x) = \E[U(x, \tilde\alpha_i, \tilde\varepsilon_i^t)],
\end{equation*}

\noindent where $x$ is a vector of constants. Then if $\tilde X_i^t$ includes some function of the lagged outcome of neighbors, $\mu(x) - \mu(x')$ provides a nonparametric measure of peer effects. 

The ASF is typically not point identified in this context \citep{chamberlain2010binary}, so we use the partial identification approach of \cite{chernozhukov2013average}. Let $\tilde X_i = (\tilde X_i^t)_{t=0}^T$, whose support is required to be discrete. Define
\begin{align*}
  &\X^t(x) = \left\{ \tilde X_i\colon \tilde X_i^t=x, \tilde X_i^r \neq x \,\,\forall r<t, r = 1, \dots, T \right\},  \\
  &\X^\neq(x) = \left\{ \tilde X_i\colon \tilde X_i^t\neq x \,\,\forall t = 1, \dots, T \right\}. 
\end{align*}

\noindent Note that $\X^t(x)$ is the set of values of $\tilde X_i$ for which the $t$th component first equals $x$ at time $t$. Then $\left\{ \X^\neq(x), \X^0(x), \dots, \X^T(x) \right\}$ partitions the support of $\tilde X_i$. Let 
\begin{equation*}
  \hat Y_i(x) = \sum_{t=1}^T \bm{1}\{\tilde X_i \in \X^t(x)\} Y_i^t, \quad P_i(x) = \bm{1}\{\tilde X_i \in \X^\neq(x)\}, 
\end{equation*}

\noindent $\mu_\ell(x) = \E[\hat Y_i(x)]$, and $\mu_u(x) = \mu_\ell(x) + \E[P_i(x)]$. \cite{chernozhukov2013average} show that
\begin{equation*}
  \mu_\ell(x) \leq \mu(x) \leq \mu_u(x)
\end{equation*}

\noindent and that the bounds collapse to a point at a rate exponential in $T$.
% to condition on X_i^t=x, see clt_nf/calculations/14--.jpg

To use these bounds in practice, we construct sample analog estimators 
\begin{equation*}
  \hat\mu_\ell(x) = \frac{1}{n} \sum_{i=1}^n \hat Y_i(x) \quad\text{and}\quad \hat\mu_u(x) = \hat\mu_\ell(x) + \frac{1}{n} \sum_{i=1}^n P_i(x).
\end{equation*}

\noindent Given a joint CLT for these estimators, we can apply the method of \cite{woutersen2006simple} asymptotic version of the GMS test \citep[][p.\ 135]{andrews2010inference} to construct confidence intervals for the ASF using a HAC estimate of the variance.

As a first step to apply our CLT, we need to verify that the analog estimators fall within the set of moments satisfying \autoref{klocal}.  The agent statistic is $\psi_i = (\hat Y_i(x), P_i(x))$. By definition, $\hat Y_i(x)$ is a function of the outcome time series through $\{(Y_i^t, \tilde X_i^t)\}_{t=0}^T$. By \autoref{S2}, $\tilde X_i^t$ (in particular the subvector corresponding to observed components of $S_i^t$) is a function of outcomes only through the 1-neighborhood of agent $i$. Hence, $\hat Y_i(x)$ satisfies \autoref{klocal} for $K=1$. The same argument holds for $P_i(x)$.

%------------------------------------
\subsection{Network Formation}\label{snf}
%------------------------------------

This section introduces a (nonparametric) stochastic model of network formation. The model need not be known in practice, but it is used to prove our asymptotic results. Readers interested in inference methods can skip ahead to \autoref{svar}.

We assume each agent $i$ is endowed with a {\em network type} $(\rho_i, \mu_i)$, which is a time-invariant subvector of $\tau_i^t$. Additionally, each agent pair $(i,j)$ is endowed with a time-invariant pair-specific random utility shock $\zeta_{ij}$, i.i.d.\ across pairs and independent of all other model primitives. For all $i,j \in \mathcal{N}_n$ with $i\neq j$, potential links in $\bm{A}$ satisfy
\begin{equation}
  A_{ij} = \bm{1}\left\{ V(\norm{\rho_i-\rho_j}, \mu_i, \mu_j, \zeta_{ij}) > 0 \right\}, \label{modelnf}
\end{equation}

\noindent where $\norm{\cdot}$ is a norm on $\R^d$ and $V(\cdot)$ a real-valued latent-index function, which we will later assume is eventually decreasing in its first argument (\autoref{sparsity}).\footnote{Since $A_{ij}$ is an undirected network, we assume $V(\norm{\rho_i-\rho_j}, \mu_i, \mu_j, \zeta_{ij}) = V(\norm{\rho_i-\rho_j}, \mu_j, \mu_i, \zeta_{ji})$.} In our applications, $V(\cdot)$ is an unknown function because the usual object of interest is some feature of $U(\cdot)$. Likewise, network types may be unobserved by the econometrician. 

The second and third arguments of $V(\cdot)$ contain agent-specific characteristics that may influence link formation. If $V$ is monotonic in any of these elements, then this captures what \cite{graham2017econometric} refers to as {\em degree heterogeneity}, where agents with more attractive characteristics $\mu_i$ are likely to have more connections (high degree). Alternatively, $V(\cdot)$ could depend on $\mu_i,\mu_j$ through $\bm{1}\{\mu_i = \mu_j\}$, which captures {\em homophily} in $\mu_i$. Homophily refers to the widely observed tendency for similar individuals to associate. 

The first argument of $V(\cdot)$ requires homophily in {\em positions} $\rho_i$. This aspect of the model lends it a spatial dimension, which is essential for showing validity of the spatial HAC estimator. Under additional weak dependence conditions stated in \autoref{sdm}, spatial homophily implies that the correlation between $\psi_i$ and $\psi_j$ is smaller when $\norm{\rho_i-\rho_j}$ is larger. 

Note that the space on which agents are located need not correspond to physical space. They may be homophilous in other social dimensions, in which case we can define distance in terms of their social, rather than geographic, characteristics. In the case where positions are unobserved by the econometrician, the model is a nonlinear version of a latent space model commonly used in social network analysis \citep{hrh2002,breza_using_2017}. In these models, agents are positioned on a latent ``social space,'' and connections form at a higher rate among socially similar agents. In this case, clearly a spatial HAC cannot be computed, since positions are unobserved, but our results show that the network HAC can be used instead.

\begin{remark}
  The setup allows for correlation between $\mu_i$ and $\alpha_i$ (for example), in which case the network is endogenous. If agents are homophilous in $\mu_i$, then this captures {\em unobserved homophily}, a well-known hindrance to identifying social interactions \citep{shalizi2011homophily}. For example, identification of peer effects in product adoption is confounded by peers with similar product preferences $\alpha_i$ forming connections at a higher rate.
\end{remark}

\begin{remark}
  The model does not allow for strategic interactions in link formation, meaning that $V(\cdot)$ is not a function of $\bm{A}$.  Most papers in the econometric literature that address the problem of network endogeneity use similar models with no strategic interactions, viewing them heuristically as reduced-form approximations \citep[e.g.][]{auerbach2018identification,hsieh2016social,johnsson2019estimation}. \cite{leung2019normal} prove a CLT for static models of network formation with strategic interactions. In principle, our CLTs can be generalized to their model, but we do not pursue this generalization because it mildly complicates the proofs in predictable ways without providing any new intuition.
\end{remark}

%------------------------------------
\subsection{Large-Network Asymptotics}\label{lna}
%------------------------------------

This section formalizes the sequence of models along which we take limits in our asymptotic results. Recall that $\rho_i$, defined in the previous subsection, is a time-invariant subvector of $\tau_i^t$ for any $t$ and $\tau_i$ is the time series of $i$'s type $(\tau_i^t)_{t=0}^T$. Take the same time series but omit $\rho_i$ from each $\tau_i^t$, and call the result $Z_i$. Recall from the introduction that we define $\bm{\rho} = (\rho_i)_{i=1}^n$, and similarly define $\bm{Z}$ and $\bm{\zeta}$. Then the model is fully characterized by the tuple 
\begin{equation}
  (U, \lambda, V, \bm{\rho}, \bm{Z}, \bm{\zeta}), \label{finitemodel}
\end{equation}

\noindent where $U(\cdot)$ is the payoff function in \autoref{dymodel} and $V(\cdot)$ the latent index in the network formation model \autoref{modelnf}. The new term $\lambda(\cdot)$ concerns the initial conditions model; being a static model with strategic interactions, initial outcomes will be determined by a selection mechanism $\lambda(\cdot)$, whose formal definition we postpone to \autoref{ssm}.

To establish the validity of spatial HAC estimators, we need positions $\rho_i$ to be sufficiently removed from each other asymptotically, which is the usual assumption of {\em increasing domain asymptotics} that is standard in spatial econometrics. 

\begin{assump}[Increasing Domain]\label{as}
  Let $\tilde\rho_1, \tilde\rho_1, \dots$ be i.i.d., continuously distributed vectors in $\R^d$ with density $f$ bounded away from zero and infinity. Let $\omega_n = (n/\kappa)^{1/d}$ for some universal constant $\kappa>0$. Define $\rho_i = \omega_n\tilde\rho_i$ for all $i$.  The observed outcome time series is realized according to the $n$th model of the sequence $\{(U, \lambda, V, \bm{\rho}, \bm{Z}, \bm{\zeta})\}_{n \in \mathbb{N}}$.
\end{assump}

\noindent Under this sequence, we derive a CLT \autoref{clt} in \autoref{sdm} and asymptotic properties of the HAC estimators in \autoref{svar}.

\begin{remark}
  In spatial econometrics, it is typically assumed that positions $\rho_i$ are non-random and separated by a universal minimum distance \citep[e.g.][]{conley1999gmm,jenish_spatial_2012}. \autoref{as} is a slightly different model that follows the spatial graphs literature \citep{penrose2003} and is also sometimes used in spatial statistics \citep{lahiri2006resampling}. Both assumptions have the same implication, that in the limit, any ball of fixed radius centered at an agent's position contains only an asymptotically finite number of other agents' positions. If $\omega_n$ diverges faster (slower) than the stated rate, then this ball will be asymptotically empty (save for the central agent); if it diverges slower, then it will contain an infinite number of agents in the limit. 
\end{remark}

%----------------------------------------------------------------------
\section{Static Model of Social Interactions}\label{smodel}
%----------------------------------------------------------------------

This section introduces a static analog of the model studied in \autoref{sdynamic}, which is useful when the econometrician observes a snapshot of a large network. As before, we let $\mathcal{N}_n = \{1, \dots, n\}$ denote the set of agents, which are connected through an undirected network $\bm{A}$ realized according to the model in \autoref{snf}. Each agent $i$ is endowed with a type $\tau_i = (X_i, \varepsilon_i)$, i.i.d.\ across agents, where $X_i$ is observed by the econometrician and $\varepsilon_i$ unobserved. Agents take a binary action, and payoffs may depend on the actions taken by others in the network. Agent $i$'s observed action satisfies
\begin{equation}
  Y_i = \bm{1}\left\{ U(S_i, \tau_i) > 0 \right\} \label{model}
\end{equation}

\noindent where the net payoff function $U(\cdot)$ depends on $i$'s type and a finite-dimensional vector of statistics
\begin{equation*}
  S_i \equiv S(Y_{-i}, \tau_i, \tau_{-i}, A_i, A_{-i}).
\end{equation*}

\noindent This captures strategic interactions through its dependence on $Y_{-i}$. Model \autoref{model} states that observed actions are realized according to a pure-strategy Nash equilibrium; agents choose the action that maximizes payoffs given the actions of others in the network. 

\begin{example}\label{e3}
  \cite{bramoulle2009identification} study a network analog of the standard linear-in-means model \citep{manski1993identification}, which is a model with continuous outcomes. Outcomes depend on the average outcome of peers (endogenous peer effects) and the average characteristics of peers (exogenous peer effects). The analogous specification for discrete choice is
  \begin{equation*}
    U(S_i, \tau_i) = S_i'\theta_1 + X_i'\theta_{-1} + \varepsilon_i, \quad S_i = \left( \frac{\sum_{j\neq i} A_{ij} Y_j}{\sum_{j\neq i} A_{ij}}, \frac{\sum_{j\neq i} A_{ij} X_j}{\sum_{j\neq i} A_{ij}} \right) 
  \end{equation*}

  \noindent \citep{brock2001discrete,xu2018social}. The first component of $S_i$ is the average action taken by neighbors. We can also consider type-weighted versions of the average action or nonlinear functions of $Y_{-i}$ and $A_i$ such as the minimum or maximum action \citep{hoxby2005taking}.
\end{example}

Model \autoref{model} is entirely analogous to the dynamic model \autoref{dymodel} but with contemporaneous rather than lagged dependent variables entering $S_i$. Hence, we can also consider analogs of Examples \ref{e1} and \ref{e2} in the static setting. In all of these examples, strategic interactions only operate through network neighbors of the ego $i$. We next impose this restriction more generally. Recall that $\mathcal{N}_{\bm{A}}^-(i,1)$ is the $1$-neighborhood of $i$, excluding $i$ herself.

\begin{assump}[Local Interactions]\label{S}
  For all $n$ and $i \in \mathcal{N}_n$,
  \begin{equation*}
    S_i = S(Y_{\mathcal{N}(i)}, \tau_i, \tau_{\mathcal{N}(i)}, A_i) \quad\text{for}\quad \mathcal{N}(i) \equiv \mathcal{N}_{\bm{A}}^-(i,1).
  \end{equation*}
\end{assump}

\noindent This states that $S_i$ is only a function of its arguments through agents connected to $i$. 

Unlike the typical linear-in-means model for continuous outcomes, the nonlinearity of the discrete choice model typically gives rise to multiple equilibria \citep{brock2001discrete}. This is obvious when $n=2$ and the two agents connected, since this is a $2 \times 2$ game of complete information. With multiple equilibria, the econometric model is incomplete in the sense that a reduced form does not (yet) exist \citep{tamer2003incomplete}. To complete the model, we follow the empirical games literature and introduce a selection mechanism. Whether this is required to be known to the econometrician depends on the inference procedure, as discussed in the next subsection. Recall that $\bm{\tau} = (\tau_i)_{i=1}^n$. Let $\mathcal{E}(\bm{A},\bm{\tau}) \subseteq \{0,1\}^n$ be the set of Nash equilibria, i.e.\ the set of binary outcome vectors such that, for each $\bm{Y} \in \mathcal{E}(\bm{A},\bm{\tau})$, the $i$th component $Y_i$ satisfies \autoref{model} for each $i$. 

\begin{assump}[Selection Mechanism]\label{select}
  (a) A Nash equilibrium exists for any network size $n$, i.e.\ $|\mathcal{E}(\bm{A},\bm{\tau})| \geq 1$. (b) There exists a {\em selection mechanism} $\lambda(\bm{A},\bm{\tau})$ with range $\mathcal{E}(\bm{A},\bm{\tau})$ such that $\bm{Y} = \lambda(\bm{A},\bm{\tau})$ for any $n$.
\end{assump}

\noindent Part (b) defines the selection mechanism as a reduced form mapping from the model primitives to the observed equilibrium outcome. Note that if $U(\cdot)$ does not vary in $S_i$, then $\abs{\mathcal{E}(\bm{A},\bm{\tau})} = 1$, since this is simply a discrete-choice model with no strategic interactions, in which case $\lambda(\cdot)$ is trivial.  In economic terms, $\lambda(\cdot)$ represents the process by which agents coordinate on a Nash equilibrium to play in the observed data. A simple example is a function that picks an element of $\mathcal{E}(\bm{A},\bm{\tau})$ uniformly at random. This is an econometric model used by \cite{bjorn1984simultaneous} and \cite{soetevent2007discrete}.\footnote{To see how this is formally represented in our notation, without loss of generality let the first component of $\varepsilon_i$, for each $i$, be a random variable $\gamma_i$ uniformly distributed on $[0,1]$ that is payoff-irrelevant. That is, it does not enter $U(\cdot)$. Partition the unit interval into $\lvert \mathcal{E}(\bm{A},\bm{\tau}) \rvert$ equally sized intervals and arbitrarily order the elements of $\mathcal{E}(\bm{A},\bm{\tau})$. Then let $\lambda(\cdot)$ be the function that selects the $k$th equilibrium if $\gamma_i$ (for any arbitrarily chosen $i$, say $i=1$) lies in the $k$th interval of the partition.} Another example is myopic best-response dynamics.
  
\begin{example}\label{mrbd}
  As discussed in \autoref{sdynamic}, the microeconomic literature on dynamic models of social influence predominantly considers models in which agents react myopically to the decisions of their peers in the previous period. Formally, fix $\bm{A},\bm{\tau}$ and an arbitrary $\bm{Y}^0 \in \{0,1\}^n$. Generate $Y^1 \in \{0,1\}^n$ by setting
  \begin{equation*}
    Y_i^1 = \bm{1}\left\{ U(S(Y_{-i}^0,\tau_i,\tau_{-i},A_i,A_{-i}), \tau_i) > 0 \right\}
  \end{equation*}

  \noindent for each $i \in \mathcal{N}_n$, and likewise generate $Y^2, Y^3, \dots$. This process is often referred to as {\em myopic best-response dynamics}. It is well known that if $\bm{Y}^0 = (1, \dots, 1)$, then under a game of strategic complements, this process converges to the ``largest'' Nash equilibrium $\bm{Y}^*$ in a precise sense \citep[e.g.][p.\ 1279--80]{jia2008happens}. In that case, this process constitutes a mapping $\lambda(\cdot)$ from the primitives $(\bm{A},\bm{\tau})$ to a unique outcome $\bm{Y}^* \in \mathcal{E}(\bm{A},\bm{\tau})$.
\end{example}

%------------------------------------
\subsection{Network Moments}\label{snetmoms}
%------------------------------------

Similar to \autoref{sdynetmoms}, we consider network moments that are averages of $\R^m$-valued {\em agent statistics} of the form
\begin{equation*}
  \frac{1}{n} \sum_{i=1}^n \psi_i \quad\text{for}\quad \psi_i \equiv \psi(Y_i, Y_{-i}, A_i, A_{-i}, \tau_i, \tau_{-i}). 
\end{equation*}

\noindent Our objective is to establish a CLT \autoref{clt} and validity of the HAC estimators under increasing domain asymptotics, as in the dynamic case (\autoref{as}). We assume $\psi(\cdot)$ satisfies \autoref{klocal}, now under the new notation of the static setting. This condition simply states that $\psi_i$ is only a function of its arguments through $i$'s $K$-neighborhood in $\bm{A}$, which is satisfied for a wide variety of network moments useful for inference on social interactions. The remainder of this subsection provides illustrative examples.

\begin{example}
  A trivial example of a network moment is $n^{-1}\sum_{i=1}^n Y_i$, which is the empirical choice probability and satisfies \autoref{klocal} for $K=0$. A weighted version of the choice probability moment is used in the simulated method of moments estimator of \cite{xl2015} discussed in \autoref{ssmm}. Another example is subnetwork moments
  \begin{equation*}
    \frac{1}{n} \sum_{i_1=1}^n \cdots \sum_{i_\ell=1}^n \bm{1}\{Y_H = y_H, A_H=a_H\}, 
  \end{equation*}

  \noindent where $H = \{i_1, \dots, i_\ell\}$, $y_H \in \{0,1\}^\ell$, $a_H$ is a connected network on $H$, and $A_H$ is the subnetwork of $\bm{A}$ on $H$. Its expectation is proportional to the probability that agents in $H$ form subnetwork $a_H$ and choose outcomes $y_H$. As discussed below, this satisfies $K$-locality and can be used to construct moment inequalities for structural inference on $U(\cdot)$.
\end{example}

%------------------------
\subsubsection{Method of Moments}\label{ssmm}
%------------------------

The next two subsections discuss applications to parametric inference on $U(\cdot)$ that utilize moments falling within the scope of \autoref{klocal}. \cite{xl2015} consider a linear latent index model 
\begin{equation*}
  U(S_i, \tau_i) = \theta_1 \frac{\sum_{j\neq i} A_{ij} Y_j}{\sum_{j\neq i} A_{ij}} + X_i'\theta_{-1} + \varepsilon_i,
\end{equation*}

\noindent where $\theta_1$ captures endogenous peer effects. They assume $\theta_1 \geq 0$, which implies that observed outcomes constitute a Nash equilibrium of a supermodular game. It is well known that the set of equilibria forms a complete lattice and therefore that there exists a ``largest'' equilibrium. For estimation, \cite{xl2015} assume the following.

\begin{assump}\label{netexog}
  (a) Realized outcomes correspond to the ``largest'' Nash equilibrium. (b) $(\bm{X},\bm{A}) \indep \bm{\varepsilon}$. (c) The distribution of $\varepsilon_1$ given $X_1$ is known up to $\theta$.
\end{assump}

\noindent Part (a) is equivalent to assuming that $\lambda(\cdot)$ is given as in \autoref{mrbd}. By assuming a particular selection mechanism, this enables simulation of the model moments. Part (b) assumes an exogenous network.

The authors propose to estimate $\theta$ using simulated method of moments (SMM) based on the conditional choice probability, namely $n^{-1}\sum_i \psi_i$ for
\begin{equation*}
  \psi_i = (Y_i - \prob_\theta(Y_i=1 \mid \bm{X}, \bm{A})) h(X_i,X_{-i},A_i,A_{-i}). 
\end{equation*}

\noindent Here, $\prob_\theta$ refers to the probability under the model with structural parameters given by $\theta = (\theta_1, \theta_{-1})$, and $h(\cdot)$ is a vector-valued instrument function that converts to unconditional moments. In practice, the conditional probability can be simulated due to \autoref{netexog}. For inference, they propose to use the parametric bootstrap.

For the SMM estimator to have a normal limit, a key property to verify is a central limit theorem for the moments themselves (this is assumption (d) of the authors' Theorem B.1). \cite{xl2015} establish a CLT for the moments by invoking a CLT for near-epoch dependent data due to \cite{jenish_spatial_2012}. Using our CLT instead requires significantly weaker restrictions on the network formation model.\footnote{In particular, we do not require their Assumption 7, which states that if two agents are linked, then their spatial distance must fall below some universal constant.} Additionally, our CLT applies to other network moments that can be used for estimation, such as subnetwork moments. Finally, our results enable the use of HAC variance estimators for inference. This consists of taking the usual GMM sandwich formula but replacing the sample variance of the moments in the middle of the sandwich with one of our HAC estimators.

For our CLT to apply to these moments, we need to verify $K$-locality (\autoref{klocal}). This holds, for example, under the restriction
\begin{equation*}
  h(X_i,X_{-i},A_i,A_{-i}) = h(X_i,X_{\mathcal{N}_{\bm{A}}^-(i,1)},A_i),
\end{equation*}

\noindent in which case \autoref{klocal} is satisfied for $K=1$. The restriction states that the instruments only depend on the observed types and links involving neighbors of $i$. In their simulation study, \cite{xl2015} choose $h(X_i,X_{-i},A_i,A_{-i}) = (1, X_i, \sum_{j\neq i} A_{ij} X_j / \sum_{j\neq i} A_{ij})$, which satisfies this condition. This choice is likely motivated by the intuition in \cite{bramoulle2009identification} that the average covariates of peers is an instrument for the endogenous peer effect.

%------------------------
\subsubsection{Set Inference}\label{ssubinf}
%------------------------

Without imposing restrictions on equilibrium selection, $\theta$ is typically partially identified, and the identified set can be characterized in terms of moment inequalities \citep{beresteanu2011sharp,galichon2011set}. For discrete games on networks, \cite{li2016partial} propose moment inequalities that provide a conservative characterization of the identified set. These are based on subnetwork moments of the form $\prob(Y_H = y_H \mid X_H, A_H)$ for $H\subseteq \mathcal{N}_n$, which gives the conditional joint distribution of outcomes of agents in $H$ given the subnetwork $A_H$.

The main identification result of \cite{li2016partial} constructs a function $G(\cdot)$ such that
\begin{equation}
  \prob(Y_H = y_H \mid X_H, A_H) - G(y_H, X_H, A_H; \theta_0) \leq 0 \label{mi}
\end{equation}

\noindent for any $y_H \in \{0,1\}^{\lvert H\rvert}$ when $\theta_0$ is the true parameter. The bound $G(\cdot)$ can be computed via simulation, while the conditional probability can be estimated nonparametrically, provided a law of large numbers holds.

While \cite{li2016partial} focus on estimation, we next discuss how their bounds may be used for inference on $\theta_0$. Since their bounds come from conditional moment inequalities, we can apply a procedure due to \cite{andrews2013inference}. To correct for autocorrelation, we need to use the asymptotic version of their test, consisting of steps 1 and 2 in their section 9, with a valid variance estimator in place of the sample variance. Asymptotic validity requires a CLT for the network moments, and our results can be applied for this purpose.

The procedure of \cite{andrews2013inference} first converts to unconditional moment inequalities by multiplying moments with real-valued instrument functions $h(X_H, A_H)$, which can be done without loss of generality by using a large enough set of functions (see their section 3.3). Define
\begin{equation*}
  m(y_H; \theta) = \E\left[\left(\bm{1}\{Y_H = y_H\} - G(y_H, X_H, A_H; \theta)\right) h(X_H, A_H)\right].
\end{equation*}

\noindent Then by \autoref{mi}, we have $m(y_H; \theta) \leq 0$ when $\theta$ is the true parameter.

Consider the ``sample analog'' of these moments
\begin{equation*}
  \hat m(y_H; \theta) = \frac{1}{n} \sum_{i_1=1}^n \cdots \sum_{i_\ell=1}^n \left( \bm{1}\{Y_H = y_H\} - G(y_H, X_H, A_H; \theta) \right) h(X_H, A_H),
\end{equation*}

\noindent where $H = \{i_1, \dots, i_\ell\}$; see \autoref{sparsescale} for discussion of the $n^{-1}$ scaling. Its expectation is proportional to $m(y_H; \theta)$, where the constant of proportionality does not depend on $n$ or $\theta$. Since $A_H$ has finite support, we can consider instrument functions of the form 
\begin{equation*}
  h(X_H, A_H) = \tilde h(X_H) \bm{1}\{A_H = a_H\},
\end{equation*}

\noindent where $a_H$ is a network on $H$. 

The sample moments can be rewritten as $n^{-1} \sum_i \psi_i$ for
\begin{equation*}
  \psi_i = \sum_{i_2=1}^n \cdots \sum_{i_\ell=1}^n \left( \bm{1}\{Y_H = y_H\} - G(y_H, X_H, A_H; \theta) \right) h_k(X_H, A_H), 
\end{equation*}

\noindent where $H = \{i, i_2, \dots, i_\ell\}$. To apply our CLT, we need to verify \autoref{klocal}. If we use any subnetwork moment(s) in which $a_H$ is a connected network, then the assumption holds for $K=\ell$.%\footnote{The set of all connected subnetworks already provides more than enough moments to use in practice, and there does not seem to be a reason to prefer disconnected to connected subnetworks in general. Nonetheless, if the econometrician chooses to use moments in which the subnetwork $a_H$ is disconnected, our CLT cannot be applied. In this case, we require a generalization of the theorem of \cite{leung2019normal} to stabilizing U-statistics, which is beyond the scope of this paper.}

%----------------------------------------------------------------------
\section{Variance Estimators}\label{svar}
%----------------------------------------------------------------------

We postpone formal conditions for CLTs for the static and dynamic models to \autoref{ssm} and \autoref{sdm}, respectively. This section is concerned with the properties of spatial and network HAC estimators for the variance $\bm{\Sigma}_n$ in \eqref{clt}. The notation used is applicable to either the static or dynamic model. The first two subsections discuss the validity of these estimators under first-order stationarity conditions. To our knowledge, such conditions are always used to establish consistency of HAC estimators in time series and spatial econometrics. However, they are typically only satisfied in GMM-type settings and not, for example, in moment inequality models or many nonparametric models. The third subsection is concerned with inference when first-order stationarity fails.

%------------------------------------
\subsection{Spatial HAC}\label{sstation}
%------------------------------------

The standard spatial HAC estimator is given by
\begin{equation*}
  \hat{\bm{\Sigma}}_\rho = \frac{1}{n} \sum_{i=1}^n \sum_{j=1}^n (\psi_i-\bar{\psi}) (\psi_j-\bar{\psi})' K( (\rho_i-\rho_j) / h_n),
\end{equation*}

\noindent where $h_n \in \R_+$ is the bandwidth, $K(\cdot)$ a real-valued kernel function with domain $\R^d$, and $\bar{\psi} = n^{-1}\sum_{i=1}^n \psi_i$, the vector of network moments of interest. In cases where $\E[\bar{\psi}]$ is a known constant, for example zero in GMM models, we can replace $\bar{\psi}$ in the HAC estimator with this constant. In practice, it is common to use a product kernel, which has the form
\begin{equation*}
  K((\rho_i-\rho_j) / h_n) = \prod_{k=1}^d \tilde K( (\rho_{ik}-\rho_{jk}/ h_n) ),
\end{equation*}

\noindent where $d$ is the dimension of $\rho_i$, $\rho_{ik}$ is the $k$th component of $\rho_i$, and $\tilde K$ is a real-valued kernel function with domain $\R$. A typical choice for $\tilde K(\cdot)$ is the Bartlett kernel $\tilde K(x) = (1 - \abs{x}) \bm{1}\{\abs{x} \leq 1\}$. For other examples, see e.g.\ \cite{andrews1991heteroskedasticity}, p.\ 821. 

In practice, it is best to show robustness of the standard errors across a reasonable range of bandwidths, as in \cite{conley2010learning}. For example, one can recompute the HAC estimator for several bandwidth values in a neighborhood of some reference value. This value may be obtained from domain knowledge, say by examining the distribution of distances. Alternatively, one can calibrate $h_n$ by Monte Carlo simulation with a parametric submodel. Data-driven choice of $h_n$ is an important but difficult topic for future research and will not be addressed in this paper.

We next study the theoretical properties of $\hat{\bm{\Sigma}}_\rho$. To our knowledge, all existing results on the validity of HAC estimators, whether for time series, spatial processes, or network settings, require first-order stationarity \citep{andrews1991heteroskedasticity,conley1999gmm,kojevnikov2019limit}. In our setting, this corresponds to the following.

\begin{assump}[First-Order Stationarity]\label{station}
  $\E[\psi_1] = \E[\psi_1 \mid \rho_1]$.\footnote{In spatial econometrics, the usual stationarity condition is that $\E[\psi_i \mid \bm{\rho}]$ does not depend on $i$. This is analogous to ours except conditional on the set of positions, since they are treated as fixed \citep[e.g.][]{conley1999gmm,jenish2016}. Our asymptotic results are unconditional, which is why the assumption is slightly different.}
\end{assump}

\noindent This says that the absolute value of an agent's position is mean independent of her agent statistic. Its technical purpose is that the long-run covariance aggregates over covariances of agents $i$ and $j$ conditional on their positions. This covariance is a function of $\E[\psi_i \mid \rho_i]$, which needs to be consistently estimated. Under \autoref{station}, this is possible using the sample mean $\bar{\psi}$.

\begin{example}\label{fosex}
  \autoref{station} typically holds in GMM-type settings. For example, consider the moments in \autoref{ssmm}. We have 
  \begin{equation}
    \E[\psi_i \mid \rho_i] = \E\big[Y_i h(X_i,X_{-i},A_i,A_{-i}) \mid \rho_i\big] - \E\big[ \E_\theta[Y_i \mid \bm{X},\bm{A}] h(X_i,X_{-i},A_i,A_{-i}) \mid \rho_i], \label{vj023becx2}
  \end{equation}

  \noindent where $\E_\theta[\cdot]$ is the expectation under the model with structural parameters given by $\theta$ and $\E[\cdot]$ is the expectation under the true model. Also,
  \begin{align*}
    \E\big[Y_i h(X_i,X_{-i},A_i,A_{-i}) \mid \rho_i\big] 
    &= \E\big[ \E[Y_i h(X_i,X_{-i},A_i,A_{-i}) \mid \rho_i, \bm{X}, \bm{A}] \mid \rho_i\big] \\
    &= \E\big[\E[Y_i \mid \rho_i, \bm{X}, \bm{A}] h(X_i,X_{-i},A_i,A_{-i}) \mid \rho_i\big] \\
    &= \E\big[\E[Y_i \mid \bm{X}, \bm{A}] h(X_i,X_{-i},A_i,A_{-i}) \mid \rho_i\big],
  \end{align*}

  \noindent where the last line holds if positions only enter payoffs through the network $\bm{A}$; \autoref{netexog}(b) suffices for this. If $\theta$ is the true parameter, then $\E[Y_i \mid \bm{X}, \bm{A}] = \E_\theta[Y_i \mid \bm{X}, \bm{A}]$, so \eqref{vj023becx2} $=\zero$ and \autoref{station} holds. A similar argument can be used to verify the assumption for the application in \autoref{sdysmm}.
\end{example}

\begin{assump}[HAC Kernel]\label{kernel}
  $K(0)=1$; $K(x) = 0$ for all $x \in \R^d$ such that $\norm{x} > 1$; $\int \abs{K(x)} \,\text{d}x < \infty$; $K$ is continuous at zero; and $K^* \equiv \sup_x \abs{K(x)} < \infty$.
\end{assump}

\noindent This assumption imposes standard restrictions on the kernel. Now, let $\lambda_\text{min}(\bm{M})$ denote the smallest eigenvalue of the matrix $\bm{M}$ and $\norm{\bm{M}}$ the max norm of $\bm{M}$.

\begin{theorem}\label{spatialhac}
  Suppose $h_n = O(n^{1/(3d)})$ and $h_n \rightarrow \infty$. Under Assumptions \ref{station} and \ref{kernel} and the conditions required for a CLT (Theorems \ref{staticclt} and \ref{dynamicclt} in the static and dynamic cases, respectively), $\norm{\hat{\bm{\Sigma}}_\rho - \tilde{\bm{\Sigma}}_n} \plimarrow 0$ for some sequence of matrices $\{\tilde{\bm{\Sigma}}_n\}_{n\in\mathbb{N}}$ such that $\lambda_\text{min}(\bm{\Sigma}_n - \tilde{\bm{\Sigma}}_n) \geq 0$ for all $n$. If there exists a constant vector $c$ such that $\E[\psi_1]=c$ for any $n$, then $\bm{\Sigma}_n = \tilde{\bm{\Sigma}}_n$ for all $n$.
\end{theorem}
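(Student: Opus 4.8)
The plan is to pass from $\hat{\bm{\Sigma}}_\rho$ to an infeasible, correctly centered estimator, prove a law of large numbers for the latter using the same weak-dependence machinery that drives the central limit theorem in \autoref{clt}, and then identify the limit and compare it to $\bm{\Sigma}_n$ through a spectral representation of the kernel. Throughout, let $\mu_n \equiv \E[\psi_i]$, which by \autoref{station} and the exchangeability of agents is common across $i$ (though it may vary with $n$), and let $K_{ij} \equiv K((\rho_i-\rho_j)/h_n)$.

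First I would replace $\bar{\psi}$ by $\mu_n$. Writing $\psi_i - \bar{\psi} = (\psi_i - \mu_n) - (\bar{\psi} - \mu_n)$ and expanding the double sum yields the infeasible estimator $\hat{\bm{\Sigma}}_\rho^\ast = n^{-1}\sum_{i,j}(\psi_i-\mu_n)(\psi_j-\mu_n)' K_{ij}$ plus three remainder terms, each carrying a factor of $\bar{\psi}-\mu_n$. Because $K$ has support in the unit ball (\autoref{kernel}) and positions satisfy increasing-domain asymptotics (\autoref{as}), each kernel row sum $\sum_j K_{ij}$ is $O(h_n^d)$, while the CLT gives $\bar{\psi}-\mu_n = O_p(n^{-1/2})$. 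The cross terms are therefore $O_p(h_n^d n^{-1/2})$ and the quadratic remainder is $O_p(h_n^d n^{-1})$; the bandwidth restriction $h_n = O(n^{1/(3d)})$, which makes $h_n^d = O(n^{1/3})$, forces all of them to be $o_p(1)$. This is the only place the rate $n^{1/(3d)}$ is used, so it suffices to analyze $\hat{\bm{\Sigma}}_\rho^\ast$.

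I expect the concentration step to be the main obstacle. Grouping terms by their first index, write $\hat{\bm{\Sigma}}_\rho^\ast = n^{-1}\sum_i \bar{X}_i$ with $\bar{X}_i \equiv (\psi_i-\mu_n)\sum_j (\psi_j-\mu_n)' K_{ij}$; by \autoref{klocal} and the kernel's bounded support, each $\bar{X}_i$ is determined by the primitives of agents lying within path-distance $K$ and spatial distance $h_n$ of $i$. I would then bound $\var(\hat{\bm{\Sigma}}_\rho^\ast)$ entrywise by $n^{-2}\sum_{i,k}\norm{\var(\bar{X}_i,\bar{X}_k)}$. The weak-dependence (stabilization) conditions underlying the central limit theorems of \autoref{ssm} and \autoref{sdm} make this covariance negligible unless $k$ lies within $O(h_n)$ of $i$, giving $O(h_n^d)$ relevant $k$ for each $i$; summability of correlations over a ball of $O(h_n^d)$ agents makes the inner sum in $\bar{X}_i$ of order $h_n^{d/2}$, so each retained covariance is $O(h_n^d)$. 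This gives $\var(\hat{\bm{\Sigma}}_\rho^\ast) = O(n^{-1}h_n^{2d}) = O(n^{-1/3}) \to 0$, hence $\hat{\bm{\Sigma}}_\rho^\ast - \E[\hat{\bm{\Sigma}}_\rho^\ast] \plimarrow 0$. The difficulty is that the branching-process dependence bounds must be propagated to \emph{products} of agent statistics rather than to the statistics themselves, which requires controlling fourth-order joint dependence.

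Finally I would set $\tilde{\bm{\Sigma}}_n = \E[\hat{\bm{\Sigma}}_\rho^\ast] = n^{-1}\sum_{i,j}\E[(\psi_i-\mu_n)(\psi_j-\mu_n)' K_{ij}]$, so that the previous two steps give $\norm{\hat{\bm{\Sigma}}_\rho - \tilde{\bm{\Sigma}}_n} \plimarrow 0$, and compare it to $\bm{\Sigma}_n = n^{-1}\sum_{i,j}\E[(\psi_i-\mu_n)(\psi_j-\mu_n)']$. Writing $K$ through its Bochner spectral measure $\Phi$, which is a probability measure because $K(0)=1$, expresses $\tilde{\bm{\Sigma}}_n$ as a $\Phi$-average over frequencies $\omega$ of $n^{-1}\E[C_\omega C_\omega' + S_\omega S_\omega']$, where $C_\omega = \sum_i \cos(\omega'\rho_i/h_n)(\psi_i-\mu_n)$ and $S_\omega = \sum_i \sin(\omega'\rho_i/h_n)(\psi_i-\mu_n)$; the zero-frequency term reproduces $\bm{\Sigma}_n = n^{-1}\E[C_0 C_0']$. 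Thus $\bm{\Sigma}_n - \tilde{\bm{\Sigma}}_n$ is a $\Phi$-average of $n^{-1}\E[C_0 C_0' - C_\omega C_\omega' - S_\omega S_\omega']$, and I would argue each such matrix is positive semidefinite because the positive, distance-decreasing spatial dependence built into the network-formation model (\autoref{snf}, under the weak-dependence conditions of \autoref{sdm}) makes the unmodulated aggregate the most variable; this is the essential point where the model structure enters and where I would expect to spend the most care after the concentration step. For the final claim, when $\E[\psi_1]=c$ is a fixed constant for all $n$ the centered covariances are summable uniformly in $n$ and $K_{ij}\to K(0)=1$ pointwise as $h_n\to\infty$, so dominated convergence sends the truncation remainder $n^{-1}\sum_{i,j}\E[(\psi_i-\mu_n)(\psi_j-\mu_n)'(1-K_{ij})]$ to zero; one may then take the tracking sequence to be $\tilde{\bm{\Sigma}}_n = \bm{\Sigma}_n$ itself, upgrading conservativeness to equality, i.e.\ consistency.
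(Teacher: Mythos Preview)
Your first two steps---recentering from $\bar\psi$ to $\mu_n$ and then a mean--variance argument for the infeasible estimator---are reasonable in outline and roughly parallel what any HAC consistency proof needs. The concentration step would require real work, since the stabilization bounds that drive the CLT control pairwise dependence of $\psi_i$'s, not fourth-order joint dependence of products at growing spatial range $h_n$; the paper does not redo this from scratch but instead invokes the machinery already developed for the generalized spatial HAC in \cite{leung2019normal}.

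The genuine gap is the Bochner step. First, Bochner's theorem requires $K$ to be positive definite, which \autoref{kernel} does not impose (the truncated kernel satisfies \autoref{kernel} but is not positive definite). More importantly, even for a positive-definite kernel your representation gives
\[
\bm{\Sigma}_n - \tilde{\bm{\Sigma}}_n \;=\; \int \big(\E[Z_0 Z_0^{\,\prime}] - \E[Z_\omega Z_\omega^{\,\ast}]\big)\,d\Phi(\omega),
\]
and for this to be PSD you need $\E[Z_0 Z_0^{\,\prime}] \succeq \E[Z_\omega Z_\omega^{\,\ast}]$ for $\Phi$-a.e.\ $\omega$: the matrix spectral density of $(\psi_i-\mu_n)$ must be maximized at frequency zero. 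That is a substantive positive-dependence condition on the $\psi_i$'s. Nothing in the network-formation model of \autoref{snf} or in the weak-dependence conditions of \autoref{ssm}--\autoref{sdm} forces $\cov(\psi_i,\psi_j)$ to be nonnegative or monotone in distance, so the ``unmodulated aggregate is most variable'' heuristic is unsupported by the stated assumptions.

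The paper sidesteps this entirely. It relates $\hat{\bm{\Sigma}}_\rho$ to the \emph{generalized} spatial HAC $\hat\sigma^2 - \hat\alpha\hat\alpha'$ of \cite{leung2019normal}, which is already known to be consistent for $\bm{\Sigma}_n$ under the same stabilization conditions. \autoref{station} lets $\bar\psi$ play the role of the nonparametric $\hat\theta(\cdot)$ in that estimator (and allows the weaker bandwidth rate $h_n^d=O(n^{1/3})$), after which a short algebraic identity gives
\[
\hat\sigma^2 - \hat\alpha\hat\alpha' \;=\; \hat{\bm{\Sigma}}_\rho - \hat\delta\hat\delta',
\qquad
\hat\delta \;=\; \frac{1}{n}\sum_{i}\sum_{j\neq i}(\psi_i-\bar\psi)\,K_{ij}.
\]
Since $\hat\delta\hat\delta'$ is rank one and PSD regardless of the sign or shape of any covariance, the ordering between $\tilde{\bm{\Sigma}}_n$ and $\bm{\Sigma}_n$ is automatic, and for the second claim one just checks that the limit $\delta_n$ vanishes when $\E[\psi_1]$ is constant in $n$. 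In short, the PSD conclusion in the paper is an algebraic by-product of comparing two estimators of the \emph{same} target, not a spectral statement about the data; your route would need an additional positive-dependence hypothesis that the paper does not assume.
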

\begin{proof}
  See \autoref{sstationpf} for the formal proof and a proof sketch.
\end{proof}

\noindent The second conclusion of the theorem states that $\hat{\bm{\Sigma}}_\rho$ is consistent if $\E[\psi_1]$ does not vary with $n$. This is satisfied if $\psi(\cdot)$ is a moment function used for GMM estimation, since then $\E[\psi_1]=\zero$. More generally, if the moment is centered at the right conditional expectation, usually both \autoref{station} and $\E[\psi_1]=c$ hold. The applications in \autoref{sdysmm} and \autoref{ssmm} are centered. 

When moments are not centered, typically $\E[\psi_1]$ varies with $n$, and the first conclusion of \autoref{spatialhac} shows that $\hat{\bm{\Sigma}}_\rho$ is asymptotically conservative in this case, which appears to be a new result. However, this also requires first-order stationarity to hold, which is often violated when moments are uncentered. In moment inequality models (\autoref{snonparam} and \autoref{ssubinf}) and often in nonparametric models, moments are typically not centered, so the results here are inapplicable. We discuss alternatives in \autoref{snostat}.

%------------------------------------
\subsection{Network HAC}
%------------------------------------

Next we study the network HAC estimator for $\bm{\Sigma}_n$ used by \cite{acemoglu2015state} and \cite{eckles2016estimating}. This essentially consists of taking the usual spatial HAC estimator and replacing spatial distance with path distance. Recall from \autoref{sintro} the definition of path distance $\ell_{\bm{A}}(i,j)$. Let $h_n \in \R_+$ be a bandwidth and $K\colon \R \rightarrow \R$ a kernel function. The network HAC estimator is given by 
\begin{equation*}
  \hat{\bm{\Sigma}}_{\bm{A}} = \frac{1}{n} \sum_{i=1}^n \sum_{j=1}^n (\psi_i-\bar{\psi}) (\psi_j-\bar{\psi})' K( \ell_{\bm{A}}(i,j) / h_n).
\end{equation*}

\noindent For consistency, the bandwidth $h_n$ will be required to grow essentially at a logarithmic rate. In practice, it is best to either show robustness of the standard errors across a reasonable range of bandwidths in some neighborhood of a reference value, as in \cite{conley1999gmm}. This value may be obtained from domain knowledge, say by examining typical path lengths in the network. Another reference value is simply setting $h_n = \log n$, which is used in our simulation study. Finally, one can calibrate $h_n$ by Monte Carlo simulation with a parametric submodel. 

The basic idea behind our proof for the validity of the network HAC estimator is as follows. In our model, agents are homophilous in positions, which implies that agents that are close in path distance should also be close in spatial distance. This suggests we can use the former to approximate the latter and expect $\hat{\bm{\Sigma}}_{\bm{A}} \approx \hat{\bm{\Sigma}}_\rho$.

\begin{remark}[Related Literature]\label{relit}
  A recent paper by \cite{kojevnikov2019limit} prove consistency of $\hat{\bm{\Sigma}}_{\bm{A}}$ for the variance of a class of moments satisfying a novel notion of network weak dependence they call ``conditional $\psi$-weak dependence.'' This is a modification of a concept in the time series literature using path distance in place of temporal distance. The condition is distinct from stabilization, the weak dependence conditions used in our CLTs, which do not condition on the network.\footnote{See \autoref{smaster} for the formal stabilization conditions. These are taken from \cite{leung2019normal}, which are, in turn, modifications of assumptions first proposed in the literature on geometric graphs \citep{penrose2003}.} We view these as complementary contributions, since they consider applications for which stabilization cannot be used but do not consider applications to social interactions models. Additionally, \cite{kojevnikov2019limit} impose general conditions on the network structure without assuming a specific model of network formation. We instead assume a particular (nonparametric) model of network formation, which enables us to derive lower-level conditions. 
\end{remark}

\begin{assump}[First-Order Stationarity]\label{networkstation}
  Let $N$ be any random variable supported on the natural numbers, independent of all other primitives. Suppose the number of agents is given by $N$, and let $\psi_1^N$ denote agent 1's statistic (in either the static or dynamic model), $\bm{X}^N = (X_i)_{i=1}^N$, and $\bm{A}^N$ the network. Then $\E[\psi_1^N] = \E[\psi_1^N \mid \bm{X}^N, \bm{A}^N, N]$.
\end{assump}

\noindent The simplest way to understand the assumption is to consider the case where $N = n$ a.s., which is the model originally specified in \autoref{sdynamic} and \autoref{smodel}. Then this reduces to $\E[\psi_1] = \E[\psi_1 \mid \bm{X}, \bm{A}]$, which is analogous to \autoref{station}. As with the latter assumption, stationarity is satisfied in GMM-type settings or more generally when moments are centered at their conditional (on $\bm{X},\bm{A}$) expectations. This includes the applications in \autoref{sdysmm} and \autoref{ssmm}. Whether consistent variance estimators can be obtained when \autoref{networkstation} fails is an open question.

For technical reasons, we need stationarity to hold not only for models where $N=n$ but also for $N \sim \text{Poisson}(n)$, which is why the statement of \autoref{networkstation} is more complicated. This does not seem to rule out any applications of interest. Similar to the spatial HAC setting, the technical purpose of this assumption is that the long-run covariance aggregates over the conditional covariances of agents $i$ and $j$. Hence, it is a function of $\E[\psi_i \mid \bm{X}, \bm{A}]$, which needs to be consistently estimated. Under \autoref{networkstation}, this is possible using the sample mean $\bar{\psi}$.

In order to approximate spatial with path distance, we require the following intermediate-level condition on the network formation model.

\begin{assump}[Path Distance]\label{pd}
  (a) There exists $c>0$ such that 
  \begin{equation*}
    \prob(\ell_{\bm{A}}(i,j) > c \norm{\rho_i-\rho_j} \mid \ell_{\bm{A}}(i,j) < \infty) = O(n^{-3/4}).
  \end{equation*}
  
  \noindent (b) For any $c > 0$,
  \begin{equation*}
    \lim_{\epsilon\rightarrow\infty} \limsup_{n\rightarrow\infty} \prob(\ell_{\bm{A}}(i,j) > \epsilon \mid \ell_{\bm{A}}(i,j) < \infty, \norm{\rho_i-\rho_j}) \bm{1}\{\norm{\rho_i-\rho_j} < c\} = 0 \text{ a.s.}
  \end{equation*}
\end{assump}

\noindent This assumption concerns the relationship between spatial and path distance. Part (a) states that, for connected agents, their path distance is not asymptotically much larger than spatial distance with high probability. Part (b) requires the path distance of connected agents to be asymptotically bounded if their spatial distance is likewise bounded. 

This seems to be a reasonable condition since agents are spatially homophilous, so agents that are spatially distant should be similarly distant in the network. However, verification of this condition is challenging because the relationship between spatial and path distance is a complicated graph-theoretic problem. To our knowledge, in random graph theory, the requisite results have only been developed for the random geometric graph model, which is a widely studied spatial graph model \citep{penrose2003}. In this model, two agents connect if and only if their spatial distance falls below some fixed threshold. In \autoref{rggpd}, we draw on recent theoretical results to verify \autoref{pd} for this model.

While no corresponding results are presently available for more general spatial graphs, it seems reasonable to conjecture that \autoref{pd} holds for the general class of models in \autoref{snf} under the regularity condition that the linking probability decays exponentially with distance, a condition we require anyway for a CLT in \autoref{sparsity}. The basis of our conjecture is that proving a result under smooth exponential decay is usually expected once the corresponding result has been proven for a hard threshold decay, like the random geometric graph. For instance, compare limit theory for $M$-dependence, where neighbors beyond distance $M$ are uncorrelated with the ego, and $\alpha$-mixing, where the covariance instead may decay smoothly with distance. 

\begin{assump}[Network Exogeneity]\label{netcexog}
  For any $n\in\mathbb{N}$, $\bm{\varepsilon} \indep \bm{A} \mid \bm{X}$.
\end{assump}

\noindent This is another new condition not required by the spatial HAC. In the static case, it rules out network endogeneity, since $\bm{A}$ must be independent of unobservables given observables. Thus, unobserved homophily is not permitted.\footnote{The framework of \cite{kojevnikov2019limit} also seems to rule out network endogeneity, since errors terms must be weakly dependent conditional on the network. Under endogeneity, the conditional dependence structure appears very difficult to characterize.} In the dynamic case, recall that $X_i$ collects the observed covariates and the fixed effects, while $\varepsilon_i$ collects the idiosyncratic errors. Then \autoref{netcexog} implies that the errors are independent of the network, but network endogeneity is still allowed through dependence between the fixed effect and the network. 

\begin{theorem}\label{networkhac}
  Suppose $h_n \rightarrow \infty$ and $\log h_n / \log n \rightarrow 0$. Under Assumptions \ref{kernel}--\ref{netcexog} and the conditions required for a CLT (Theorems \ref{staticclt} and \ref{dynamicclt} in the static and dynamic cases, respectively), $\norm{\hat{\bm{\Sigma}}_{\bm{A}} - \tilde{\bm{\Sigma}}_n} \plimarrow 0$ for some sequence of matrices $\{\tilde{\bm{\Sigma}}_n\}_{n\in\mathbb{N}}$ such that $\lambda_\text{min}(\bm{\Sigma}_n - \tilde{\bm{\Sigma}}_n) \geq 0$ for all $n$. If there exists a constant vector $c$ such that $\E[\psi_1]=c$ for all $n$, then $\bm{\Sigma}_n = \tilde{\bm{\Sigma}}_n$ for all $n$.
\end{theorem}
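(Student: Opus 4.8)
The plan is to establish consistency of $\hat{\bm{\Sigma}}_{\bm{A}}$ by the same route used for the spatial HAC in \autoref{spatialhac}, exploiting spatial homophily so that path-distance truncation of the kernel behaves, asymptotically, like spatial-distance truncation. Because \autoref{kernel} forces the kernel to vanish outside the unit ball, $\hat{\bm{\Sigma}}_{\bm{A}}$ only aggregates the products $(\psi_i-\bar{\psi})(\psi_j-\bar{\psi})'$ over pairs with $\ell_{\bm{A}}(i,j)\leq h_n$. I would therefore (i) show that this truncation discards only asymptotically negligible covariance, so that $\E[\hat{\bm{\Sigma}}_{\bm{A}}]$ tends to the same target $\tilde{\bm{\Sigma}}_n$ as the spatial estimator, and (ii) show that $\hat{\bm{\Sigma}}_{\bm{A}}$ concentrates around this mean. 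Throughout, first-order stationarity (\autoref{networkstation}) governs the demeaning: since $\E[\psi_i\mid\bm{X},\bm{A}]=\E[\psi_i]$, the sample mean $\bar{\psi}$ consistently recovers the relevant centering, so I may replace $\bar{\psi}$ by $\E[\psi_i]$ up to remainders that vanish under the CLT conditions (Theorems \ref{staticclt} and \ref{dynamicclt}). I expect the Poisson-indexed form of \autoref{networkstation} to be used precisely to Poissonize the agent locations at this stage, decoupling the agents as in the stabilization arguments of \autoref{smaster}.

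The bias step (i) is where \autoref{pd} enters. I would split the excluded pairs into those in distinct components and those with $h_n<\ell_{\bm{A}}(i,j)<\infty$. For pairs in distinct components, $K$-locality (\autoref{klocal}) makes $\psi_i$ and $\psi_j$ functions of disjoint neighborhoods, so their covariance is negligible under the weak-dependence conditions underlying the CLT. For the second group, \autoref{pd}(a) gives $\ell_{\bm{A}}(i,j)\leq c\norm{\rho_i-\rho_j}$ outside an event of probability $O(n^{-3/4})$; on the complement a path distance exceeding $h_n$ forces $\norm{\rho_i-\rho_j}>h_n/c\rightarrow\infty$, so stabilization makes the covariance negligible, while the $O(n^{-3/4})$ exceptional probability is summable against the $n^{-1}$ normalization. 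Conversely, \autoref{pd}(b) ensures that spatially close pairs in the same component have bounded path distance and are therefore retained once $h_n$ is large, so no nonnegligible covariance is dropped. The positive-semidefiniteness of $\bm{\Sigma}_n-\tilde{\bm{\Sigma}}_n$, and its collapse to equality when $\E[\psi_1]=c$, then follow from the same demeaning decomposition of $(\psi_i-\bar{\psi})$ into a centered piece and a mean-variation piece as in \autoref{spatialhac}; the path-to-spatial conversion above is what identifies the limit of the network-HAC correction with that of the spatial one.

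For the concentration step (ii) I would condition on $(\bm{X},\bm{A})$, where network exogeneity (\autoref{netcexog}) is essential: since $\bm{\varepsilon}\indep\bm{A}\mid\bm{X}$, the random weights $K(\ell_{\bm{A}}(i,j)/h_n)$ become measurable with respect to the conditioning field while the conditional law of the errors---and hence the conditional covariance structure of the $\psi_i$---is left undistorted and still governed by spatial proximity through the model of \autoref{snf}. I would then bound the fourth-order sum $n^{-2}\sum_{i,j,k,l}\cov\big((\psi_i-\bar{\psi})(\psi_j-\bar{\psi})',(\psi_k-\bar{\psi})(\psi_l-\bar{\psi})'\big)\,K(\ell_{\bm{A}}(i,j)/h_n)K(\ell_{\bm{A}}(k,l)/h_n)$, again converting each path restriction to a spatial one via \autoref{pd}(a) and invoking stabilization so that a quadruple contributes only when its indices are spatially clustered. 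The bandwidth condition $\log h_n/\log n\rightarrow 0$ controls the growth of the retained neighborhoods so that this sum is $o(n^2)$, giving $\hat{\bm{\Sigma}}_{\bm{A}}-\E[\hat{\bm{\Sigma}}_{\bm{A}}]\plimarrow\zero$ and hence the theorem.

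The main obstacle is controlling the estimator when the kernel weights are themselves random functions of $\bm{A}$ and are indexed by path rather than spatial distance. Porting the spatial weak-dependence (stabilization) bounds through the graph requires \autoref{pd}, whose verification is a genuinely graph-theoretic problem, and neutralizing the randomness of the weights requires \autoref{netcexog}: without it, conditioning on $\bm{A}$ would reshape the error distribution and simultaneously break the stabilization bounds and the stationarity-based demeaning. This is exactly why the network HAC, unlike the spatial HAC, cannot readily accommodate an endogenous network.
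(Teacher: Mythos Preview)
Your overall strategy---reduce the network HAC to something that behaves like a spatial HAC by translating path distance into spatial distance---matches the paper's. But there is a genuine gap in your bias step, and the paper's decomposition differs from yours in a way that matters for the rates.

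Your treatment of disconnected pairs is incorrect. You write that $K$-locality makes $\psi_i$ and $\psi_j$ functions of disjoint neighborhoods and hence their covariance is negligible ``under the weak-dependence conditions underlying the CLT.'' This does not follow. The neighborhoods $\mathcal{N}_{\bm{A}}(i,K)$ are random functions of the entire network, and the event $\{\ell_{\bm{A}}(i,j)=\infty\}$ depends on all positions; conditioning on it can induce correlation even when the realized $K$-neighborhoods are disjoint. Stabilization is a \emph{spatial} weak-dependence notion and says nothing here, since disconnected agents can be spatially close. This is exactly the content of \autoref{rnetendog}. The paper therefore deploys \autoref{netcexog} together with \autoref{networkstation} in the \emph{bias} step, not the concentration step: after conditioning on $(\bm{X},\bm{A})$, exogeneity makes disconnected agents depend on disjoint blocks of $\bm{\varepsilon}$ and hence conditionally independent, while stationarity guarantees the conditional mean still equals $\E[\psi_1]$, so the demeaning survives the conditioning. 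Placing exogeneity only in the variance calculation, as you do, leaves the bias from disconnected pairs uncontrolled.

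The paper also does not split by path distance. It introduces an auxiliary \emph{spatial} bandwidth $b_n=o(n^{1/(16d)})$ and writes $\hat{\bm{\Sigma}}_{\bm{A}}$ as a spatially-near piece with $\norm{\rho_i-\rho_j}\leq b_n$ plus a spatially-far remainder. The near piece is then a spatial HAC with hybrid kernel $K(\ell_{\bm{A}}(i,j)/h_n)\,\bm{1}\{\norm{\rho_i-\rho_j}\leq b_n\}$, so the machinery behind \autoref{spatialhac} carries over almost verbatim, with \autoref{pd}(b) delivering the kernel-to-one step and \autoref{pd}(a) used inside the Poissonization coupling. The far remainder is killed by showing that $\{\ell_{\bm{A}}(i,j)\leq h_n\}\cap\{\norm{\rho_i-\rho_j}>b_n\}$ has probability bounded by a term of order $h_n\,C^{h_n}(b_n/h_n)^{h_n d}e^{-c b_n}$ (via \autoref{MKN}); this is precisely where the sub-polynomial rate on $h_n$ is spent, since only $b_n$ polynomial against $h_n$ sub-polynomial makes that expression beat the factor of $n$ in front. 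Your proposed accounting---``$O(n^{-3/4})$ summable against the $n^{-1}$ normalization''---ignores that there are order $n^2$ pairs, and your placement of the bandwidth condition in a fourth-moment variance bound does not obviously yield the required cancellation.
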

\begin{proof}
  See \autoref{snethacpf} for the formal proof and a proof sketch.
\end{proof}

\noindent The conclusions are the same as those of \autoref{spatialhac}. When moments are centered, typically first-order stationarity and $\E[\psi_1]=c$ hold, in which case the result delivers consistency. On the other hand, if $\E[\psi_1]$ varies with $n$, then the estimator is asymptotically conservative. 

\begin{remark}[Bandwidth Rate]
  The bandwidth is required to grow at a sub-polynomial rate, unlike the spatial HAC. Actually, inspection of the proof shows that the result holds if $h_n$ diverges at a rate that is polynomial in $n$ with sufficiently small degree, but the degree depends on the dimension of $\rho_1$, which is unknown. Thus, a sub-polynomial rate is the price to pay for not observing positions.
\end{remark}

\begin{remark}[Network Endogeneity]\label{rnetendog}
  \autoref{netcexog} is important to establish \autoref{networkhac}. It serves to ensure that disconnected agents are uncorrelated, i.e.\ $\cov(\psi_i, \psi_j \mid \bm{X}, \bm{A}) = 0$ if $\ell_{\bm{A}}(i,j)=\infty$. If this covariance were non-zero, then the network HAC would be biased because it cannot account for the covariance between pairs of disconnected agents, since the kernel weight multiplying $(\psi_i-\bar{\psi}) (\psi_j-\bar{\psi})'$ in $\hat{\bm{\Sigma}}_{\bm{A}}$ would be zero. It may seem intuitive for this covariance to be zero, and it certainly is true when the network is exogenous. However, without exogeneity, the problem is that we are conditioning on the endogenous network $\bm{A}$, which can induce nonzero correlation between the agents. 
\end{remark}

\begin{remark}[Positive Semidefiniteness]
  In our simulation study, $\hat{\bm{\Sigma}}_{\bm{A}}$ is always positive semidefinite, but in general, this is not guaranteed in finite sample. To ensure positive semidefiniteness, we can modify the estimator using a correction proposed by \cite{kojevnikov2019bootstrap}. Let $\bm{Q}_n \bm{\Lambda}_n \bm{Q}_n'$ be the eigendecomposition of $\hat{\bm{\Sigma}}_{\bm{A}}$, $c$ a positive constant, and $\bm{I}_n$ the $n\times n$ identity matrix. Define
  \begin{equation*}
    \hat{\bm{\Sigma}}_{\bm{A}}^+ \equiv \bm{Q}_n \max\{\bm{\Lambda}_n, c\bm{I}_n\} \bm{Q}_n',
  \end{equation*}

  \noindent where ``max'' refers to the element-wise maximum. Then $\hat{\bm{\Sigma}}_{\bm{A}}^+$ is positive semidefinite because the smallest eigenvalue is at least $c>0$ by construction. Consistency requires $c \rightarrow 0$ as $n \rightarrow \infty$ \citep[see the proof of Proposition B.2 in][]{kojevnikov2019bootstrap}. In practice, we might choose, say, $c=0.1$.
\end{remark}

%------------------------------------
\subsection{Inference Without Stationarity}\label{snostat}
%------------------------------------

The theorems in the previous subsections justify the use of HAC estimators in GMM-type applications, such as \autoref{sdysmm} and \autoref{ssmm}, or more generally, settings where the moments are centered at the right conditional expectations. However, there are important settings where moments are uncentered, for example moment inequality models, which include the applications in \autoref{snonparam} and \autoref{ssubinf}. Another application in which moments may be uncentered is nonparametric inference, for example, on the conditional choice probability $\E[Y_1 \mid X_1]$, or other such moments. 

We next discuss two ways in which nonstationarity can be addressed. First, we note that the generalized spatial HAC proposed by \cite{leung2019normal} can be applied to our setting, an estimator that is consistent when both \autoref{station} and $\E[\psi_1]=c$ fail to hold. Second, we derive new general (albeit strong) sufficient conditions under which arbitrary moments (whether centered or not) obey spatial first-order stationarity {\em asymptotically}. Combined with \autoref{spatialhac}, the latter result enables the use of the spatial HAC in a larger set of environments, with the caveat that it is conservative rather than consistent (since $\E[\psi_1]$ will typically vary with $n$). We are unable to prove corresponding results for the network HAC, since it appears difficult to derive general sufficient conditions for an asymptotic version of stationarity (\autoref{networkstation}) outside of GMM-type models. 

\bigskip

\noindent {\bf Generalized Spatial HAC.} In settings where \autoref{station} fails to hold, we can use the generalized spatial HAC estimator proposed in \S 6.1 of \cite{leung2019normal}. This is given by
\begin{align}
  &\hat\sigma^2 - \hat\alpha\hat\alpha', \quad \text{where} \label{nonstathac} \\
  &\hat\sigma^2 = \frac{1}{n} \sum_{i=1}^n \psi_i \psi_i' + \frac{1}{n} \sum_{i=1}^n \sum_{j\neq i} \big(\xi_i \xi_j - \hat\theta(\rho_i) \hat\theta(\rho_j) \big) K\left( (\rho_i-\rho_j) / h_n \right), \nonumber\\
  &\hat\alpha = \bar{\psi} + \frac{1}{n} \sum_{i=1}^n \sum_{j\neq i} \big( \psi_i - \hat\theta(\rho_i) \big) K((\rho_i-\rho_j)/h_n), \quad\text{and} \nonumber\\
  &\hat\theta(p) = \frac{\sum_{i=1}^n \psi_i \tilde K(n^{-1/d}(p-\rho_j)/b_n)}{\sum_{i=1}^n \tilde K(n^{-1/d}(p-\rho_j)/b_n)}, \nonumber
\end{align}

\noindent where $\tilde K(\cdot)$ is a kernel function and $b_n$ a bandwidth for the kernel estimator $\hat\theta(\rho_1)$ of $\E[\psi_1 \mid \rho_1]$. Thus, we avoid having to impose stationarity because we estimate $\E[\psi_1 \mid \rho_1]$ directly, nonparametrically. Consistency of the generalized spatial HAC requires $h_n^d = O(n^{1/4})$, $h_n\rightarrow\infty$, and $\hat\theta(p)$ to be uniformly consistent and converging faster than the $n^{-1/4}$ rate, as usual. Under these requirements and high-level stabilization conditions, \cite{leung2019normal} prove consistency. Our CLTs verify these conditions and hence imply consistency for social interactions models.

\begin{remark}
  This estimator differs slightly from \cite{leung2019normal} due to the $n^{-1/d}$ scaling in $\hat\theta(\cdot)$ and the rate condition on $h_n$. The scaling addresses the problem that we want to estimate $\E[\psi_1 \mid \rho_1]$ by averaging over the agent statistics of agents with positions near $\rho_1$. However, positions are drifting apart as $n$ diverges, since $\rho_i = \omega_n \tilde\rho_i$. Fortunately, since $\omega_n$ drifts at rate $n^{1/d}$, we can simply scale down positions by this rate, converting $\rho_i$ to $\tilde\rho_i/\kappa^{1/d}$, the latter of which has bounded support and does not depend on $n$.  The bandwidth $h_n$ has a different rate because, in this paper, we model positions as $\omega_n\tilde\rho_i$, whereas they model positions as simply $\tilde\rho_i$, which may be more reasonable when agents are closely located in space. This modification has no effect on the consistency proof, since the required high-level conditions (stated in \autoref{smaster} and verified in our CLTs) are the same as those of \cite{leung2019normal} except we change positions from $\{\tilde\rho_i\}_{i=1}^n$ to $\{\omega_n\tilde\rho_i\}_{i=1}^n$.
\end{remark}

\noindent {\bf Asymptotic Stationarity.} We next provide general sufficient conditions applicable to {\em any} network moment under which it {\em asymptotically} satisfies first-order stationarity, i.e.\ $\abs{\E[\psi_1] - \E[\psi_1 \mid \rho_1=\omega_n p]} = o(n^{-1/3})$ for any $p$.\footnote{The proof of \autoref{spatialhac} uses exact stationarity (\autoref{station}), but an inspection of the proof shows that asymptotic first-order stationarity suffices.} The conditions are strong, but they are useful for understanding what widely used stationarity conditions essentially require when moments are not centered. Recall the definition of $Z_i$ from \autoref{ssma1} in the static setting and \autoref{lna} in the dynamic setting.

\begin{assump}\label{rhounif}
  (a) $Z_1 \indep \rho_1$. (b) $f$ is the uniform distribution on some bounded subset of $\R^d$ for which the origin is an interior point.
\end{assump}

\noindent Part (a) says that positions are independent of all other attributes, and (b) implies that agents are not more clustered in some regions of space than others. 

\begin{theorem}\label{genstat}
  Suppose the CLT assumptions hold (Theorems \ref{staticclt} and \ref{dynamicclt} in the static and dynamic cases, respectively). Under \autoref{rhounif}, there exists a sequence of constants $\{c_n\}_{n\in\mathbb{N}}$ such that $\abs{\E[\psi_1 \mid \rho_1=\omega_n p] - c_n} = o(n^{-1/3})$ for any $p \in \text{supp}(f)$.
\end{theorem}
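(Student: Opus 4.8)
The plan is to recenter the spatial frame at agent $1$'s conditioned position $\omega_n p$ and exploit the translation invariance of the linking mechanism together with the spatial homogeneity that \autoref{rhounif} buys us. First I would invoke $K$-locality (\autoref{klocal}): $\psi_1$ is a function only of agent $1$'s outcome and attribute, and of the outcomes, attributes, and subnetwork of her $K$-neighborhood $\mathcal{N}_{\bm A}^-(1,K)$. Because the positions enter the model—both the link index \eqref{modelnf} and, through it, the realized network and outcomes—only through the translation-invariant distances $\norm{\rho_i-\rho_j}$, the statistic $\psi_1$ depends on positions solely through translation-invariant quantities. Hence, writing $r_j = \rho_j - \omega_n p$, the conditional law of $\psi_1$ given $\rho_1 = \omega_n p$ is a fixed measurable functional, \emph{not} depending on $p$, of (i) agent $1$'s attribute $Z_1$, (ii) the marked point process $\{(r_j, Z_j)\}_{j\neq 1}$ of recentered positions and attributes, (iii) the i.i.d.\ link shocks $\zeta_{ij}$, and (iv) the idiosyncratic errors and selection draws. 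The only channel through which $p$ can affect $\E[\psi_1\mid\rho_1=\omega_n p]$ is therefore the distribution of the recentered process in (ii).

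Second, I would characterize the limit of that process. Under \autoref{rhounif}(a) the attributes $Z_j$ are independent of positions, so the marks are $p$-free. Under \autoref{rhounif}(b), $f$ is uniform, so by the scaling $\rho_i=\omega_n\tilde\rho_i$ of \autoref{as} the density of $\rho_j$ equals the constant $f(p)\kappa/n$ on a neighborhood of $\omega_n p$ whenever $p$ is interior; consequently the points falling in any fixed ball $B(0,R)$ of the recentered frame are uniform with intensity $\kappa f(p)$, which under uniformity does not vary with $p$. By the classical Poisson approximation (Le Cam), the restriction of $\{(r_j,Z_j)\}$ to $B(0,R)$ lies within total-variation distance $O(1/n)$ of a homogeneous marked Poisson process whose intensity and mark law are identical for every interior $p$. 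I would fix a reference interior point $p_0$ (the origin qualifies by \autoref{rhounif}(b)), set $c_n = \E[\psi_1\mid \rho_1 = \omega_n p_0]$, and bound $\abs{\E[\psi_1\mid\rho_1=\omega_n p] - c_n}$ by coupling the two recentered processes to this common Poisson limit.

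Third, I would convert the convergence into the rate $o(n^{-1/3})$. Couple the configurations for $p$ and $p_0$ so that they coincide as marked configurations on $B(0,R)$; this fails only if the binomial processes deviate from their common Poisson limit (an event of probability $O(1/n)$) or if agent $1$'s $K$-neighborhood escapes $B(0,R)$. On the coincidence event the recentered configurations agree, so by the translation invariance of Step one $\psi_1$ takes the same value in both coupled models, whence $\abs{\E[\psi_1\mid\rho_1=\omega_n p]-c_n}$ is controlled by $\E[\,\abs{\psi_1}\bm{1}\{\text{coupling fails}\}\,]$. To pass from this probability bound to an expectation bound for a possibly unbounded $\psi_1$, I would truncate at $M=n^{\beta}$: the $2+\delta$ moment conditions maintained for the CLT (\autoref{staticclt}, \autoref{dynamicclt}) control the tail $\E[\abs{\psi_1}\bm{1}\{\abs{\psi_1}>M\}]=O(M^{-(1+\delta)})$, while the truncated part contributes $O(M/n)$; choosing $\beta\in(1/(3(1+\delta)),\,2/3)$ renders both $o(n^{-1/3})$.

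The main obstacle is the escape/boundary control inside Step three: one must show quantitatively that for fixed interior $p$ the recentered $K$-neighborhood of agent $1$ stays within distance $\omega_n\,\mathrm{dist}(p,\partial\,\text{supp}(f))$ of the origin—hence never reaches the (growing) boundary and remains matched across $p$ and $p_0$—with probability $1-o(n^{-1/3})$. This is exactly where the sparsity hypothesis (\autoref{sparsity}) and the branching-process bounds on neighborhood growth underlying the stabilization conditions are indispensable: exponential decay of the link probability in distance gives exponential tails on both the cardinality and the spatial diameter of the (fixed-$K$) neighborhood, making the escape probability $o(n^{-a})$ for every $a$. I would emphasize that uniformity of $f$ is what kills the $p$-dependence of the limiting intensity $\kappa f(p)$; without \autoref{rhounif}(b) that intensity retains genuine $p$-dependence and asymptotic stationarity fails, which is precisely why the hypothesis cannot be dropped.
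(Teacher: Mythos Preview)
Your overall strategy---recenter, Poissonize, exploit translation invariance under the uniform density, and control the boundary via exponential tails---is exactly the route the paper takes. But there is one concrete gap that would make the argument as written fail.

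You repeatedly invoke the $K$-neighborhood $\mathcal{N}_{\bm A}(1,K)$ as the set whose containment in $B(0,R)$ suffices for the coupling. This is not enough. \autoref{klocal} says $\psi_1$ is a function of the \emph{outcomes} $Y_j$ for $j\in\mathcal{N}_{\bm A}(1,K)$, but in the static model (and in the dynamic model through the period-$0$ initial condition) each such $Y_j$ is determined by the selection mechanism $\lambda(\bm A,\bm\tau)$, which in turn depends on $j$'s entire strategic neighborhood $C_j^+$ (see \eqref{stratneigh} and \autoref{CJlem}). The set that actually localizes $\psi_1$ is the relevant set $J_1=\bigcup\{C_y^+:y\in\mathcal{N}_{\bm A}(1,K)\}$ of \eqref{J_i}, and the correct radius is the radius of stabilization $\RR_n(1)=\max_{p'\in J_1}\norm{\rho_1-p'}$ of \eqref{RR_i} (or \eqref{RR_iT} in the dynamic case), not the spatial diameter of the $K$-neighborhood. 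Your coupling can succeed on $B(0,R)$ while $Y_j$ still differs between the two configurations because some agent in $C_j^+\setminus\mathcal{N}_{\bm A}(1,K)$ lies outside $B(0,R)$.

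The fix is precisely what the paper does in \autoref{pogs}: replace ``$K$-neighborhood escapes $B(0,R)$'' by ``$\RR_n(1)>R$,'' and invoke exponential stabilization (\autoref{bpestab}, verified in \autoref{static_bpestab} and the proof of \autoref{dynamicclt}) rather than just \autoref{sparsity}. Exponential stabilization gives $\prob(\RR_n(1)>R)\leq c\exp(-cR^\epsilon)$, so choosing $R=C(\log n)^{1/\epsilon}$ makes the escape probability $o(n^{-a})$ for any $a$; combined with the eighth-moment bound of \autoref{psibdsimple} (the paper assumes $p=8$, not $2+\delta$) and Cauchy--Schwarz, the truncation step goes through cleanly. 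Once you substitute the radius of stabilization for the $K$-neighborhood, your argument and the paper's coincide.
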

\begin{proof}
  See \autoref{pogs}.
\end{proof}

\noindent Combined with \autoref{spatialhac}, this establishes that the spatial HAC is asymptotically conservative for any network moment under \autoref{rhounif}.

The remainder of this subsection discusses \autoref{rhounif}. Certainly it is a strong condition, but it clarifies what stationarity generally demands in spatial settings when moments are uncentered. The reason we need (b) is that, if some subset of $\text{supp}(f)$ had higher density, then agents positioned in that region would tend to have higher degrees because there are more nearby alters in that region, as well. Hence, degree would not be mean-independent of position, and we would expect other network statistics to similarly depend on an agent's position. We need (a) for the same reason; if an agent's position were correlated with some attribute of the agent, then agents with the right positions would have desirable realizations of the attribute and thus tend to have high degrees. In either case, the absolute value of an agent's position would influence her outcome, whereas, by definition, stationarity demands that only her relative position matters. %\footnote{This intuition only gives us {\em asymptotic} stationarity because for any fixed network size, agents near the boundary of $\text{supp}(f)$ have relatively fewer nearby alters with which to form connections.}

Thus, \autoref{rhounif} implies that {\em positions play no role in degree heterogeneity}. By this we mean the following. In general from specification \autoref{modelnf}, variation in $\mu_i$ and $\rho_i$ can increase or decrease $i$'s degree, but \autoref{rhounif} rules out variation due to $\rho_i$. Thus, positions only generate homophily, not heterogeneity in degree. Of course, $\mu_i$ can still do either. Consequently, under \autoref{rhounif}, the network can still be endogenous with respect to the outcome equation due to correlation between the error in the outcome equation and $\mu_i$.

%----------------------------------------------------------------------
\section{CLT for the Static Model}\label{ssm}
%----------------------------------------------------------------------

In this section, we state formal conditions for a CLT for network moments defined in \autoref{snetmoms}. The first two subsections present and motivate our main conditions for weak dependence and the third states regularity conditions. The weak dependence conditions are analogous to those used by \cite{leung2019normal} to prove a CLT for static models of network formation with strategic interactions. Their proof strategy is to verify high-level ``stabilization'' conditions for a normal approximation (their Theorem C.2) by constructing a dependency neighborhood for each observation, what they call the {\em relevant set}, and then using branching process techniques to bound the sizes of these neighborhoods. We use the same methodology for both the static and dynamic model, but the arguments need to be modified for the social interactions setting.

Additionally, rather than applying their Theorem C.2, we verify the high-level conditions of our \autoref{master-clt}, which is similar to Theorem C.2 but makes two modifications. First, we use an increasing domain setup (\autoref{as}), whereas \cite{leung2019normal} use a quasi-infill setup (see \autoref{quasiinfill}). Second, Theorem C.2 provides a closed-form expression for the limit variance, whereas our theorem does not, which allows us to dispense with a high-level continuity condition.

%------------------------------------
\subsection{Strategic Interactions}\label{ssma1}
%------------------------------------

The first key condition required for a CLT is a restriction on the strength of strategic interactions. For motivation, consider the standard linear-in-means model as studied in \cite{bramoulle2009identification}. The outcome equation is given by
\begin{equation*}
  Y_i = \alpha + \sum_j A_{ij} Y_j \beta + \sum_j A_{ij} X_j' \delta + X_i'\gamma + \varepsilon_i,
\end{equation*}

\noindent where $\bm{A}$ may be row-normalized. For this model to have a reduced form, 
\begin{equation}
  \abs{\beta} \lambda_\text{max}(\bm{A}) < 1,
  \label{<1}
\end{equation}

\noindent is required, where $\lambda_\text{max}(\bm{A})$ is the largest eigenvalue of the adjacency matrix $\tilde{\bm{A}}$. This is evidently a restriction on the magnitude of peer effects. It ensures equilibrium existence and weak dependence.

We impose an assumption analogous to \autoref{<1}. While it may appear superficially more complicated, this is unavoidable due to the nonlinearity of the model. First we require some definitions. Let
\begin{equation}
  \mathcal{R}^c_i = \bm{1}\left\{ \inf_s U(s, \tau_i) \leq 0 \medcap \sup_s U(s, \tau_i) > 0 \right\}. \label{Rci}
\end{equation}

\noindent The significance of this indicator is that its expectation equals
\begin{equation}
  \prob\big( \sup_s U(s, \tau_i) > 0 \big) - \prob\big( \inf_s U(s, \tau_i) > 0 \big), \label{ssi}
\end{equation}

\noindent (assuming measurability). This is our analog of $\beta$ in \autoref{<1}. It measures the strength of strategic interactions, since it corresponds to the average causal effect on choosing action 1 of changing $S_i$ from its ``lowest'' to its ``highest'' possible value.

The next assumption imposes a restriction on \eqref{ssi} in relation to the network topology. Recall that $i$'s position $\rho_i$ is a subvector of her type $\tau_i$. Let $Z_i$ be the subvector of $\tau_i$ that omits $\rho_i$ (so that $\tau_i = (\rho_i,Z_i)$), $d_z$ the dimension of $Z_i$, and $\Phi(\cdot \,|\, p)$ the conditional distribution of $Z_i$ given $\rho_i=p$. Define
\begin{equation}
  \varphi(p,z;p',z') = \E\left[ A_{ij} \mathcal{R}^c_j \mid \rho_i=p, Z_i=z, \rho_j=p', Z_j=z' \right]. \label{L_r}
\end{equation}

\noindent For any $h\colon \R^d \times \R^{d_z} \rightarrow \R$, define the mixed norm 
\begin{equation*}
  \norm{h}_{\bm{m}} = \sup_{p\in\R^d} \left( \int_{\R^{d_z}} h(p,z)^2 \,\text{d}\Phi^*(z) \right)^{1/2},
\end{equation*}

\noindent where $\Phi^*(\cdot)$ is a distribution given in the next assumption. Let $\bar{f} = \sup_{p\in\R^d} f(p)$.

\begin{assump}[Strength of Interactions]\label{dfrag}
  There exists a distribution $\Phi^*(\cdot)$ on $\R^{d_z}$ such that, for any $p,p' \in \R^d$ and $z \in \R^{d_z}$,
  \begin{equation*}
    \int_{\R^{d_z}} \varphi(p,z;p',z') \,\text{d}\Phi(z' \,|\, p') \leq \int_{\R^{d_z}} \varphi(p,z;p',z') \,\text{d}\Phi^*(z').
  \end{equation*}
  
  \noindent Furthermore,
  \begin{equation*}
    \norm{h}_{\bm{m}} < 1 \quad\text{for}\quad h(p,z) \equiv \kappa \bar{f}\, \int_{\R^d} \left( \int_{\R^{d_z}} \varphi(p,z;p',z')^2 \,\text{d}\Phi^*(z') \right)^{1/2} \,\text{d}p'.
  \end{equation*}
\end{assump}

The first part of the assumption is a regularity condition on the conditional distribution of $Z_i$. The substantive requirement is $\norm{h}_{\bm{m}} < 1$. To see how this assumption restricts \autoref{ssi}, note that, for any $n$,
\begin{equation}
  \E\left[ \sum_{j\neq i} A_{ij} \mathcal{R}_j^c \,\bigg|\, Z_i=z, \rho_i=\omega_n p \right] \leq \kappa \bar{f} \int_{\R^d} \int_{\R^{d_z}} \varphi(p,z;p',z') \,\text{d}\Phi^*(z') \text{d}p', \label{r90bf0wj2}
\end{equation}

\noindent using a change of variables and the fact that $n\omega_n^{-d} = \kappa$. Clearly $\norm{h}_{\bm{m}}$ is an upper bound on the right-hand side, so a necessary condition for \autoref{dfrag} is that the left-hand side is strictly less than one. The interpretation of the left-hand side becomes clearer in the special case where the network is exogenous in the sense that $A_{ij} \indep \mathcal{R}_j^c$. Then $\norm{h}_{\bm{m}} < 1$ implies
\begin{equation}
  \left[ \prob\big( \sup_s U(s, \tau_j) > 0 \big) - \prob\big( \inf_s U(s, \tau_j) > 0 \big) \right] \E[A_{ij}] < 1 \label{nec}
\end{equation}

\noindent for any $n$, since the left-hand side of \autoref{r90bf0wj2} equals the expectation of the left-hand side of \autoref{nec}. This is transparently a restriction on the strength of strategic interactions, requiring it to be bounded below the inverse of the expected degree. The intuition is that, when mean degree is higher, an agent's action can impact more neighbors, so weak dependence requires weaker strategic interactions. 

To see more clearly how \autoref{nec} compares with \autoref{<1}, note that $\lambda_\text{max}(\bm{A}) \leq \max_i \sum_j A_{ij}$. Hence, a sufficient condition is
\begin{equation*}
  \abs{\beta} \max_i \sum_j A_{ij} < 1,
\end{equation*}

\noindent which has precisely the form of \autoref{nec}. Whether $\bm{A}$ is row-normalized, our condition and \eqref{<1} have similar behavioral implications, as discussed in Remark A.1 of \cite{leung2019compute}. Finally, it should be noted that both \autoref{<1} and \autoref{dfrag} are restrictions on the effect of $S_i$ on $Y_i$ only in a partial-equilibrium sense. The general equilibrium effect can be substantially larger, as seen in the simulation study of \cite{leung2016} in the context of network formation. 

Conditions similar to \autoref{dfrag} have been utilized by \cite{xl2015} and \cite{xu2018social} for social interactions models. \cite{leung2016} and \cite{menzel2015large} use analogous conditions for weak dependence in network formation and matching models, respectively. The former paper discusses similarities between this condition and related assumptions for temporal and spatial autoregressive models.

%------------------------------------
\subsection{Equilibrium Selection}\label{ssma2}
%------------------------------------

Whereas in the linear-in-means model, \eqref{<1} is enough to guarantee equilibrium uniqueness, in our setting multiple equilibria are possible under \autoref{dfrag}. By \autoref{select}, the selection mechanism $\lambda(\cdot)$ can be any function of $\bm{A}$ and $\bm{\tau}$, which can potentially induce strong dependence between the outcomes of agents distant in the network. For this reason, we require a restriction on $\lambda(\cdot)$ for weak dependence. We leave the formal statement of the assumption to \autoref{scltassumps}. Our goals here are (1) to motivate why a restriction is needed and (2) note that the assumption is satisfied under variants of myopic best-response dynamics, such as those used to define the dynamic model.

To illustrate why we need to restrict $\lambda(\cdot)$, let us take an extreme example. Suppose types are realized such that $\bm{A}$ has two components (disconnected subnetworks). By definition, agents in separate components are infinitely far apart in terms of path distance. Under \autoref{S}, the set of Nash equilibria for the full game is the Cartesian product of the set Nash equilibria on each component alone, since payoffs in one component do not depend on actions in the other component. In this respect, they appear to be two separate games. Nonetheless, it is easy to construct examples of $\lambda(\cdot)$ such that the realization of equilibria is correlated between components. For example, suppose types are realized such that there are two possible equilibria in each component. Consider an equilibrium selection mechanism that chooses one equilibrium over the other in each component, depending on the realization of agent 1's type $\tau_1$. Then the same underlying random vector determines equilibrium selection in both components, so their outcomes are correlated. In the cross-sectional setting, this problem is analogous to having a selection mechanism simultaneously determine equilibrium realizations across a set of distinct games, a scenario implicitly ruled out when researchers assume separate realizations of the same game are independent \citep{epstein2016robust}.

In the previous example, all agents effectively coordinate on equilibria according to the realization of a common signal $\tau_1$. For weak dependence, we must rule out coordination between ``distant'' agents. In order to formalize what we mean by ``distant'' and state the required condition (\autoref{nocoord}), we need several new definitions that have to be motivated at length, which we defer to \autoref{scltassumps}. In brief, the basic idea of the condition is to only allow agents within certain subnetwork neighborhoods to coordinate on their equilibrium actions. Such a restriction holds under variants of {\em myopic best-response dynamics}. This is an interesting class of selection mechanisms to consider because they are widely used in theoretical work to model diffusion, as discussed in \autoref{sdynamic}, and they define the other of our main applications, the dynamic model.

To define these dynamics more formally, suppose an initial vector of outcomes $\bm{Y}^0$ is chosen randomly, and then agents iteratively best-respond according to \autoref{mrbd} until convergence. Under a game of strategic complements, the process necessarily converges to a Nash equilibrium \citep{milgrom1990rationalizability}. Otherwise, convergence requires $\bm{Y}^0$ to lie along an improving path \citep{jackson_evolution_2002}. In either case, this constitutes a selection mechanism because it maps $(\bm{A},\bm{\tau})$ to a unique Nash equilibrium. %More generally, we can allow for variations of the dynamics where only a random subset of agents adjusts their actions in best-response at each step. Mechanisms of this type clearly rule out coordination between components as illustrated in the example above. In particular, they satisfy \autoref{nocoord}. 

%------------------------------------
\subsection{Regularity Conditions}\label{sreg}
%------------------------------------

We next formalize the sequence of models along which we take limits. Similar to \autoref{lna}, the static model is given by the tuple
\begin{equation}
  (U, \lambda, V, \bm{\rho}, \bm{Z}, \bm{\zeta}), \label{staticmodseq}
\end{equation}

\noindent where $U(\cdot)$ is the payoff function in \autoref{model}, $\lambda(\cdot)$ the equilibrium selection mechanism given in \autoref{select}, $V(\cdot)$ the latent index in the network formation model \autoref{modelnf}, $\bm{\rho} = (\rho_i)_{i=1}^n$, $\bm{Z} = (Z_i)_{i=1}^n$ with $Z_i$ defined prior to \autoref{dfrag}, and $\bm{\zeta}$ is the matrix of $\zeta_{ij}$'s. We can then directly import \autoref{as} using this new notation.

Several regularity conditions are used to establish the CLT. The first imposes a restriction on the network formation model, requiring the linking probability to decay exponentially quickly in spatial distance. Exponential decay is useful for a CLT in the same way that mixing coefficients are usually required to tend to zero with distance at an exponential rate.

\begin{assump}[Network Formation]\label{sparsity} \hfill
  \begin{enumerate}[(a)]
    \item There exists $(\bar{\mu},\bar{\mu}')$ such that $V(\delta, \mu, \mu', \zeta) \leq V(\delta, \bar{\mu}, \bar{\mu}', \zeta)$ for any $\delta \in \R_+$, and $(\mu,\mu',\zeta) \in \text{supp}( (\mu_i,\mu_j,\zeta_{ij}) )$. 

    \item $\text{supp}(\zeta_{ij}) \subseteq \R$, and for any $\delta,\mu,\mu'$, $V(\delta,\mu,\mu',\cdot)$ is strictly monotonic. 

    \item Let $\tilde V^{-1}(r,\cdot)$ be the inverse of $V(r, \bar{\mu}, \bar{\mu}', \cdot)$. There exist $c_1,c_2>0$ such that 
      \begin{equation*}
	\tilde{\Phi}_\zeta\left( \tilde V^{-1}(r, 0) \right) \leq c_1 e^{-c_2 r}, 
      \end{equation*}

      \noindent where $\tilde{\Phi}_\zeta(\cdot)$ denotes the complementary CDF of $\zeta_{ij}$.
  \end{enumerate}
\end{assump}

\noindent Parts (a) and (b) are common regularity conditions; (a) says the $\mu_i$'s have a bounded impact on the latent index $V(\cdot)$. Part (c) is the main restriction, which limits the nonlinearity of $V(\cdot)$ and the tail mass of the distribution of $\zeta_{ij}$. It holds, for example, if $V(\cdot)$ is linear in its arguments and $\zeta_{ij}$ has exponential tails. Note that $\prob(A_{ij}=1 \mid \delta_{ij}=r) \leq \tilde{\Phi}_\zeta( \tilde V^{-1}(r, 0))$. Hence, part (c) implies that connection probabilities are exponentially decaying with distance. This is a typical requirement in latent space models \citep{breza_using_2017,hrh2002}.

\begin{remark}[Sparsity]\label{rsparse}
  Under the increasing domain asymptotics of \autoref{as}, the previous assumption implies that the observed network is {\em sparse} in the sense that the expected number of links involving any arbitrary agent is $O(1)$.\footnote{The expected number of links is $\E[\sum_j A_{ij}] \leq n \E[A_{ij}] \leq n\E[\tilde{\Phi}_\zeta( \tilde V^{-1}(\omega_n\norm{\tilde\rho_i-\tilde\rho_j}, 0)] \leq \kappa\bar{f} C \int_{r\geq 0} c_1 e^{-c_2 r} \,\text{d}r = O(1)$ for some constant $C$ that appears from a change of variables to hyperspherical coordinates..} This is a commonly used definition of network sparsity that captures the stylized fact that, in most real world social networks, the number of connections involving the typical agent is small relative to the network size \citep{chandrasekhar2016}.
\end{remark}

For the next assumption, define
\begin{equation}
  \mathcal{T} = \{(p,z)\colon p \in \R^d, z \in \text{supp}(\Phi(\cdot \mid p))\}.
  \label{Sset}
\end{equation}

\begin{assump}[Non-Degeneracy]\label{dreg}
  Either $\varphi(p,z;p',z')=0$ for any $(p,z,p',z')$, or for $\Phi^*(\cdot)$ defined in \autoref{dfrag},
  \begin{equation*}
    \inf_{(p,z) \in \mathcal{T}} \int_{\R^d} \int_{\R^{d_z}} \varphi(p,z;p',z') \,\text{d}\Phi^*(z') \,\text{d}p' > 0. 
  \end{equation*}
\end{assump}

\noindent If $\varphi(p,z;p',z')=0$ everywhere, then this simply means there are no strategic interactions, which is the uninteresting case. To interpret the other case, note that when the network is exogenous, so that $\mathcal{R}_i^c \indep \bm{A}$, then by \autoref{r90bf0wj2},
\begin{multline*}
  \kappa \bar{f} \int_{\R^d} \int_{\R^{d_z}} \varphi(p,z;p',z') \,\text{d}\Phi^*(z') \\ \geq \E[\mathcal{R}_i^c \mid Z_i=z, \rho_i=\omega_n p] \,\E\left[\sum_j A_{ij} \,\bigg|\, Z_i=z, \rho_i=\omega_n p \right].
\end{multline*}

\noindent Hence, a sufficient condition is for the infimum of the right-hand side to be strictly positive. Note that if interactions exist, then necessarily $\E[\mathcal{R}_i^c] > 0$, so the sufficient condition strengthens this slightly by requiring it to hold conditional on type, for all types. Additionally, it requires a non-degenerate network in the sense that the limiting expected degree of any agent, conditional on their type, is positive.

For the last regularity condition, let $\bm{\rho}^m = (\rho_i)_{i=1}^{m+2}$, $\{H_n\}_{n\in\mathbb{N}}$ be a sequence of subsets of $\R^d$, and $\psi_i^m(H_n)$ be $i$'s agent statistic under a modified version of model \eqref{staticmodseq} where instead of $\bm{\rho}$ as our set of positions, we instead have $(\rho_i)_{i=1}^{m+2} \cap H_n$.\footnote{To be clear, the modified model is the same as the original model, except we draw types as follows. First, we generate the set of positions $(\rho_i)_{i=1}^{m+2} \cap H_n$. Then independently across each element $\rho_j$ in this set, we draw $Z_j$ from $\Phi(\cdot \,|\, \rho_j)$. \autoref{psibd} in the appendix is the same as \autoref{psibdsimple} but with the formal data-generating process completely spelled out.} The original model is a special case where $m = n-2$ and $H_n = \R^d$ for all $n$.

\begin{assump}[Bounded Moments]\label{psibdsimple} 
  (a) There exists a finite constant $C$ such that $\E[\abs{\psi_1^m(H_n)}^8 \mid X_1=x, X_2=y] < C$ for any $n$ sufficiently large, $m \in [n/2,3n/2]$, $x,y \in \R^d$, and sequence of subsets $\{H_n\}_{n\in\mathbb{N}}$ of $\R^d$. (b) There exists $c\geq 0$ such that $\abs{\psi_1^m(H_n)} \leq c\, m^c$ a.s.\ for any $m,n\in\mathbb{N}$.
  % originally i had y \in \R^d \cup \{\infty\} which was imprecise but meant that agent 2 didn't matter for agent 1 since she's outside 1's radius of stabilization. But then that's just the same as replacing m with m-1
\end{assump}

\noindent Part (b) says that agent statistics are uniformly bounded by a polynomial in the network size. This is easily satisfied by the moments in \autoref{ssmm} and \autoref{ssubinf} if instrument functions are uniformly bounded. Part (a) essentially requires uniformly bounded 8th moments for agent statistics but is slightly stronger for technical reasons. Many of the moments in the applications are uniformly bounded, including \autoref{sdysmm} and \ref{snonparam}, in which case part (a) is automatic. For subnetwork moments (used in \autoref{ssubinf}), \autoref{psibd} holds under \autoref{sparsity} by Lemma I.17 of \cite{leung2019normal}.

%------------------------------------
\subsection{Main Result}\label{ssmclt}
%------------------------------------

Before stating the main result, we need some additional notation. Let $\psi_i^{n+1}$ be $i$'s agent statistic in a model with $n+1$ agents. Consider counterfactually removing agent $n+1$ from the model, and let $\bm{A}^-$ and $\bm{\tau}^-$ be the network and type vector with agent $n+1$ removed. Let $\bm{Y}^- = \lambda(\bm{A}^-, \bm{\tau}^-)$ be the equilibrium outcome vector in the resulting model with the remaining $n$ agents. Let $\psi_i^{-(n+1)}$ be $i$'s agent statistic constructed from $(\bm{Y}^-, \bm{A}^-, \bm{\tau}^-)$. Define the {\em add-one cost} $\Xi_n = \psi_{n+1}^{n+1} + \sum_{i=1}^n (\psi_i^{n+1} - \psi_i^{-(n+1)})$, which is the total change in the sum of all agent statistics from adding a single agent $n+1$. Let $\bm{\Sigma}_n = \var(n^{-1/2} \sum_{i=1}^n \psi_i)$ and $\bm{I}_m$ be the $m\times m$ identity matrix. Recall that $\lambda_\text{min}(\bm{\Sigma}_n)$ denotes the smallest eigenvalue of $\bm{\Sigma}_n$. Finally, recall that agent statistics are vectors of dimension $m$.

\begin{theorem}\label{staticclt}
  Suppose $c'\Xi_n$ is asymptotically non-degenerate for any $c \in \R^m\backslash\{\zero\}$. Also suppose the model satisfies Assumptions \ref{S} and \ref{select}, weak dependence conditions (Assumptions \ref{dfrag} and \ref{nocoord}), and regularity conditions (Assumptions \ref{sparsity}--\ref{psibdsimple}). If $\psi(\cdot)$ satisfies \autoref{klocal}, then as $n\rightarrow\infty$,
  \begin{equation*}
    \frac{1}{\sqrt{n}} \sum_{i=1}^n \bm{\Sigma}_n^{-1/2} \left( \psi_i - \E[\psi_i] \right) \dlimarrow \mathcal{N}(\zero, \bm{I}_m) \quad\text{and}\quad \liminf_{n\rightarrow\infty} \lambda_\text{min}(\bm{\Sigma}_n) > 0
  \end{equation*}

  \noindent under the sequence given in \autoref{as}.
\end{theorem}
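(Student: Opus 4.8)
The plan is to deduce the result from the high-level stabilization CLT (\autoref{master-clt}), which requires three inputs: a \emph{stabilization} (weak dependence) property asserting that each agent statistic $\psi_i$ is, with high probability, a function only of the types of agents lying in a spatial ball around $\rho_i$ whose radius has light (exponentially decaying) tails; uniform moment control on the $\psi_i$; and a non-degeneracy condition yielding $\liminf_{n\rightarrow\infty} \lambda_\text{min}(\bm{\Sigma}_n)>0$. The moment control is supplied directly by \autoref{psibdsimple}, and the non-degeneracy by the hypothesis on the add-one cost $\Xi_n$, so the real work is establishing stabilization.

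To build the stabilization bound I would, following \cite{leung2019normal}, construct for each $i$ a \emph{relevant set} $R_i$ --- the collection of agents whose types, if resampled, could change $\psi_i$ given the equilibrium chosen by $\lambda(\cdot)$ --- and show that $\prob(R_i \not\subseteq \mathcal{N}_{\bm{A}}(i,r))$ decays geometrically in $r$. By \autoref{klocal}, $\psi_i$ depends on the data only through $\mathcal{N}_{\bm{A}}(i,K)$, but the outcomes there are equilibrium objects, so a type change can cascade through best responses. I would explore outward from $\mathcal{N}_{\bm{A}}(i,K)$, noting that a frontier agent $j$ can transmit dependence to a neighbor $k$ only if $j$'s action is pivotal, i.e.\ $\mathcal{R}^c_j=1$ as in \eqref{Rci} and $A_{jk}=1$; \autoref{nocoord} is exactly what prevents the selection mechanism from creating dependence outside this explored set. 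The central step is to dominate the growth of $R_i$ by a subcritical multitype branching process whose mean reproduction kernel is proportional to $\varphi(p,z;p',z')$ in \eqref{L_r}, integrated against the position density and rescaled by $\kappa\bar{f}$ through $n\omega_n^{-d}=\kappa$, exactly as in the bound \eqref{r90bf0wj2}. The condition $\norm{h}_{\bm{m}}<1$ of \autoref{dfrag} bounds the norm of this reproduction operator below one, making the process subcritical and giving its total progeny exponential tails; translating this back to spatial distance via homophily (\autoref{sparsity}, \autoref{as}) gives the required stabilization rate. I would import the total-progeny estimates from the appendix of \cite{leung2019normal}, verifying that the subcriticality constant is uniform along the sequence in \autoref{as}.

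Given stabilization, the convergence in distribution follows from \autoref{master-clt}, with the Lindeberg-type control coming from the eighth-moment bound in \autoref{psibdsimple}(a) and the a.s.\ polynomial bound (b) handling the rare event that $R_i$ is large. For the eigenvalue bound, \autoref{master-clt} represents the limiting variance (or a lower bound on it) through the second moment of $\Xi_n$, so the assumed asymptotic non-degeneracy of $c'\Xi_n$ for every $c\neq\zero$ delivers $\liminf_{n\rightarrow\infty} \lambda_\text{min}(\bm{\Sigma}_n)>0$; \autoref{dreg} is used to exclude the degenerate (no-interactions) case and to keep the variance bounded away from zero uniformly.

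The hard part will be the relevant-set construction under \emph{multiple equilibria}. Unlike network formation, where links are determined by independent pairwise shocks, here $\bm{Y}$ is a fixed point of best responses selected by $\lambda(\cdot)$, so it is not immediate that a local perturbation of types yields only a local change in outcomes. The crux is to show that the strategic cascade (controlled by \autoref{dfrag}) and the coordination channel of equilibrium selection (controlled by \autoref{nocoord}) together admit a single subcritical branching envelope, uniformly in $n$; this is where the social-interactions argument genuinely departs from the network-formation proofs of \cite{leung2019normal}.
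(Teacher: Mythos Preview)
Your proposal is correct and follows essentially the same route as the paper: verify the high-level conditions of \autoref{master-clt} by constructing the relevant set $J_i = \bigcup\{C_y^+ : y \in \mathcal{N}_{\bm{A}}(i,K)\}$ (the union of strategic neighborhoods over the $K$-neighborhood), invoke \autoref{nocoord} to show $\psi_i$ depends only on $J_i$ (the paper's \autoref{CJlem}), and then stochastically dominate $|J_i|$ by the branching process $\hat{\mathfrak{X}}^K_{r_n}$ whose subcriticality under \autoref{dfrag} yields exponential tails (Lemmas \ref{Jsd}, \ref{Dexptail}, \ref{MKexptail}), converting to a spatial radius via \autoref{sparsity} (Lemmas \ref{MKN}, \ref{DC}). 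One small correction: \autoref{dreg} enters the branching-process tail bound (\autoref{Dexptail}), not the variance non-degeneracy---the latter comes solely from the add-one-cost hypothesis via \autoref{2.14}.
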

\begin{proof}
   See \autoref{staticproof} for the formal proof and a proof sketch.
\end{proof}

\noindent \autoref{nocoord} is discussed in \autoref{ssma2} and \autoref{scltassumps}. Non-degeneracy of the add-one cost essentially just requires a non-trivial choice of agent statistics. When agent $n+1$ is added to the model, the ``direct effect'' on the total statistic is that it increases by the value of her agent statistic $\psi_{n+1}^{n+1}$, which should be non-degenerate. However this also has an ``indirect effect'' on the statistics of other agents $i$ given by $\psi_i^{n+1} - \psi_i^{-(n+1)}$, for example due to strategic interactions. Non-degeneracy would fail, for instance, if the direct and indirect effects exactly cancel, which is unlikely to happen for non-trivial statistics. 

%----------------------------------------------------------------------
\section{CLT for the Dynamic Model}\label{sdm}
%----------------------------------------------------------------------

We first formalize the model for the initial conditions $\bm{Y}^0 = (Y_i^0)_{i=1}^n$. This is simply given by the static model in \autoref{smodel}. That is, agents choose their period-0 actions best-responding to the contemporaneous actions of her peers in that period. We view this as an approximation of the long-run outcome of the dynamic model (cf.\ \autoref{ssma2}). Define $\bm{\tau}^0 = (\tau_i^0)_{i=1}^n$. Let $U_0(\cdot)$ be the payoff function in period 0, which may potentially differ from $U(\cdot)$, and let $S_0(\cdot)$ be the analog of $S(\cdot)$ (statistics that capture strategic interactions). Let $\mathcal{E}(\bm{A},\bm{\tau}^0)$ be the set of outcomes $\bm{Y}^0$ such that each component $Y_i^0$ satisfies
\begin{equation}
  Y_i^0 = \bm{1}\left\{ U_0(S_i^0, \tau_i^0) > 0 \right\} \quad\text{for}\quad S_i^0 \equiv S_0(Y_{-i}^0, \tau_i^0, \tau_{-i}^0, A_i, A_{-i}).
\end{equation}

\begin{assump}[Initial Condition]\label{dyinit}
  (a) $\abs{\mathcal{E}(\bm{A},\bm{\tau}^0)} \geq 1$. (b) There exists a function $\lambda(\bm{A},\bm{\tau}^0)$ with range $\mathcal{E}(\bm{A},\bm{\tau}^0)$ such that the realized initial condition satisfies $\bm{Y}^0 = \lambda(\bm{A},\bm{\tau}^0)$.
\end{assump}

\noindent This is just a restatement of \autoref{select} with period-0 types. 

Define the {\em add-one cost} analogously to \autoref{ssmclt}: $\Xi_n = \psi_{n+1}^{n+1} + \sum_{i=1}^n (\psi_i^{n+1} - \psi_i^{-(n+1)})$, which is the total change in the sum of all agent statistics from removing a single agent $n+1$. Also recall that $\bm{\Sigma}_n = \var(n^{-1/2} \sum_{i=1}^n \psi_i)$ and $\bm{I}_m$ is the $m\times m$ identity matrix.

\begin{theorem}\label{dynamicclt}
  Suppose $c'\Xi_n$ is asymptotically non-degenerate for any $c \in \R^m\backslash\{\zero\}$. Also suppose 
  \begin{itemize}
    \item the best-response model satisfies \autoref{S2},
    \item the initial conditions model defined in \autoref{dyinit} satisfies Assumptions \ref{S}, \ref{dfrag}, \ref{dreg}, and \ref{nocoord}, using $U_0(\cdot), S_0(\cdot)$ in place of $U(\cdot), S(\cdot)$, and
    \item regularity conditions (Assumptions \ref{sparsity} and \ref{psibdsimple}) hold.
  \end{itemize}
  
  \noindent If $\psi(\cdot)$ satisfies \autoref{klocal}, then as $n\rightarrow\infty$,
  \begin{equation*}
    \frac{1}{\sqrt{n}} \sum_{i=1}^n \bm{\Sigma}_n^{-1/2} \left( \psi_i - \E[\psi_i] \right) \dlimarrow \mathcal{N}(\zero, \bm{I}_m) \quad\text{and}\quad \liminf_{n\rightarrow\infty} \lambda_\text{min}(\bm{\Sigma}_n) > 0
  \end{equation*}

  \noindent under the sequence given in \autoref{as}.
\end{theorem}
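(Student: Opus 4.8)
The plan is to reduce the dynamic CLT to the static CLT (\autoref{staticclt}) already in hand. The key observation is that the initial conditions $\bm{Y}^0$ are generated by a static game (\autoref{dyinit}), whereas the outcomes in periods $t = 1, \dots, T$ are a \emph{deterministic} function of $\bm{Y}^0$, $\bm{A}$, and $\bm{\tau}$ through the best-response recursion \autoref{dymodel}. Because $T$ is finite and the per-period dependence is local (\autoref{S2}), the entire time series can be absorbed into a single derived agent statistic for the initial-conditions static model. Concretely, I would define $\tilde\psi_i$ to be $\psi_i$ regarded as a function of $(\bm{Y}^0, \bm{A}, \bm{\tau})$, where the future outcomes entering $\psi_i$ are obtained by iterating \autoref{dymodel}. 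Then $n^{-1/2}\sum_i \bm{\Sigma}_n^{-1/2}(\psi_i - \E[\psi_i])$ and $n^{-1/2}\sum_i \bm{\Sigma}_n^{-1/2}(\tilde\psi_i - \E[\tilde\psi_i])$ are the same random vector, so it suffices to apply \autoref{staticclt} to $\tilde\psi_i$ in the initial-conditions model, treating $\bm{Y}^0$ as the static outcome, $U_0, S_0$ as the payoff and interaction statistics, and the full type time series as the static type.

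The main technical step is to verify that $\tilde\psi_i$ satisfies $K$-locality (\autoref{klocal}) for some enlarged radius $K'$. I would argue by induction on $t$ that $Y_j^t$ depends on $\bm{Y}^0$, $\bm{A}$, and $\bm{\tau}$ only through the $tM$-neighborhood of $j$: the base case $t=0$ is immediate, and the inductive step follows because \autoref{S2} makes $Y_j^t$ a function of $Y^{<t}$ restricted to $\mathcal{N}_{\bm{A}}^-(j,M)$, whose own dependence on $\bm{Y}^0$ reaches back at most another $M$ steps in path distance. Since $\psi_i$ is itself $K$-local in the $Y_j$'s and each such $j$ lies within path distance $K$ of $i$, the composite statistic $\tilde\psi_i$ depends on the static-model primitives only through the $(K + TM)$-neighborhood of $i$. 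Hence $\tilde\psi_i$ satisfies \autoref{klocal} with $K' = K + TM$, a fixed constant because $K$, $M$, and $T$ are all finite.

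It then remains to check that the hypotheses of \autoref{staticclt} hold for the initial-conditions model and the derived statistic $\tilde\psi_i$. By assumption the initial-conditions model satisfies \autoref{S}, \autoref{dfrag}, \autoref{dreg}, and \autoref{nocoord} (with $U_0, S_0$ in place of $U, S$), and \autoref{select} is supplied by \autoref{dyinit}; these are precisely the equilibrium-existence, weak-dependence, and non-degeneracy conditions that \autoref{staticclt} imposes on the underlying static game. The regularity conditions \autoref{sparsity} and \autoref{psibdsimple} are assumed directly, and the moment bound \autoref{psibdsimple} is stated for the dynamic statistic, i.e.\ exactly for $\tilde\psi_i$. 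Finally, the add-one cost $\Xi_n$ defined for the dynamic model coincides with the add-one cost of $\tilde\psi_i$, so the assumed non-degeneracy of $c'\Xi_n$ is exactly the non-degeneracy input required by \autoref{staticclt}. With all hypotheses verified, \autoref{staticclt} delivers both the asymptotic normality and the eigenvalue bound $\liminf_{n\rightarrow\infty} \lambda_\text{min}(\bm{\Sigma}_n) > 0$.

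The step I expect to require the most care is the reduction itself rather than any single estimate: one must confirm that absorbing the forward dynamics does not disturb the weak-dependence structure that \autoref{staticclt} exploits. This is where finiteness of $T$ is essential—because the dynamics are deterministic given $\bm{Y}^0$ and introduce no fresh equilibrium-selection randomness, no analog of \autoref{dfrag} or \autoref{nocoord} is needed for periods $t \geq 1$, and all strategic-interaction and coordination restrictions remain confined to the period-0 game. Enlarging the locality radius from $K$ to $K + TM$ only inflates each relevant set by a bounded factor, so the branching-process bounds underlying \autoref{staticclt} continue to apply with at most a change of constants, leaving the conclusion intact.
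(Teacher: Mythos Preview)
Your proposal is correct and matches the paper's approach almost exactly. The paper, like you, observes that the forward dynamics are deterministic given $\bm{Y}^0$, constructs a relevant set for the dynamic statistic by enlarging the $K$-neighborhood to a $(K+M(T{-}1))$-neighborhood (your $K+TM$ is the same idea, and if anything is the safer count), and then notes that the verification of the stabilization conditions ``is the same as the static case'' with this enlarged radius and with the period-0 strategic neighborhoods $C_{0y}^+$ in place of $C_y^+$. The only cosmetic difference is packaging: you invoke \autoref{staticclt} as a black box on the derived statistic $\tilde\psi_i$, whereas the paper re-verifies the master-CLT conditions directly---but it explicitly reuses Lemmas~\ref{static_xistab} and~\ref{static_bpestab}, so the two routes coincide.
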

\begin{proof}
  See \autoref{dynamicproof}.
\end{proof}

\noindent The assumptions imposed on the initial conditions model are the same as those imposed on the static model for a CLT. Primitive sufficient conditions for \autoref{psibdsimple} are discussed following its statement in \autoref{sreg}. As in the static case, non-degeneracy of the add-one cost essentially just requires that the choice of $\psi(\cdot)$ is nontrivial; see the discussion following the statement of \autoref{staticclt}.

%----------------------------------------------------------------------
\section{Numerical Illustrations}\label{seill}
%----------------------------------------------------------------------

This section illustrates the performance of the HAC estimators in an empirical application and simulation study.

%------------------------------------
\subsection{Empirical Application}
%------------------------------------

We revisit the study of \cite{conley2010learning} on the diffusion of an agricultural technology through a network of farmers. They show that as farmers learn to use a new technology (pineapple production), they respond to information from their social contacts. In particular, they find that when the returns from a network neighbor's crop yields are above or below expectation, they adjust their agricultural inputs accordingly. Their main specifications are logistic models of the probability that a farmer's agricultural inputs change in response to the shares of network neighbors with above- or below-expectation yields in the previous period.

We replicate Table 4, column B of their paper. The outcome is an indicator for whether farmer $i$ adjusts her inputs in period $t$. There are four main regressors of interest comprising $S_i^t$, which are the share of good / bad news events from network neighbors who choose the same / different inputs as farmer $i$ last period. We include the same controls in their specification, including a full set of village and planting round dummies. 

The authors use a spatial HAC estimator out of concern for potential spatial autocorrelation in the errors. Our theoretical results show that the estimator also accounts for forms of network autocorrelation. If spatial positions are unavailable, our theory shows that the network HAC can be used instead.

\autoref{appresults} reports point estimates and spatial and network HAC standard errors for a range of bandwidths using the Bartlett kernel. The four main regressors of interest correspond to the first four rows in the table. The remaining rows are control variables. \cite{conley2010learning} use a bandwidth of 1500 and say that the results are similar for 1000 and 2000. This is replicated in our table. The path distances used in the network HAC are obtained from the data on farmers' advice networks used to construct the main regressors. Across the three villages, average path distance ranges from about 0.5 to 2 and the maximum distance from 3 to 8, so we report bandwidth values in the range of 0 to 3. 

From the table, we see that the results are not generally sensitive to bandwidth choice for positive values of the bandwidth. The table shows that spatial and network standard errors are similar, which is consistent with our theory. For comparison, we report a bandwidth of zero, which corresponds to i.i.d.\ standard errors. Depending on the coefficient, these standard errors can be conservative or anti-conservative relative to the HAC standard errors.

\begin{table}[ht]
\centering
\caption{Application Results}
\begin{threeparttable}
\begin{tabular}{lrrrrrrrr}
\toprule
{} &                              {} & \multicolumn{3}{c}{Spatial HAC} & \multicolumn{4}{c}{Network HAC} \\
\cmidrule{3-9}
{} & \multicolumn{1}{c}{$\hat\beta$} & \multicolumn{1}{c}{1000} & \multicolumn{1}{c}{1500} & \multicolumn{1}{c}{2000} & \multicolumn{1}{c}{0} & \multicolumn{1}{c}{1} & \multicolumn{1}{c}{2} & \multicolumn{1}{c}{3} \\
\midrule
success     &                          -0.084 &                    1.033 &                    0.947 &                    0.841 &                 0.980 &                 1.000 &                 0.921 &                 0.822 \\
awaysuccess &                           1.637 &                    0.797 &                    0.781 &                    0.736 &                 0.968 &                 0.789 &                 0.782 &                 0.738 \\
failure     &                           4.322 &                    1.910 &                    1.934 &                    1.922 &                 1.758 &                 1.963 &                 1.995 &                 2.021 \\
awayfailure &                          -5.898 &                    2.695 &                    2.566 &                    2.476 &                 2.374 &                 2.583 &                 2.518 &                 2.361 \\
geoabsx     &                           0.146 &                    0.080 &                    0.070 &                    0.065 &                 0.091 &                 0.092 &                 0.078 &                 0.071 \\
new         &                           1.973 &                    0.938 &                    0.885 &                    0.822 &                 0.869 &                 0.978 &                 0.811 &                 0.778 \\
exten       &                          -1.345 &                    0.667 &                    0.672 &                    0.693 &                 0.695 &                 0.837 &                 0.882 &                 0.865 \\
wealth      &                           0.182 &                    0.131 &                    0.133 &                    0.137 &                 0.137 &                 0.136 &                 0.142 &                 0.142 \\
abu1        &                           1.592 &                    1.081 &                    1.098 &                    1.151 &                 0.933 &                 1.207 &                 1.242 &                 1.275 \\
abu2        &                           2.145 &                    1.298 &                    1.228 &                    1.257 &                 1.030 &                 1.406 &                 1.258 &                 1.211 \\
rpent       &                          -0.286 &                    0.836 &                    0.734 &                    0.687 &                 0.808 &                 1.042 &                 0.763 &                 0.565 \\
\bottomrule
\end{tabular}
\begin{tablenotes}[para,flushleft]
  \footnotesize $n=107$. Parameters are listed in the same order as Table 4 of \cite{conley2010learning}. The second column displays point estimates of the logit regression. The remaining columns to the right display standard errors. The ``Spatial'' (``Network'') columns report standard errors using the spatial (network) HAC estimator with the indicated bandwidths. Village and planting round dummies are included but not reported.
\end{tablenotes}
\end{threeparttable}
\label{appresults}
\end{table}

%------------------------------------
\subsection{Monte Carlo}\label{smc}
%------------------------------------

We simulate a dynamic probit model with two periods and period-1 payoffs given by
\begin{equation*}
  U(S_i^1, \tau_i^1) = \beta_1 + \beta_2 X_i^1 + \beta_3 \frac{\sum_j A_{ij} Y_j^0}{\sum_j A_{ij}} + \nu_i^1,
\end{equation*}

\noindent where $\nu_i^t \sim \mathcal{N}(0,1)$. To estimate the model, we use a standard probit regression. As discussed in \autoref{sdysmm}, to account for autocorrelation in the errors and regressors, we estimate the variance using the sandwich formula $H^{-1}SH^{-1}$, where $H$ is the Hessian and $S$ is a HAC estimate of the variance of the scores. 

We are faced with the task of generating network autocorrelation in the errors while still maintaining the assumption of standard normal marginals, so that the probit can be used for estimation. Toward this end, we use the following modification of \autoref{e1}:
\begin{equation*}
  \nu_i^t = \omega_i \left( \frac{\sum_j A_{ij} \varepsilon_j^t}{\sum_j A_{ij}} + \varepsilon_i^t \right),
\end{equation*}

\noindent where $\varepsilon_i^t \stackrel{iid}\sim \mathcal{N}(0,1)$ and $\omega_i$ is a constant. The first term is the average error of network neighbors, which captures exogenous peer effects in unobservables and generates autocorrelation. We choose the weight to ensure standard normal marginals for $\nu_i^t$: $\omega_i = (1 + (\sum_j A_{ij})^{-1})^{-1/2}$.

Let covariates follow an AR(1) model $X_i^1 = 0.5 X_i^0 + u_i^t$, where $u_i^t \stackrel{iid}\sim \mathcal{N}(0,1)$ and $X_i^0 \stackrel{iid}\sim \text{exp}(1)$. We set the true $\beta$ to $(0.5, -0.3, 1)$. Initial conditions $Y_i^0$ are given by the same outcome model as period 1, except we set $\beta_3$ to zero and replace period-1 types $(X_i^1,\varepsilon_i^1)$ with period-0 types $(X_i^0,\varepsilon_i^0)$.  To generate the network, we use the random geometric graph model 
\begin{equation*}
  A_{ij} = \bm{1}\{\norm{\rho_i - \rho_j} \leq 1\} \mathbf{1}\{i\neq j\}.
\end{equation*}

\noindent Recall from \autoref{lna} that positions are given by $\rho_i = \omega_n \tilde\rho_i$. We draw $\tilde\rho_i \stackrel{iid}\sim \mathcal{U}([0,1]^2)$ and set the scaling factor $\omega_n$ to $(n / (5\pi) )^{1/2}$. This choice of $\omega_n$ ensures that the limiting expected degree is five, which generates a giant component \citep{penrose2003}. Additionally, this model clearly induces geographic homophily, and as a consequence $\bm{A}$ typically has a high clustering coefficient. All of these are well-known to be realistic network features \citep{barabasi2015}.

We compute standard errors four different ways. The first three use HAC estimators, all with the Bartlett kernel. For the standard spatial HAC, we set $h_n = n^{1/(3d)}$ (here $d=2$), and for the network HAC, $h_n = \log n$. The third uses the generalized spatial HAC from \autoref{snostat}, for which we set $h_n = n^{1/(4d)}$. For $\hat\theta(x)$ in the generalized spatial HAC, we use a kernel estimator and choose $b_n = (\log n/n)^{1/(2p+d)}$ for smoothness $p=d/2+1$, which achieves a $n^{-1/4}$ uniform convergence rate in the i.i.d.\ case. The fourth way of computing standard errors, which we term ``oracle'' standard errors, is by simulating the true variance of the probit estimator using 7500 simulation draws. This is used to assess the quality of the HAC estimators and our CLT.

We report results for $n = 500, 1\text{k}, 2\text{k}$. The point estimates are all extremely close to the truth on average for all sample sizes. \autoref{tabestse} displays standard errors, averaged across 15,000 simulations. Column ``GS HAC'' corresponds to the generalized spatial HAC. The table shows that all HAC standard errors are quite close on average to the target oracle standard errors. They can be slightly anti-conservative in smaller samples, which is a common issue with HAC estimators.

\autoref{tabrr} reports rejection percentages from two-sided $t$-tests at the 5 percent level of the null hypothesis that $\beta$ equals its true value, using the previous standard errors. The oracle standard errors control size well across all sample sizes, which illustrates the accuracy of the normal approximation. For the spatial and network HAC, rejection rates are fairly close to the nominal level, although there is a tendency toward overrejection. The network HAC happens to perform the best, while the generalized spatial HAC has the greatest tendency to overreject. The spatial HAC has better performance than the latter, at least when first-order stationarity holds. However, the story will be different when stationarity fails, as we will see below. While the network and generalized spatial HAC are not guaranteed to be positive definite in finite samples, they are positive definite in all of our simulation draws. Finally, the third column, labeled ``Naive,'' reports $t$-tests computed using i.i.d.\ standard errors, and we see that rejection rates can be many times higher than the nominal level. 

\begin{table}[ht]
\centering
\small
\caption{Probit: Standard Errors}
\begin{threeparttable}
\begin{tabular}{lrrrrrrrrrrrrrrr}
\toprule
{} & \multicolumn{3}{c}{Spatial HAC} & \multicolumn{3}{c}{Network HAC} & \multicolumn{3}{c}{GS HAC} & \multicolumn{3}{c}{Oracle} \\
\cmidrule{2-13}
{} &         500 &   1k &   2k &         500 &   1k &   2k &    500 &   1k &   2k &    500 &   1k &   2k \\
\midrule
$\beta_1$ &        0.16 & 0.12 & 0.09 &        0.17 & 0.12 & 0.09 &   0.16 & 0.12 & 0.08 &   0.18 & 0.13 & 0.09 \\
$\beta_2$ &        0.06 & 0.04 & 0.03 &        0.06 & 0.04 & 0.03 &   0.06 & 0.04 & 0.03 &   0.06 & 0.05 & 0.03 \\
$\beta_3$ &        0.26 & 0.19 & 0.14 &        0.27 & 0.19 & 0.14 &   0.26 & 0.18 & 0.13 &   0.30 & 0.20 & 0.15 \\
\bottomrule
\end{tabular}
\begin{tablenotes}[para,flushleft]
  \footnotesize Cells are averages over 15k simulations. ``GS HAC'' corresponds to the generalized spatial HAC and ``Oracle'' to true standard errors (obtained by simulation).
\end{tablenotes}
\end{threeparttable}
\label{tabestse}
\end{table}

\begin{table}[ht]
\centering
\footnotesize
\caption{Probit: Rejection Percentages for Two-Sided $t$-Test}
\begin{threeparttable}
\begin{tabular}{lrrrrrrrrrrrrrrr}
\toprule
{} & \multicolumn{3}{c}{Spatial HAC} & \multicolumn{3}{c}{Network HAC} & \multicolumn{3}{c}{GS HAC} & \multicolumn{3}{c}{Oracle} & \multicolumn{3}{c}{Naive} \\
\cmidrule{2-16}
{} &         500 &  1k &  2k &         500 &  1k &  2k &    500 &  1k &  2k &    500 &  1k &  2k &   500 &   1k &   2k \\
\midrule
$\beta_1$ &         8.0 & 7.1 & 6.9 &         7.5 & 6.5 & 6.3 &    8.4 & 7.6 & 7.3 &    5.0 & 5.2 & 4.9 &  14.6 & 14.1 & 14.7 \\
$\beta_2$ &         6.0 & 5.5 & 5.2 &         6.3 & 5.6 & 5.3 &    6.0 & 5.5 & 5.1 &    5.0 & 4.7 & 4.9 &   5.3 &  5.2 &  5.2 \\
$\beta_3$ &         7.8 & 7.1 & 6.8 &         7.3 & 6.6 & 6.3 &    8.2 & 7.5 & 7.2 &    4.8 & 5.3 & 4.9 &  13.6 & 13.9 & 13.7 \\
\bottomrule
\end{tabular}
\begin{tablenotes}[para,flushleft]
  \footnotesize Rejection percentages for two-sided $t$-tests at the 5\% level, obtained from 15k simulations. ``GS HAC'' corresponds to the generalized spatial HAC, ``Oracle'' to true standard errors (obtained by simulation), and ``Naive'' to i.i.d.\ standard errors.
\end{tablenotes}
\end{threeparttable}
\label{tabrr}
\end{table}

The moments used in pseudo-maximum likelihood are centered, since they are conditionally mean zero. Hence, by Theorems \ref{spatialhac} and \ref{networkhac}, the HAC estimators are asymptotically exact, as confirmed by the probit results. We next report results for the weighted average period-1 outcome $n^{-1} \sum_i Y_i^1 X_i^1$, which is generated from the same data-generating process as the previous tables. This is clearly not centered and not first-order stationary. However, an asymptotic analog of \autoref{station} does hold by \autoref{genstat}. Thus, our results predict that the spatial HAC is conservative while the generalized spatial HAC is exact. We have no predictions for the network HAC.

\autoref{YrrSE} displays standard errors and rejection percentages for the level-5\% two-sided $t$-test of the null that $\E[Y_i^1]$ equals its true value. The true value is computed from 7500 simulation draws, and the values displayed in the table are averages over 15,000 separate simulation draws. ``Oracle'' standard errors are obtained by simulating the variance of $n^{-1} \sum_i Y_i^1X_i^1$ from 7500 separate simulation draws.

From the table, we see that both the spatial and network HAC SEs are quite conservative, being about twice as large as the oracle SEs. In contrast, the generalized spatial HAC SEs produce rejection percentages close to the nominal level. The network HAC is also asymptotically conservative and comparable to the spatial HAC. This is interesting because network first-order stationarity (\autoref{networkstation}) does not apparently hold, so this is a case not covered by our theory.

\begin{table}[ht]
\centering
\small
\caption{Weighted Outcome: Rejection Percentages and Standard Errors}
\begin{threeparttable}
\begin{tabular}{lrrrrrrrrrrrr}
\toprule
{} & \multicolumn{3}{c}{Spatial HAC} & \multicolumn{3}{c}{Network HAC} & \multicolumn{3}{c}{GS HAC} & \multicolumn{3}{c}{Oracle} \\
\cmidrule{2-13}
Reject \% &        1.34 & 0.57 & 0.27 &        1.08 & 0.37 & 0.17 &   6.31 & 5.81 & 5.35 &   4.81 & 4.95 & 4.49 \\
SE        &        0.07 & 0.05 & 0.04 &        0.07 & 0.05 & 0.04 &   0.04 & 0.03 & 0.02 &   0.05 & 0.03 & 0.02 \\
\bottomrule
\end{tabular}
\begin{tablenotes}[para,flushleft]
  \footnotesize First row displays rejection percentages for two-sided $t$-tests at the 5\% level. Second row displays standard errors. Results obtained from 15k simulations. ``GS HAC'' corresponds to the generalized spatial HAC and ``Oracle'' to true standard errors (obtained by simulation).
\end{tablenotes}
\end{threeparttable}
\label{YrrSE}
\end{table}

%----------------------------------------------------------------------
\section{Conclusion}\label{sconclude}
%----------------------------------------------------------------------

The increasing availability of network data has enabled researchers to better study heterogeneous interactions and subject to empirical inquiry a large body of theoretical work on the impact of network topology on the diffusion of behaviors, products, and information \citep{jackson2017economic}. The central challenge of drawing valid inference from network data is that it typically consists of autocorrelated observations from a single large network. This is an inherent problem in models with social interactions, since by definition, an agent's outcome is a function of her neighbors'. Consequently, standard procedures that assume i.i.d.\ data can produce highly misleading inference. 

In the context of discrete choice models, we show that spatial and network HAC estimators can be used to correct for network autocorrelation. In an empirical application to \cite{conley2010learning}, the spatial and network HAC estimators deliver similar results, which supports the theory. Our simulation study illustrates the potential for naive i.i.d.\ standard errors to be highly anti-conservative and demonstrates the accuracy of the normal approximation and HAC estimators.

Our main technical results are CLTs for static and dynamic models of social interactions and consistency results for the HAC estimators. In the existing literature, first-order stationarity conditions are always required for HAC estimators to approximate the variance. We also provide results for the spatial HAC in settings where stationarity fails to hold, which include models defined by moment inequalities and many nonparametric models. For the network HAC, its validity when first-order stationarity fails is an open question. The advantage of the network HAC is that it does not require additional information on spatial/social positions. On the other hand, for it to be valid in static models, we require the network to be exogenous (orthogonal to the errors in the outcome model), whereas no such assumption is needed for the spatial HAC. When exogeneity fails, the dependence structure of the data conditional on the network is difficult to characterize, and it is an open question whether the network HAC is valid in this setting.

%----------------------------------------------------------------------

\FloatBarrier
\phantomsection
\addcontentsline{toc}{section}{References}
\bibliography{net_interact}{} 
\bibliographystyle{aer}

%----------------------------------------------------------------------

\newpage
\part{Supplemental Appendix}
\appendix
\renewcommand{\thesection}{SA.\arabic{section}} % custom SA section numbering

%----------------------------------------------------------------------
\section{Decentralized Selection}\label{scltassumps}
%----------------------------------------------------------------------

This section states a restriction on the selection mechanism referenced in \autoref{ssma2} that is used to prove the CLTs. Recall from \autoref{Rci} the definition of the indicator $\mathcal{R}^c_j$. Define an artificial directed network $\bm{D}$ on $\mathcal{N}_n$ such that $D_{ij} = A_{ij} \mathcal{R}^c_j$. Let $C_i$ be the strongly connected component of $\bm{D}$ that contains agent $i$.\footnote{A strongly connected component $C$ is a set of agents such that there exists a directed path from $i$ to $j$ and vice versa for every $i,j \in C$.  A directed path of length $m$ is a sequence of unique agents $k_1, k_2, \dots, k_m$ such that $D_{k_q,k_{q+1}}=1$ for any $q = 1, \dots, m-1$.} Define $i$'s {\em strategic neighborhood} as
\begin{equation}
  C_i^+ \equiv C_i \medcup \left\{k \in \mathcal{N}_n\colon \max_{k \in C_i} A_{jk}(1-\mathcal{R}_i^c)=1 \right\}. \label{stratneigh}
\end{equation}

\noindent This adds to $C_i$ the set of agents $k$ linked to a member of $C_i$ such that either $\inf_s U(s, X_k, \varepsilon_k) > 0$ or $\sup_s U(s, X_k, \varepsilon_i) \leq 0$. These are agents that choose action 1 or 0, respectively, regardless of the choices of other agents in the network. Let $\mathcal{C}^+ = \{C_i^+\colon i \in \mathcal{N}_n\}$ be the collection of strategic neighborhoods. (In fact this is the set of weakly connected components of $\bm{D}$.)

If $C^+$ is a strategic neighborhood, under \autoref{S}, it has the unique property that
\begin{equation}
  Y_{C^+} \in \mathcal{E}(A_{C^+},\tau_{C^+}) \quad\forall\, \bm{Y} \in \mathcal{E}(\bm{A}, \bm{\tau}). \label{snprop}
\end{equation}

\noindent In other words, for any Nash equilibrium vector $\bm{Y}$, if we remove all agents not in $C^+$ from the game, then the subvector $Y_{C^+}$ is still a Nash equilibrium in the resulting game. A formal proof is given in Lemma B.2 of \cite{leung2019compute}; see \S 2.1 of that paper for a detailed discussion. This property is typically not satisfied for most subsets of agents due to strategic interactions; removing an agent outside of a given set will usually affect the optimality of the actions of agents within the set.

The implication of \autoref{snprop} is that strategic neighborhoods effectively function as distinct subgames, even though they can have overlapping agents.\footnote{The overlapping agents are those $j$ for which $\mathcal{R}_j^c=0$, which implies they choose the same action regardless of the actions of agents in any strategic neighborhood of which $j$ is a member.} Our next assumption requires that these neighborhoods do not coordinate in the sense illustrated in \autoref{ssma2}. For $C^+ \in \mathcal{C}^+$ and $\lambda(\cdot)$ defined in \autoref{select}, let $\lambda(\bm{A},\bm{\tau})\big|_{C^+}$ be the restriction of the range of $\lambda(\cdot)$ to $\mathcal{E}(A_{C^+},\tau_{C^+})$, the set of Nash equilibria on $C^+$. Thus if $\bm{Y}$ is the observed equilibrium outcome, so that $\bm{Y} = \lambda(\bm{A},\bm{\tau})$, then $\lambda(\bm{A},\bm{\tau})\big|_{C^+} = Y_{C^+}$.

\begin{assump}[Decentralized Selection]\label{nocoord}
  Let $\lambda(\cdot)$ be defined in \autoref{select}. There exists a selection mechanism $\lambda'(\cdot)$, such that for any network size $n$ and strategic neighborhood $C^+ \in \mathcal{C}^+$, $\lambda(\bm{A},\bm{\tau})\big|_{C^+} = \lambda'(A_{C^+},\tau_{C^+})$.
\end{assump}

\noindent This states that the process in which agents in $C^+$ coordinate on a Nash equilibrium subvector $Y_{C^+} \in \mathcal{E}(A_{C^+},\tau_{C^+})$ according to $\lambda(\cdot)$ only depends on the subnetwork $A_{C^+}$ and types $\tau_{C^+}$ of agents in $C^+$. Myopic best-response dynamics trivially satisfy this property because agents' best responses only depend on the types of their neighbors when the initial condition is chosen in a manner that only depends on $C^+$.

%----------------------------------------------------------------------
\section{Path and Spatial Distances}\label{rggpd}
%----------------------------------------------------------------------

We state primitive conditions for \autoref{pd} under the following random geometric graph (RGG) model: 
\begin{equation*}
  A_{ij} = \bm{1}\{\norm{\rho_i-\rho_j}\leq 1\} \bm{1}\{i\neq j\} \quad \forall\, i,j\in\mathcal{N}_n,
\end{equation*}

\noindent where positions are uniformly distributed, so that $f \sim \mathcal{U}([0,1]^d)$. To our knowledge, the only graph-theoretic results available for proving \autoref{pd} pertain to this model \citep[see][section II.B for related literature]{kartun2018counting}. However, we conjecture that similar results hold for the larger class of random graphs considered in our paper, since \autoref{sparsity}(c) implies connection probabilities are exponentially decaying with $\norm{\rho_i-\rho_j}$, whereas the RGG model requires a hard threshold at 1 (or any other constant).

Given that $\rho_i = \omega_n \tilde\rho_i$ (\autoref{lna}), the RGG model is equivalent to a rescaled model in which positions are instead drawn uniformly from $[0,n^{1/d}]^d$ and $A_{ij} = \bm{1}\{\norm{\rho_i-\rho_j}\leq \kappa^{1/d}\}$. The latter model is studied in \cite{friedrich2013diameter}, who provide bounds on path distances in terms of social distance. We draw on their results to prove \autoref{pd}. Some of their results utilize the ``Poissonized'' RGG model, which is the same as the RGG model, except it replaces the set of $n$ positions $\{\rho_i\}_{i=1}^n$ with $\{\rho_i\}_{i=1}^{N_n+2}$, where $N_n \sim \text{Poisson}(n)$ is independent of $\{\rho_i\}_{i\in\mathbb{N}} \sim f$. 

\begin{lemma}
  Under the RGG model, \autoref{pd}(a) holds. 
\end{lemma}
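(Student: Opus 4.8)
The plan is to reduce to the rescaled RGG and then import the chemical-distance estimates of \cite{friedrich2013diameter}, controlling the conditional probability through a ratio argument. By exchangeability it suffices to treat the pair $(1,2)$; write $D = \norm{\rho_1-\rho_2}$ and recall that after rescaling the positions are uniform on $[0,n^{1/d}]^d$ with connection radius $\kappa^{1/d}$, which (for the giant-component regime the analysis concerns) is supercritical. I would write
\[
  \prob\bigl(\ell_{\bm A}(1,2) > cD \mid \ell_{\bm A}(1,2) < \infty\bigr)
   = \frac{\prob\bigl(\ell_{\bm A}(1,2) > cD,\ \ell_{\bm A}(1,2) < \infty\bigr)}{\prob\bigl(\ell_{\bm A}(1,2) < \infty\bigr)},
\]
and handle numerator and denominator separately.

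First I would bound the denominator below. In the supercritical regime the giant component contains a positive limiting fraction $\theta$ of the vertices, so the probability that both $1$ and $2$ lie in it — which forces $\ell_{\bm A}(1,2) < \infty$ — is at least a positive constant for all large $n$. Hence the denominator is $\Omega(1)$, and the task reduces to showing the numerator is $O(n^{-3/4})$; I in fact expect $O(n^{-1})$.

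Next I would decompose the numerator event according to whether $1$ and $2$ both lie in the giant component. If they are connected but \emph{not} both in the giant component, they share a non-giant component; since non-giant components of a supercritical RGG have exponentially decaying size (bounded expected size, $O(\log n)$ with overwhelming probability), the chance that a second prescribed vertex falls in the same small component is $O(1/n)$. On the complementary event — both in the giant component — I would invoke the linear-growth (shape-theorem) estimate of \cite{friedrich2013diameter}: there exist constants $c,C,\gamma>0$ such that, conditional on $D$ and on both vertices lying in the giant component, $\prob(\ell_{\bm A}(1,2) > cD \mid D) \le C e^{-\gamma D}$. Integrating against the law of $D$ and using that, for positions uniform on a box of volume $n$, the density of $D$ satisfies $f_D(r) \asymp r^{d-1}/n$ for $r = O(n^{1/d})$, gives $\int_0^\infty \min\{1, C e^{-\gamma r}\} f_D(r)\,dr = O(1/n)$: the range where the exponential bound is vacuous contributes $\prob(D \le D_0) = O(1/n)$ for a fixed constant $D_0$, while the tail contributes $n^{-1}\int_{D_0}^\infty C e^{-\gamma r} r^{d-1}\,dr = O(1/n)$.

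Combining the two pieces, the numerator is $O(1/n)$, which together with the $\Omega(1)$ denominator yields the claimed $O(n^{-3/4})$ bound with room to spare. The main obstacle is the middle step: securing the exponential deviation bound for the chemical distance on the giant component and transferring it from the Poissonized RGG, in which \cite{friedrich2013diameter} state their estimates, to the fixed-$n$ model via a routine de-Poissonization, together with the uniform control of small-component sizes. Everything else — the ratio bound, the $r^{d-1}/n$ scaling of the distance density, and the elementary integration — is standard once that graph-theoretic input is in place.
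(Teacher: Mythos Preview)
Your approach is sound but differs from the paper's in two ways. First, the paper does not separate giant from non-giant components at all; it works directly with the conditional probability and splits instead by whether $\norm{\rho_i-\rho_j}$ exceeds $c'\log n$. For the small-distance piece it simply bounds $\prob(\norm{\rho_i-\rho_j}\le c'\log n\mid \ell_{\bm A}(i,j)<\infty)=O((\log n)^d/n)=O(n^{-(1-\epsilon)})$ (using, as you do, that the denominator is $\Omega(1)$), and for the large-distance piece it invokes \S3.3 of \cite{friedrich2013diameter} to get an $O(n^{-3})$ bound directly, then de-Poissonizes to $O(n^{-2.5})$. This avoids your small-component analysis and the integration over $D$ entirely.

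Second, and more substantively, the form of the Friedrich input differs. The paper uses a polynomial-in-$n$ bound valid once $\norm{\rho_i-\rho_j}>c'\log n$; you instead assume an exponential-in-$D$ deviation bound $\prob(\ell>cD\mid D)\le Ce^{-\gamma D}$ holding all the way down to a fixed constant $D_0$. Shape-theorem large deviations of that strength are plausible, but the result the paper cites is stated for the $D>c'\log n$ regime, so your version is a stronger extraction from the reference. If only the weaker form is available, your integration argument still works, but the ``vacuous'' range becomes $[0,c'\log n]$ rather than $[0,D_0]$, yielding $O((\log n)^d/n)$ rather than $O(1/n)$---and at that point your argument essentially collapses to the paper's. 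What your route buys, if the stronger bound is in hand, is a cleaner $O(1/n)$ with no $\log n$ loss; what the paper's buys is a shorter argument tied more directly to the cited estimate, with no need to control the non-giant case separately.
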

\begin{proof}
  Fix $c'>0$. By the law of total probability,
  \begin{multline*}
    \prob(\ell_{\bm{A}}(i,j) > c \norm{\rho_i-\rho_j} \mid \ell_{\bm{A}}(i,j) < \infty) \leq \prob(\norm{\rho_i-\rho_j} \leq c' \log n \mid \ell_{\bm{A}}(i,j) < \infty) \\
    + \prob(\ell_{\bm{A}}(i,j) > c \norm{\rho_i-\rho_j} \mid \ell_{\bm{A}}(i,j) < \infty, \norm{\rho_i-\rho_j} > c' \log n).
  \end{multline*}

  \noindent Consider the second probability on the right-hand side. If this were instead under the Poissonized model, then by \S 3.3 of \cite{friedrich2013diameter}, for $c'$ chosen large enough, it would be $O(n^{-3})$.
  % the argument in the paper could either be for the poissonized model where \mathcal{P}_{nf} is the set of agent positions or the model where we add two additional agents. It's hard to tell with their loose use of conditioning, so I'm just gonna assume they hold
  Then by applying Lemma 1 of that paper, we have that the term is $O(n^{-2.5})$ under the original (not Poissonized) model.
  % Seems like everything in that lemma should be conditional on i,j being connected. So we should actually be dealing with P(N_n=n | \ell_{\bm{A}}(i,j) < \infty). But this is \geq \prob(\ell_{\bm{A}}(i,j) < \infty | N_n=n) P(N_n=n). The first probability converges to a non-zero constant in the supercritical regime, as argued below.
  Next, the first probability on the right-hand side is $O( (r_n \log n)^d ) = O(n^{-(1-\epsilon)})$ for any $\epsilon>0$. Therefore, $\prob(\norm{\rho_i-\rho_j} \leq c' \log n \mid \ell_{\bm{A}}(i,j) < \infty) = O(n^{-(1-\epsilon)})$ for $\epsilon = 0.25$.
\end{proof}

\begin{lemma}
  Under the RGG model, \autoref{pd}(b) holds. 
\end{lemma}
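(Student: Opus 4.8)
The plan is to fix a realized value $D = \norm{\rho_i-\rho_j}$ with $D<c$ and show that, for almost every such $D$, $\lim_{\epsilon\to\infty}\limsup_{n\to\infty}\prob(\ell_{\bm{A}}(i,j)>\epsilon \mid \ell_{\bm{A}}(i,j)<\infty,\ \norm{\rho_i-\rho_j}=D)=0$; the indicator $\bm{1}\{\norm{\rho_i-\rho_j}<c\}$ then disposes of the remaining values and yields the stated a.s.\ conclusion. As in the proof of \autoref{pd}(a), I would first pass to the rescaled model in which positions are drawn uniformly on $[0,n^{1/d}]^d$ and $A_{ij}=\bm{1}\{\norm{\rho_i-\rho_j}\leq\kappa^{1/d}\}$, and where convenient to its Poissonized counterpart, transferring the resulting high-probability estimates back to the fixed-$n$ model via Lemma 1 of \cite{friedrich2013diameter}. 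Writing the target conditional probability as the ratio $\prob(\ell_{\bm{A}}(i,j)>\epsilon,\ \ell_{\bm{A}}(i,j)<\infty \mid D)\,/\,\prob(\ell_{\bm{A}}(i,j)<\infty \mid D)$, the strategy is to bound the numerator from above and the denominator from below, uniformly in $n$.

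For the denominator, $\prob(\ell_{\bm{A}}(i,j)<\infty \mid D)$ is bounded below by a positive constant $\theta(D)$ for all large $n$: if $D\leq\kappa^{1/d}$ the two agents are directly linked, and otherwise local convergence of the rescaled RGG to the Boolean continuum percolation model of intensity one and radius $\kappa^{1/d}$ \citep[cf.][]{penrose2003} gives $\prob(\ell_{\bm{A}}(i,j)<\infty \mid D)\to\prob_\infty(0\leftrightarrow x)>0$ for $x$ with $\norm{x}=D$, since two points at any fixed distance lie in a common cluster with positive probability. For the numerator I would split the event $\{\epsilon<\ell_{\bm{A}}(i,j)<\infty\}$ at a large threshold $L$. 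On $\{\ell_{\bm{A}}(i,j)>L\}$ the path distance vastly exceeds $D$, so the quantitative chemical-distance bounds of \S 3.3 of \cite{friedrich2013diameter}, transferred via their Lemma 1, furnish a tail estimate $\prob(\ell_{\bm{A}}(i,j)>L,\ \ell_{\bm{A}}(i,j)<\infty \mid D)\leq r(L)$ with $r(L)\to0$, uniformly in $n$. On $\{\epsilon<\ell_{\bm{A}}(i,j)\leq L\}$ the connecting path has bounded combinatorial length, hence bounded spatial extent of order $L\kappa^{1/d}$, so this is a local event on a fixed window, and local convergence gives $\limsup_n\prob(\epsilon<\ell_{\bm{A}}(i,j)\leq L \mid D)\leq\prob_\infty(\epsilon<\ell(0,x)\leq L,\ 0\leftrightarrow x)$, where $\ell$ denotes chemical distance in the continuum model.

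Combining the two bounds gives $\limsup_n\prob(\ell_{\bm{A}}(i,j)>\epsilon \mid \ell_{\bm{A}}(i,j)<\infty, D)\leq \theta(D)^{-1}\big(\prob_\infty(\ell(0,x)>\epsilon,\ 0\leftrightarrow x)+r(L)\big)$; letting $L\to\infty$ removes the second term, and then letting $\epsilon\to\infty$ kills the first, since conditional on $0\leftrightarrow x$ the chemical distance is finite by definition, so the events $\{\ell(0,x)>\epsilon,\ 0\leftrightarrow x\}$ decrease to the null event. I expect the main obstacle to be the numerator split at $L$, specifically securing the uniform-in-$n$ tail estimate that rules out the ``long-detour'' configurations in which spatially close agents are connected only through a path wandering far from the segment joining them, including near the boundary of $[0,n^{1/d}]^d$. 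This is precisely what the chemical-distance bounds of \cite{friedrich2013diameter} are designed to control, and marrying their Poissonized estimates to the fixed-$n$ conditioning on $\{\norm{\rho_i-\rho_j}=D\}$ via de-Poissonization is where the care is required; the denominator bound and the local-convergence step in the medium range $\epsilon<\ell_{\bm{A}}(i,j)\leq L$ are comparatively routine.
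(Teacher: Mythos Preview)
Your approach is genuinely different from the paper's and considerably more elaborate. The paper sidesteps the ratio decomposition and the split at $L$ entirely by invoking a \emph{crossing components} argument from continuum percolation (Proposition 10.6 of \cite{penrose2003}; see also Lemma 3.3 of \cite{bradonjic2010efficient} for $d=2$). Let $Q$ be the cube of side $c$ centered at $\rho_i$ and $Q'$ the concentric cube of side $2c$, so that $Q'\setminus Q$ is an annulus. The crossing argument shows that, with probability tending to one, whenever $j$ is connected to $i$ and $\rho_j\in Q$, there exists a path in $\bm{A}$ from $i$ to $j$ lying entirely inside $Q'$. Since $Q'$ has fixed volume independent of $n$, Lemma 14 of \cite{friedrich2013diameter} then bounds the length of any such contained path by a constant, which immediately gives $\ell_{\bm{A}}(i,j)=O_p(1)$ conditional on $\ell_{\bm{A}}(i,j)<\infty$ and $\norm{\rho_i-\rho_j}<c$. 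No lower bound on the connection probability, no local-convergence comparison, and no splitting are needed.

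The concern with your route is precisely the step you flag as the main obstacle. The chemical-distance estimates in \S 3.3 of \cite{friedrich2013diameter} are calibrated for the regime $\norm{\rho_i-\rho_j}\geq c'\log n$ (this is exactly how the paper invokes them for part (a)), and it is not clear they furnish a uniform-in-$n$ tail bound $r(L)\to 0$ for the long-detour event when $D$ is held fixed and small. To rule out a long path connecting two nearby points you would in effect need to re-derive the crossing-components conclusion anyway: the detour must exit and re-enter a bounded neighborhood of $\rho_i$, and the annulus-crossing argument is what guarantees a short local alternative exists with high probability. The paper's approach delivers this directly and cleanly; your decomposition buys it only if the Friedrich--Sauerwald--Stauffer bounds extend to the fixed-$D$ regime, which is not established by the citations in play and which you correctly identify as the point requiring the most care.
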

\begin{proof}
  Our argument follows the last paragraph of the proof of Lemma 20 of \cite{friedrich2013diameter}. Let $Q$ be the cube centered at $\rho_i$ with side length $c$ and $Q'$ the cube centered at $\rho_i$ with side length $2c$ with the same orientation as $Q$. Then $Q'\backslash Q$ is an annulus, and we can apply a crossing components argument. For the case $d=2$, see the use of Lemma 3.3 in \cite{bradonjic2010efficient} for a detailed argument, which modifies Lemma 10.5 of \cite{penrose2003}; see in particular Figure 3(c) for intuition. For the general case, \cite{friedrich2013diameter} use Proposition 10.6 of \cite{penrose2003}. In words, the crossing component argument shows that for any agent $j$ connected to $i$ such that $\rho_j \in Q$, there exists a path in $\bm{A}$ from $i$ to $j$ completely contained in $Q'$ with probability approaching one. Then since the shortest path is at least as long, it is bounded by the size of this path. But this path should be bounded in length, since it is contained in $Q'$, which has bounded size. Formally, by Lemma 14 of \cite{friedrich2013diameter}, $\ell_{\bm{A}}(i,j)$ is bounded above by a constant with probability approaching one, which proves the claim.
\end{proof}

%----------------------------------------------------------------------
\section{Proofs for Variance Estimators}\label{shacpf}
%----------------------------------------------------------------------

%------------------------------------
\subsection{Spatial HAC}\label{sstationpf}
%------------------------------------

{\bf Proof Sketch.} Consider the generalized spatial HAC in \autoref{nonstathac}, which is consistent for the variance, as discussed in that section. We can replace occurrences of $\hat\theta(p)$ in the estimator with $\bar{\psi}$ without affecting consistency due to \autoref{station}. We then show that the resulting modification of $\hat\sigma^2 - \hat\alpha\hat\alpha'$ can be rewritten as $\hat{\bm{\Sigma}}_\rho - (\hat\alpha - \bar{\psi}) (\hat\alpha - \bar{\psi})'$. The second term is clearly asymptotically positive semidefinite and can be shown to converge to some $n$-dependent constant, which we call $\bm{\Sigma}_n - \tilde{\bm{\Sigma}}_n$. 

\bigskip

\begin{proof}[Proof of \autoref{spatialhac}]
  Consider the generalized spatial HAC in \autoref{snostat} with $\hat\theta(p)$ replaced with $\bar{\psi}$. Define $\hat\sigma^2 = \hat v + \hat c$ and $\hat\alpha = \bar{\psi} + \hat\delta$, where 
  \begin{align*}
    &\hat v = \frac{1}{n} \sum_{i=1}^n \psi_i \psi_i', \\
    &\hat c = \frac{1}{n} \sum_{i=1}^n \sum_{j \neq i} (\psi_i \psi_j' - \bar{\psi}\bar{\psi}') K\left( (\rho_i-\rho_j) / h_n \right), \\ 
    &\hat\delta = \frac{1}{n} \sum_{i=1}^n \sum_{j\neq i} (\psi_i - \bar{\psi}) K((\rho_i-\rho_j) / h_n),
  \end{align*}

  \noindent $K(\cdot)$ is a kernel function satisfying \autoref{kernel}, and the bandwidth $h_n$ is a diverging sequence satisfying $h_n = O(n^{1/(3d)})$. The estimator is given by
  \begin{equation*}
    \hat\sigma^2 - \hat\alpha\hat\alpha'. 
  \end{equation*}
  
  \noindent \autoref{station} allows us to use $\bar{\psi}$ as a $\sqrt{n}$-consistent estimator for $\E[\psi_1 \mid \rho_1]$ in place of $\hat\theta(\rho_1)$ used in \cite{leung2019normal}. We can then weaken the bandwidth rate from $h_n = O(n^{1/(4d)})$ to $h_n = O(n^{1/(3d)})$.
  
  Some algebra shows that
  \begin{equation}
    \hat\sigma^2 - \hat\alpha\hat\alpha' = \frac{1}{n} \sum_{i=1}^n \sum_{j=1}^n (\psi_i - \bar{\psi}) (\psi_j - \bar{\psi})' K((\rho_i - \rho_j) / h_n) - \hat\delta\hat\delta'. \label{hacrewrite}
  \end{equation}

  \noindent Following verbatim the argument in the proof of Theorem G.1 in \cite{leung2019normal}, we have that there exist $\sigma_n^2$, $\alpha_n$, $\delta_n$ such that $\abs{\hat\sigma^2-\sigma_n^2} \plimarrow 0$, $\abs{\hat\alpha - \alpha_n} \plimarrow 0$, and $\abs{\hat\delta - \delta_n} \plimarrow 0$, and $\abs{\bm{\Sigma}_n - (\sigma_n^2 - \alpha_n\alpha_n')} \rightarrow 0$.\footnote{$\sigma_n^2$ and $\alpha_n$ are defined just prior to equation (G.1) and $\delta_n$ in (G.3). In our setting, a slight modification is that we replace $\bm{p}_{x,r_n}$ everywhere with $\bm{c}_{x,r_n}$, the latter defined in our \autoref{sgsstatic}.} Therefore, using \autoref{hacrewrite},
  \begin{equation}
    \frac{1}{n} \sum_{i=1}^n \sum_{j=1}^n (\psi_i - \bar{\psi}) (\psi_j - \bar{\psi})' K((\rho_i - \rho_j) / h_n) = \bm{\Sigma}_n + \delta_n\delta_n' + o_p(1). \label{gwve0b94}
  \end{equation}
  
  \noindent Since $\delta_n\delta_n'$ is positive semidefinite and the left-hand side is exactly $\hat{\bm{\Sigma}}_\rho$, the first claim of the theorem follows.

  For the second claim, let $N \sim \text{Poisson}(n)$ be independent of all other primitives. As in \autoref{networkstation}, let $\psi_i^N$ be $i$'s agent statistic under the (static or dynamic) model where the number of agents is $N$. Recall the definition of the add-one cost $\Xi_n$ from \autoref{ssmclt}, and let $\Xi_N$ be its analog in the model with $N$, rather than $n$, agents. Then, as defined in the proof of Theorem G.1 of \cite{leung2019normal}, 
  \begin{equation*}
    \delta_n = \E[\Xi_N] - \E[\psi_{N+1}^{N+1}],
  \end{equation*}

  \noindent which is the indirect effect on the total agent statistic from removing agent $N+1$ from the model. Since $\E[\psi_1]=c$ for all $n$, it follows that $\E[\Xi_N \mid N] - \E[\psi_{N+1}^{N+1} \mid N] = \zero$ a.s., and therefore, $\delta_n = \zero$ for all $n$. Consistency of $\hat{\bm{\Sigma}}_\rho$ then follows from \autoref{gwve0b94}.
\end{proof}

%------------------------------------
\subsection{Network HAC}\label{snethacpf}
%------------------------------------

{\bf Proof Sketch.} We prove \autoref{networkhac} by decomposing $\hat{\bm{\Sigma}}_{\bm{A}}$ into two terms: a leading term with the summands multiplied by $\bm{1}\{\norm{\rho_i-\rho_j} \leq b_n\}$ and a remainder term with the summands multiplied by $\bm{1}\{\norm{\rho_i-\rho_j} > b_n\}$ for a certain polynomially diverging sequence $\{b_n\}_{n\in\mathbb{N}}$. The leading term is shown to converge to the target. The remainder term is shown to be $o_p(1)$ because agents that are spatially distant ($\norm{\rho_i-\rho_j} > b_n$) cannot also be close in terms of path distance (i.e.\ $\ell_{\bm{A}}(i,j) \leq h_n$, as required by the HAC kernel weight). The logarithmic rate of $h_n$ is important for the latter argument. A more detailed sketch of the argument for the remainder term is given in the proof below. We next expound on the leading term.

The leading term almost has the form of a spatial HAC estimator, since $\bm{1}\{\norm{\rho_i-\rho_j} \leq b_n\}$ satisfies the conditions for a kernel function, but unfortunately we also have the network-dependent term $K(\ell_{\bm{A}}(i,j) / h_n)$ floating around (since it is a network HAC). We need to show that the leading term is consistent for the desired target, namely $\E[n^{-1} \sum_{i,j} (\psi_i - \E[\psi_1]) (\psi_j - \E[\psi_1])']$, which is just the expectation of the leading term without the kernel or indicator weights. This is done by (1) showing that the leading term is close to its analog under a Poissonized model where the number of agent is $N \sim \text{Poisson}(n)$, and (2) doing mean-variance calculations under the Poissonized model. The interesting aspect of this argument is showing that the mean converges to the target. This is usually straightforward in proofs of the consistency of HAC estimators because under weak conditions, the kernel converges to 1 a.s.\ as the bandwidth diverges. In our context, this is easy to establish for the indicator part of the kernel $\bm{1}\{\norm{\rho_i-\rho_j} \leq b_n\}$, since $b_n \rightarrow\infty$. However, without additional conditions, $K( \ell_{\bm{A}}(i,j) / h_n )$ does not a.s.\ converge to one for any $i,j$. This is because with positive probability, $i$ and $j$ can be disconnected in the limit, meaning $\lim_{n\rightarrow\infty} \ell_{\bm{A}}(i,j) = \infty$. For such agents, $K( \ell_{\bm{A}}(i,j) / h_n )=0$ by \autoref{kernel}, which creates a bias. 
  
This is where Assumptions \ref{networkstation} and \ref{netcexog} are needed; they are used to establish that $\cov(\psi_i, \psi_j \mid \bm{X}, \bm{A}) = 0$ if $\ell_{\bm{A}}(i,j) = \infty$, so disconnected agents are dropped from the covariance, eliminating this bias. This sheds light on the difficulty of allowing for an endogenous network. Without \autoref{netcexog}, the unobservables $\bm{\varepsilon}$ can have a complicated dependence structure conditional on $\bm{A}$, so the conditional covariance is not typically zero.

The consistency proof in \cite{kojevnikov2019limit} is much more straightforward than ours due to the notion of weak dependence used, which essentially implies that $\cov(\psi_i, \psi_j \mid \bm{A}) \rightarrow 0$ at a fast rate as $\ell_{\bm{A}}(i,j) \rightarrow \infty$. We use a fundamentally different weak dependence property (stabilization) proposed by \cite{leung2019normal}, which is unconditional in nature. It lends itself more easily to establishing consistency of $\hat{\bm{\Sigma}}_\rho$ rather than $\hat{\bm{\Sigma}}_{\bm{A}}$, so for this reason, our strategy is to essentially to approximate the latter with a version of the former.

\bigskip

\begin{proof}[Proof of \autoref{networkhac}]
  The second claim of the theorem is shown in the exact same way as \autoref{spatialhac}, so we only focus on the first claim. Define $\hat\sigma^2, \hat\alpha, \hat\delta, \sigma_n^2, \alpha_n, \delta_n$ as in the proof of \autoref{spatialhac}, and recall from that proof that $\abs{\hat\delta - \delta_n} \plimarrow 0$. It suffices to show that
  \begin{equation*}
    \lvert \hat{\bm{\Sigma}}_{\bm{A}} - \big( \sigma_n^2 - \alpha_n\alpha_n' + \delta_n\delta_n' \big) \rvert \plimarrow 0,
  \end{equation*}

  \noindent since, as discussed in that proof, $\abs{\bm{\Sigma}_n - (\sigma_n^2 - \alpha_n\alpha_n')} \rightarrow 0$. Note that the notation here is the same for both the static and dynamic models. Let $\{b_n\}_{n\in\mathbb{N}}$ be a sequence of positive numbers such that
  \begin{equation}
    b_n = o(n^{1/(16d)}). \label{b_n}
  \end{equation}
  
  \noindent Decompose 
  \begin{multline}
    \hat{\bm{\Sigma}}_{\bm{A}} = \frac{1}{n} \sum_{i=1}^n \sum_{j\neq i} (\psi_i - \bar{\psi}) (\psi_j - \bar{\psi})' K\left( \ell_{\bm{A}}(i,j) / h_n \right) \bm{1}\{\norm{\rho_i-\rho_j} \leq b_n\} \\
    + \frac{1}{n} \sum_{i=1}^n \sum_{j\neq i} (\psi_i - \bar{\psi}) (\psi_j - \bar{\psi})' K\left( \ell_{\bm{A}}(i,j) / h_n \right) \bm{1}\{\norm{\rho_i-\rho_j} > b_n\}. \label{hacdecomp}
  \end{multline}

  \noindent \autoref{1stterm} shows that the first term on the right-hand side is consistent for $\bm{\Sigma}_n + \delta_n\delta_n'$, and \autoref{2ndterm} shows that the second term is $o_p(1)$, which completes the proof.
  
  {\bf Detailed proof sketch.} The remainder of the proof provides a high-level sketch of the lemmas. The first term on the right-hand side is almost the same as the first term on the right-hand side of \autoref{hacrewrite}. We can effectively think of 
  \begin{equation*}
    \tilde K(\rho_i-\rho_j)/b_n) \equiv K\left( \ell_{\bm{A}}(i,j) / h_n \right) \bm{1}\{\norm{\rho_i-\rho_j} \leq b_n\}
  \end{equation*}
  
  \noindent as the kernel of a spatial HAC estimator, except for the unusual feature that this kernel also depends on $\bm{A}$. Following the algebra for \autoref{hacrewrite}, we have
  \begin{equation*}
    \frac{1}{n} \sum_{i=1}^n \sum_{j\neq i} (\psi_i - \bar{\psi}) (\psi_j - \bar{\psi})' K\left( \ell_{\bm{A}}(i,j) / h_n \right) \bm{1}\{\norm{\rho_i-\rho_j} \leq b_n\} 
    = \tilde\sigma^2 - \tilde\alpha\tilde\alpha' + \tilde\delta\tilde\delta',
  \end{equation*}

  \noindent where $\tilde\sigma^2$, $\tilde\alpha$, and $\tilde\delta$ are analogs of $\hat\sigma^2$, $\hat\alpha$, and $\hat\delta$ with $K((\rho_i-\rho_j)/h_n)$ replaced with $\tilde K((\rho_i-\rho_j)/b_n)$.

  If it weren't for the presence of $K\left( \ell_{\bm{A}}(i,j) / h_n \right)$, we would directly apply Theorem G.1 of \cite{leung2019normal} to obtain 
  \begin{equation*}
    \lvert \tilde\sigma^2 - \tilde\alpha\tilde\alpha' + \tilde\delta\tilde\delta' - \big( \sigma_n^2 - \alpha_n\alpha_n' + \delta_n\delta_n' \big) \rvert \plimarrow 0,
  \end{equation*}

  \noindent which is the desired result. The presence of the kernel, however, creates a number of complications. To understand them, we first recall the broad outlines of the proof of Theorem G.1. It proceeds through three steps. Step (1) shows that these quantities are consistent for their ``Poissonized'' analogs. By Poissonized analogs we mean the same estimators except we change the underlying data-generating process by replacing the vector of positions $(\rho_i)_{i=1}^n$ with $(\rho_i)_{i=1}^{N_n}$, where $N_n \sim \text{Poisson}(n)$ is independent of all other primitives, $\rho_i = \omega_n\tilde\rho_i$ for all $i$, and  $\tilde\rho_1, \tilde\rho_2, \dots$ are i.i.d.\ with density $f$.  Then the Poissonized analogs are shown to have (2) vanishing variances and (3) expectations that tend to the target estimands, namely $\sigma_n^2, \alpha_n, \delta_n$. This is easier to show for the Poissonized analogs as opposed to the original estimators because $(\rho_i)_{i=1}^{N_n}$ has the distribution of a Poisson point process, which has a convenient spatial independence property.
  
  We need to modify these arguments to account for the presence of $K\left( \ell_{\bm{A}}(i,j) / h_n \right)$. Step (2) in the previous paragraph proceeds almost exactly the same. The only difference is that we use the upper bound
  \begin{equation}
    \big \lvert K\left( \ell_{\bm{A}}(i,j) / h_n \right) \bm{1}\{\norm{\rho_i-\rho_j} \leq b_n\} \big \rvert \leq K^* \bm{1}\{\norm{\rho_i-\rho_j} \leq b_n\} \label{upperbound}
  \end{equation}

  \noindent due to \autoref{kernel}. Since the proof of step (2) consists of a series of upper bounds, this inequality enables us to basically ignore the presence of $K\left( \ell_{\bm{A}}(i,j) / h_n \right)$. 

  Next we turn to step (1). This also uses \autoref{upperbound}, as well as \autoref{pd}. The argument here is mostly technical, so we will not summarize it. However, recall the discussion of the importance of \autoref{pd} following its statement in \autoref{svar}.

  Finally we turn to step (3). Recall that this step involves showing that the expectations of the Poissonized estimators converge to their target estimands. Let us decompose this into two smaller steps. (3a) We show that expectations of the Poissonized estimators converge to analogs that replace $\tilde K((\rho_i-\rho_j)/b_n)$ with 1. Call these analogs their {\em target means} (as opposed to target estimands). (3b) We show that the target means converge to the target estimands. Step (3b) follows from step 1 of the proof of Theorem G.1. The more interesting complication is step (3a).

  Step (3a) is usually straightforward in proofs of HAC estimator consistency because under weak conditions, the kernel converges to 1 a.s.\ as the bandwidth diverges. In our context, this is easy to establish for the indicator part of the kernel $\bm{1}\{\norm{\rho_i-\rho_j} \leq b_n\}$ \cite[see][]{leung2019normal}. However, without additional conditions, $K\left( \ell_{\bm{A}}(i,j) / h_n \right)$ does not a.s.\ converge to one for any $i,j$. This is because with positive probability, $i$ and $j$ can be disconnected in the limit, meaning $\lim_{n\rightarrow\infty} \ell_{\bm{A}}(i,j) = \infty$. For such agents, $\tilde K((\rho_i-\rho_j)/b_n)=0$ by \autoref{kernel}, which creates a bias in $\tilde\sigma^2$. This is where Assumptions \ref{networkstation} and \ref{netcexog} are needed; they ensure that $\cov(\psi_i, \psi_j \mid \bm{X}, \bm{A}) = 0$ if $\lim_{n\rightarrow\infty} \ell_{\bm{A}}(i,j) = \infty$, so disconnected agents are dropped from the covariance, eliminating this bias. 
\end{proof}

%----------------------
\begin{lemma}\label{1stterm}
   Under the assumptions of \autoref{networkhac} (see the definitions made in the proof of that theorem),
  \begin{equation*}
    \bigg| \frac{1}{n} \sum_{i=1}^n \sum_{j\neq i} (\psi_i - \bar{\psi}) (\psi_j - \bar{\psi})' K\left( \ell_{\bm{A}}(i,j) / h_n \right) \bm{1}\{\norm{\rho_i-\rho_j} \leq b_n\} - \left(\bm{\Sigma}_n + \delta_n\delta_n'\right) \bigg| \plimarrow 0.
  \end{equation*}
\end{lemma}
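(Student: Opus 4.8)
The plan is to read the composite weight
\[
  \tilde K\bigl((\rho_i-\rho_j)/b_n\bigr) \equiv K\bigl(\ell_{\bm{A}}(i,j)/h_n\bigr)\,\bm{1}\{\norm{\rho_i-\rho_j}\leq b_n\}
\]
as a spatial-HAC kernel — the indicator alone satisfies \autoref{kernel} — carrying the single nonstandard feature that it also depends on the network $\bm{A}$. With this reading I would first repeat verbatim the algebra leading to \eqref{hacrewrite} to rewrite the left-hand side as $\tilde\sigma^2-\tilde\alpha\tilde\alpha'+\tilde\delta\tilde\delta'$, where $\tilde\sigma^2,\tilde\alpha,\tilde\delta$ are the analogs of $\hat\sigma^2,\hat\alpha,\hat\delta$ with $K((\rho_i-\rho_j)/h_n)$ replaced by $\tilde K((\rho_i-\rho_j)/b_n)$. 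The goal then reduces to $\lvert\tilde\sigma^2-\tilde\alpha\tilde\alpha'+\tilde\delta\tilde\delta'-(\sigma_n^2-\alpha_n\alpha_n'+\delta_n\delta_n')\rvert\plimarrow 0$, since $\abs{\bm{\Sigma}_n-(\sigma_n^2-\alpha_n\alpha_n')}\rightarrow 0$ as recorded in the proof of \autoref{spatialhac}. The whole argument then mirrors the three-step proof of Theorem G.1 in \cite{leung2019normal}, modified to carry the extra factor $K(\ell_{\bm{A}}(i,j)/h_n)$ throughout.

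The first two steps are essentially routine. Step (1) is Poissonization: replace the fixed sample of $n$ positions by $(\rho_i)_{i=1}^{N_n}$ with $N_n\sim\text{Poisson}(n)$ and show that $\tilde\sigma^2,\tilde\alpha,\tilde\delta$ are close in probability to their Poissonized analogs, invoking the spatial independence of the Poisson point process and using \autoref{pd} to transfer from path distance to spatial distance. Step (2) shows the Poissonized analogs have vanishing variances; here the extra factor is harmless because every summand is dominated via the uniform bound $\lvert K(\ell_{\bm{A}}(i,j)/h_n)\bm{1}\{\norm{\rho_i-\rho_j}\leq b_n\}\rvert\leq K^*\bm{1}\{\norm{\rho_i-\rho_j}\leq b_n\}$ from \eqref{upperbound}, so the moment bounds of \autoref{psibdsimple} together with network sparsity (\autoref{rsparse}) deliver the same estimates as in \cite{leung2019normal}, with the rate $b_n=o(n^{1/(16d)})$ of \eqref{b_n} keeping them summable.

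The crux is Step (3), that the expectations of the Poissonized analogs converge to the target estimands. I would split it as (3a) showing these expectations converge to their \emph{target means}, obtained by replacing $\tilde K((\rho_i-\rho_j)/b_n)$ by $1$, and (3b) showing the target means converge to $\sigma_n^2,\alpha_n,\delta_n$, where (3b) is already contained in step~1 of the proof of Theorem G.1. The obstacle lives in (3a). Ordinarily one argues the kernel tends to $1$ a.s.\ as the bandwidth diverges, and for $\bm{1}\{\norm{\rho_i-\rho_j}\leq b_n\}$ this is immediate since $b_n\rightarrow\infty$; but $K(\ell_{\bm{A}}(i,j)/h_n)$ does \emph{not} converge to $1$, because with positive probability $i$ and $j$ lie in different components, so $\ell_{\bm{A}}(i,j)\rightarrow\infty$ and $K=0$, which would bias $\tilde\sigma^2$ downward relative to the target covariance. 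The resolution is exactly Assumptions \ref{networkstation} and \ref{netcexog}: first-order stationarity makes $\bar\psi$ a valid estimator of $\E[\psi_1\mid\bm{X},\bm{A}]$ in the covariance aggregation, while network exogeneity forces $\cov(\psi_i,\psi_j\mid\bm{X},\bm{A})=0$ whenever $\ell_{\bm{A}}(i,j)=\infty$ (cf.\ \autoref{rnetendog}). Hence the disconnected pairs that the kernel zeroes out contribute nothing to the target covariance either, and the spurious bias vanishes. I expect verifying this covariance-killing property — and checking that it survives Poissonization — to be the only genuinely delicate part; the remaining estimates are routine given the moment and sparsity bounds already established for the CLTs.
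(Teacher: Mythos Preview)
Your proposal is correct and follows essentially the same three-step architecture as the paper's proof, including the identification of step~(3a) as the crux and the use of Assumptions~\ref{networkstation} and~\ref{netcexog} to kill the covariance of disconnected pairs (the paper packages this as a separate lemma, \autoref{unccom}). One caveat: you understate Step~(1) as ``essentially routine'' --- in fact the paper devotes the bulk of the proof to it, because Poissonization forces you to control the increment $K(\ell_{\bm{A}^{k+1}}(i,j)/h_n)-K(\ell_{\bm{A}^k}(i,j)/h_n)$, which is nonzero only when the added agent $k{+}1$ lies on the shortest $i$--$j$ path; bounding this term is exactly where \autoref{pd}(a) enters, via a H\"older split into the events $\{\ell_{\bm{A}}(i,j)\leq c\norm{\rho_i-\rho_j}\}$ and its complement, together with the radius-of-stabilization machinery from the CLT proofs.
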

\begin{proof}
  We formalize the argument sketched in the proof of \autoref{networkhac}, in particular steps (1) and (3a).

  {\bf Step (1).} The parts of the proof of Theorem G.1 of \cite{leung2019normal} that need to be modified are steps 1 and 2 of the proofs of Lemmas G.3 and G.6. We first consider G.3. Note that both proofs for simplicity only consider the case where $\psi_i$ is scalar, as we do next. 

  The only modification of step 1 needed is to account for the fact that our mean estimator is $\bar{\psi}$ rather than $\hat\theta(\rho_1)$, so we need to show that the former uniformly converges to $\E[\psi_1 \mid \rho_1]$. We claim that
  \begin{equation*}
    \bar{\psi} \plimarrow \E[\psi_1], \quad\text{and}\quad \E[\psi_1] = \E[\psi_1 \mid \bm{X}, \bm{A}] = \E[\psi_1 \mid \rho_1], 
  \end{equation*}

  \noindent where convergence occurs at a $n^{-1/2}$ rate by our CLT (\autoref{staticclt} for the static model, \autoref{dynamicclt} for the dynamic model). The first equality is direct from \autoref{networkstation}. The second equality follows because by the first equality,
  \begin{equation}
    \E\left[ \E[\psi_1] \mid \rho_1 \right] = \E\left[ \E[\psi_1 \mid \bm{X}, \bm{A}] \mid \rho_1 \right] = \E\left[ \E[\psi_1 \mid \bm{X}, \bm{A}, \rho_1] \mid \rho_1 \right] = \E[\psi_1 \mid \rho_1]. \label{2390d}
  \end{equation}

  \noindent The second equality above holds because conditional on $\bm{X},\bm{A}$, the randomness in the agent statistic only results from $\bm{\varepsilon}$, and this is conditionally independent of $\bm{\rho}$ by \autoref{netcexog}.
  
  Now we turn to step 2. Let $\psi_i^m, \tilde K_m(\cdot)$ be analogs of $\psi_i, \tilde K(\cdot)$ except we change the underlying data-generating process by changing the number of agents to $m$ rather than $n$ but keep $\omega_n$ defined as before. Define $\theta_m(p) = \E[\psi_i^m \mid \rho_i=p]$. The term $\rho_m$ in the proof of step 2 in Lemma G.3 is given in our context by
  \begin{equation*}
    \rho_m \equiv \sum_{i=1}^n \sum_{j\neq i} (\psi_i^m \psi_j^m - \theta_m(\rho_i) \theta_m(\rho_j)) \tilde K_m(\norm{\rho_i-\rho_j} / b_n).
  \end{equation*}

  \noindent We need to show that
  \begin{equation*}
    n^{-1}\abs{\rho_{N_n}-\rho_n} \plimarrow 0,
  \end{equation*}

  \noindent where $N_n \sim \text{Poisson}(n)$ is independent of all other model primitives.

  Fix $\varepsilon>0$. We show that the following quantity is $o_p(1)$:
  \begin{multline*}
    \underbrace{\E\left[ \bm{1}\left\{ n^{-1} \abs{\rho_{N_n} - \rho_n} > \varepsilon \right\} \bm{1}\left\{ \abs{N_n - n} \leq n/2 \right\} \right]}_{[I]} \\ + \underbrace{\E\left[ \bm{1}\left\{ n^{-1} \abs{\rho_{N_n} - \rho_n} > \varepsilon \right\} \bm{1}\left\{ \abs{N_n - n} > n/2 \right\} \right]}_{[II]}. 
  \end{multline*}

  \noindent By Lemma 1.4 of \cite{penrose2003},
  \begin{equation*}
    \prob\left( \abs{N_n - n} > n/2 \right) \rightarrow 0, 
  \end{equation*}

  \noindent and therefore $[II] \rightarrow 0$. On the other hand,
  \begin{align*}
    [I] &= \sum_{m\colon \abs{m-n} \leq n/2} \prob\left( n^{-1} \abs{\rho_m - \rho_n} > \varepsilon \right) \prob(N_n = m) \\
	&= \sum_{m\colon \abs{m-n} \leq n/2} \prob\left( n^{-1} \bigg| \sum_{k = n}^{m-1} \left( \rho_{k+1} - \rho_k \right) \bigg| > \varepsilon \right) \prob(N_n = m) \\
    &\leq \prob\left( \bigg| \frac{N_n-1-n}{\sqrt{n}} \bigg| \sup_{m\colon \abs{m-n}\leq n/2} \sup_{k \in [n,m-1]} n^{-1/2} \abs{\rho_{k+1} - \rho_k \big} > \varepsilon \right). 
  \end{align*}

  \noindent The term in the absolute value on the last line is $O_p(1)$ by the CLT. It then suffices to show that the term multiplying it is $o_p(1)$. 
  
  Observe that for any sequence $\{k_n\}_{n\in\mathbb{N}}$ such that $k_n \in [n,m-1]$ and $m$ satisfies $\abs{m-n}\leq n/2$, we have $k_n/n \rightarrow c \in (0,\infty)$. Therefore it is enough to show that
  \begin{equation*}
    n^{-1/2} \big(\rho_{k+1} - \rho_k\big) = o_p(1) 
  \end{equation*}

  \noindent for any $k \equiv k_n$ satisfying $k/n \rightarrow c \in (0,\infty)$. 
  
  Decompose
  \begin{equation*}
    n^{-1/2} \left( \rho_{k+1} - \rho_k \right) = [A1] + [A2] + [B] + [C], 
  \end{equation*}
  
  \noindent where 
  \begin{align*}
    &[A1] = n^{-1/2} \sum_{i=1}^k \left( \psi_i^{k+1} \psi_{k+1}^{k+1} - \theta_{k+1}(\rho_i) \theta_{k+1}(\rho_{k+1}) \right) \tilde K_{k+1}\left( \norm{\rho_i-\rho_{k+1}}/b_n \right), \\
    &[A2] = n^{-1/2} \sum_{j=1}^k \left( \psi_{k+1}^{k+1} \psi_j^{k+1} - \theta_{k+1}(\rho_{k+1}) \theta_{k+1}(\rho_j) \right) \tilde K_{k+1}\left( \norm{\rho_{k+1}-\rho_j}/b_n \right), \\
    &\begin{aligned} [B] = n^{-1/2} \sum_{i=1}^k &\sum_{j\neq i, j=1}^k \left( \psi_i^{k+1} \psi_j^{k+1} \right. K(\ell_{\bm{A}^{k+1}}(i,j)/h_n)  \\  & \left.- \psi_i^k \psi_j^k K(\ell_{\bm{A}^k}(i,j)/h_n) \right) \bm{1}\{\norm{\rho_i-\rho_j} \leq b_n\}, \end{aligned} \\
    &\begin{aligned} [C] = n^{-1/2} \sum_{i=1}^k &\sum_{j\neq i, j=1}^k \left( \theta_{k+1}(\rho_i)\theta_{k+1}(\rho_j) K(\ell_{\bm{A}^{k+1}}(i,j)/h_n) \right.  \\  & \left.- \theta_k(\rho_i)\theta_k(\rho_j) K(\ell_{\bm{A}^k}(i,j)/h_n) \right) \bm{1}\{\norm{\rho_i-\rho_j} \leq b_n\}, \end{aligned}
  \end{align*}

  \noindent where $\ell_{\bm{A}^m}(i,j)$ is the path distance between agents $i$ and $j$ in the network consisting of agents $1, \dots, m$.  We show that these three terms converge in $L_1$ to zero. The argument for $[A1]$ and $[A2]$ is identical to that in Lemma G.3 of \cite{leung2019normal}, since using the upper bound \autoref{upperbound} allows us to ignore $K(\ell_{\bm{A}}(i,j)/h_n)$ in the kernel.

  We only consider $[B]$ here, as the modification of $[C]$ is very similar. Let $\RR_i^m$ be the radius of stabilization of agent $i$ in the model with $m$ agents. For the static model, this is defined in \autoref{RR_i} for the case $\X = \X_m$ and for the dynamic model in \autoref{RR_iT}. Define the event
  \begin{multline*}
    \mathcal{E}_n = \left\{ \RR_i^{k+1} \leq \norm{\rho_i-\rho_{k+1}}/2 \medcap \RR_j^{k+1} \leq \norm{\rho_j-\rho_{k+1}}/2 \right. \\ \left. \medcap \RR_i^k \leq \norm{\rho_i-\rho_{k+1}}/2 \medcap \RR_j^k \leq \norm{\rho_j-\rho_{k+1}}/2 \right\}. 
  \end{multline*}
  
  \noindent Then 
  \begin{multline}
    \E[\lvert[B]\rvert] \leq n^{-1/2}k^2 \E\left[ \lvert \psi_i^{k+1} \psi_j^{k+1} K(\ell_{\bm{A}^{k+1}}(i,j)/h_n) - \psi_i^k \psi_j^k K(\ell_{\bm{A}^k}(i,j)/h_n) \rvert \right. \\ \left. \times \bm{1}\{\norm{\rho_i-\rho_j} \leq b_n\} \bm{1}\{\mathcal{E}_n^c\} \right] \\ + n^{-1/2}k^2 \E\left[ \lvert \psi_i^{k+1} \psi_j^{k+1} K(\ell_{\bm{A}^{k+1}}(i,j)/h_n) - \psi_i^k \psi_j^k K(\ell_{\bm{A}^k}(i,j)/h_n) \rvert \right. \\ \left. \times \bm{1}\{\norm{\rho_i-\rho_j} \leq b_n\} \bm{1}\{\mathcal{E}_n\} \right]. \label{EB1}
  \end{multline}
  
  \noindent The first term on the right-hand side of \autoref{EB1} is $o_p(1)$ following the original argument in the proof of Lemma G.3 (the same lemma being modified here), where we use \autoref{upperbound} on occasion to ignore the network kernel. So consider the second term. Under the event $\mathcal{E}_n$, $\psi_i^{k+1} = \psi_i^k$ (and likewise for $j$), as argued in the proof of Lemma G.3. In brief, the event says that the position of agent $k+1$ lies outside the cube centered at $i$ with radius equal to $\max\{\RR_i^{k+1}, \RR_i^k\}$. The radius of stabilization has the property that the removal of agents with positions outside of this ball has no effect on the realization of $i$'s agent statistic, and hence, $\psi_i^{k+1} = \psi_i^k$. This property is established under the static model in \autoref{static_xistab} and under the dynamic model in the proof of \autoref{dynamicclt}.
  
  We have therefore established that the second term on the right-hand side of \autoref{EB1} equals
  \begin{align*} 
    n^{-1/2}k^2 \E\left[ \lvert \psi_i^{k+1} \psi_j^{k+1} \big( K(\ell_{\bm{A}^{k+1}}(i,j)/h_n) - \right.& K(\ell_{\bm{A}^k}(i,j)/h_n) \big) \rvert \\ & \left. \times \bm{1}\{\norm{\rho_i-\rho_j} \leq b_n\} \bm{1}\{\mathcal{E}_n\} \right].
  \end{align*}

  \noindent This is bounded above by
  \begin{align*} 
    n^{-1/2}k^2 \E\big[ \bm{1}\{\text{agent } k{+}1 \text{ lies on the} & \text{ shortest path connecting } i,j \text{ in } \bm{A}^{k+1}\} \\ & \times \bm{1}\{\ell_{\bm{A}^{k+1}}(i,j) < \infty\} \lvert \psi_i^{k+1} \psi_j^{k+1} \rvert \bm{1}\{\norm{\rho_i-\rho_j} \leq b_n\} \big].
  \end{align*}

  \noindent The reason is that (1) if $k+1$ does not lie on the shortest path, then $\ell_{\bm{A}^{k+1}}(i,j) = \ell_{\bm{A}^k}(i,j)$, and (2) if $\ell_{\bm{A}^{k+1}}(i,j)=\infty$, then $\ell_{\bm{A}^{k+1}}(i,j) = \ell_{\bm{A}^k}(i,j) = \infty$.  Then for $C_n = n^{-1/2}k^2/(k-1)$, the previous equation equals
  \begin{equation}
    C_n \E\left[ \ell_{\bm{A}^{k+1}}(i,j)\, \abs{\psi_i^{k+1} \psi_j^{k+1}} \,\bm{1}\{\ell_{\bm{A}^{k+1}}(i,j) < \infty\} \bm{1}\{\norm{\rho_i-\rho_j} \leq b_n\} \right]. \label{time0}
  \end{equation}

  \noindent Split this into to expectations, one with the term in the expectation multiplied by $\bm{1}\{\ell_{\bm{A}^{k+1}}(i,j) \leq c\norm{\rho_i-\rho_j}\}$ and the other by $\bm{1}\{\ell_{\bm{A}^{k+1}}(i,j) > c\norm{\rho_i-\rho_j}\}$, where $c$ is the constant in \autoref{pd}. Call these two terms $[I]$ and $[II]$, respectively. We have
  \begin{align*}
    [I] &\leq c\,b_n C_n \E\left[\lvert \psi_i^{k+1} \psi_j^{k+1} \rvert \bm{1}\{\norm{\rho_i-\rho_j} \leq b_n\}\right] \\
	&\leq c\, b_n C_n \underbrace{\prob\left( \norm{\rho_i-\rho_j} \leq b_n \right)}_{O(n^{-1} b_n^d)} \underbrace{\sup_{p,p'} \E\left[ \lvert \psi_i^{k+1} \psi_j^{k+1} \rvert \mid \rho_i = \omega_n p, \rho_j = \omega_n p' \right]}_{O(1)}, 
  \end{align*}

  \noindent where the supremum term is $O(1)$ by \autoref{xibd}. Then by \autoref{b_n}, 
  \begin{equation}
    [I] = O(b_n n^{-1/2} n n^{-1} b_n^d) = o(n^{1/4} n^{-1/2} n^{1/16}) = o(1)
    \label{time1}
  \end{equation}

  \noindent On the other hand, by H\"{o}lder's inequality,
  \begin{multline*}
    [II] \equiv C_n \E\big[ \lvert \psi_i^{k+1} \psi_j^{k+1} \rvert \,\ell_{\bm{A}^{k+1}}(i,j) \bm{1}\{c\norm{\rho_i-\rho_j} < \ell_{\bm{A}^{k+1}}(i,j) < \infty, \norm{\rho_i-\rho_j} \leq b_n\} \big] \\
    \leq C_n \E\left[ \ell_{\bm{A}^{k+1}}(i,j)^{4/3} \bm{1}\{c\norm{\rho_i-\rho_j} < \ell_{\bm{A}^{k+1}}(i,j) < \infty, \norm{\rho_i-\rho_j} \leq b_n\} \right]^{3/4} \\ \times \E\left[ \lvert \psi_i^{k+1} \psi_j^{k+1} \rvert^4 \bm{1}\{\norm{\rho_i-\rho_j} \leq b_n\} \right]^{1/4}.
  \end{multline*}

  \noindent Since $\ell_{\bm{A}^{k+1}}(i,j) \leq k+1$, this is bounded above by
  \begin{multline*}
    \underbrace{C_n(k+1)}_{O(n^{3/2})} \underbrace{\prob\left( \norm{\rho_i-\rho_j} \leq b_n \mid c\norm{\rho_i-\rho_j} < \ell_{\bm{A}^{k+1}}(i,j) < \infty \right)^{3/4}}_{O( (n^{-1}b_n^d)^{3/4} )} \\
    \times \prob\left( c\norm{\rho_i-\rho_j} < \ell_{\bm{A}^{k+1}}(i,j) < \infty \right)^{3/4} \underbrace{\prob\left( \norm{\rho_i-\rho_j} \leq b_n \right)^{1/4}}_{O( (n^{-1}b_n^d)^{1/4} )} \\
    \times \underbrace{\sup_{p,p'} \E\left[ \abs{\psi_i^{k+1} \psi_j^{k+1}}^4 \mid \rho_i = \omega_n p, \rho_j = \omega_n p' \right]^{1/4}}_{O(1)},
  \end{multline*}

  \noindent noting that $k = O(n)$. The supremum term is $O(1)$ by \autoref{xibd}. By \autoref{pd}(a), $\prob\left( c\norm{\rho_i-\rho_j} < \ell_{\bm{A}^{k+1}}(i,j) < \infty \right) = O(n^{-3/4})$. Hence, the previous equation is $O(n^{3/2} n^{-9/16} n^{-1}b_n^d) = o(1)$ by \autoref{b_n}. We have therefore established that $[II] = o(1)$, which together with \autoref{time1}, implies $\autoref{time0} = o(1)$. Consequently,  the second term on the right-hand side of \autoref{EB1} is $o(1)$, as desired.

  This completes the modification of step 2 of Lemma G.3. The argument for step 2 of Lemma G.6 is the same.

  {\bf Step (3a).} The arguments in the proof of Theorem G.1 that need to be modified are step 2 of the proof of Lemma G.4 and step 1 of the proof of Lemma G.7 \citep[both in][]{leung2019normal}. We first consider the former. We will need the following definitions.
  \begin{itemize}
    \item Let $\tilde\rho_{-1}, \tilde\rho_0, \tilde\rho_1, \dots$ be i.i.d.\ with density $f$. Define $\rho_i = \omega_n\tilde\rho_i$ for all $i$. Let $N_n \sim \text{Poisson}(n)$ be independent of $\tilde\rho_1, \tilde\rho_2, \dots$ and all other primitives.

  \item Let $\psi_i^N$ be the analog of $\psi_i$ under the {\em Poissonized model}. By this we mean that we compute $i$'s agent statistic under a modified data-generating process that replaces the original the set of positions $(\rho_i)_{i=1}^n$ with $(\rho_i)_{i=1}^{N_n}$. Let $\bm{A}^N$ be the network realized under this Poissonized model.

    \item Let $\psi_i^{N+}$ be the analog of $\psi_i$ under the {\em $i$-specific Poissonized model}, which is almost the same as the Poissonized model in the previous paragraph, except the set of positions is given by $\{\rho_i, \rho_1, \rho_2, \dots, \rho_{i-1}, \rho_{i+1}, \dots, \rho_{N_n}\}$. Let $\theta_N(\rho_i) = \E[\psi_i^{N+} \mid \rho_i]$. For $i > 1$, this is the same as the Poissonized model, but for $i < 1$, it is not (although they produce the same distribution of outcomes).

    \item Let $\psi_i^{N+j}$ be the analog of $\psi_i$ under the {\em $(i,j)$-specific Poissonized model}, which is almost the same as the Poissonized model, except the set of positions is given by $\{\rho_i, \rho_j, \rho_1, \rho_2, \dots, \rho_{i-1}, \rho_{i+1}, \dots, \dots, \rho_{j-1}, \rho_{j+1}, \dots, \rho_{N_n}\}$.
  \end{itemize}
  
  Our goal is to show $\E[n^{-1}\rho_{N_n}] - c_n = o(1)$, where
  \begin{equation*}
    c_n \equiv \E\left[ \frac{1}{n} \sum_{i=1}^{N_n} \sum_{j\neq i, j=1}^{N_n} (\psi_i^N \psi_j^N - \theta_N(\rho_i) \theta_N(\rho_j)) \right],
  \end{equation*}
  
  \noindent and
  \begin{equation*}
    \rho_{N_n} \equiv \sum_{i=1}^{N_n} \sum_{j\neq i, j=1}^{N_n} (\psi_i^N \psi_j^N - \theta_N(\rho_i) \theta_N(\rho_j)) K\left( \ell_{\bm{A}^N}(i,j) / h_n \right) \bm{1}\{\norm{\rho_i-\rho_j} \leq b_n\}.
  \end{equation*}

  \noindent In the terminology of the proof sketch in the proof of \autoref{networkhac}, $c_n$ is the ``target mean.'' Let $\bm{A}^{ij}$ be the network under the $(i,j)$-specific Poissonized model. Decompose 
  \begin{multline*}
    c_n = \E\left[ \frac{1}{n} \sum_{i=1}^{N_n} \sum_{j\neq i, j=1}^{N_n} (\psi_i^N \psi_j^N - \theta_N(\rho_i) \theta_N(\rho_j)) \bm{1}\{\ell_{\bm{A}^N}(i,j) < \infty\} \right] \\ + \E\left[ \frac{1}{n} \sum_{i=1}^{N_n} \sum_{j\neq i, j=1}^{N_n} (\psi_i^N \psi_j^N - \theta_N(\rho_i) \theta_N(\rho_j)) \bm{1}\{\ell_{\bm{A}^N}(i,j) = \infty\} \right].
  \end{multline*}

  \noindent We first show that the second term is zero. Indeed this term equals
  \begin{equation}
    n\, \E\left[ \E\left[ \psi_{-1}^{N+0} \psi_0^{N+(-1)} \mid \{(X_k,\rho_k)\}_{k=-1}^{N_n}, \{\zeta_{k\ell}\}_{k,\ell=-1}^{N_n}, N_n \right] \bm{1}\{\ell_{\bm{A}^{-1,0}}(-1,0)=\infty\} \right] \label{32df0}
  \end{equation}

  \noindent by the Slivnyak-Mecke formula \citep[e.g][Lemma H.2]{leung2019normal} and the fact that agent types are identically distributed. (Recall the definition of the Poissonized models at the start of step (3a) above which introduced agents $-1$ and $0$.) By \autoref{unccom} below,
  \begin{multline}
    \autoref{32df0} = \E\left[ \E\left[ \psi_{-1}^{N+} \mid \{(X_k,\rho_k)\}_{k=-1}^{N_n}, \{\zeta_{k\ell}\}_{k,\ell=-1}^{N_n}, N_n \right] \right. \\ \left. \times \E\left[ \psi_0^{N+} \mid \{(X_k,\rho_k)\}_{k=-1}^{N_n}, \{\zeta_{k\ell}\}_{k,\ell=-1}^{N_n}, N_n \right] \bm{1}\{\ell_{\bm{A}^{-1,0}}(-1,0)=\infty\} \right]. \label{r1d03}
  \end{multline}

  \noindent Let $\bm{A}^0$ denote the network under the $0$-specific Poissonized model. By the network formation model and \autoref{netcexog},
  \begin{align*}
    \E\left[ \psi_0^{N+} \mid \{(X_k,\rho_k)\}_{k=-1}^{N_n}, \{\zeta_{k\ell}\}_{k,\ell=-1}^{N_n}, N_n \right] &= \E\left[ \psi_0^{N+} \mid \{X_k\}_{k=0}^{N_n}, \bm{A}^0, N_n \right] \\
    &= \E\left[ \psi_0^{N+} \mid \{X_k\}_{k=0}^{N_n}, \bm{A}^0, N_n, \rho_0 \right].
  \end{align*}

  \noindent By \autoref{networkstation}, $\E[\psi_0^{N+}]$ equals the second term and hence also the third term. Taking expectations conditional on $\rho_0$ then yields
  \begin{equation}
    \E\left[ \psi_0^{N+} \mid \{(X_k,\rho_k)\}_{k=-1}^{N_n}, \{\zeta_{k\ell}\}_{k,\ell=-1}^{N_n}, N_n \right] = \E\left[ \psi_0^{N+} \right] = \theta_N(\rho_0), \label{netstatpois}
  \end{equation}

  \noindent since $N_n \indep \rho_0$. Using \autoref{netstatpois}, 
  \begin{equation*}
    \autoref{r1d03} = \E\left[ \theta_N(\rho_{-1}) \theta_N(\rho_0) \bm{1}\{\ell_{\bm{A}^{-1,0}}(-1,0)=\infty\} \right].
  \end{equation*}

  \noindent Since agent types are identically distributed, we can replace $-1$ with $i$ and $0$ with $j$ in this expression. We have therefore shown that
  \begin{equation*}
    \autoref{32df0} = \E\left[ \theta_N(\rho_i) \theta_N(\rho_j) \bm{1}\{\ell_{\bm{A}^{ij}}(i,j)=\infty\} \right].
  \end{equation*}

  \noindent Consequently,
  \begin{equation*}
    c_n = \E\left[ \frac{1}{n} \sum_{i=1}^{N_n} \sum_{j\neq i, j=1}^{N_n} (\psi_i^N \psi_j^N - \theta_N(\rho_i) \theta_N(\rho_j)) \bm{1}\{\ell_{\bm{A}^N}(i,j) < \infty\} \right].
  \end{equation*}

  Next, notice that under the event $\ell_{\bm{A}^N}(i,j)=\infty$, we have $K\left( \ell_{\bm{A}^N}(i,j) / h_n \right)=0$ by \autoref{kernel}. Hence, 
  \begin{multline*}
    \E[n^{-1}\rho_{N_n}] = \E\bigg[ \frac{1}{n} \sum_{i=1}^{N_n} \sum_{j\neq i, j=1}^{N_n} (\psi_i^N \psi_j^N - \theta_N(\rho_i) \theta_N(\rho_j)) \\ \times K\left( \ell_{\bm{A}^N}(i,j) / h_n \right) \bm{1}\{\norm{\rho_i-\rho_j} \leq b_n\} \bm{1}\{\ell_{\bm{A}^N}(i,j) < \infty\} \bigg],
  \end{multline*}

  \noindent The previous two equations yield
  \begin{align*}
    \E[n^{-1}&\rho_{N_n}] - c_n \\
    =&\,\, \E\bigg[ \frac{1}{n} \sum_{i=1}^{N_n} \sum_{j\neq i, j=1}^{N_n} (\psi_i^N \psi_j^N - \theta_N(\rho_i) \theta_N(\rho_j)) \\ 
     &\times \big( K\left( \ell_{\bm{A}^N}(i,j) / h_n \right) \bm{1}\{\norm{\rho_i-\rho_j} \leq b_n\} - 1 \big) \bm{1}\{\ell_{\bm{A}^N}(i,j) < \infty\} \bigg] \\
      =&\, n \int_{\R^d} \int_{\R^d} \E\bigg[ \left(\psi_i^{N+j} \psi_j^{N+j} - \theta_N(\rho_i) \theta_N(\rho_j)\right) \big( K\left( \ell_{\bm{A}^{ij}}(i,j) / h_n \right) \bm{1}\{\norm{\rho_i-\rho_j} \leq b_n\} \\
       & - 1 \big) \bm{1}\{\ell_{\bm{A}^{ij}}(i,j) < \infty\} \,\bigg|\, \rho_i=\omega_n p, \rho_j=\omega_n p' \bigg] f(p) f(p') \,\text{d}p \,\text{d}p'
  \end{align*}

  \noindent where the last line follows from the Slivnyak-Mecke formula \citep[e.g][Lemma H.2]{leung2019normal}. By a change of variables $p'' = p+\omega_n(p'-p)$, the last line equals 
  \begin{multline}
    \kappa \int_{\R^d} \int_{\R^d} \E\bigg[ (\psi_i^N \psi_j^N - \theta_N(\omega_n \tilde\rho_i) \theta_N(\omega_n \tilde\rho_j)) \big( K\left( \ell_{\bm{A}^{ij}}(i,j) / h_n \right) \bm{1}\{\norm{p-p''} \leq b_n\} - 1 \big) \\
    \times \bm{1}\{\ell_{\bm{A}^{ij}}(i,j) < \infty\} \,\bigg|\, \tilde\rho_i=p, \tilde\rho_j=p+\omega_n^{-1}(p''-p) \bigg] f(p) f(p+\omega_n^{-1}(p''-p)) \,\text{d}p \,\text{d}p''. \label{vbdj2io}
  \end{multline}

  We next show that the term in the conditional expectation is $o_p(1)$ and then apply dominated convergence. First, by \autoref{xibd}, $(\psi_i^N \psi_j^N - \theta_N(\omega_n \tilde\rho_i) \theta_N(\omega_n \tilde\rho_j)) = O(1)$. Second, $\bm{1}\{\norm{p-p''} \leq b_n\} \rightarrow 1$ by definition of $b_n$ in \autoref{b_n}. Third, we claim $K\left( \ell_{\bm{A}^{ij}}(i,j) / h_n \right) \plimarrow 1$.
  
To prove the third claim, by \autoref{kernel}, it suffices to show that $\ell_{\bm{A}^{ij}}(i,j) / h_n = o_p(1)$ conditional on $\tilde\rho_i=p, \tilde\rho_j=p+\omega_n^{-1}(p''-p)$. Under this conditioning event, $\norm{\rho_i-\rho_j} = \norm{p-p''}$, which is a finite constant. We would then like to apply \autoref{pd}(b) to obtain the desired conclusion; the only hindrance is that the assumption is stated for the model with $n$ rather than $N_n+2$ agents. It thus suffices to establish the assumption under the $(i,j)$-specific Poissonized model.  Recall that $\bm{A}^{ij}$ denotes the network this model. Let $\bm{A}^m$ denote the network under the model with $m$ agents $\{\rho_i\}_{i=1}^m$. Then
  \begin{multline*}
    \prob(\ell_{\bm{A}^{ij}}(i,j) > \epsilon \mid \norm{\rho_i-\rho_j}) \leq \prob(\abs{N_n-n} > n/2 \mid \norm{\rho_i-\rho_j}) \\ + \sum_{m\colon \abs{m-n}\leq n/2} \prob(\ell_{\bm{A}^{m+2}}(i,j) > \epsilon \mid \norm{\rho_i-\rho_j}, N_n=m) \prob(N_n=m).
  \end{multline*}

  \noindent By independence of $N_n$, the first probability equals $\prob(\abs{N_n-n} > n/2) \leq 4/n^2 \var(N_n) = o(1)$. The second probability is also $o(1)$ because, for any sequence $m=m_n$ with $m/n \rightarrow \alpha \in (0,1)$,
  \begin{equation*}
    \prob(\ell_{\bm{A}^{m+2}}(i,j) > \epsilon \mid \norm{\rho_i-\rho_j}, N_n=m) = o(1)
  \end{equation*}

  \noindent given $\norm{\rho_i-\rho_j} = \norm{p-p''}$ by \autoref{pd}(b). Then since $h_n \rightarrow \infty$, $\ell_{\bm{A}}(i,j)/h_n = o_p(1)$, as desired.

  Finally, to apply dominated convergence, we need to show that the integrand in \autoref{vbdj2io} is uniformly bounded by an integrable function of $p,p''$ that does not depend on $n$. By \autoref{kernel}, the integrand is bounded above by
  \begin{equation*}
    K^* \E\left[ \abs{\psi_i^N \psi_j^N - \theta_N(\omega_n\tilde\rho_i) \theta_N(\omega_n\tilde\rho_j)} \mid \tilde\rho_i=p, \tilde\rho_j=p+\omega_n^{-1}(p''-p) \right] f(p) \sup_x f(x).
  \end{equation*}

  \noindent As argued in step 2 of the proof of Lemma G.4 of \cite{leung2019normal} (the lemma being modified here), the right-hand side is uniformly bounded by $c\, \text{exp}\{ c'\norm{p-p''}\} f(p)$ for some $c,c'>0$, which is integrable.

  This completes the modification of step 2 of Lemma G.4. The argument for step 1 of Lemma G.7 is the same.
\end{proof}

%----------------------
\begin{lemma}\label{unccom}
  Under \autoref{netcexog} and \autoref{nocoord}, \autoref{r1d03} holds.
\end{lemma}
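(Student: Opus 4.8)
The plan is to prove \autoref{r1d03} by establishing that, on the $\mathcal{G}$-measurable event $E \equiv \{\ell_{\bm{A}^{-1,0}}(-1,0)=\infty\}$ with $\mathcal{G} \equiv \sigma\big(\{(X_k,\rho_k)\}_{k=-1}^{N_n}, \{\zeta_{k\ell}\}_{k,\ell=-1}^{N_n}, N_n\big)$, the inner conditional expectation in \autoref{32df0} factorizes into a product, and then lifting this to \autoref{r1d03} by taking the outer expectation. First I would record that the network $\bm{A}^{-1,0}$ is a deterministic function of positions, network types, and the link shocks $\bm{\zeta}$, hence $\mathcal{G}$-measurable; therefore $E$ is $\mathcal{G}$-measurable, and on $E$ agents $-1$ and $0$ lie in two disjoint, $\mathcal{G}$-measurable components $C_{-1}$ and $C_0$ of $\bm{A}^{-1,0}$.

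The first substantive step is a localization argument. On $E$, deleting agent $0$ leaves the component $C_{-1}$ and its induced subnetwork intact, since agent $0$ is neither in $C_{-1}$ nor linked to it. By \autoref{S} together with the subgame property \autoref{snprop} and decentralized selection \autoref{nocoord}, the equilibrium outcomes on any strategic neighborhood contained in $C_{-1}$ equal $\lambda'(\cdot)$ applied to that neighborhood's own subnetwork and types; because every strategic neighborhood lies inside the component of its members, none of these outcomes depend on agent $0$. Combined with $K$-locality (\autoref{klocal}), this yields $\psi_{-1}^{N+0}=\psi_{-1}^{N+}$ on $E$, and symmetrically $\psi_0^{N+(-1)}=\psi_0^{N+}$. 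The same reasoning shows that, conditional on $\mathcal{G}$, $\psi_{-1}^{N+}$ is a function of the errors $\varepsilon_{C_{-1}}$ alone and $\psi_0^{N+}$ of $\varepsilon_{C_0}$ alone, every other input (covariates, links, and any payoff-irrelevant selection randomization) being either fixed by $\mathcal{G}$ or carried by the component's own errors. This is the step I expect to be the main obstacle, and it is precisely where \autoref{nocoord} is indispensable: absent a restriction ruling out coordination across distant agents, a global selection mechanism could make the outcomes on $C_{-1}$ depend on the errors or randomization in $C_0$, destroying the factorization.

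The second step supplies the conditional independence of the two error blocks. The errors are i.i.d.\ across agents and independent of $(\bm{X},\bm{\zeta},N_n)$; by \autoref{netcexog} they are moreover orthogonal to the network-generating latent positions given the observables, so that conditioning on the full $\mathcal{G}$ is equivalent to conditioning on $(\bm{X},\bm{A}^{-1,0},N_n)$ for computing moments of the agent statistics (this is the same equivalence used for \autoref{netstatpois}), and under this conditioning the errors remain mutually independent across agents. This is exactly where \autoref{netcexog} is used: since the components $C_{-1}$ and $C_0$ are themselves determined by $\bm{\rho}$, one must verify that conditioning on those positions does not couple $\varepsilon_{C_{-1}}$ with $\varepsilon_{C_0}$. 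As $C_{-1}\cap C_0=\emptyset$ on $E$, it follows that $\varepsilon_{C_{-1}}\indep\varepsilon_{C_0}\mid\mathcal{G}$.

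Finally I would assemble the pieces. By the localization step, $\psi_{-1}^{N+}$ and $\psi_0^{N+}$ are, given $\mathcal{G}$, functions of the disjoint blocks $\varepsilon_{C_{-1}}$ and $\varepsilon_{C_0}$, so the conditional independence of these blocks gives, on $E$,
\[
  \E[\psi_{-1}^{N+}\psi_0^{N+}\mid\mathcal{G}] = \E[\psi_{-1}^{N+}\mid\mathcal{G}]\,\E[\psi_0^{N+}\mid\mathcal{G}].
\]
Multiplying by the $\mathcal{G}$-measurable indicator $\bm{1}\{E\}$, substituting the identities $\psi_{-1}^{N+0}\bm{1}\{E\}=\psi_{-1}^{N+}\bm{1}\{E\}$ and $\psi_0^{N+(-1)}\bm{1}\{E\}=\psi_0^{N+}\bm{1}\{E\}$ from the localization step, and taking the outer expectation (retaining the factor $n$) converts the left-hand side into the inner conditional expectation appearing in \autoref{32df0} and the right-hand side into the product form of \autoref{r1d03}, which is the claim.
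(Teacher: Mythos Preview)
Your proposal is correct and follows essentially the same two-step structure as the paper's proof: first a localization argument showing that on the disconnection event each agent statistic depends only on its own component (using \autoref{nocoord} via strategic neighborhoods), then conditional independence of the two error blocks given $\mathcal{G}$ (using \autoref{netcexog}). The only cosmetic difference is that the paper routes the localization through the radius of stabilization and relevant set $J_i$ (invoking \autoref{static_xistab}), whereas you argue it directly from the strategic-neighborhood property \autoref{snprop} and $K$-locality; since the relevant set is built from strategic neighborhoods and is contained in the component, the two arguments are equivalent.
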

\begin{proof}
  We prove this in two steps. First we argue that agent $i$'s statistic only depends on agents within her own component and likewise for $j$. Second, since $i$ and $j$ are not connected, their agent statistics are conditionally independent under \autoref{netcexog}.

  {\bf Step 1.} Let $i,j \in \{-1,0\}$ with $i\neq j$. Let $\RR_i^n$ be the radius of stabilization of agent $i$. For the static model, this is defined in \autoref{RR_i} (here the case is $\X = \{\rho_{-1}, \dots, \rho_{N_n}\}$) and for the dynamic model in \autoref{RR_iT}. Let $\mathcal{N}(i) \subseteq \mathcal{N}_n$ be the subset of agents $k$ whose positions $\tilde \rho_k$ lie in $Q(\tilde\rho_i,\RR_i^n \omega_n^{-1})$, which is the cube centered at $\tilde\rho_i$ with radius $\RR_i^n \omega_n^{-1}$. The radius of stabilization has the property that the realization $\psi_i^{N+j}$ is the same whether under the original $n$-agent model or under the restricted model where the set of agents is only $\mathcal{N}(i)$. This property is formally established under the static model in \autoref{static_xistab} and under the dynamic model in the proof of \autoref{dynamicclt}; both use \autoref{nocoord}. By construction of $\RR_i^n$, we have $\mathcal{N}(i) \subseteq C_i(\bm{A}^{-1,0})$, the component of $i$ in the network $\bm{A}^{-1,0}$. Then under the event $\ell_{\bm{A}^{-1,0}}(-1,0)=\infty$,
  \begin{equation}
    \psi_i^{N+j} = \psi_i^N, \label{v3209d}
  \end{equation}

  \noindent recalling that the latter is $i$'s agent statistic under the $i$-specific Poissonized model (rather than the $(i,j)$-specific model as in the former). 
  
  {\bf Step 2.} As previously established, $\mathcal{N}(i) \subseteq C_i(\bm{A}^{-1,0})$. This implies that, when conditioning on $\mathcal{B} = \{(X_k,\rho_k)\}_{k=-1}^{N_n}, \{\zeta_{k\ell}\}_{k,\ell=-1}^{N_n}, N_n$, the only randomness in $\psi_{-1}^{N+0}$ and $\psi_0^{N+(-1)}$ are due to $\{\varepsilon_k\colon k \in C_i(\bm{A}^{-1,0})\}$. Thus, under the event $\ell_{\bm{A}^{-1,0}}(-1,0)=\infty$, \autoref{netcexog} implies
  \begin{equation*}
    \{\varepsilon_k\colon k \in C_{-1}(\bm{A}^{-1,0})\} \indep \{\varepsilon_\ell\colon \ell \in C_0(\bm{A}^{-1,0})\} \mid \{(X_k,\rho_k)\}_{k=-1}^{N_n}, \{\zeta_{k\ell}\}_{k,\ell=-1}^{N_n}, N_n,
  \end{equation*}

  \noindent from which conditional independence of $\psi_{-1}^N$ and $\psi_0^N$ follows. Then \autoref{v3209d} establishes the result.
\end{proof}

%----------------------
\begin{lemma}\label{2ndterm}
  Under the assumptions of \autoref{networkhac} (see the definitions made in the proof of that theorem),
  \begin{equation*}
    \frac{1}{n} \sum_{i=1}^n \sum_{j\neq i} (\psi_i - \bar{\psi}) (\psi_j - \bar{\psi})' K\left( \ell_{\bm{A}}(i,j) / h_n \right) \bm{1}\{\norm{\rho_i-\rho_j} > b_n\} = o_p(1).
  \end{equation*}
\end{lemma}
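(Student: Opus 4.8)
The plan is to show that the double sum is \emph{identically zero} on an event of probability tending to one, so that it is not merely $o_p(1)$ but in fact vanishes with probability approaching $1$. The governing event is that the network contains no spatially long edges: setting $r_n = b_n/h_n$, define
\begin{equation*}
  G_n = \left\{ \norm{\rho_i - \rho_j} \leq r_n \text{ for every } i \neq j \text{ with } A_{ij} = 1 \right\}.
\end{equation*}
It then suffices to prove two things: (i) the double sum is zero on $G_n$, and (ii) $\prob(G_n^c) \to 0$. Given these, $\prob(\text{double sum} \neq 0) \leq \prob(G_n^c) \to 0$, which yields the claim. Notably, this argument uses only the exponential link decay of \autoref{sparsity}(c) and the compact support of the kernel (\autoref{kernel}); unlike \autoref{1stterm}, it does not invoke the path-distance conditions of \autoref{pd}.

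For step (i), I would argue deterministically on $G_n$. Fix $i \neq j$ and suppose the $(i,j)$ summand is nonzero. By \autoref{kernel}, $K$ vanishes outside $[-1,1]$, so $K(\ell_{\bm{A}}(i,j)/h_n) \neq 0$ forces $\ell_{\bm{A}}(i,j) \leq h_n$ (in particular $i$ and $j$ are connected). Let $i = k_0, k_1, \dots, k_\ell = j$ be a shortest path, so that $\ell = \ell_{\bm{A}}(i,j) \leq h_n$ and each consecutive pair is an edge. On $G_n$ every edge has spatial length at most $r_n$, so the triangle inequality gives
\begin{equation*}
  \norm{\rho_i - \rho_j} \leq \sum_{m=0}^{\ell-1} \norm{\rho_{k_m} - \rho_{k_{m+1}}} \leq \ell\, r_n \leq h_n \cdot \frac{b_n}{h_n} = b_n.
\end{equation*}
Hence $\bm{1}\{\norm{\rho_i - \rho_j} > b_n\} = 0$, contradicting that the summand is nonzero. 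Every summand therefore vanishes on $G_n$; note this holds regardless of $\bar{\psi}$, so the centering plays no role.

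For step (ii), I would bound $\prob(G_n^c)$ by the expected number of long edges via Markov's inequality,
\begin{equation*}
  \prob(G_n^c) \leq \E\Big[ \#\{(i,j)\colon i \neq j,\ A_{ij} = 1,\ \norm{\rho_i - \rho_j} > r_n\} \Big] = \binom{n}{2}\, \prob\big(A_{12} = 1,\ \norm{\rho_1 - \rho_2} > r_n\big).
\end{equation*}
By \autoref{sparsity}(c), $\prob(A_{12} = 1 \mid \norm{\rho_1 - \rho_2} = r) \leq c_1 e^{-c_2 r}$, and by \autoref{as} we have $\rho_i = \omega_n \tilde\rho_i$ with $\omega_n^d = n/\kappa$ and density $f \leq \bar{f}$. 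A change of variables to the scaled distance $s = \omega_n \norm{\tilde\rho_1 - \tilde\rho_2}$ then produces a bound of the order $\tfrac{n^2}{\omega_n^d} \int_{r_n}^{\infty} e^{-c_2 s} s^{d-1}\,\text{d}s = O\big(n\, r_n^{d-1} e^{-c_2 r_n}\big)$, using the standard incomplete-gamma tail estimate. This tends to zero precisely when $r_n = b_n/h_n$ diverges faster than $\log n$.

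The only real work, and the main obstacle, is reconciling the rate requirements on the bandwidths. Step (ii) requires $b_n/h_n \gg \log n$, i.e.\ $h_n \log n = o(b_n)$, whereas the companion \autoref{1stterm} already fixes $b_n = o(n^{1/(16d)})$. Both can be met simultaneously because $h_n$ grows sub-polynomially (the hypothesis $\log h_n / \log n \to 0$ of \autoref{networkhac}), so $h_n \log n$ remains sub-polynomial and one may take, e.g., $b_n = n^{1/(32d)}$. This is exactly where the logarithmic rate of $h_n$ is indispensable: if $h_n$ were permitted to grow polynomially, no admissible $b_n$ could simultaneously dominate $h_n \log n$ and stay below $n^{1/(16d)}$, and spatially distant agents could no longer be guaranteed to be path-distant. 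Everything else reduces to the routine tail computation of step (ii) and the deterministic triangle-inequality bound of step (i).
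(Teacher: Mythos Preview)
Your proof is correct and takes a genuinely different route from the paper's. The paper bounds the $L^1$ norm of the double sum directly: it applies Cauchy--Schwarz to split off the $(\psi_i-\bar\psi)(\psi_j-\bar\psi)'$ factor, then invokes \autoref{MKN} (a branching-process tail bound on the spatial radius of $h_n$-neighborhoods) to control $\prob(\ell_{\bm A}(i,j)\le h_n,\ \norm{\rho_i-\rho_j}>b_n)$, arriving at the explicit rate $Cn\big(3h_n(c_1\kappa\bar f)^{h_n}(b_n/h_n)^{h_nd}e^{-c_2b_n}\big)^{1/2}$. You instead show the sum is \emph{identically zero} on the event $G_n=\{\text{no edge longer than }b_n/h_n\}$ via the triangle inequality along a shortest path, reducing everything to a union bound over edges and a single tail integral. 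Your route is more elementary: it avoids the branching-process machinery of \autoref{MKN}, the Cauchy--Schwarz step, and any moment bound on $\psi_i$ (since vanishing on a high-probability event already gives $o_p(1)$). The paper's route, by contrast, reuses existing infrastructure and delivers an explicit rate without isolating a global ``no long edges'' event. Both arguments rest on the same mechanism---short path distance plus short edges forces short spatial distance---and both require the same bandwidth interplay ($b_n$ polynomial, $h_n$ sub-polynomial), which is exactly where the hypothesis $\log h_n/\log n\to0$ is used; your discussion of reconciling $b_n=o(n^{1/(16d)})$ with $b_n/h_n\gg\log n$ is the right diagnosis of that constraint.
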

\begin{proof}
  The mean of the absolute value of left-hand side is bounded above by
  \begin{multline*}
    n \E[\lvert (\psi_i \psi_j' - \bar{\psi}^2) K\left( \ell_{\bm{A}}(i,j) / h_n \right) \bm{1}\{\norm{\rho_i-\rho_j} > b_n\} \rvert ] \\
    \leq \E[(\psi_i \psi_j' - \bar{\psi}^2)^2]^{1/2} K^* n \prob(\ell_{\bm{A}}(i,j) \leq h_n \medcap \norm{\rho_i-\rho_j} > b_n)^{1/2} \\
    \leq C n \prob\left( \max_{j \in \mathcal{N}_{\bm{A}}(i,h_n)} \norm{\rho_i-\rho_j} > b_n \right)^{1/2} \\
    \leq C n \left( 3h_n(c_1\kappa \bar{f})^{h_n} (b_n/h_n)^{h_n d} e^{-c_2 b_n} \right)^{1/2}.
  \end{multline*}

  \noindent The constant $C>0$ exists by \autoref{psibd}. The last line follows from \autoref{MKN}. Since $b_n$ grows at a polynomial rate by \autoref{b_n} and $h_n$ grows at a sub-polynomial rate, this tends to zero, as desired.
  %\footnote{Note that if $d$ is known, then we can pick $h_n = o(b_n)$, which allows for a polynomial rate of growth in the bandwidth.}
\end{proof}

%----------------------------------------------------------------------
\section{Proof of \autoref{genstat}}\label{pogs}
%----------------------------------------------------------------------

To prove the result, we need to introduce ``Poissonized'' models where the number of agents is random and possibly countably infinite. This is in order to derive the large-$n$ limit of the mean of agent statistics, which we can show does not depend on $p$. 

\bigskip

\noindent {\bf Step 1: Notation} Recall from \autoref{lna} and \autoref{ssmclt} that the (static or dynamic) model is given by the tuple
\begin{equation*}
  (U, \lambda, V, \bm{\rho}, \bm{Z}, \bm{\zeta}).
\end{equation*}

\noindent It will be helpful to think of simulating the data-generating process in the following order. First, draw positions. In this case, they are given by $\omega_n \tilde{\bm{\rho}}$, where $\tilde\rho_1, \dots, \tilde\rho_n$ are i.i.d.\ draws from $f$. Then conditional on the set of positions, draw $\bm{Z}$ and $\bm{\zeta}$. Given these primitives, draw the network $\bm{A}$ according to \autoref{snf}, and finally draw outcomes $\bm{Y}$. We call this model the {\em binomial model}. We write the agent statistic of agent 1 as
\begin{equation*}
  \psi(\omega_n\tilde\rho_1, \omega_n\tilde{\bm{\rho}}, \bm{Z}, \bm{\zeta}, \bm{Y}, \bm{A}),
\end{equation*}

\noindent what we had previously called simply $\psi_1$.

We next define a {\em Poissonized model} where the number of agents is given by $N_n+1$, rather than $n$, where $N_n \sim \text{Poisson}(n)$ is independent of all other primitives. By Lemma 1.5 of \cite{penrose2003}, $\tilde\rho_1, \dots, \tilde\rho_{N_n+1}$ has the same distribution as $\mathcal{P}_{nf} \cup \{\tilde\rho_1\}$, where $\mathcal{P}_{nf}$ is an inhomogeneous Poisson point process on $\R^d$ with intensity $nf(\cdot)$ and independent of $\tilde\rho_1 \sim f$. Thus, define the {\em Poissonized model}
\begin{equation*}
  (U, \lambda, V, \omega_n (\mathcal{P}_{nf} \cup \{\tilde\rho_1\}), \bm{Z}, \bm{\zeta}).
\end{equation*}

\noindent As in the binomial model, we first draw positions $\omega_n (\mathcal{P}_{nf} \cup \{\tilde\rho_1\})$, and then for each element $i$ of this set, we draw $Z_i$'s independently across elements to obtain $\bm{Z}$ and likewise with $\bm{\zeta}$. In fact, under \autoref{rhounif}(a), $Z_i$'s are drawn i.i.d.; note that $\zeta_{ij}$'s are always drawn i.i.d.\ by definition of the network formation model. Then the network and outcomes are realized in the same way, and the agent statistic of the agent positioned at $\omega_n\tilde\rho_1$ is
\begin{equation*}
  \psi(\omega_n\tilde\rho_1, \omega_n (\mathcal{P}_{nf} \cup \{\tilde\rho_1\}), \bm{Z}, \bm{\zeta}, \bm{Y}, \bm{A}).
\end{equation*}

Finally, we define a {\em restricted Poissonized model} that simply replaces $\omega_n (\mathcal{P}_{nf} \cup \{\tilde\rho_1\})$ with the subset $\omega_n (\mathcal{P}_{nf} \cap Q(\tilde\rho_1, R\omega_n^{-1}) \cup \{\tilde\rho_1\})$, where $Q(i,R) \subseteq \R^d$ is the cube centered at $i \in \R^d$ with side length $R$.

We can then define the conditional expectations
\begin{align*}
  \theta_n(p) &= \E\left[ \psi(\omega_n\tilde\rho_1, \omega_n\tilde{\bm{\rho}}, \bm{Z}, \bm{\zeta}, \bm{Y}, \bm{A}) \mid \tilde\rho_1 = p \right], \\
  \theta_N(p) &= \E\left[ \psi(\omega_n p, \omega_n(\mathcal{P}_{nf} \cup \{p\}), \bm{Z}, \bm{\zeta}, \bm{Y}, \bm{A}) \right], \\
  \theta_R(p) &= \E\left[ \psi(\omega_n p, \omega_n (\mathcal{P}_{nf} \cap Q(p, R\omega_n^{-1}) \cup \{p\}), \bm{Z}, \bm{\zeta}, \bm{Y}, \bm{A}) \right].
\end{align*}

\noindent {\bf Step 2: Limit Approximation} By Lemma G.8 of \cite{leung2019normal},
\begin{equation*}
  \sup_p \abs{\theta_n(p) - \theta_N(p)} = o(n^{-1/3}).
\end{equation*}

\noindent \autoref{poisrest} below shows that
\begin{equation}
  \sup_p \abs{\theta_N(p) - \theta_R(p)} = o(n^{-1/3}) \label{poisrestr}
\end{equation}

\noindent for $R = C (\log n)^{1/\epsilon}$ for large enough $C,\epsilon>0$. Thus it suffices to show that $\theta_R(p)$ does not vary with $p$.

\bigskip

\noindent {\bf Step 3: Asymptotic Stationarity} Let $\X \equiv \omega_n (\mathcal{P}_{nf} \cap Q(p, R\omega_n^{-1}) \cup \{p\})$. Observe that positions $\X$ only enter the model directly through differences $\norm{p'-p''}$ for $p',p'' \in \X$ (in particular through the first argument of $V(\cdot)$) and indirectly through $\bm{Z},\bm{\zeta}$. \autoref{rhounif}(a) shuts off the indirect channel, so
\begin{multline*}
  \psi(\omega_n p, \omega_n (\mathcal{P}_{nf} \cap Q(p, R\omega_n^{-1}) \cup \{p\}), \bm{Z}, \bm{\zeta}, \bm{Y}, \bm{A}) \\
  \stackrel{d}= \psi\left(\zero, \omega_n(\mathcal{P}_{nf} - p) \cap Q(\zero, R) \cup \{\zero\}), \bm{Z}, \bm{\zeta}, \bm{Y}, \bm{A}\right).
\end{multline*}

\noindent Since the origin is an interior point of $\text{supp}(f)$, for $n$ large enough, $Q(\zero, R)$ is contained in the set $\omega_n (\text{supp}(f) - p)$. % need to assume $p$ is in the interior of $\text{supp}(f)$, though. but can we ignore boundary points since it's measure zero?
This claim uses the fact that $\omega_n$ diverges faster than $R = C(\log n)^{1/\epsilon}$. Furthermore, because $f$ is uniform, $\omega_n(\mathcal{P}_{nf} - p) \cap Q(\zero, R)$ has the same distribution as a homogeneous Poisson point process on $B(\zero, R)$ with intensity $n$. In particular, this process does not depend on $p$. Therefore, the expectation of the right-hand side of the previous equation does not depend on $p$, as desired. \qed

\begin{lemma}\label{poisrest}
  Under the assumptions of \autoref{genstat}, \autoref{poisrestr} holds for $R = C (\log n)^{1/\epsilon}$ for large enough $C,\epsilon>0$. 
\end{lemma}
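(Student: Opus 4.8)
The plan is to bound $\theta_N(p) - \theta_R(p) = \E[\psi_1^N - \psi_1^R]$, where $\psi_1^N$ and $\psi_1^R$ denote the statistic of the agent positioned at $\omega_n p$ under the Poissonized and restricted Poissonized models respectively, by means of a stabilization argument. Let $\RR_1$ be the radius of stabilization of that agent (\autoref{RR_i} in the static case, \autoref{RR_iT} in the dynamic case). By its defining property (\autoref{static_xistab}, and the dynamic analog established in the proof of \autoref{dynamicclt}), deleting all agents whose positions lie outside the ball $B(\omega_n p, \RR_1)$ leaves the statistic unchanged. Since the cube $Q(\omega_n p, R)$ of side length $R$ contains $B(\omega_n p, R/2)$, on the event $\{\RR_1 \le R/2\}$ the full and restricted configurations yield the same value, so $\psi_1^N = \psi_1^R$. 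Consequently, uniformly in $p$,
\begin{equation*}
  \theta_N(p) - \theta_R(p) = \E\big[(\psi_1^N - \psi_1^R)\,\bm{1}\{\RR_1 > R/2\}\big],
\end{equation*}
and it remains to show that the right-hand side is $o(n^{-1/3})$.

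Next I would split this expectation on the Poisson concentration event $\{\abs{N_n - n} \le n/2\}$ and its complement. On the complement, I would use the almost-sure polynomial bound \autoref{psibdsimple}(b), $\abs{\psi_1^N - \psi_1^R} \le c\,N_n^c$, together with the Poisson tail $\prob(\abs{N_n - n} > n/2) = o(n^{-k})$ for every $k$ (Lemma 1.4 of \cite{penrose2003}); the product is super-polynomially small and hence negligible. On the concentration event, where $N_n \in [n/2, 3n/2]$, I would apply H\"older's inequality with exponents $8$ and $8/7$:
\begin{equation*}
  \E\big[\abs{\psi_1^N - \psi_1^R}\,\bm{1}\{\RR_1 > R/2\}\bm{1}\{\abs{N_n - n}\le n/2\}\big] \le C\,\prob(\RR_1 > R/2)^{7/8},
\end{equation*}
where the eighth-moment factor is $O(1)$, uniformly in $p$, by the bounded-moment bound \autoref{psibd} (equivalently \autoref{xibd}), which applies here precisely because the number of agents falls in $[n/2, 3n/2]$.

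Finally, I would invoke the tail bound on the radius of stabilization underlying the CLT. The exponential decay of linking probabilities (\autoref{sparsity}(c)), combined with the branching-process control of the relevant set, yields a stretched-exponential tail $\prob(\RR_1 > r) \le c_1 \exp(-c_2 r^{\epsilon})$, uniformly in the base point $p$; uniformity follows from \autoref{rhounif}, under which the underlying Poisson process is homogeneous (and boundary points only lighten the tail). Taking $R = C(\log n)^{1/\epsilon}$ gives $(R/2)^{\epsilon} = (C/2)^{\epsilon}\log n$, so that $\prob(\RR_1 > R/2)^{7/8} \le n^{-\frac{7}{8}c_2 (C/2)^{\epsilon}}$, which is $o(n^{-1/3})$ once $C$ is large enough. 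Combining the two cases and taking the supremum over $p$ establishes \autoref{poisrestr}.

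The main obstacle is this last step: securing a stabilization tail that is simultaneously strong enough (at least stretched-exponential in the spatial radius $r$) and uniform in $p$, and then reconciling it with the target rate $o(n^{-1/3})$ after the eighth-moment factor inflates the bound and Poissonization introduces the random count $N_n$. The tail exponent is exactly what dictates the admissible growth $R = C(\log n)^{1/\epsilon}$, and some care is required to ensure that the moment bounds of \autoref{psibd} transfer cleanly to the Poissonized models, which have a random number of agents.
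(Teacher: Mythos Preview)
Your proposal is correct and follows essentially the same route as the paper: reduce to the event $\{\RR_1 > R\}$ via the defining property of the radius of stabilization, then combine a moment bound on the agent statistic with the stretched-exponential tail of $\RR_1$ and solve for $R$. The paper uses Cauchy--Schwarz rather than H\"older with $(8,8/7)$, and it cites the moment bound directly without your explicit Poisson-concentration split; your version is slightly more careful on that point but otherwise identical. The uniform tail you worry about in the last paragraph is exactly Poisson exponential stabilization (\autoref{bpestab}), already verified in \autoref{static_bpestab} and the dynamic analog, so no appeal to \autoref{rhounif} is needed for uniformity in $p$.
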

\begin{proof}
  Let $\RR_n(\tilde\rho_1)$ be the radius of stabilization (\autoref{rosxi}) of the agent positioned at $\omega_n\tilde\rho_1$ in the Poissonized model. For the static model, this is defined in \autoref{RR_i} for the case $\X = \mathcal{P}_{nf} \cup \{\tilde\rho_1\}$ and for the dynamic model in \autoref{RR_iT}.

  Fix any $R>0$. Trivially,
  \begin{multline*}
    \sup_x n^{1/3} \abs{\theta_N(p)-\theta_R(p)} \leq \sup_p n^{1/3} \E\big[ \lvert \psi(\omega_n p, \omega_n( \mathcal{P}_{nf} \cup \{p\} ), \bm{Z}, \bm{\zeta}, \bm{Y}, \bm{A}) \\
    - \psi(\omega_n p, \omega_n (\mathcal{P}_{nf} \cap Q(p, R\omega_n^{-1}) \cup \{p\}), \bm{Z}, \bm{\zeta}, \bm{Y}, \bm{A}) \rvert \mid \tilde\rho_1 = p\big]
  \end{multline*}

  \noindent By definition of the radius of stabilization, if $R > \RR_n(p)$, then the term in the expectation on the right-hand is identically zero. Hence, the right-hand side equals
  \begin{multline*}
    \sup_p n^{1/3} \E\left[ \lvert \psi(\omega_n p, \omega_n( \mathcal{P}_{nf} \cup \{p\} ), \bm{Z}, \bm{\zeta}, \bm{Y}, \bm{A}) \right. \\
    \left. - \psi(\omega_n p, \omega_n (\mathcal{P}_{nf} \cap Q(p, R\omega_n^{-1}) \cup \{p\}), \bm{Z}, \bm{\zeta}, \bm{Y}, \bm{A}) \rvert \mathbf{1}\{R \leq \RR_n(p)\} \right] \\
    \leq C' n^{1/3} \sup_p \prob(\RR_n(p) \geq R)^{1/2},
  \end{multline*}

  \noindent by the Cauchy-Schwarz inequality, where $C'>0$ does not depend on $n$ and exists by \autoref{psibd}. Under the given assumptions, exponential stabilization (\autoref{bpestab}) holds (established in \autoref{static_bpestab} and the proof of \autoref{dynamicclt} for the static and dynamic models, respectively). Choosing $R = C (\log n)^{1/\epsilon}$ for large enough $C,\epsilon>0$, exponential stabilization implies the last line is $o(1)$, as desired.
\end{proof}

%----------------------------------------------------------------------
\section{Proof of Static CLT}\label{staticproof}
%----------------------------------------------------------------------

%------------------------------------
\subsection{Main Idea}
%------------------------------------

As discussed at the start of \autoref{ssm}, the proof consists of verifying high-level conditions for a general CLT stated in \autoref{master-clt}. The main ones are ``stabilization'' conditions, and their basic idea is discussed at length in \S 4 of \cite{leung2019normal}. Here we give a brief summary.

An agent statistic $\psi_i$ is {\em stabilizing} essentially if we can establish the following two conditions. (i) We can find some $i$-specific subset of agents $J_i$ such that the value of $\psi_i$ does not change if we were to counterfactually remove all agents $\mathcal{N}_n\backslash J_i$ from the game and rerun the model using only agents in $J_i$. (ii) We can show that $|J_i|$ has exponential tails. Note that the index $i$ is arbitrary because agent types are identically distributed.

Consider the simple case where $\psi_i = Y_i$, so the network moment is just the empirical choice probability. Recall the definition of $i$'s strategic neighborhood $C_i^+$ from \autoref{stratneigh}. For this case, we can verify the stabilization requirements by taking $J_i = C_i^+$. For requirement (i), we use property \autoref{snprop} of strategic neighborhoods, which holds under \autoref{S}. It states that, to find the set of $i$'s possible equilibrium outcomes, we only need to look at $\mathcal{E}(A_{C_i^+}, \tau_{C_i^+})$, the set of Nash equilibria in the counterfactual game consisting solely of agents in $C_i^+$, rather than the entire equilibrium set $\mathcal{E}(\bm{A}, \bm{\tau})$. Furthermore, under \autoref{nocoord}, the selection mechanism $\lambda'$ that picks from this set of candidate equilibria is entirely determined by inputs $A_{C_i^+}, \tau_{C_i^+}$. This means that, if we were to remove $\mathcal{N}_n\backslash C_i^+$ from the game, $\lambda'$ still picks the same equilibrium subvector of outcomes for $C_i^+$. Therefore, the realization of $Y_i$ is only a function of agents in $C_i^+$, which establishes requirement (i) of stabilization.

Under \autoref{dfrag}, it can be established that $|J_i| = O_p(1)$. Intuitively, when strategic interactions are properly restricted, the impact of one agent's choice will not affect the choices of too many other agents. Under regularity conditions given in \autoref{scltassumps}, in particular the exponential decay on linking probabilities (\autoref{sparsity}), this can be strengthened to $|J_i|$ having exponential tails, which establishes requirement (ii) of stabilization. The technique is to stochastically bound $|J_i|$ by a certain multi-type branching process (see \autoref{introbp}) and use standard arguments in branching process theory to show that the total offspring count of the process has exponential tails.

%------------------------------------
\subsection{Generalized Setup}\label{sgsstatic}
%------------------------------------

To prove \autoref{staticclt}, we verify the assumptions of \autoref{master-clt}. To conform with the setup of that theorem, we need to change notation in order to generalize the model to accommodate a random number of agents, e.g.\ as in \autoref{networkstation}.

To work toward the required notation, let us first restate the original model \autoref{finitemodel} as the tuple 
\begin{equation}
  (U, \lambda, V, r_n^{-1}\tilde{\bm{\rho}}, \W), \label{finitemodel2}
\end{equation}

\noindent where $U(\cdot)$ is the payoff function, $\lambda(\cdot)$ the selection mechanism, $V(\cdot)$ the latent index in the network formation model, $r_n = \omega_n^{-1}$, $\tilde{\bm{\rho}} = (\tilde\rho_i)_{i=1}^n$ for $\{\tilde\rho_i\}_{i\in\mathbb{N}} \stackrel{iid}\sim f$, and $\W = (\bm{Z}, \bm{\zeta})$. We generalize this by replacing the fourth element of the tuple with any random set whose size is finite a.s. Note that in the sequel we will be using $r_n$, the inverse of $\omega_n$, in order to maintain consistency with the setup of \cite{leung2019normal}.

Let $\X = \{\tilde\rho_i\}_{i=1}^N \cup G$ for some a.s.\ finite random variable $N$ that may depend on $n$ and finite set $G \subseteq \R^d$. We will be interested in cases where $N$ is some deterministic function of $n$ or $N \sim \text{Poisson}(n)$. Fix an arbitrary $x \in \R^d$.  For $r \in \R_+$, let $\tau_{x,r}\colon \R^d \rightarrow \R^d$ be the map $y \mapsto x+r^{-1}(y-x)$. We write $\tau_{x,r}\X$ to mean $\{\tau_{x,r}y\colon y \in \X\}$. In what follows, we will take $\tau_{x,r_n}\X$ as the set of positions. We think of each element of $\tau_{x,r_n}\X$ not only as a position $p$ but also as an agent associated with that position. That is, following the convention in spatial econometrics \citep[e.g.][]{jenish2009central}, we also view $\tau_{x,r_n}\X$ as the set of agent labels, which is well-defined since the elements are a.s.\ unique due to $\tilde\rho_i$ be continuously distributed. 

Next, we define the attributes of agents, previously given by $\bm{Z}$ and $\bm{\zeta}$. These are originally defined as arrays of $n$ and $n(n-1)/2$ elements respectively, but we can think of them as mappings from agent labels $i,j\in\mathcal{N}_n$ to attributes $(Z_i,Z_j,\zeta_{ij})$. Motivated by this second interpretation, we replace these sets with stochastic processes that map positions to attributes. Formally, let $\bm{Z}(\cdot)$ be a stochastic process on $\R^d$ with range $\R^{d_z}$ and $\bm{\zeta}(\cdot,\cdot)$ a stochastic process on $\R^d \times \R^d$ with range $\R^{d_\zeta}$. Suppose these processes each have i.i.d.\ marginals, are mutually independent, and are generated independently of positions $\X$. 

Define the ``attribute process'' $\W\colon (p,p') \mapsto (\bm{Z}(p), \bm{Z}(p'), \bm{\zeta}(p,p'))$. For $x\in\R^d, r \in \R_+$, let $\cc_{x,r}\colon \R^d \rightarrow \R^d$ be the map $y \mapsto r^{-1}(x+r(y-x))$. This inverts the operation of $\tau_{x,r}$ and then scales up the result by $r^{-1}$. Let $\W\cc_{x,r}$ denote the composition of the random function $\W$ and $\cc_{x,r}$, i.e.
\begin{equation*}
  \W\cc_{x,r}(y,y') = \big( \bm{Z}(\cc_{x,r}y), \bm{Z}(\cc_{x,r}y'), \bm{\zeta}(\cc_{x,r}y,\cc_{x,r}y') \big)
\end{equation*}

\noindent for any $y,y' \in \R^d$.

We can finally state the general static model, which is given by the tuple
\begin{equation}
  (U, \lambda, V, \tau_{x,r_n}\X, \W\cc_{x,r_n}). \label{model_static_generic}
\end{equation}

\noindent Here we think of $\W\cc_{x,r_n}$ as taking $\tau_{x,r_n}\X$ as its input to generate attributes conditional on agents' positions. Thus, in the case where $N=n$ and $x = \zero$, this model is equivalent to \autoref{finitemodel2}. Note that when $x=\zero$, $\cc_{x,r_n}$ is simply the identity map; for technical reasons, it is convenient in the proof of \autoref{master-clt} to consider $x \neq \zero$. The purpose of $\cc_{x,r_n}$ is to undo the transformation $\tau_{x,r_n}$ and scale the result up by $r_n^{-1}$. Thus, $\cc_{x,r_n}(\tau_{x,r_n}(\tilde{\bm{\rho}}))$ equals $r_n^{-1}\tilde{\bm{\rho}}$ for any $x$. This is important because under model \autoref{finitemodel2}, we interpret $r_n^{-1}\tilde{\bm{\rho}}$ as the underlying set of positions, so attributes $\W$ need to take this set as its input and not a modification of the set like $\tau_{x,r_n}\tilde{\bm{\rho}}$.

The remainder of this subsection clarifies how the network and equilibrium outcomes are generated under model \autoref{model_static_generic}.  Rather than an $n\times n$ matrix as in model \autoref{finitemodel2}, we view the network $\bm{A} \equiv \bm{A}(\tau_{x,r_n}\X, \W\cc_{x,r_n})$ as a function or random mapping from pairs of agents $p,p' \in \tau_{x,r_n}\X$ to $\{0,1\}$. Denote by $A_{p,p'}$ the potential link between $p, p' \in \tau_{x,r_n}\X$ in $\bm{A}(\tau_{x,r_n}\X, \W\cc_{x,r_n})$. We assume
\begin{equation*}
  A_{p,p'} = \bm{1}\{V(\norm{p-p'}, \W\cc_{x,r_n}(p,p')) > 0\},
\end{equation*}

\noindent which is the analog of \autoref{modelnf}.

Next we define the outcomes $\bm{Y}$ under \autoref{model_static_generic}. Rather than an $n\times 1$ vector of outcomes, we view $\bm{Y} \equiv \bm{Y}(\tau_{x,r_n}\X, \W\cc_{x,r_n})$ as a mapping from $\tau_{x,r_n}\X$ to $\{0,1\}$. Let $Y_p$ denote the outcome of agent $p \in \tau_{x,r_n}\X$ according to $\bm{Y}$ and $Z_p = (p, \bm{Z}(\cc_{x,r_n}p))$, which corresponds in the finite model to $p$'s type $\tau_i$. We assume
\begin{equation*}
  Y_p = \bm{1}\{U(S_p, Z_p) > 0\},
\end{equation*}

\noindent where 
\begin{equation*}
  S_p \equiv S(Y_{-p}, Z_p, Z_{-p}, A_p, A_{-p}).
\end{equation*}

\noindent The first, third, and fourth arguments of the right-hand side are random functions with domain $\tau_{x,r_n}\X\backslash\{p\}$ such that $Y_{-p}\colon p' \mapsto Y_{p'}$, $Z_{-p}\colon p' \mapsto Z_{p'}$, and $A_p\colon p' \mapsto A_{p,p'}$. The last argument is a random function with domain $\tau_{x,r_n}\X\backslash\{p\} \times \tau_{x,r_n}\X\backslash\{p\}$ such that $A_{-p}\colon (p', p'') \mapsto A_{p',p''}$. Thus, $S_p$ simply generalizes the original definition of $S_i$ in \autoref{model}. 

Finally, we define the selection mechanism $\lambda(\cdot)$ in \autoref{model_static_generic}. Let 
\begin{equation*}
  \mathcal{E}(\bm{A}, \tau_{x,r_n}\X, \W\cc_{x,r_n}) \equiv \mathcal{E}(\bm{A}(\tau_{x,r_n}\X, \W\cc_{x,r_n}), \tau_{x,r_n}\X, \W\cc_{x,r_n})
\end{equation*}

\noindent be the set of Nash equilibria, i.e.\ the set of outcome mappings $\bm{Y}$ such that for each $\bm{Y} \in \mathcal{E}(\bm{A}, \tau_{x,r_n}\X, \W\cc_{x,r_n})$, $Y_p$ satisfies \autoref{model} for each $p \in \tau_{x,r_n}\X$. The selection mechanism $\lambda(\cdot)$ is a mapping $(\bm{A}, \tau_{x,r_n}\X, \W\cc_{x,r_n}) \mapsto \mathcal{E}(\bm{A}, \tau_{x,r_n}\X, \W\cc_{x,r_n})$, and the realized outcome satisfies $\bm{Y} = \lambda(\bm{A}, \tau_{x,r_n}\X, \W\cc_{x,r_n})$. This is just a restatement of \autoref{select}(b) for model \autoref{model_static_generic}.

\bigskip

\noindent {\bf Network Moments.} Our goal is to prove a CLT for network moments of the form
\begin{align*}
  \Lambda(\tau_{x,r_n}\X, \W\cc_{x,r_n}) 
  &\equiv \sum_{p\in\tau_{x,r_n}\X} \psi(p, \tau_{x,r_n}\X, \W\cc_{x,r_n}, \bm{Y}(\tau_{x,r_n}\X, \W\cc_{x,r_n}), \bm{A}(\tau_{x,r_n}\X, \W\cc_{x,r_n}))  \\
  &\equiv \sum_{p\in\tau_{x,r_n}\X} \psi(p, \tau_{x,r_n}\X, \W\cc_{x,r_n}, \bm{Y}, \bm{A}), 
\end{align*}

\noindent where the last line abbreviates by suppressing the dependence of $\bm{Y}$ and $\bm{A}$ on their arguments. Note that, by construction, $\bm{Y}$ and $\bm{A}$ are deterministic functions of the first two arguments of $\psi(\cdot)$. Thus we can absorb them into the agent statistic, writing
\begin{equation}
  \xi(p, \tau_{x,r_n}\X, \W\cc_{x,r_n}) \equiv \psi(p, \tau_{x,r_n}\X, \W\cc_{x,r_n}, \bm{Y}, \bm{A}). \label{xistatic}
\end{equation}

\noindent This falls within the setup of \autoref{smaster}.

\bigskip

\noindent {\bf Strategic Neighborhoods.} We need to generalize some notation in \autoref{scltassumps} to the setup of \autoref{model_static_generic}. For $p \in \tau_{x,r_n}\X$, define 
\begin{equation*}
  \mathcal{R}^c_p = \bm{1}\big\{ \inf_s U(s, Z_p) \leq 0 \medcap \sup_s U(s, Z_p) > 0 \big\}. 
\end{equation*}

\noindent For $p,p' \in \tau_{x,r_n}\X$, define
\begin{equation*}
  D_{p,p'} = A_{p,p'} \mathcal{R}^c_{p'},
\end{equation*}

\noindent where $A_{p,p'}$ is the potential link for agents $p,p'$ in the network $\bm{A}(\tau_{x,r_n}\X, \W\cc_{x,r_n})$. Let $\bm{D}$ be the mapping $(p,p') \mapsto D_{p,p'}$, which is a network function on $\tau_{x,r_n}\X$. For any $p \in \tau_{x,r_n}\X$, define $C(p, \tau_{x,r_n}\X, \W\cc_{x,r_n}, \bm{D}) \subseteq \tau_{x,r_n}\X$ as the set of agents (i.e.\ agent positions) in the component of $\bm{D}$ that contains $p$, or for short, the {\em $\bm{D}$-component} of $p$. Define the {\em strategic neighborhood} of $p$ as 
\begin{multline}
  C_p^+ = C(p, \tau_{x,r_n}\X, \W\cc_{x,r_n}, \bm{D}) \medcup \big\{y \in \tau_{x,r_n}\X\colon \exists y' \in C(y, \tau_{x,r_n}\X, \W\cc_{x,r_n}, \bm{D}) \\ \text{ s.t. } A_{y,y'} = 1 \medcap \inf_s U(s, Z_{y'}) > 0 \medcap \sup_s U(s, Z_{y'}) \leq 0\big\}. \label{C^+}
\end{multline}

\noindent Let $\mathcal{C}^+(\tau_{x,r_n}\X,\W\cc_{x,r_n})$ be the set of all strategic neighborhoods.

Finally, we restate \autoref{psibdsimple} under the new setup.

\begin{assump}[Bounded Moments]\label{psibd} \hfill
  \begin{enumerate}[(a)]
    \item There exists $C<\infty$ such that
      \begin{equation*}
	\E\big[ \abs{\psi\left(x, \tau_{x,r_n}(\X_m \cup \{y\}) \cap H_n, \W\cc_{x,r_n} , \bm{Y}, \bm{A}\right)}^8 \big] < C 
      \end{equation*}

      \noindent for any $n$ sufficiently large, $m \in [n/2, 3n/2]$, sequence of subsets of $\R^d$ $\{H_n\}_{n\in\mathbb{N}}$, $x \in\R^d$, $y \in \R^d\cup\{\infty\}$.
      
    \item For any a.s.\ finite set $\X \subseteq \R^d$, $x \in \R^d$, $n \in \mathbb{N}$, 
      \begin{equation*}
	\abs{\psi(x, \tau_{x,r_n}\X, \W\cc_{x,r_n}, \bm{Y}, \bm{A})} \leq c \abs{\X}^c 
      \end{equation*}
      
      \noindent a.s., for some positive constant $c$.
  \end{enumerate}
\end{assump}

%------------------------------------
\subsection{Proofs}\label{sstaticass}
%------------------------------------

\begin{proof}[Proof of \autoref{staticclt}]
  We apply \autoref{master-clt} by verifying Assumptions \ref{xistab}--\ref{xipoly}. Assumptions \ref{xibd} and \ref{xipoly} hold under \autoref{psibdsimple} (or equivalently \autoref{psibd}). The challenging conditions to verify are Assumptions \ref{xistab} and \ref{bpestab}. These hold by Lemmas \ref{static_xistab} and \ref{static_bpestab} that follow next. 
\end{proof}

For an agent $p \in \tau_{x,r_n}\X$, let $\mathcal{N}_{\bm{A}}(p,K)$ be the set of agents in the $K$-neighborhood of $p$ under network $\bm{A}$. Recall that agents are labeled according to their positions, so $\mathcal{N}_{\bm{A}}(p,K) \subseteq \tau_{x,r_n}\X$. Also recall the definition of $C_p^+$ from the previous subsection. Define the {\em relevant set} of agent $p$ as
\begin{equation}
  J_p \equiv J_p(\tau_{x,r_n}\X, \W\cc_{x,r_n}) = \bigcup\left\{ C_y^+\colon y \in \mathcal{N}_{\bm{A}}(p,K) \right\}, \label{J_i}
\end{equation}

\noindent which is also a subset of $\tau_{x,r_n}\X$. The next lemma shows that any $K$-local agent statistic is entirely determined by the positions and attributes of agents in $J_p$. 

\begin{lemma}\label{CJlem}
  Under Assumptions \ref{klocal}, \ref{S}, and \ref{nocoord}, for any $p \in \tau_{x,r_n}\X$,
  \begin{equation}
    \psi(p, \tau_{x,r_n}\X, \W\cc_{x,r_n}, \bm{Y}, \bm{A}) = \psi(p, J_p, \W\cc_{x,r_n}, \bm{Y}, \bm{A}). \label{constructJ}
  \end{equation}
\end{lemma}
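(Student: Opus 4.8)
The plan is to show that deleting every agent outside the relevant set $J_p$ and re-running the model leaves $p$'s agent statistic unchanged, by checking that each of the finitely many objects on which $\psi_p$ depends under $K$-locality is identical in the full model and in the restricted model on $J_p$. By \autoref{klocal}, $\psi_p$ is a function only of $p$'s own type and outcome, the types and outcomes of agents in $\mathcal{N}_{\bm{A}}(p,K)$, and the links within the $K$-neighborhood of $p$. Hence it suffices to establish: (i) $\mathcal{N}_{\bm{A}}(p,K)\subseteq J_p$, with this $K$-neighborhood being the same whether computed in the full model or in the restricted model on $J_p$; (ii) the subnetwork on $\mathcal{N}_{\bm{A}}(p,K)$ is identical in both models; and (iii) the outcome $Y_y$ coincides in both models for every $y\in\mathcal{N}_{\bm{A}}(p,K)$.

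First I would dispatch (i) and (ii), which follow from the non-strategic, local nature of network formation. Since $A_{q,q'}=\bm{1}\{V(\norm{q-q'},\W\cc_{x,r_n}(q,q'))>0\}$ depends only on the pair $(q,q')$ and their attributes, deleting agents outside $J_p$ neither creates nor destroys any link among agents of $J_p$, so the subnetwork on $J_p$ is preserved. For (i), $y\in C_y^+\subseteq J_p$ gives $\mathcal{N}_{\bm{A}}(p,K)\subseteq J_p$; moreover any path from $p$ of length at most $K$ passes only through agents at distance less than $K$ from $p$, all lying in $\mathcal{N}_{\bm{A}}(p,K)\subseteq J_p$, and since restriction can only remove agents and links, the $K$-neighborhood computed on $J_p$ equals $\mathcal{N}_{\bm{A}}(p,K)$.

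The substance of the argument, and the main obstacle, is (iii): showing that restricting to $J_p$ does not perturb the equilibrium outcome of any $y$ in $p$'s $K$-neighborhood. The key is that $J_p$ contains the entire strategic neighborhood $C_y^+$ of each such $y$. I would first argue that $C_y^+$ is the same set whether computed in the full model or in the restricted model on $J_p$: the indicators $\mathcal{R}^c_q$ depend only on $Z_q$, and $D_{q,q'}=A_{q,q'}\mathcal{R}^c_{q'}$, so restriction preserves $\bm{D}$ on $J_p$; since $C_y^+\subseteq J_p$ while deletion can only shrink $\bm{D}$-components, the $\bm{D}$-component of $y$ together with its dominated-action boundary is unchanged. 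Having identified $C_y^+$ as a common strategic neighborhood with identical $(A_{C_y^+},\tau_{C_y^+})$ in both models, I would invoke property \eqref{snprop} (valid under \autoref{S}) in each model separately to conclude that the restriction of the realized equilibrium to $C_y^+$ lies in $\mathcal{E}(A_{C_y^+},\tau_{C_y^+})$, and then apply the single selection rule $\lambda'$ from \autoref{nocoord}, which depends only on $(A_{C_y^+},\tau_{C_y^+})$, to conclude that this restriction equals $\lambda'(A_{C_y^+},\tau_{C_y^+})$ in both models. Since $A_{C_y^+}$ and $\tau_{C_y^+}$ agree across the two models, $Y_{C_y^+}$, and in particular $Y_y$, coincides, giving (iii). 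Combining (i)--(iii) with \autoref{klocal} yields \eqref{constructJ}. The delicate point throughout is the claim that a strategic neighborhood computed after deletion coincides exactly with the original; this rests on the observation that deletion can only shrink the $\bm{D}$-component, while the construction $J_p\supseteq C_y^+$ guarantees that no shrinkage actually occurs.
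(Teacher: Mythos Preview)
Your proposal is correct and follows essentially the same approach as the paper's proof: both reduce via \autoref{klocal} to the $K$-neighborhood of $p$, then use property \eqref{snprop} (stated as \eqref{AC+} in the paper) together with \autoref{nocoord} to conclude that each $Y_y$ for $y\in\mathcal{N}_{\bm{A}}(p,K)$ is determined by $(A_{C_y^+},\tau_{C_y^+})$ alone, whence restricting to $J_p=\bigcup_{y\in\mathcal{N}_{\bm{A}}(p,K)}C_y^+$ leaves $\psi_p$ unchanged. Your version is more explicit than the paper's in verifying the auxiliary preservation claims (that $\mathcal{N}_{\bm{A}}(p,K)$, the induced subnetwork, and each $C_y^+$ are unchanged under restriction to $J_p$), but the logical structure is the same.
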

\begin{proof}
  Recall from the previous subsection that $\mathcal{E}(\bm{A}, \tau_{x,r_n}\X, \W\cc_{x,r_n})$ is the set of Nash equilibria under model \autoref{model_static_generic}. For any $\bm{Y}$ in that set, recall that $Y_p$ is the outcome for agent $p \in \tau_{x,r_n}\X$ according to $\bm{Y}$, and for $C \subseteq \tau_{x,r_n}\X$, define $Y_C = \{Y_p\colon p \in C\}$. By Lemma B.2 of \cite{leung2019compute}, under \autoref{S}, for any strategic neighborhood $C^+ \in \mathcal{C}^+(\tau_{x,r_n}\X,\W\cc_{x,r_n})$,
  \begin{equation}
    \mathcal{E}(\bm{A}, C^+, \W\cc_{x,r_n}) = \{Y_{C^+}\colon \bm{Y} \in \mathcal{E}(\bm{A}, \tau_{x,r_n}\X,\W\cc_{x,r_n})\}. \label{AC+}
  \end{equation}

  \noindent That is, the set of Nash equilibria in the restricted model with agents given by $C^+$ is equivalent to the set of equilibrium outcomes for agents in $C^+$ in the full model with agents given by $\tau_{x,r_n}\X$. By \autoref{klocal}, $\psi(p, \tau_{x,r_n}\X, \W\cc_{x,r_n}, \bm{Y}, \bm{A})$ functionally depends on its arguments only through the positions, attributes, random-utility shocks, potential links, and equilibrium outcomes of agents in $\mathcal{N}_{\bm{A}}(p,K)$. Consider an arbitrary agent $w \in \mathcal{N}_{\bm{A}}(p,K)$. By \autoref{AC+} and \autoref{nocoord}, the realization of $Y_w$ is determined by the positions, attributes, and random-utility shocks of agents in $C_w^+$. Hence, the value of $\psi(p, \tau_{x,r_n}\X, \W\cc_{x,r_n}, \bm{Y}, \bm{A})$ would remain the same if we were to counterfactually remove all agents not in $\mathcal{S}\equiv \cup_{w \in\mathcal{N}_{\bm{A}}(p,K)} C_w^+$ from the model and draw new equilibrium outcomes for the remaining subgame. That is, \eqref{constructJ} holds for $\mathcal{S}$ in place of $J_p$. But by definition $J_p = \mathcal{S}$, so the result follows.
\end{proof}

%----------------------
For $p \in \tau_{x,r_n}\X$, let
\begin{equation}
  \RR_n(p) \equiv \RR(p, \tau_{x,r_n}\X, \W) = \max_{p'\in J_p}\, \norm{p-p'}. \label{RR_i}
\end{equation}

\noindent be its {\em radius of stabilization}.

\begin{lemma}\label{static_xistab}
  Under the assumptions of \autoref{staticclt}, \autoref{xistab}(a) and (c) hold for each component $\xi$ defined in \eqref{xistatic} for $\RR_\xi$ given in \eqref{RR_i}. 
\end{lemma}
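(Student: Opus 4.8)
The plan is to read both parts of \autoref{xistab} directly off the representation established in \autoref{CJlem}. That lemma shows the agent statistic factors through the relevant set, $\xi(p, \tau_{x,r_n}\X, \W\cc_{x,r_n}) = \psi(p, J_p, \W\cc_{x,r_n}, \bm{Y}, \bm{A})$, and by the definition of the radius of stabilization in \eqref{RR_i} every member of $J_p$ lies within distance $\RR_n(p)$ of $p$, so $J_p \subseteq B(p,\RR_n(p))$. The entire argument therefore reduces to showing that $J_p$, together with the links, types, and equilibrium outcomes of its members, is left unchanged when the configuration is modified outside the ball $B(p,\RR_n(p))$ in the manner specified by the assumption.

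First I would record the three ingredients that make the relevant set self-contained. (i) Links form pairwise: by the network formation model \eqref{modelnf}, $A_{p',p''}$ depends only on $\norm{p'-p''}$ and the attributes attached to $p'$ and $p''$, so the subnetwork induced on any set of agents is unaffected by agents outside that set. Consequently $\mathcal{N}_{\bm{A}}(p,K)$, the $\bm{D}$-components $C_y$, and hence the strategic neighborhoods $C_y^+$ of \eqref{C^+}, all of whose constituents lie in $J_p \subseteq B(p,\RR_n(p))$, are computed identically whether or not far-away agents are present. (ii) Outcomes on each $C_y^+$ are governed by the restricted game: by the strategic-neighborhood property \eqref{AC+} (valid under \autoref{S}) together with decentralized selection (\autoref{nocoord}), the subvector $Y_{C_y^+}$ is determined by $A_{C_y^+}$ and $\tau_{C_y^+}$ alone, which again involve only agents in $J_p$. (iii) By $K$-locality (\autoref{klocal}), $\xi$ reads off its inputs only through $\mathcal{N}_{\bm{A}}(p,K) \subseteq J_p$. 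Combining (i)--(iii), any modification supported outside $B(p,\RR_n(p))$ leaves $J_p$, its induced network and types, and the equilibrium outcomes on it intact, hence leaves $\xi(p,\cdot)$ unchanged, which is precisely the invariance demanded by parts (a) and (c).

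The delicate point, and the one I expect to be the main obstacle, is ensuring that the relevant set does not itself grow under the perturbation contemplated by the assumption. Deletion of agents outside the ball is harmless, since it can only shrink components and $K$-neighborhoods, and property (ii) guarantees the outcomes on the surviving set $J_p$ are unaffected. The genuinely subtle case is insertion: an inserted agent $q$ outside $B(p,\RR_n(p))$ cannot alter the links or outcomes among existing members of $J_p$ by (i)--(ii), but one must verify that $q$ cannot enter $\mathcal{N}_{\bm{A}}(p,K)$ or attach to some $C_y$ and thereby enlarge a strategic neighborhood. Here I would lean on the precise form of the perturbation in \autoref{xistab} and on the construction of $\RR_n(p)$ as the radius of the \emph{realized} relevant set: because $J_p$ is built from whole $\bm{D}$-components and their determined boundary agents, any agent capable of propagating strategic influence into $\mathcal{N}_{\bm{A}}(p,K)$ is already absorbed into $J_p$, hence already inside the ball. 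It is exactly this closure property of strategic neighborhoods, which has no analogue in the network-formation functionals of \cite{leung2019normal} and which rests on \cite{leung2019compute} and \autoref{nocoord}, that must be invoked to rule out outside-in propagation, and matching it carefully against the formal statement of the stabilization conditions is the crux of the proof.
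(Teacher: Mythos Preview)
Your proposal misidentifies what \autoref{xistab}(a) actually demands. Part (a) is \autoref{sixi}: it requires the radius of stabilization to be bounded in probability, i.e.\ $\RR_n(p)=\max_{p'\in J_p}\norm{p-p'}=O_p(1)$. You have instead argued the invariance property \eqref{1RR} in \autoref{rosxi}, namely that $\xi$ is unchanged by modifications outside $Q(p,\RR_n(p))$. That property is essentially \autoref{CJlem} and establishes that $\RR_n(p)$ \emph{is} a radius of stabilization; it is not what part (a) asks for. Your paragraph on ``the delicate point'' about insertion and outside-in propagation is likewise aimed at invariance, not at boundedness.

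The paper's proof of (a) is substantively different and is where the weak-dependence assumptions enter. It first shows $\abs{J_p}=O_p(1)$ by stochastic domination with the branching process $\hat{\mathfrak{X}}_{r_n}^K$ (\autoref{Jsd}), whose subcriticality comes from \autoref{dfrag} via Lemmas \ref{Dexptail} and \ref{MKexptail}. It then converts the size bound into a distance bound: on $\{\abs{J_p}\leq B\}$ one has $J_p\subseteq\mathcal{N}_{\bm{A}}(p,B)$, and the maximal distance in a bounded-depth neighborhood is $O_p(1)$ by \autoref{Mdist}. None of this machinery appears in your sketch, and without it there is no reason for $\RR_n(p)$ to be bounded; a valid radius of stabilization can be infinite. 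Your treatment of (c) is closer to correct: the monotonicity in \autoref{incrad} does follow from the observation that removing agents can only shrink $J_p$, as in \eqref{Jsub}.
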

\begin{proof}
  The proof follows the line of argument in Lemma D.2 of \cite{leung2019normal}, which verifies the assumptions for a static model of network formation with strategic interactions. We first prove \autoref{xistab}(c). For any $H \subseteq \R^d$ and $p \in \tau_{x,r_n}\X \cap H$,
  \begin{equation}
    J_p(\tau_{x,r_n}\X \cap H, \W\cc_{x,r_n}) \subseteq J_p(\tau_{x,r_n}\X, \W\cc_{x,r_n}), \label{Jsub} 
  \end{equation}
  
  \noindent since removing agents only shrinks the size of the relevant set. Then \autoref{xistab}(c) is immediate for radius of stabilization \autoref{RR_i}. 

  The remainder of the proof concerns \autoref{xistab}(a). We show that \autoref{RR_i} is bounded in probability under model \autoref{model_static_generic} with 
  \begin{equation*}
    \X = \tilde\X_n \cup \{x',y_n\},
  \end{equation*}
  
  \noindent where $x' \in \R^d$, $y_n = \tau_{x,r_n^{-1}}(y)$ for $y\in\R^d$, $\tilde\X_n = \{\tilde\rho_i\}_{i=1}^N$, and either $N=\nu(n)$ or $N \sim \text{Poisson}(n)$ independent of all other primitives.
  
  {\em Step 1.} We first prove that for any $z \in \tau_{x,r_n}(\tilde\X_n \cup \{x',y_n\})$,
  \begin{equation}
    \abs{J_z(\tau_{x,r_n}(\tilde\X_n \cup\{x',y_n\}), \W\cc_{x,r_n})} = O_p(1).
    \label{J0goal}
  \end{equation}
  
  \noindent By \autoref{remove} below, the relevant set of $z$ is contained in the union of $z$'s relevant set when another agent $z'$ is removed from the model and the relevant set of $z'$ when $z$ is removed from the model. Therefore, for any $z \in \tau_{x,r_n}\tilde\X_n$,
  \begin{multline}
    J_z(\tau_{x,r_n}(\tilde\X_n \cup\{x',y_n\}), \W\cc_{x,r_n}) \subseteq J_z(\tau_{x,r_n}(\tilde\X_n \cup \{y_n\}), \W\cc_{x,r_n}) \\ \medcup J_{\tau_{x,r_n}x'}(\tau_{x,r_n}(\tilde\X_n \cup\{x',y_n\})\backslash\{z\}, \W\cc_{x,r_n}) \\ 
    \subseteq J_{\tau_{x,r_n}y_n}(\tau_{x,r_n}(\tilde\X_n \cup \{y_n\})\backslash\{z\}, \W\cc_{x,r_n}) \medcup J_z(\tau_{x,r_n}\tilde\X_n, \W\cc_{x,r_n}) \\ \medcup J_{\tau_{x,r_n}x'}(\tau_{x,r_n}(\tilde\X_n\cup\{x'\})\backslash\{z\}, \W\cc_{x,r_n}).
    \label{JJ}
  \end{multline}

  \noindent It suffices to show that each of the sets on the right-hand side has asymptotically bounded size. Observe that for any $x'\in \R^d$,
  \begin{equation*}
    J_{\tau_{x,r_n}x'}(\tau_{x,r_n}(\tilde\X_n\cup\{x'\}), \W\cc_{x,r_n}) = J_{x'}(\tau_{x',r_n}(\tilde\X_n\cup\{x'\}), \W\cc_{x',r_n}).
  \end{equation*}
  
  \noindent This follows because positions enter the model either directly through differences, e.g.\ $||\tau_{x,r_n}X-\tau_{x,r_n}X'|| = ||\tau_{y,r_n}X-\tau_{y,r_n}X'||$ for $X,X' \in \tilde\X_n$, or indirectly through the attribute process. Thus, the rescaled model \autoref{model_static_generic} generates the same network outcome regardless of the centering value $x$. Therefore, to show \autoref{J0goal}, it is enough show that
  \begin{equation}
    \abs{J_y(\tau_{x,r_n}\tilde\X_n\cup\{y\}, \W\cc_{x,r_n})} = O_p(1)
    \label{J1goal}
  \end{equation}

  \noindent for any $x,y \in \R^d$. 
  
  {\em Step 2.} The next two steps make use of the branching process notation in \autoref{introbp}. By \autoref{Jsd}, for $n$ sufficiently large, the left-hand side of \autoref{J1goal} is stochastically dominated by the branching process size $\hat{\mathfrak{X}}_{r_n}^K(y,\bm{Z}(\cc_{x,r_n}y))$ defined in \autoref{bpsd}. Hence, we only need to show that the size of this process is $O_p(1)$.
  
  Recall from that section that this process is generated through the following three steps: first generate the $K$-depth branching process $\bm{B}_{r_n}^K(y,\bm{Z}(\cc_{x,r_n}y))$ defined in \autoref{Bxr}; second, using the notation defined prior to \autoref{Jsd} in \autoref{bp}, generate independent branching processes starting at each particle of the fixed-depth process: $\{\mathfrak{X}_{r_n}(x',z')\colon (x',z') \in \bm{B}_{r_n}^K(y,\bm{Z}(\cc_{x,r_n}y))\}$, constructed independently conditional on $\bm{B}_{r_n}^K(y,\bm{Z}(\cc_{x,r_n}y))$; and third, conditional on the set of particles $\Psi_n$ generated from the second step, draw independent $1$-depth branching processes for each such particle $\{\tilde{\mathfrak{X}}_{r_n}^1(x'',z'')\colon (x'',z'') \in \Psi_n\}$, where $\tilde{\mathfrak{X}}_{r_n}^1(x'',z'')$ is again defined in \autoref{introbp}.

  We next show that $\hat{\mathfrak{X}}_{r_n}^K(y,\bm{Z}(\cc_{x,r_n}y))$ is asymptotically bounded. For economy of notation, we make some new definitions that lump together some of the processes introduced in the previous paragraph. Take any $\mathfrak{X}_{r_n}(x',z')$ generated in the second step and add all particles generated by this process to a set $\mathcal{T}_{r_n}(x',z')$. (So $\abs{\mathcal{T}_{r_n}(x',z')} = \mathfrak{X}_{r_n}(x',z')$.) Define
  \begin{equation*}
    \mathfrak{X}^{23}_{r_n}(x',z') = \mathfrak{X}_{r_n}(x',z') + \sum_{(x'',z'') \in \mathcal{T}_{r_n}(x',z')} \tilde{\mathfrak{X}}_{r_n}^1(x'',z''). 
  \end{equation*}

  \noindent That is, we add to $\mathfrak{X}_{r_n}(x',z')$ the size of each 1-depth branching process generated in step 3 that is initialized at a particle in $\mathcal{T}_{r_n}(x',z')$. Then
  \begin{multline*}
    \prob\left(\hat{\mathfrak{X}}_{r_n}^K(y,\bm{Z}(\cc_{x,r_n}y)) > B\right) = \prob\left( \sum_{(x',z') \in \bm{B}_{r_n}^K(y,\bm{Z}(\cc_{x,r_n}y))} \mathfrak{X}^{23}_{r_n}(x',z') > B \right) \\ 
    \leq \prob\left( \sum_{(x',z') \in \bm{B}_{r_n}^K(y,\bm{Z}(\cc_{x,r_n}y))} \mathfrak{X}^{23}_{r_n}(x',z') > B \medcap \abs{\bm{B}_{r_n}^K(y,\bm{Z}(\cc_{x,r_n}y))} \leq B' \right) \\ + \prob\left( \abs{\bm{B}_{r_n}^K(y,\bm{Z}(\cc_{x,r_n}y))} > B' \right).
  \end{multline*}

  \noindent Since $\abs{\bm{B}_{r_n}^K(y,\bm{Z}(\cc_{x,r_n}y))} \stackrel{d}= \tilde{\mathfrak{X}}_{r_n}^K(y,\bm{Z}(\cc_{x,r_n}y))$, it is $O_p(1)$ by \autoref{MKexptail}. Then for any $\varepsilon > 0$, we can choose $B'$ large enough such that
  \begin{equation*}
    \limsup_{n\rightarrow\infty} \prob\left( \abs{\bm{B}_{r_n}^K(y,\bm{Z}(\cc_{x,r_n}y))} > B' \right) < \varepsilon/2. 
  \end{equation*} 

  \noindent On the other hand, by the union bound,
  \begin{multline}
    \prob\bigg( \sum_{(x',z') \in \bm{B}_{r_n}^K(y,\bm{Z}(\cc_{x,r_n}y))} \mathfrak{X}^{23}_{r_n}(x',z') > B \medcap \abs{\bm{B}_{r_n}^K(y,\bm{Z}(\cc_{x,r_n}y))} \leq B' \bigg) \\
    = \E\bigg[ \prob\bigg( \sum_{(x',z') \in \bm{B}_{r_n}^K(y,\bm{Z}(\cc_{x,r_n}y))} \mathfrak{X}^{23}_{r_n}(x',z') > B \,\bigg|\, \bm{B}_{r_n}^K(y,\bm{Z}(\cc_{x,r_n}y)) \bigg) \\ \times \bm{1}\{\abs{\bm{B}_{r_n}^K(y,\bm{Z}(\cc_{x,r_n}y))} \leq B'\} \bigg] \\
    \leq \E\bigg[ \sum_{(x',z') \in \bm{B}_{r_n}^K(y,\bm{Z}(\cc_{x,r_n}y))} \prob\left( \mathfrak{X}^{23}_{r_n}(x',z') > B/B' \mid \bm{B}_{r_n}^K(y,\bm{Z}(\cc_{x,r_n}y)) \right) \\ \times \bm{1}\{\abs{\bm{B}_{r_n}^K(y,\bm{Z}(\cc_{x,r_n}y))} \leq B'\} \bigg]. \label{r22390b}
  \end{multline}

  \noindent Given the way in which $\mathfrak{X}^{23}_{r_n}(x',z')$ is generated above, we have 
  \begin{equation*}
    \mathfrak{X}^{23}_{r_n}(x',z') \indep \bm{B}_{r_n}^K(y,\bm{Z}(\cc_{x,r_n}y)) \mid (x',z').
  \end{equation*}
  
  \noindent Hence \autoref{r22390b} is bounded above by
  \begin{equation}
    B' \E\left[ \max_{(x',z') \in \bm{B}_{r_n}^K(y,\bm{Z}(\cc_{x,r_n}y))} \prob\left( \mathfrak{X}^{23}_{r_n}(x',z') > B/B' \mid x',z' \right) \bm{1}\{\abs{\bm{B}_{r_n}^K(y,\bm{Z}(\cc_{x,r_n}y))} \leq B'\} \right]. \label{beweo}
  \end{equation}

  For $n$ sufficiently large, $r_n \leq r'$ for some $r' \in (0,\kappa]$. Thus, $\abs{\bm{B}_{r_n}^K(y,\bm{Z}(\cc_{x,r_n}y))} \stackrel{d}= \tilde{\mathfrak{X}}_{r_n}^K(y,\bm{Z}(\cc_{x,r_n}y))$ is stochastically dominated by $\tilde{\mathfrak{X}}_{r'}^K(y,\bm{Z}(\cc_{x,r_n}y))$, by inspection of the intensity measure \autoref{Mintens}, since a higher value $r'$ only changes the process by increasing the expected number of generated offspring. Hence, \autoref{beweo} is bounded above by
  \begin{equation}
    \E\left[ \max_{(x',z') \in \bm{B}_{r'}^K(y,\bm{Z}(\cc_{x,r_n}y))} \prob\left( \mathfrak{X}^{23}_{r_n}(x',z') > B/B' \mid x',z' \right) \bm{1}\{\abs{\bm{B}_{r'}^K(y,\bm{Z}(\cc_{x,r_n}y))} \leq B'\} \right]. \label{beweo2}
  \end{equation}
  
  \noindent Let $\mathcal{S} = \{(x,z)\colon x \in \R^d, z \in \text{supp}(\Phi(\cdot \mid x))\}$. Assume for the moment that
  \begin{equation}
    \lim_{B\rightarrow\infty} \limsup_{n\rightarrow\infty} \sup_{(y,z)\in\mathcal{S}} \prob\left( \mathfrak{X}^{23}_{r_n}(y,z) > B \right) = 0  \label{X23}
  \end{equation}

  \noindent for any $x' \in \R^d$ and $r'$ sufficiently small. (We will prove this claim in step 3 below.) Then 
  \begin{multline*}
    \lim_{B\rightarrow\infty} \limsup_{n\rightarrow\infty} \max_{(x',z') \in \bm{B}_{r'}^K(y,\bm{Z}(\cc_{x,r_n}y))} \prob\left( \mathfrak{X}^{23}_{r_n}(x',z') > B/B' \mid x',z' \right)  \\ \times \bm{1}\{\abs{\bm{B}_{r'}^K(y,\bm{Z}(\cc_{x,r_n}y))} \leq B'\} \stackrel{a.s.}= 0.
  \end{multline*}

  \noindent Hence, by the bounded convergence theorem, for any $\varepsilon, B' > 0$, we can choose $B$ large enough such that
  \begin{equation*}
    \limsup_{n\rightarrow\infty} \autoref{beweo} \leq \limsup_{n\rightarrow\infty} \autoref{beweo2} < \varepsilon/2. 
  \end{equation*}

  \noindent We have therefore shown that for any $\varepsilon > 0$, we can choose $B$ large enough such that
  \begin{equation}
    \limsup_{n\rightarrow\infty} \prob\left( \abs{J_y(\tau_{x,r_n}\tilde\X_n\cup\{y\}, \W)} > B \right) < \varepsilon, \label{s1g}
  \end{equation}

  \noindent which is \autoref{J1goal}, as desired.
  
  {\em Step 3.} We prove \autoref{X23}. As in step 1,
  \begin{multline*}
    \prob\left(\mathfrak{X}^{23}_{r_n}(y,z)) > B\right) = \prob\left( \sum_{(x',z') \in \mathcal{T}_{r_n}(y,z)} \tilde{\mathfrak{X}}_{r_n}^1(x',z') > B \right) \\  
    \leq \prob\left( \sum_{(x',z') \in \mathcal{T}_{r_n}(y,z)} \tilde{\mathfrak{X}}_{r_n}^1(x',z') > B \medcap \mathfrak{X}_{r_n}(y,z) \leq B' \right) \\ + \prob\left( \mathfrak{X}_{r_n}(y,z)) > B' \right).
  \end{multline*}

  \noindent By \autoref{Dexptail}, for any $\varepsilon > 0$, we can choose $B'$ large enough such that
  \begin{equation}
    \limsup_{n\rightarrow\infty} \sup_{(y,z) \in \mathcal{S}} \prob\left( \mathfrak{X}_{r_n}(y,z) > B' \right) < \varepsilon/2. \label{TB'}
  \end{equation}

  \noindent On the other hand, by the union bound,
  \begin{multline*}
    \prob\bigg( \sum_{(x',z') \in \mathcal{T}_{r_n}(y,z)} \tilde{\mathfrak{X}}_{r_n}(x',z';1) > B \medcap \mathfrak{X}_{r_n}(y,z) \leq B' \bigg) \\
    = \E\bigg[ \prob\bigg( \sum_{(x',z') \in \mathcal{T}_{r_n}(y,z)} \tilde{\mathfrak{X}}_{r_n}(x',z';1) > B \,\bigg|\, \mathcal{T}_{r_n}(y,z) \bigg) \\ \times \bm{1}\{\abs{\mathcal{T}_{r_n}(y,z)} \leq B'\} \bigg] \\
    \leq \E\bigg[ \sum_{(x',z') \in \mathcal{T}_{r_n}(y,z)} \prob\left( \tilde{\mathfrak{X}}_{r_n}(x',z';1) > B/B' \mid \mathcal{T}_{r_n}(y,z) \right) \\ \times \bm{1}\{\abs{\mathcal{T}_{r_n}(y,z)} \leq B'\} \bigg].
  \end{multline*}

  \noindent Since $\tilde{\mathfrak{X}}_{r_n}^1(x',z') \indep \mathcal{T}_{r_n}(y,z) \mid (x',z')$, the right-hand side is bounded above by
  \begin{equation}
    B' \E\left[ \max_{(x',z') \in \mathcal{T}_{r_n}(y,z)} \prob\left( \tilde{\mathfrak{X}}_{r_n}^1(x',z') > B/B' \mid x',z' \right) \bm{1}\{\abs{\mathcal{T}_{r_n}(y,z)} \leq B'\} \right]. \label{beweo3}
  \end{equation}

  \noindent Note that for $n$ sufficiently large, $\abs{\mathcal{T}_{r_n}(y,z)} \stackrel{d}= \mathfrak{X}_{r_n}(y,z)$ is stochastically dominated by $\mathfrak{X}_{r'}(y,z)$ for some $r' \in (0,\kappa]$ by inspection of the intensity measure \autoref{Dintens}. Hence,
  \begin{equation}
    \autoref{beweo3} \leq B' \E\left[ \max_{(x',z') \in \mathcal{T}_{r'}(y,z)} \prob\left( \tilde{\mathfrak{X}}_{r_n}^1(x',z') > B/B' \mid x',z' \right) \bm{1}\{\abs{\mathcal{T}_{r'}(y,z)} \leq B'\} \right]. \label{beweo4}
  \end{equation}
  
  \noindent By \autoref{MKexptail}, 
  \begin{equation*}
    \lim_{B\rightarrow\infty} \limsup_{n\rightarrow\infty} \sup_{(x',z') \in \mathcal{S}} \prob\left( \tilde{\mathfrak{X}}_{r_n}^1(x',z') > B/B' \right) = 0. 
  \end{equation*}

  \noindent Hence, by the bounded convergence theorem, for any $\varepsilon, B' > 0$, we can choose $B$ large enough such that
  \begin{equation*}
    \limsup_{n\rightarrow\infty} \sup_{(y,z) \in \mathcal{S}} \autoref{beweo3} \leq \limsup_{n\rightarrow\infty} \sup_{(y,z) \in \mathcal{S}} \autoref{beweo4} < \varepsilon/2. 
  \end{equation*}

  \noindent Combined with \autoref{TB'}, this proves \autoref{X23}.
  
  {\em Step 4.} Having shown asymptotic boundedness of the relevant set \autoref{J1goal}, we can finally establish that the induced radius of stabilization for $p \in \tau_{x,r_n}(\tilde{\X}_n \cup \{x',y_n\})$ satisfies
  \begin{equation}
    \RR(p, \tau_{x,r_n}(\tilde\X_n\cup\{x',y\}), \W\cc_{x,r_n}) = O_p(1).
    \label{RRi_goal}
  \end{equation}

  \noindent In the remainder of the proof, we abbreviate the relevant set of $p$ as
  \begin{equation*}
    J_p = J_p(\tau_{x,r_n}(\tilde\X_n\cup\{x',y_n\}), \W\cc_{x,r_n}). 
  \end{equation*}

  By the law of total probability, 
  \begin{equation}
    \prob\left( \max_{y \in J_p} \norm{p-y} > C \right) \leq \prob\left( \abs{J_p} > B \right) + \prob\left( \max_{y\in J_p} \norm{p-y} > C \medcap \abs{J_p} \leq B\right). \label{ltpJi}
  \end{equation}

  \noindent By steps 1--3, \autoref{J0goal} holds, so for $\varepsilon > 0$, we can choose $B$ such that
  \begin{equation}
    \limsup_{n\rightarrow\infty} \prob\left( \abs{J_p} > B \right) < \varepsilon/2. \label{s1.5g}
  \end{equation}
  
  Consider the second probability on the right-hand side of \autoref{ltpJi}. Under the event $\abs{J_p}\leq B$, it follows that $J_p \subseteq \mathcal{N}_{\bm{A}}(p, B)$. This is because $J_p$ contains agents indirectly connected to $p$ through either $\bm{D}$ or $\bm{A}$. Since $\bm{D}$ is a subnetwork of $\bm{A}$, if $J_p$ has at most $B$ agents, then every agent in the set must be no more than path distance $B$ away from $p$ in $\bm{A}$. We therefore have
  \begin{equation*}
    \prob\left( \max_{y\in J_p} \norm{p-y} > C \medcap \abs{J_p} \leq B\right) \leq \prob\left( \max_{y\in \mathcal{N}_{\bm{A}}(p, B)} \norm{p-y} > C \right). 
  \end{equation*} 

  \noindent As noted in the previous steps, the size of $\mathcal{N}_{\bm{A}}(p, B)$ is stochastically dominated by that of the fixed-depth branching process $\bm{B}_{r_n}^B(p,\bm{Z}(\cc_{x,r_n}p))$ (see \autoref{Jsd}). Thus, the right-hand side of the previous equation is bounded above by
  \begin{equation*}
    \prob\left( \max_{(y,z) \in \bm{B}_{r_n}^B(p,\bm{Z}(\cc_{x,r_n}p))} \norm{p-y} > C \right)
    \leq \prob\left( \max_{(y,z) \in \bm{B}_{\kappa}^B(p,\bm{Z}(\cc_{x,r_n}p))} \norm{p-y} > C \right). 
  \end{equation*} 

  \noindent By \autoref{Mdist}, the last line converges to zero as $C \rightarrow\infty$. Therefore, for any $\varepsilon > 0$ and $B > 0$, we can choose $C$ large enough such that
  \begin{equation}
    \limsup_{n\rightarrow\infty} \prob\left( \max_{y\in J_p} \norm{p-y} > C \medcap \abs{J_p} \leq B\right) < \varepsilon/2. \label{s2g}
  \end{equation}

  Combining \autoref{ltpJi}, \autoref{s1.5g}, and \autoref{s2g}, we have 
  \begin{equation*}
    \lim_{C\rightarrow\infty} \limsup_{n\rightarrow\infty} \prob\left( \max_{y\in J_p} \norm{p-y} > C \right) = 0, 
  \end{equation*}

  \noindent which establishes \autoref{RRi_goal}.
\end{proof}

%----------------------
\begin{lemma}\label{static_bpestab}
  Under the assumptions of \autoref{staticclt}, Assumptions \ref{xistab}(b) and \ref{bpestab} hold for each component $\xi$ defined in \eqref{xistatic} for $\RR_\xi$ given in \eqref{RR_i}.
\end{lemma}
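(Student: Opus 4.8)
The plan is to verify \autoref{xistab}(b) and \autoref{bpestab} for the radius $\RR_n(p)$ defined in \eqref{RR_i}, reusing the branching-process domination of the relevant set $J_p$ developed in the proof of \autoref{static_xistab}. Condition \autoref{xistab}(b) — that $\RR_n(p)$ is a bona fide radius of stabilization, so the statistic $\xi$ is unchanged when the configuration is perturbed outside the ball of radius $\RR_n(p)$ around $p$ — follows directly from \autoref{CJlem} together with the definition of $\RR_n(p)$: since $\xi$ depends on the primitives only through the agents in $J_p$, and by construction every agent in $J_p$ lies within distance $\RR_n(p)=\max_{p'\in J_p}\norm{p-p'}$ of $p$, altering positions or attributes beyond this radius leaves $J_p$, and hence $\xi$, intact. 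I would state this for the generic model \eqref{model_static_generic} so that it simultaneously covers $N=\nu(n)$ and $N\sim\text{Poisson}(n)$, exactly as in \autoref{static_xistab}.

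The substantive work is \autoref{bpestab}, namely that $\prob(\RR_n(p)>R)$ decays exponentially in $R$, uniformly over the relevant starting positions and types. Here I would sharpen the $O_p(1)$ bound of \autoref{static_xistab} into an exponential tail. The key geometric observation is that every agent $p'\in J_p$ is joined to $p$ by a chain of links in $\bm{A}$ (through the $K$-neighborhood, the $\bm{D}$-component, and the one-step extension defining $C^+$), so under the position-preserving coupling of \autoref{Jsd}, the position of $p'$ is that of a particle of the dominating composite branching process whose total size is $\hat{\mathfrak{X}}_{r_n}^K$ (\autoref{bpsd}). Writing each such position as $p$ plus the successive offspring displacements $\delta_e$ along the root-to-particle path and setting $X_e=\norm{\delta_e}$, the triangle inequality gives $\norm{p-p'}\le\sum_e X_e$, so $\RR_n(p)$ is at most the total of all edge displacement lengths, the number of edges being at most $\abs{\hat{\mathfrak{X}}_{r_n}^K}$. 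I would then split $\{\RR_n(p)>R\}$ according to whether $\abs{\hat{\mathfrak{X}}_{r_n}^K}$ exceeds $B=\alpha R$: the event $\{\abs{\hat{\mathfrak{X}}_{r_n}^K}>\alpha R\}$ has exponentially small probability by the total-progeny bounds \autoref{MKexptail} and \autoref{Dexptail}, while on its complement the radius is dominated by a sum of at most $\alpha R$ displacements, each with an exponential tail because the offspring intensities \autoref{Mintens} and \autoref{Dintens} decay exponentially in distance under \autoref{sparsity}. A Chernoff bound on this sum, followed by optimizing over $\alpha$, yields $\prob(\RR_n(p)>R)\le C e^{-cR}$.

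Two points require care and constitute the main obstacle. First, the displacements are governed by the $n$-dependent intensity $r_n$; I would handle this as in \autoref{static_xistab} by noting that for $n$ large $r_n\le r'\in(0,\kappa]$, so the offspring process is stochastically dominated by one with a fixed intensity whose distance tails are uniformly exponential, making the Chernoff bound uniform in $n$ and in the starting type $(y,z)$. Second, the process $\hat{\mathfrak{X}}_{r_n}^K$ is built in three nested layers (\autoref{introbp}), so the displacement sum must be assembled layer by layer — each layer contributing a progeny count with an exponential tail and per-edge displacements with exponential tails — and the balancing argument applied to the aggregate. The delicate step is ensuring that combining an exponential-tailed progeny count with exponential-tailed per-edge lengths still yields an exponential (rather than merely stretched-exponential) tail for the spatial radius; this is exactly what the large-deviation optimization over $\alpha$ secures, and it is where the subcriticality underlying \autoref{MKexptail}--\autoref{Dexptail} is essential.
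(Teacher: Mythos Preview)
You have misread \autoref{xistab}(b). It is not the statement that $\RR_n(p)$ is a valid radius of stabilization in the sense of \eqref{1RR}; that property is already encoded in the definition of a radius of stabilization and is what \autoref{CJlem} delivers. \autoref{xistab}(b) instead requires \emph{external} stabilization (Definition~\ref{exstab}): you must construct a separate radius $\RR_n^e(x)$ with the ``reverse'' property that if adding or removing the point $x$ changes some other agent $X$'s statistic, then $Q(\tau_{x,r_n}X,\RR_X)\subseteq Q(x,\RR_n^e(x))$, and you must show $\RR_n^e(x)=O_p(1)$ uniformly in $x$ together with the monotonicity in part (c) of that definition. The paper does this by building a ``reverse'' relevant set $J_x^e$ --- first the $1$-neighborhood of $x$ in $\bm A$, then the $\bm D$-components of those agents, then their $K$-neighborhoods --- and setting $\RR_n^e(x)$ to be the smallest radius around $x$ containing the stabilization cubes $Q(y',\RR_n(y'))$ for all $y'\in J_x^e$. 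The key structural fact is \eqref{J>J}: $y\in J_x^e$ implies $x\in J_y$, which is exactly what makes condition (b) of Definition~\ref{exstab} go through. None of this is in your proposal; what you wrote establishes at most the basic stabilization property already covered by \autoref{static_xistab}.

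Your treatment of \autoref{bpestab} is on the right track in spirit, but you are reinventing machinery the paper already provides. The paper's argument is a one-line reduction: by \autoref{DC}, exponential tails for $\RR_n(x)$ follow once $\abs{J_x}$ has exponential tails, and the latter is \autoref{JCN}. Your Chernoff-based splitting on the progeny size is essentially the content of \autoref{DC} (indeed the proof of \autoref{MKN}, on which \autoref{DC} rests, carries out precisely the edge-displacement summation you describe), so you would be duplicating that work rather than citing it. The stretched-exponential concern you flag is real --- and the paper's statement of \autoref{bpestab} only requires a tail of the form $c\,\exp\{-cr^\epsilon\}$ for some $\epsilon>0$, not necessarily $\epsilon=1$ --- but in any case \autoref{DC} and \autoref{JCN} already handle it.
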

\begin{proof}
  We first prove \autoref{bpestab}. Consider model \autoref{model_static_generic} with $\X = \{\tilde\rho_i\}_{i=1}^N \cup \{x,\tau_{x,r_n^{-1}}y\}$, where either $N=m-1$ with $m \in [n/2,3n/2]$ for all $n$ or $N \sim \text{Poisson}(n)$ independent of all other primitives. We need to show that under this model, $\RR_n(x)$, defined in \autoref{RR_i}, has exponential tails, uniformly over $n>\tilde{n}$, $m \in [n/2,3n/2]$, and $x,y\in\R^d$. By \autoref{DC}, this holds if $\abs{J_x(\tau_{x,r_n}\X, \W\cc_{x,r_n})}$ has uniformly exponential tails. This property follows from \autoref{JCN}.

  We next prove \autoref{xistab}(b). Consider model \autoref{model_static_generic} with $\X = \{\tilde\rho_i\}_{i=1}^N \cup \{x\}$. In \autoref{RR_i}, we construct a radius of stabilization using the relevant set $J_x$. Recall from \eqref{J_i} that this set is defined as the union of the strategic neighborhoods of agents in $x$'s $K$-neighborhood in the network $\bm{A}$. It delineates the set of agents such that their removal from the model could potentially change $x$'s agent statistic. The radius of external stabilization $\RR_n^e(x)$ in \autoref{exstab} is sort of the opposite; it delineates the set of agents whose agent statistics could change due to the removal of $x$ from the model. We therefore build the relevant set in ``reverse'' to construct $\RR_n^e(x)$. Then we show it is asymptotically bounded. This follows the line of argument used in Lemma D.3 of \cite{leung2019normal} to verify the assumptions for a static model of network formation with strategic interactions.
  
  We define a relevant set $J_x^e$ for external stabilization as follows. First initialize $J_x^e$ as $\mathcal{N}_{\bm{A}}(x, 1)$. For each $x'$ in this set, add to $J_x^e$ the set $C(x', \tau_{x,r_n}\X, \W\cc_{x,r_n}, \bm{D})$. Finally, for each $x'' \in C(x', \tau_{x,r_n}\X, \W\cc_{x,r_n}, \bm{D})$, add to $J_x^e$ the agents in $\mathcal{N}_{\bm{A}}(x'', K)$. By construction,
  \begin{equation}
    y \in J_x^e \quad\text{implies}\quad x \in J_y. \label{J>J}
  \end{equation}
  
  Now, recall that $\RR_n(y)$ is the radius of stabilization for $y$ given in \autoref{RR_i}. Let $Q(x,r)$ be the cube in $\R^d$ centered at $x$ with side length $r$. Define the radius of external stabilization
  \begin{equation*}
    \RR_n^e(x) = \max\left\{ \norm{x-y}\colon y \in \bigcup_{y'\in J_x^e} Q(y',\RR_n(y')) \right\}.  
  \end{equation*}

  \noindent By \autoref{J>J}, $Q(x,\RR_n^e(x))$ contains $Q(y,\RR_n(y))$ for all $y\in\tau_{x,r_n}\X$ such that $x \in J_y$. Hence, if
  \begin{multline*}
    \psi\big(y, \tau_{x,r_n}(\tilde\X_n\cup\{x\}) \cap Q(y,\RR_n(y)), \W\cc_{x,r_n}, \bm{Y}, \bm{A} \big) \\ \neq \psi\big(y, \tau_{x,r_n}\tilde\X_n \cap Q(y,\RR_n(y)), \W\cc_{x,r_n}, \bm{Y}, \bm{A} \big) \quad\text{a.s.},
  \end{multline*}
  
  \noindent (which implies $x \in J_y$), this in turn implies that $Q(x,\RR_n^e(x))$ contains $Q(y,\RR_n(y))$. Then the only requirements of \autoref{exstab} left to show are (a) and (c). Requirement (c), that the radius of external stabilization is increasing, follows trivially from the fact that removing agents from the network can only shrink the sizes of $J_x^e$ and $J_{y'}$ for any $y'$.
  
  The remainder of the proof establishes requirement (a) of \autoref{exstab}, i.e.\ $\RR_n^e(x) = O_p(1)$ uniformly over $x\in\R^d$. Under the event $\abs{J_x^e}\leq B$, note that $J_x^e \subseteq \mathcal{N}_{\bm{A}}(x, B)$. Hence,
  \begin{equation}
    \prob(\RR_n^e(x) > C) \leq \prob(\abs{J_x^e} > B) + \prob( \mathcal{E}_n ), \label{RRn>C}
  \end{equation}

  \noindent where
  \begin{equation*}
    \mathcal{E}_n = \left\{ \abs{J_x^e} \leq B \medcap \max\bigg\{ \norm{x-y}; y \in \medcup_{y'\in \mathcal{N}_{\bm{A}}(x, B)} Q\big(y', \max_{y'' \in J_{y'}} \norm{y'-y''}\big) > C \bigg\} \right\}
  \end{equation*}

  \noindent Consider $\prob(\mathcal{E}_n)$. By the triangle inequality,
  \begin{multline*}
    \max\left\{ \norm{x-y}; y \in \medcup_{y'\in \mathcal{N}_{\bm{A}}(x, B)} Q\big(y', \max_{y'' \in J_{y'}} \norm{y'-y''}\big) \right\} \\ 
    \leq \max\left\{ 2\norm{x-y'} + \norm{x-y}; y' \in \mathcal{N}_{\bm{A}}(x, B), y\in J_{y'} \right\} \\
    \leq 2 \max\big\{ \norm{x-y'}; y'\in \mathcal{N}_{\bm{A}}(x, B) \big\} + \max\big\{ \norm{x-y}; y\in J_{y'}, y' \in \mathcal{N}_{\bm{A}}(x, B) \big\}.
  \end{multline*}

  \noindent Define $\bm{J}_x = \medcup_{y'\in \mathcal{N}_{\bm{A}}(x, B)} J_{y'}$. Under the event that $\bm{J}_x$ has size at most $B'>B$, we have trivially $\mathcal{N}_{\bm{A}}(x, B) \subseteq \mathcal{N}_{\bm{A}}(x, B')$.  Furthermore, $\bm{J}_x \subseteq \mathcal{N}_{\bm{A}}(x, B')$. Therefore, $\prob(\mathcal{E}_n)$ is bounded above by
  \begin{equation}
    \prob(\bm{J}_x > B') + 2\prob\left( \max_{y \in \mathcal{N}_{\bm{A}}(x, B')} 2\norm{x-y} > C \right). \label{2sg}
  \end{equation}

  \noindent By \autoref{MKN}, for any $\varepsilon,B'>0$, there exists $C>0$ such that
  \begin{equation}
    \limsup_{n\rightarrow\infty} \sup_x \prob\left( \max_{y \in \mathcal{N}_{\bm{A}}(x, B')} 2\norm{x-y} > C \right) < \varepsilon/4. \label{3sg}
  \end{equation}

  Turning to the first part of \autoref{2sg}, note that
  \begin{equation*}
    \bm{J}_x = \bigcup_{y'\in \mathcal{N}_{\bm{A}}(x, B+K)} C_{y'}^+. 
  \end{equation*}

  \noindent This is just the relevant set $J_x$ with $K$ replaced with $B+K$. Then by the proof of \autoref{static_xistab}, for any $\varepsilon,B > 0$ and $B'$ large enough,
  \begin{equation}
    \limsup_{n\rightarrow\infty} \sup_x \prob(\bm{J}_x > B') < \varepsilon/4. \label{4sg}
  \end{equation}

  Combining \autoref{2sg}, \autoref{3sg}, and \autoref{4sg}, we have
  \begin{equation}
    \limsup_{n\rightarrow\infty} \sup_x \prob(\mathcal{E}_n) < \varepsilon/2.
    \label{5sg}
  \end{equation}
  
  To complete the proof, in light of \autoref{RRn>C} and \autoref{5sg}, it suffices to show that for $B$ large enough,
  \begin{equation}
    \limsup_{n\rightarrow\infty} \sup_x \prob(\abs{J_x^e} > B) < \varepsilon/2.
    \label{1sg}
  \end{equation}

  \noindent This step of the argument is essentially the same as the proof of \autoref{static_xistab}, so we only sketch the argument. First, we construct a branching process whose size stochastically dominates $\abs{J_x^e}$. This is done by replacing each $\mathcal{N}_{\bm{A}}(x',K)$ with $\bm{B}_{r_n}^K(x',\bm{Z}(\cc_{x,r_n}x'))$ defined in \autoref{Bxr} and replacing each $C(x',\tau_{x,r_n}\X, \W\cc_{x,r_n}, \bm{D})$ with the set of particles $\mathcal{T}_{r_n}(x',\bm{Z}(\cc_{x,r_n}x'))$ defined in step 2 of the proof of \autoref{static_xistab}. The sizes of the original sets are dominated by the sizes of their replacements by the proof of \autoref{Jsd}. Second, the sizes of the replacements have exponential tails uniformly over $x'$ by Lemmas \ref{Dexptail} and \ref{MKexptail}. Then \autoref{1sg} follows.
\end{proof}

%----------------------
The following lemma is used in \autoref{static_xistab} to verify internal stabilization. It follows the notation of part (a) of that proof.

\begin{lemma}\label{remove}
  For any $z,z' \in \tau_{x,r_n}\tilde\X_n$,
  \begin{multline*}
    \underbrace{J_z(\tau_{x,r_n}(\tilde\X_n \cup\{x',y_n\}), \W\cc_{x,r_n})}_{J_z(\X)} \subseteq \underbrace{J_z(\tau_{x,r_n}(\tilde\X_n \cup \{x',y_n\})\backslash\{z'\}, \W\cc_{x,r_n})}_{J_z(\X-z')} \\ \medcup \underbrace{J_{z'}(\tau_{x,r_n}(\tilde\X_n \cup\{x',y_n\})\backslash\{z\}, \W\cc_{x,r_n})}_{J_{z'}(\X-z)}.
  \end{multline*}
\end{lemma}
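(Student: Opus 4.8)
The plan is to prove the inclusion combinatorially, by characterizing membership in a relevant set through a ``witness walk'' and tracking how such a walk behaves under deletion of a single agent. Recall from \eqref{J_i} and \eqref{C^+} that $w \in J_z(\X)$ holds exactly when there are agents $y, w' \in \X$ such that (i) there is an $\bm{A}$-path from $z$ to $y$ of length at most $K$, (ii) $w'$ lies in the $\bm{D}$-component $C(y)$ of $y$ (equivalently, there is a $\bm{D}$-path joining $y$ and $w'$), and (iii) either $w = w'$, or $A_{w,w'} = 1$ and $w$ is a ``determined'' agent, i.e.\ $\mathcal{R}^c_w = 0$. Since $D_{a,b} = A_{a,b}\mathcal{R}^c_b \le A_{a,b}$, every edge used by such a witness is an $\bm{A}$-edge, so a witness is a walk in $\bm{A}$ from $z$ to $w$ carrying the additional ``prefix/component/boundary'' bookkeeping above. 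Two elementary facts will do the heavy lifting: first, all the local predicates appearing in (i)--(iii)---the link indicators $A_{a,b}$, the directed indicators $D_{a,b}$, the strategic indicator $\mathcal{R}^c_a$, and the determinedness of $w$---depend only on the positions and attributes of the one or two agents named, hence are identical whether evaluated in $\X$ or in any subset containing those agents; second, connectivity along a given path is preserved under deletion of agents not on that path, which is precisely the monotonicity already recorded in \eqref{Jsub}.

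First I would reduce to a witness walk in which $z$ occurs only at its initial vertex. Fix $w \in J_z(\X)$; if $w = z$ the claim is immediate since $z \in J_z(\X\setminus\{z'\})$, so assume $w \neq z$ (and $z \neq z'$, the case in which the statement is non-vacuous). Take any witness $(y, w')$ whose $\bm{A}$-prefix $P_1\colon z \to y$ and $\bm{D}$-path $P_2\colon y \to w'$ are simple paths. Since $P_1$ is simple and $w \neq z$, the vertex $z$ can reappear only on $P_2$; and if it does, then $z \in C(y) = C(w')$, so $w'$ lies in $z$'s own $\bm{D}$-component. In that case I would re-witness $w \in J_z(\X)$ using the trivial (length-zero) $\bm{A}$-prefix at $z$ together with a simple $\bm{D}$-path from $z$ to $w'$ inside $C(z)$, followed by the same boundary step. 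After this cleanup the witnessing walk $W = (z = v_0, v_1, \dots, v_M = w)$ contains $z$ exactly once, at $v_0$.

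With this normalization the case split is clean. If $z'$ is not a vertex of $W$, then deleting $z'$ leaves every edge and every local predicate of $W$ intact (by the locality fact), and the same walk witnesses $w \in J_z(\X\setminus\{z'\})$ via \eqref{Jsub}. Otherwise let $i^\ast \ge 1$ be the first index with $v_{i^\ast} = z'$ (here $i^\ast \ge 1$ because $v_0 = z \neq z'$), and consider the suffix $W' = (v_{i^\ast}, \dots, v_M)$. Depending on whether $z'$ falls on the $\bm{A}$-prefix or the $\bm{D}$-part of $W$, the suffix $W'$ is either a shorter $\bm{A}$-prefix from $z'$ to $y$ (of length $\le K$) followed by the same $\bm{D}$-path and boundary, or a $\bm{D}$-subpath from $z'$ to $w'$ (with trivial $\bm{A}$-prefix) followed by the boundary; in both cases $W'$ has exactly the form required to witness $w \in J_{z'}$. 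Moreover $W'$ contains no occurrence of $z$, since $z$ appears in $W$ only at $v_0$ and $i^\ast \ge 1$; hence by the locality and monotonicity facts $W'$ witnesses $w \in J_{z'}(\X\setminus\{z\})$. Combining the two cases yields $J_z(\X) \subseteq J_z(\X\setminus\{z'\}) \medcup J_{z'}(\X\setminus\{z\})$, which is the claim after substituting $\X = \tau_{x,r_n}(\tilde\X_n \cup \{x',y_n\})$.

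I expect the cleanup step---guaranteeing a witness in which $z$ occurs only once, so that the suffix starting at $z'$ genuinely avoids $z$---to be the only real obstacle; without it a witness could oscillate between $z$ and $z'$, and neither the suffix-from-$z'$ nor the suffix-from-$z$ argument would avoid the deleted agent. The resolution rests on the observation that any reappearance of $z$ forces $z$ and the current $\bm{D}$-path endpoint into a common $\bm{D}$-component, which lets me replace the offending prefix by a trivial one. The remaining ingredients---edge-locality of the model primitives and path monotonicity---are routine and were essentially already used to establish \eqref{Jsub}.
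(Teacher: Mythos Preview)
Your proposal is correct and follows essentially the same path-tracing argument as the paper: locate $z'$ on a witness for $w\in J_z(\X)$ and show the suffix starting at $z'$ witnesses $w\in J_{z'}(\X\setminus\{z\})$. Your explicit normalization step---re-witnessing with a trivial $\bm{A}$-prefix whenever $z$ reappears on the $\bm{D}$-segment---is the one place you go beyond the paper, and it supplies exactly the justification the paper omits when it asserts ``and $z$ is not on this path'' in its case~(b).
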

\begin{proof}
  Let $\bm{A}^*$ be defined as the network $\bm{A}$ with agent $z$ removed.  Suppose $y \in J_z(\X)$. If $y = z'$, then clearly $y \in J_{z'}(\X-z)$, so suppose $y \neq z'$. Then either (a) $y \in \mathcal{N}_{\bm{A}}(z,K)$ or (b) $y \in C_w^+$ for some $w \in \mathcal{N}_{\bm{A}}(z,K)$. Consider case (a). If $y \not\in J_z(\X-z')$, then it means that $z'$ lies on the path of length $K$ in $\bm{A}$ connecting $y$ and $z$. This implies $y \in \mathcal{N}_{\bm{A}^*}(z',K)$. Therefore $y \in J_{z'}(\X-z)$. Consider case (b). If $y \not\in J_z(\X-z')$, then there are two possibilities. The first is that $z'$ lies on some directed path in $\bm{D}$ connecting $y$ and some agent $w$, and $z$ is not on this path.\footnote{By directed path we mean a sequence of directed links $A_{y,a_1}, A_{a_1,a_2}, \dots, A_{a_{k-1},a_k}, A_{a_k,w}$.} The second is that $A_{z'y} \mathcal{R}_y^c = 1$. In the second case, clearly $y \in \mathcal{N}_{\bm{A}^*}(z',K)$. In the first case, there is a directed path from $z'$ to $y$ in $\bm{D}^*$. Thus, in either case, $y \in J_{z'}(\X-z)$.
\end{proof}

%----------------------------------------------------------------------
\section{Proof of Dynamic CLT}\label{dynamicproof}
%----------------------------------------------------------------------

The proof technique is the same as that discussed in \autoref{staticproof}. We only need to define the relevant set $J_i$ differently.

%------------------------------------
\subsection{Generalized Setup}\label{sgsdynamic}
%------------------------------------

As in \autoref{sgsstatic}, in order to apply \autoref{master-clt}, we need to generalize our setup to allow for a random number of agents. The generalization is essentially the same as the static model \autoref{model_static_generic}, except we need to define attribute processes for each time period.

For $x \in \R^d$, define $\X, r_n, \tau_{x,r}, \cc_{x,r}$ as in \autoref{sgsstatic}. In place of $\bm{Z}(\cdot)$ and $\bm{\zeta}(\cdot,\cdot)$ given in that section, we define $\bm{Z}(\cdot) = (\bm{Z}^t(\cdot))_{t=0}^T$, where $\bm{Z}^t(\cdot)$ is a stochastic process on $\R^d$ with range $\R^{d_z}$ and $\bm{\zeta}(\cdot,\cdot) = (\bm{\zeta}^t(\cdot,\cdot))_{t=0}^T$, where $\bm{\zeta}^t(\cdot,\cdot)$ is a stochastic process on $\R^d \times \R^d$ with range $\R^{d_\zeta}$. Suppose the marginals of $\bm{Z}(\cdot)$ and $\bm{\zeta}(\cdot,\cdot)$ are i.i.d., $\{\bm{\zeta}^t(\cdot,\cdot)\}_{t=0}^T$ are independent, and $\bm{Z}, \indep \bm{\zeta} \indep \X$. Define the ``attribute process'' $\W(\cdot,\cdot) = (\W^t(\cdot,\cdot))_{t=0}^T$ for $\W^t\colon (p,p') \mapsto (\bm{Z}^t(p), \bm{Z}^t(p'), \bm{\zeta}^t(p,p'))$. Let $\W^t\cc_{x,r}$ denote the composition of the random function $\W^t$ and $\cc_{x,r}$, i.e.
\begin{equation*}
  \W^t\cc_{x,r}(y,y') = \big( \bm{Z^}t(\cc_{x,r}y), \bm{Z}^t(\cc_{x,r}y'), \bm{\zeta}^t(\cc_{x,r}y,\cc_{x,r}y') \big)
\end{equation*}

\noindent for any $y,y' \in \R^d$. Let $\W\cc_{x,r} = (\W^t\cc_{x,r})_{t=0}^T$.

We can finally state the generalized dynamic model, which is given by the tuple
\begin{equation}
  (U, U_0, \lambda, V, \tau_{x,r_n}\X, \W\cc_{x,r_n}), \label{model_dynamic_generic}
\end{equation}

\noindent Compared to \autoref{model_static_generic}, the only difference is the addition of $U_0(\cdot)$, which determines the initial condition, and the redefinition of $\W$.

The remainder of this subsection clarifies how the network and equilibrium outcomes are generated under this model. Define the network $\bm{A} \equiv \bm{A}(\tau_{x,r_n}\X, \W\cc_{x,r_n})$ in the same way as \autoref{sgsstatic}; it is a random mapping $(p,p') \mapsto A_{p,p'}$, where $A_{p,p'}$ denotes the potential link between $p,p' \in \tau_{x,r_n}\X$. For period-$t$ outcomes $\bm{Y}^t$ under \autoref{model_dynamic_generic}, rather than an $n\times 1$ vector of outcomes, we view $\bm{Y}^t$ as a mapping from $\tau_{x,r_n}\X$ to $\{0,1\}$. Let $Y_p^t$ denote the outcome of agent $p \in \tau_{x,r_n}\X$ at time $t$ according to $\bm{Y}^t$. Let $Z_p^t = (p, \bm{Z}^t(\cc_{x,r_n}p))$, which is just $p$'s type at period $t$. We assume
\begin{equation*}
  Y_p^t = \bm{1}\{U(S_p^t, Z_p^t) > 0\},
\end{equation*}

\noindent where 
\begin{equation*}
  S_p^t \equiv S(Y_{-p}^{t-1}, Z_p^{t-1}, Z^{t-1}_{-p}, A_p, A_{-p}).
\end{equation*}

\noindent The first, third, and fourth arguments of the right-hand side are random functions with domain $\tau_{x,r_n}\X\backslash\{p\}$ such that $Y_{-p}^{t-1}\colon p' \mapsto Y_{p'}^{t-1}$, $Z_{-p}^{t-1}\colon p' \mapsto Z_p^{t-1}$, and $A_p\colon p' \mapsto A_{p,p'}$. The last argument is a random function with domain $\tau_{x,r_n}\X\backslash\{p\} \times \tau_{x,r_n}\X\backslash\{p\}$ such that $A_{-p}\colon (p', p'') \mapsto A_{p',p''}$. Thus, this just generalizes the definition of $S(\cdot)$ in model \autoref{model} to model \autoref{model_dynamic_generic}. To emphasize its dependence on the primitives, we write $\bm{Y}^t(\tau_{x,r_n}\X, \W^t\cc_{x,r_n})$ for the period-$t$ outcome mapping, suppressing the implicit dependence on the network function $\bm{A}(\tau_{x,r_n}\X, \W\cc_{x,r_n})$.

This defines the evolution of the outcome time series from period 1 onwards. The initial conditions model for $\bm{Y}^0$ in model \autoref{model_dynamic_generic} is the static model of \autoref{sgsstatic} with period-0 attributes. Recalling the definition of $\lambda(\cdot)$ from that section, we assume the initial condition satisfies $\bm{Y}^0 = \lambda(\bm{A}, \tau_{x,r_n}\X, \W^0\cc_{x,r_n})$. This is just a restatement of \autoref{dyinit}(b).

\bigskip

\noindent {\bf Network Moments.} Define 
\begin{equation}
  \bm{Y}(\tau_{x,r_n}\X, \W\cc_{x,r_n}) = (\bm{Y}^t(\tau_{x,r_n}\X, \W^t\cc_{x,r_n}))_{t=0}^T.
\end{equation}

\noindent Our goal is to prove a CLT for network moments of the form
\begin{align*}
  \Lambda(\tau_{x,r}\X, \W\cc_{x,r_n}) 
  &\equiv \sum_{p\in\tau_{x,r}\X} \psi(p, \tau_{x,r}\X, \W\cc_{x,r_n}, \bm{Y}(\tau_{x,r}\X, \W\cc_{x,r_n}), \bm{A}(\tau_{x,r}\X, \W\cc_{x,r_n}))  \\
  &\equiv \sum_{p\in\tau_{x,r}\X} \psi(p, \tau_{x,r}\X, \W\cc_{x,r_n}, \bm{Y}, \bm{A}), 
\end{align*}

\noindent where the last line abbreviates by suppressing the dependence of $\bm{Y}$ and $\bm{A}$ on their arguments. Note that, by construction, $\bm{Y}$ and $\bm{A}$ are deterministic functions of the first two arguments of $\psi(\cdot)$. Thus we can absorb them into the agent statistic, writing
\begin{equation*}
  \xi(p, \tau_{x,r}\X, \W\cc_{x,r_n}) \equiv \psi(p, \tau_{x,r}\X, \W\cc_{x,r_n}, \bm{Y}, \bm{A}).
\end{equation*}

\noindent This falls within the setup of \autoref{smaster}.

%------------------------------------
\subsection{Proofs}\label{pfdynamicclt}
%------------------------------------

In the proof of \autoref{staticclt}, we construct a ``relevant set'' $J_x$ such that agent $x$'s agent statistic only depends on its arguments through $J_x$ in the sense of \autoref{constructJ}. To prove \autoref{dynamicclt}, we need to define an analogous set. We first need some new notation. For $p,p' \in \tau_{x,r_n}\X$ define 
\begin{multline*}
  D_{p,p'}^t = \bm{1}\big\{ \sup_s V(\norm{p-p'}, s, \W^t\cc_{x,r_n}(p,p')) > 0 \\ \medcap \inf_s V(\norm{p-p'}, s, \W^t\cc_{x,r_n}(p,p')) \leq 0 \big\}.
\end{multline*}

\noindent Let $\bm{D}^t$ network function on $\tau_{x,r_n}$ mapping $(p,p') \mapsto D_{p,p'}^t$. Define agent $x$'s strategic neighborhood in $\bm{D}^t$ as in \autoref{sstaticass}, now with the notation $C_{tx}^+$.

For $M$ given in \autoref{S2} and $K$ given in \autoref{klocal}, define the relevant set as
\begin{equation}
  J_{xT} \equiv J_{xT}(\tau_{x,r_n}\X, \W\cc_{x,r_n}) = \bigcup\left\{ C_{0y}^+\colon y \in \mathcal{N}_{\bm{A}}(x',K+M(T-1)) \right\}. \label{J_iT}
\end{equation}

\begin{lemma}\label{constructJT}
  Suppose $\psi(\cdot)$ satisfies \autoref{klocal} and the initial conditions model satisfies Assumptions \ref{S} and \ref{nocoord}. For any $x' \in \tau_{x,r_n}\X$,
  \begin{equation*}
    \psi(x', \tau_{x,r_n}\X, \W\cc_{x,r_n}, \bm{Y}, \bm{A}) = \psi(x', J_{x'T}, \W\cc_{x,r_n}, \bm{Y}, \bm{A}). 
  \end{equation*}
\end{lemma}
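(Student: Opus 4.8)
The plan is to mirror the proof of \autoref{CJlem} in the static setting, adapting it to track how information propagates \emph{backward in time} through the dynamic best-response map. The key observation is that the period-$T$ outcome of any agent is a function of period-$(T-1)$ outcomes within its $M$-neighborhood (by \autoref{S2}), so iterating this backward, period-$t$ outcomes depend on period-$0$ outcomes only through agents within path distance $M(T-t)$. Combined with $K$-locality of $\psi(\cdot)$, this means $\psi(x', \cdots)$ depends on the period-$0$ outcomes only through agents in $\mathcal{N}_{\bm{A}}(x', K + M(T-1))$, which is why the relevant set \eqref{J_iT} unions over strategic neighborhoods of agents in this enlarged ball.

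First I would establish the spatial/temporal propagation claim precisely: by \autoref{klocal}, $\psi(x', \tau_{x,r_n}\X, \W\cc_{x,r_n}, \bm{Y}, \bm{A})$ depends on its arguments only through $\{(Y_w^t, Z_w^t, A_w)\colon w \in \mathcal{N}_{\bm{A}}(x',K), t = 0, \dots, T\}$. Then by \autoref{S2}, each $Y_w^t$ for $w \in \mathcal{N}_{\bm{A}}(x',K)$ is determined by $\{Y_{w'}^{t-1}\colon w' \in \mathcal{N}_{\bm{A}}(w, M)\}$ together with own types and links. Applying this recursion $t$ times, $Y_w^t$ is a function of period-$0$ outcomes only through agents in $\mathcal{N}_{\bm{A}}(w, Mt) \subseteq \mathcal{N}_{\bm{A}}(x', K + Mt)$. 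Since $t \leq T$, the union over all relevant periods and base agents $w$ is contained in $\mathcal{N}_{\bm{A}}(x', K + M(T-1))$ at the level of what feeds into the period-$0$ determination (the last best-response step, from $\bm{Y}^0$ to $\bm{Y}^1$, accounts for one of the $T$ iterations, hence $M(T-1)$ rather than $MT$).

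Next I would handle the initial condition $\bm{Y}^0$, which is the only genuinely strategic (fixed-point) object, the periods $1,\dots,T$ being determined by a deterministic recursion. This is exactly where the static machinery applies: for each base agent $y \in \mathcal{N}_{\bm{A}}(x', K + M(T-1))$ whose period-$0$ outcome we need, \autoref{AC+} (the restriction property \eqref{snprop} of strategic neighborhoods, valid under \autoref{S} applied to the initial conditions model) together with \autoref{nocoord} guarantees that $Y_y^0$ is determined by the positions, attributes, and shocks of agents in $C_{0y}^+$ alone, independently of the rest of the network. Unioning over all such $y$ gives precisely $J_{x'T}$ as defined in \eqref{J_iT}. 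Hence counterfactually removing all agents outside $J_{x'T}$ and rerunning the initial-conditions game, then the deterministic dynamics, leaves $\psi(x', \cdots)$ unchanged, which is the desired conclusion.

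The main obstacle I expect is bookkeeping the exact neighborhood radius and making the backward recursion rigorous simultaneously with the strategic-neighborhood argument for period $0$. One must be careful that the radius $K + M(T-1)$ correctly counts iterations: the $K$ comes from $\psi$'s locality, and each of the $T-1$ transitions after period $0$ contributes an $M$, while the final dependence on $\bm{Y}^0$ is what triggers the strategic-neighborhood expansion. A secondary subtlety is that $C_{0y}^+$ must be defined using the \emph{initial-conditions} payoff $U_0$ and statistic $S_0$ (via $\bm{D}^0$), not $U$; I would note that \autoref{dynamicclt} imposes Assumptions \ref{S}, \ref{dfrag}, \ref{dreg}, and \ref{nocoord} on the initial conditions model precisely so that \autoref{AC+} and the decentralized-selection property hold for $\bm{Y}^0$. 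Once these pieces are aligned, the proof reduces to the same ``remove agents outside the relevant set'' argument as in \autoref{CJlem}, and I would simply cite that lemma's structure rather than reproduce it.
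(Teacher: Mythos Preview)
Your proposal is correct and follows essentially the same route as the paper's proof: use $K$-locality to restrict to $\mathcal{N}_{\bm{A}}(x',K)$, recurse backward through \autoref{S2} to reach $\mathcal{N}_{\bm{A}}(x', K+M(T-1))$, and then invoke \autoref{CJlem} (applied to the initial-conditions model with $U_0$, $S_0$, and $\bm{D}^0$) to conclude that each $Y_y^0$ is determined by $C_{0y}^+$. The paper's proof is in fact more terse than yours; your added remarks about the counting of the radius and the need to use the period-$0$ strategic neighborhoods are accurate and simply make explicit what the paper leaves implicit.
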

\begin{proof}
  The agent statistic of $x'$ is a function of its arguments through $\{Y_p^t\colon p \in \mathcal{N}_{\bm{A}}(x',K), t = 0, \dots, T\}$ by \autoref{klocal}. By \autoref{S2}, $Y_p^T$ is a function of outcomes and attributes only through the period $T-1$ outcomes and attributes of agents in $\mathcal{N}_{\bm{A}}(p,M)$. Hence, the agent statistic of $x'$ is a function of agents in $\mathcal{N}_{\bm{A}}(x', K+M)$.  Now, these period $T-1$ outcomes of agents $p' \in \mathcal{N}_{\bm{A}}(p,M)$, in turn, are functions of period $T-2$ outcomes of agents in $\mathcal{N}_{\bm{A}}(p',M)$, and so on. Repeat this argument until we hit period 0. At this point, we have found that the agent statistic of $x'$ is a function of agents in $\mathcal{N}_{\bm{A}}(x', K+M(T-1))$, and we are considering the period-0 outcome of some agent $\ell$, $Y_\ell^0$. By the initial conditions model and \autoref{CJlem}, $Y_\ell^0$ is fully determined by the attributes of agents in $C_{0\ell}^+$. The result follows.
\end{proof}

Define
\begin{equation}
  \RR_{nT}(x') \equiv \RR_T(x',\tau_{x,r_n}\X, \W\cc_{x,r_n}) = \max_{y\in J_{x'T}} \norm{x'-y}, \label{RR_iT}
\end{equation}

\noindent the {\em radius of stabilization} of $x' \in \tau_{x,r_n}\X$. 

\bigskip

\begin{proof}[Proof of \autoref{dynamicclt}]
  We apply \autoref{master-clt} by verifying Assumptions \ref{xistab}--\ref{xipoly}. Assumptions \ref{xibd} and \ref{xipoly} hold under \autoref{psibdsimple} (or equivalently \autoref{psibd}). The argument for verifying Assumptions \ref{xistab} and \ref{bpestab} is the same as the static case (Lemmas \ref{static_xistab} and \ref{static_bpestab}). Observe that the radius of stabilization in the static case \autoref{RR_i} is virtually the same as \autoref{RR_iT}. The only differences are that we use the $K+M(T-1)$ rather than the $K$-neighborhood (but this does not matter because both $K$ and $T$ are fixed constants) and the period-0 strategic neighborhood $C_{0y}^+$ instead of the strategic neighborhood $C_y^+$. The only difference between the latter two objects is that $C_y^+$ is defined using the attribute process $\W\cc_{x,r_n}$ (since there are no time periods), whereas $C_{0y}^+$ is defined using the period-0 attribute process $\W^0\cc_{x,r_n}$. Thus, in following the proofs of Lemmas \ref{static_xistab} and \ref{static_bpestab}, we only need to modify the definition of the branching process $\hat{\mathfrak{X}}^K_{r_n}(y, \bm{Z}(\cc_{x,r_n}y))$ in two ways. First, replace $K$ with $K+M(T-1)$. Second, change the intensity measure of $\mathfrak{X}_{r_n}(x',z')$ given in \autoref{Dintens} by using period-0 attributes $\W^0\cc_{x,r_n}$ instead. (Recall from the proof of Lemma \ref{static_xistab} that this branching process is used to stochastically bound $\abs{C_y^+}$.)
\end{proof}

%----------------------------------------------------------------------
\section{CLT for Stabilizing Functionals}\label{smaster}
%----------------------------------------------------------------------

The proofs of Theorems \ref{staticclt} and \ref{dynamicclt} consist of verifying high-level conditions for a general CLT stated next. This is a minor modification of Theorem C.2 of \cite{leung2019normal}. There are two main differences relative to their setup. The first is that positions are given by $\tilde\rho_1, \tilde\rho_2, \dots$, which are i.i.d.\ draws from $f$, and attributes are drawn conditional on these positions. In our setup, positions are given instead by $\omega_n\tilde\rho_1, \omega_n\tilde\rho_2, \dots$, which is directly an increasing domain asymptotics setup. For this reason, we use the projection $\cc_{x,r_n}$ (defined in \autoref{sgsstatic}) in place of the projection $\bm{p}_{x,r_n}$ used by \cite{leung2019normal}; \autoref{sgsstatic} illustrates the purpose of $\cc_{x,r_n}$. The second difference is that we do not derive a closed-form expression for the limiting variance, which allows us to simplify the conditions and avoid a high-level continuity condition.

%------------------------------------
\subsection{Definitions}\label{shldefs}
%------------------------------------

As in the main text, define $\{\tilde\rho_i\}_{i=1}^n \stackrel{iid}\sim f$, a density on $\R^d$ bounded away from zero and infinity. Let $\X = \{\tilde\rho_i\}_{i=1}^N \cup G$ for some a.s.\ finite random variable $N$ that may depend on $n$ and finite set $G \subseteq \R^d$. Recall from \autoref{sgsstatic} the definitions of $\W(\cdot,\cdot)$, $\bm{Z}(\cdot)$, $\bm{\zeta}(\cdot,\cdot)$, $r_n$, $\tau_{x,r}$, and $\cc_{x,r}$. The first three are assumed to be independent of $N$. Define $Q(x,r)$ as the cube in $\R^d$ centered at $x$ with side length $r$.

For $x,y \in \R^d$ and $r \in \R_+$, we prove a central limit theorem for functionals of the form
\begin{equation*}
  \Lambda(\tau_{y,r}\X, \W\cc_{y,r}) = \sum_{x \in \tau_{y,r}\X} \xi(x, \tau_{y,r}\X, \W\cc_{y,r}), 
\end{equation*}

\noindent where $\xi(\cdot)$ has range $\R^m$ and satisfies
\begin{equation*}
  \xi(\tau_{y,r}x, \tau_{y,r}\X, \W\cc_{y,r}) = \xi(\tau_{y',r}x, \tau_{y',r}\X, \W\cc_{y',r})
\end{equation*}

\noindent for any $y'\in\R^d$. This last property holds trivially in the static and dynamic models.

\begin{definition}\label{rosxi}
  $\RR_\xi(x,\X,\W\cc_{x,r})$\footnote{If $x \not\in \X$, we abbreviate $\RR_\xi(x,\X,\W\cc_{x,r}) \equiv \RR_\xi(x,\X \cup \{x\},\W \cc_{x,r})$.} is a {\em radius of stabilization} of $\xi$ if 
  \begin{equation}
    \xi(x, \X, \W ) = \xi(x, \X \cap H, \W \cc_{x,r}) \label{1RR}
  \end{equation}
  
  \noindent for any $r \in \R_+$, and $H \supseteq Q(x, \RR_\xi(x,\X,\W \cc_{x,r}))$, and 
  \begin{equation}
    \RR_\xi(x, \tau_{x,r}\X, \W \cc_{x,r}) = \RR_\xi(\tau_{y,r}x, \tau_{y,r}\X, \W\cc_{x,r}) \label{Rdiff}
  \end{equation}

  \noindent for any $y \in \R^d$.
\end{definition}

\begin{definition}\label{incrad}
  A radius of stabilization $\RR_\xi(\cdot)$ is {\em increasing} on $\{\X\}_{n\in\mathbb{N}}$ if for all $x\in\R^d$, $n$ sufficiently large, and $H \subseteq \R^d$,
  \begin{equation*}
    \RR_\xi(x, \tau_{x,r_n}\X \cap H, \W\cc_{x,r}) \leq \RR_\xi(x, \tau_{x,r_n}\X, \W\cc_{x,r}). 
  \end{equation*}
\end{definition}

\begin{definition}\label{sixi}
  Let $\RR_\xi(\cdot)$ be a radius of stabilization and $G \equiv G(\cdot) \equiv \{G_n(\cdot)\}_{n\in\mathbb{N}}$ be a set-valued functions, where for any $x \in \R^d$, $G_n(x) \subseteq \R^d$. We say $\xi$ is {\em $(\RR_\xi,G)$-stabilizing on $\{\X\}_{n\in\mathbb{N}}$} if for any $x\in\R^d$ and $x' \in \tau_{x,r_n}(\X \cup G_n(x))$,
  \begin{equation*}
    \RR_\xi(x', \tau_{x,r_n}(\X \cup G_n(x)), \W\cc_{x,r}) = O_p(1). 
  \end{equation*}
\end{definition}

\begin{definition}\label{exstab}
  $\xi$ is {\em $\RR_\xi$-externally stabilizing on} $\{\X\}_{n\in\mathbb{N}}$ if for all $n$ and $x\in\R^d$, there exists $\RR_n^e(x) \equiv \RR_n^e(x, \tau_{x,r_n}\X, \W\cc_{x,r}) \geq 0$ such that the following properties hold.
  \begin{enumerate}[(a)]
    \item $\RR_n^e(x) = O_p(1)$ uniformly in $x$, i.e.
      \begin{equation*}
	\lim_{R\rightarrow\infty} \lim_{n\rightarrow\infty} \sup_{x\in\R^d} \prob(\RR_n^e(x) > R) = 0.
      \end{equation*}
      
    \item Define $\RR_X \equiv \RR_\xi(X,\tau_{X,r_n}(\X \cup \{x\}),\W \cc_{x,r})$ and $Q_X \equiv Q(X,\RR_X r_n)$. For any $X \in \X$, if
      % this is used in proof of lemma \ref{xipstab}
      \begin{equation*}
	\xi\big(X, \tau_{X,r_n}\big( (\X\cup\{x\}) \cap Q_X \big), \W \cc_{x,r}\big) \\ \neq \xi\big(X, \tau_{X,r_n}\big(\X \cap Q_X\big), \W \cc_{x,r}\big) \quad\text{a.s.,}
      \end{equation*}

      \noindent then $Q(\tau_{x,r_n}X,\RR_X) \subseteq Q(x,\RR_n^e(x))$ for any $n$ sufficiently large.

    \item For all $x\in\R^d$, $n$ sufficiently large, and $H \subseteq \R^d$,
      \begin{equation*}
	\RR_n^e(x, \tau_{x,r_n}\X \cap H, \W\cc_{x,r}) \leq \RR_n^e(x, \tau_{x,r_n}\X, \W\cc_{x,r}). 
      \end{equation*}
  \end{enumerate}
\end{definition}

\begin{definition}\label{binexpstab}
  $\xi$ is {\em $\RR_\xi$-binomial exponentially stabilizing} if for some $\tilde{n},c,\epsilon>0$,
  \begin{equation*}
    \sup_{n>\tilde{n}} \sup_{m \in [n/2, 3n/2]} \sup_{H \subseteq \R^d} \sup_{x,y\in\R^d} \prob\left( \RR_\xi(x, \tau_{x,r_n}( \X_{m-1} \cap H) \cup \{y\}, \W\cc_{x,r}) \geq r \right) \leq c\,\text{exp}\left\{ -cr^\epsilon \right\}. 
  \end{equation*}
  
  \noindent It is {\em $\RR_\xi$-Poisson exponentially stabilizing} if for $N \sim \text{Poisson}(n)$ and some $\tilde{n},c,\epsilon>0$,
  \begin{equation*}
    \sup_{n>\tilde{n}} \sup_{x,y\in\R^d} \prob\left( \RR_\xi(x, \tau_{x,r_n} \X \cup \{y\}, \W\cc_{x,r}) \geq r \right) \leq c\,\text{exp}\left\{ -cr^\epsilon \right\}. 
  \end{equation*}
\end{definition}

%------------------------------------
\subsection{Assumptions}\label{gencltassumps}
%------------------------------------

We state high-level conditions required for a CLT. 

\begin{assump}[Stabilization]\label{xistab}
  There exists a radius of stabilization $\RR_\xi$ such that the following statements hold.
  \begin{enumerate}[(a)]
    \item For $\nu(n) \in [n/2,3n/2]$ for all $n$, $\xi$ is $(\RR_\xi,G)$-stabilizing on $\{\X_N\}_{n\in\mathbb{N}}$ for $N = \nu(n)$ and $N \sim \text{Poisson}(n)$ for any $G = \{G_n(x)\}_{n\in\mathbb{N}}$ such that $G_n(x) \subseteq \{x',y\}$ for some $x',y \in \R^d$ and all $n\in\mathbb{N}$, $x\in\R^d$.

    \item For $\nu(n) < n$ and $\nu(n)/n \rightarrow 1$, $\xi$ is $\RR_\xi$-externally stabilizing on $\{\X_N\}_{n\in\mathbb{N}}$ for $N = \nu(n)$ and $N \sim \text{Poisson}(n)$.

    \item $\RR_\xi$ is increasing on $\{\X_N\}_{n\in\mathbb{N}}$ for $N = \nu(n)$ and $N \sim \text{Poisson}(n)$ with $\nu(n)$ defined as in either (a) or (b).
  \end{enumerate}
\end{assump}

\begin{assump}[Exponential Stabilization]\label{bpestab}
  There exists a radius of stabilization $\RR_\xi$ such that $\xi$ is $\RR_\xi$-binomial exponentially stabilizing and $\RR_\xi$-Poisson exponentially stabilizing.
\end{assump}

\begin{assump}[Bounded Moments]\label{xibd}
  There exists $C<\infty$ such that
  \begin{equation*}
    \E\big[ \abs{\xi\left(x, \tau_{x,r_n}(\X_m \cup G) \cap H_n, \W\cc_{x,r_n} \right)}^8 \big] < C 
  \end{equation*}

  \noindent for all $n\in\mathbb{N}, m \in [n/2, 3n/2], \{H_n\}_{n\in\mathbb{N}}$ with $H_n \subseteq \R^d$, $G \in \{ \{y\}, \emptyset \}$, and $x,y \in\R^d$. 
\end{assump}

\begin{assump}[Polynomial Bound]\label{xipoly}
  For any a.s.\ finite set $\X \subseteq \R^d$, $x \in \R^d$, $r \in \R_+$, 
  \begin{equation}
    \abs{\xi(x, \X, \W\cc_{x,r})} \leq c \abs{\X}^c
  \end{equation}

  \noindent a.s., for some positive constant $c$.
\end{assump}

%------------------------------------
\subsection{Main Result}\label{masterthm}
%------------------------------------

Let $\X_n = \{\tilde\rho_i\}_{i=1}^n$ and $x \in \R^d$. Define the {\em add-one cost}
\begin{equation}
  \Xi_x(r_n^{-1}\X_n, \W) = \Lambda\left( r_n^{-1}\X_n \cup \{x\}), \W \right) - \Lambda\left( r_n^{-1}\X_n, \W \right) \label{add1}
\end{equation}

\noindent and variance $\bm{\Sigma}_n = \var(n^{-1/2} \Lambda(r_n^{-1}\X_n, \W))$. Recall that $\lambda_\text{min}(\bm{\Sigma}_n)$ is the smallest eigenvalue of $\bm{\Sigma}_n$ and $\bm{I}_m$ the $m\times m$ identity matrix.

\begin{theorem}\label{master-clt}
  Suppose that, for each component of the vector $\xi(r_n^{-1}X, r_n^{-1}\X_n, \W)$, Assumptions \ref{xistab} and \ref{bpestab} hold for the same radius of stabilization $\RR_\xi$ (the radius may be component-specific). Further suppose that $c'\Xi_{r_n^{-1}X}(r_n^{-1}\X_n, \W)$ is asymptotically non-degenerate for all $c \in \R^m\backslash\{\zero\}$. Then under Assumptions \ref{xibd} and \ref{xipoly}, $\liminf_{n\rightarrow\infty} \lambda_\text{min}(\bm{\Sigma}_n) > 0$, and
  \begin{equation*}
    n^{-1/2} \bm{\Sigma}_n^{-1/2} \left( \Lambda(r_n^{-1}\X_n, \W) - \E\big[ \Lambda(r_n^{-1}\X_n, \W) \big] \right) \dlimarrow \mathcal{N}\left( \zero, \bm{I}_m \right). 
  \end{equation*}
\end{theorem}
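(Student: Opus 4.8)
The plan is to show that \autoref{master-clt} reduces to the normal-approximation argument for stabilizing functionals in \cite{leung2019normal} (their Theorem C.2), after recasting the increasing-domain model through the reparametrization of \autoref{sgsstatic}. The key starting observation is that, writing the positions as $r_n^{-1}\X_n = \omega_n\{\tilde\rho_i\}_{i=1}^n$, the maps $\tau_{x,r_n}$ and $\cc_{x,r_n}$ convert the problem into one about a point process of locally bounded intensity: the rescaled positions $\tau_{x,r_n}\X_n$ are spread out so that any fixed ball contains $O_p(1)$ points (since $n\omega_n^{-d}=\kappa$), while $\cc_{x,r_n}\tau_{x,r_n}\tilde\rho_i = r_n^{-1}\tilde\rho_i$ independently of the centering $x$, so the attribute process always receives the true underlying positions. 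This identity is what delivers the translation invariance of the rescaled functional $\xi(x,\tau_{x,r_n}\X,\W\cc_{x,r_n})$ around which the stabilization definitions in \autoref{shldefs} are built, and it is the only place where the projection differs from the $\bm{p}_{x,r_n}$ of \cite{leung2019normal}. I would therefore verify at the outset that every stabilization object---the radius $\RR_\xi$, the external radius $\RR_n^e$, and the exponential tail bounds---behaves identically under $\cc_{x,r_n}$ as under their projection, reducing to their estimates verbatim.

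With the reparametrization in hand, I would follow the standard two-stage route for stabilization CLTs. First, Poissonization: replace the binomial input $N=n$ by $N\sim\text{Poisson}(n)$, for which the Slivnyak--Mecke formula and the spatial independence of the Poisson process make the moment and covariance calculations tractable; a de-Poissonization step then transfers the conclusion back to the binomial model. This is precisely why Assumptions \ref{xistab} and \ref{bpestab} are imposed both for deterministic counts $N=\nu(n)$ (with $\nu(n)\in[n/2,3n/2]$ or $\nu(n)/n\to1$) and for $N\sim\text{Poisson}(n)$, and why \autoref{xipoly} is needed---it bounds the contribution of the rare event that the Poisson count deviates far from $n$. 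Second, for the Poissonized functional I would apply a multivariate Stein's method bound of the local-dependence type, in which the radius of stabilization localizes each summand and exponential stabilization (\autoref{bpestab}) furnishes dependency neighborhoods whose sizes have exponential tails; the uniformly bounded eighth moments (\autoref{xibd}) control the remainder terms, and external stabilization (\autoref{xistab}(b)) quantifies how far an add-one perturbation propagates, which is what bounds the cross terms. Because the theorem asserts only the self-normalized limit $n^{-1/2}\bm{\Sigma}_n^{-1/2}(\Lambda-\E\Lambda)\dlimarrow\mathcal{N}(\zero,\bm{I}_m)$ rather than identifying a limiting covariance, I can aim directly at a Stein bound between the self-normalized sum and a standard Gaussian, which is exactly the simplification that lets me drop the high-level continuity condition present in \cite{leung2019normal}.

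For the variance lower bound $\liminf_n\lambda_\text{min}(\bm{\Sigma}_n)>0$, I would exploit the add-one cost \eqref{add1}. Writing $\Lambda-\E\Lambda$ as a sum of martingale differences indexed by the points, each difference capturing the effect of revealing one point's position and attributes, the variance equals the sum of the conditional variances of these differences, and by stabilization each difference is well approximated by a localized add-one cost with finite moments. Applying this componentwise to $c'\Lambda$ for arbitrary $c\in\R^m\backslash\{\zero\}$, the assumed asymptotic non-degeneracy of $c'\Xi_{r_n^{-1}X}$ forces a positive fraction of these conditional variances to be bounded away from zero, giving $c'\bm{\Sigma}_n c$ bounded below uniformly and hence $\lambda_\text{min}(\bm{\Sigma}_n)$ bounded below.

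The main obstacle I anticipate is the interplay between de-Poissonization and the Stein error terms: one must show that replacing the random Poisson count by $n$ perturbs the normalized sum negligibly while simultaneously controlling the dependency structure, with moment bounds that are uniform over the deterministic counts $m\in[n/2,3n/2]$ and over the conditioning sets $\{H_n\}$ appearing in the stabilization hypotheses. Since the underlying normal approximation is already established in \cite{leung2019normal}, the genuinely new verification is that the change of projection from $\bm{p}_{x,r_n}$ to $\cc_{x,r_n}$ leaves all of their estimates intact---that is, that the increasing-domain rescaling produces the same locally-bounded-intensity limit geometry---and that dispensing with the closed-form variance indeed removes only a continuity requirement without weakening the normal approximation.
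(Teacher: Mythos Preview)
Your overall architecture is right and matches the paper: reduce to \cite{leung2019normal} Theorem C.2 by swapping the projection $\bm{p}_{x,r_n}$ for $\cc_{x,r_n}$, Poissonize, establish a CLT for the Poissonized functional, de-Poissonize, and use the add-one cost to lower-bound the variance. The paper also drops the closed-form limit variance and uses the Cram\'er--Wold device to reduce the multivariate statement to $m=1$, which you should make explicit rather than attempting a genuinely multivariate normal approximation.

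The substantive divergence is in the Poissonized CLT. You propose a Stein's-method bound with local dependency neighborhoods supplied by the radius of stabilization. The paper instead partitions $\R^d$ into cubes of side $r_n$, writes $\Lambda(r_n^{-1}\mathcal{P}_{nf},\W)-\E[\Lambda]$ as a martingale-difference array indexed by the cubes (each difference being the change in conditional expectation from revealing the Poisson points in one cube), and invokes the Hall--Heyde martingale CLT; the conditions of that CLT are checked using the fourth-moment bound on the add-one cost (their condition \eqref{usi}), which in turn is what \autoref{xibd} and exponential stabilization deliver. Both routes are standard in the stabilization literature and should go through under the stated assumptions, but the martingale route is what \cite{leung2019normal} actually uses, so the paper's proof is literally ``same as their Theorem H.1 with $\cc_{x,r_n}$ in place of $\bm{p}_{x,r_n}$.'' A Stein's-method argument would require you to re-derive dependency-neighborhood and moment estimates from scratch rather than citing existing lemmas, so it is more work for no gain here; it would, however, give a Berry--Esseen rate, which the martingale approach does not.

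Your sketch of the variance lower bound is a bit loose. The paper (Lemma \ref{2.14}) does not use an Efron--Stein martingale decomposition directly; it couples the add-one costs $\Xi_X(\tau_{X,r_n}\X_{\nu(n)-1},\cdot)$ and $\tilde\Xi^-_{\nu(n),n}$ to independent limits $\Delta_n,\Delta_n'$ via the stabilization radius and a Poisson coupling, and then invokes non-degeneracy of $\Delta_n$. Your heuristic that ``a positive fraction of conditional variances are bounded away from zero'' is morally the same conclusion, but the actual mechanism is the coupling, not a direct martingale-variance calculation.
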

\begin{proof}
  The proof is almost the same as Theorem C.2 of \cite{leung2019normal} with the following minor modifications. First, in place of the composition $\W\bm{p}_{x,r}$ used everywhere in their proof, we use $\W\bm{c}_{x,r}$. This replacement has no effect on the original arguments, since the assumptions are also modified in the same way. 
  
  Second, whereas several parts of the proof of Theorem C.2 are concerned with the derivation of a closed-form expression for $\lim_{n\rightarrow\infty} \bm{\Sigma}_n$, we do not derive it for our setting. This has two minor effects on the proof. First, the proof of the Poissonized CLT (Theorem H.1) is simpler and does not require Lemma H.4, which derives the limit variance under the Poissonized model. We state the simplified proof in \autoref{poi-clt} below. Second, the de-Poissonization argument needs some minor modification to avoid the use of certain limit quantities that are used to characterize $\lim_{n\rightarrow\infty} \bm{\Sigma}_n$. These modifications are given after the statements of Lemmas \ref{poi-clt}--\ref{2.14} below.

  Note that the lemmas below all consider the case where the dimension $m$ of $\xi$ is one, so they only establish a univariate CLT, However, with this result, we can extend to $m>1$ as follows. Let $t \in \R^m$. Under the assumptions of the theorem, $t' \Lambda(r_n^{-1}\X_n, \W)$ satisfies the assumptions of the CLT for the case $m=1$. Then the result follows from the Cram\'{e}r-Wold device.
\end{proof}

%---------------------------
The following lemmas make use of the add-one cost $\Xi_x$ defined in \autoref{add1}. For an $\R_+$-valued function $\lambda(\cdot)$, let $\mathcal{P}_\lambda$ be an inhomogeneous Poisson point process with intensity function $\lambda(\cdot)$. Note that for $N \sim \text{Poisson}(n)$, $\{\tilde\rho_i\}_{i=1}^N \stackrel{d}= \mathcal{P}_{nf}$.

\begin{lemma}[Poissonized CLT]\label{poi-clt}
  Let $m=1$. Suppose Assumptions \ref{xistab}--\ref{xipoly} hold, 
  \begin{equation}
    \sup_{n \in \mathbb{N}} \sup_{H \subseteq \R^d} \sup_{x\in\R^d} \E\left[ \Xi_x\big(\tau_{x,r_n}\mathcal{P}_{nf} \cap H, \W \cc_{x,r_n} \big)^4 \right] < \infty, \label{usi}
  \end{equation}
  
  \noindent and $\liminf_{n\rightarrow\infty} \sigma_n^2 > 0$, where $\sigma_n^2 = \var(n^{-1/2}\Lambda(r_n^{-1}\mathcal{P}_{nf}, W))$. Then
  \begin{equation*}
    n^{-1/2}\sigma_n^{-1}\left( \Lambda(r_n^{-1}\mathcal{P}_{nf}, \W) - \E\big[ \Lambda(r_n^{-1}\mathcal{P}_{nf}, \W) \big] \right) \dlimarrow \mathcal{N}(0,1). 
  \end{equation*}
\end{lemma}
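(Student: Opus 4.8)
\textbf{Proof proposal for the Poissonized CLT (\autoref{poi-clt}).}

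The plan is to establish asymptotic normality of the centered Poissonized functional using the method of dependency graphs / size-biased coupling for Poisson point processes, which is the standard route in the stabilizing-functionals literature and is precisely the strategy underlying Theorem H.1 of \cite{leung2019normal}. Concretely, I would normalize $W_n \equiv n^{-1/2}\sigma_n^{-1}(\Lambda(r_n^{-1}\mathcal{P}_{nf}, \W) - \E[\Lambda(r_n^{-1}\mathcal{P}_{nf}, \W)])$ and aim to bound a probability metric (Wasserstein or Kolmogorov distance) between the law of $W_n$ and $\mathcal{N}(0,1)$ by a quantity that vanishes as $n\rightarrow\infty$. The Poissonization is what makes this tractable: because $\mathcal{P}_{nf}$ has the complete spatial independence property of a Poisson process, the add-one cost $\Xi_x$ in \autoref{add1} serves as the natural ``first-order difference'' operator, and the Poincaré / covariance-type identities for Poisson functionals express $\sigma_n^2$ and higher moments in terms of expectations of $\Xi_x$ over the intensity measure $nf$.

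The key steps, in order, would be as follows. First, I would invoke the stabilization hypotheses: \autoref{xistab} and \autoref{bpestab} guarantee that the radius of stabilization $\RR_\xi$ has exponential tails uniformly, so that the add-one cost $\Xi_x$ effectively depends only on points of $\mathcal{P}_{nf}$ within a stochastically bounded neighborhood of $x$. Second, I would use this localization together with the moment bound \autoref{usi} and \autoref{xibd} to control the relevant second- and fourth-order moments of the difference operators, which feed into the Stein-method error bound. Third, I would assemble these into a normal-approximation bound of the general form furnished by Stein's method for Poisson functionals (e.g.\ the second-order Poincaré inequality or a dependency-graph CLT), where the error term is a sum over ``overlap'' contributions of pairs of neighborhoods whose stabilization regions intersect; exponential stabilization forces these overlaps to be $O(1)$ per point, so the aggregate error scales like $n^{-1/2}$ times uniformly bounded moment factors. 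Fourth, the lower bound $\liminf_n \sigma_n^2 > 0$ is assumed directly, so no separate variance-nondegeneracy argument is needed here — this is exactly the simplification noted in the statement of \autoref{master-clt}, namely that we do not derive a closed-form limit variance and hence can dispense with the continuity lemma (Lemma H.4 of \cite{leung2019normal}). The conclusion is then immediate once the Stein bound is shown to vanish.

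Because the statement explicitly says the proof is that of Theorem H.1 of \cite{leung2019normal} with the substitution of $\W\cc_{x,r_n}$ for $\W\bm{p}_{x,r_n}$ and with the variance-characterization steps deleted, the honest shape of the argument is a citation-and-modification: verify that every step of their Poissonized CLT goes through verbatim after the projection is relabeled, and delete the portions that computed $\lim_n \sigma_n^2$. The main obstacle I anticipate is not the Stein-method machinery, which transfers mechanically, but rather confirming that the uniform moment condition \autoref{usi} and the boundedness/polynomial-growth Assumptions \ref{xibd}--\ref{xipoly} are genuinely invariant under the change of projection from $\bm{p}_{x,r_n}$ to $\cc_{x,r_n}$ — that is, checking that the increasing-domain scaling (positions $\omega_n\tilde\rho_i$ rather than $\tilde\rho_i$) does not disturb the uniform-in-$n$ tail and moment estimates that the error bound consumes. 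Since the assumptions of the present paper have been restated in the matching form (see \autoref{shldefs} and \autoref{gencltassumps}), this verification should reduce to observing that $\cc_{x,r_n}$ inverts $\tau_{x,r_n}$ and rescales, so all the neighborhood-size and moment bounds used by \cite{leung2019normal} hold with the same constants; I would state this explicitly and then defer to their Theorem H.1 for the remaining Stein-method estimates.
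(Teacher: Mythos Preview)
Your proposal takes a genuinely different route from the paper. The paper does \emph{not} use Stein's method, second-order Poincar\'e inequalities, or dependency graphs. Instead, following the classical Penrose--Yukich approach, it partitions $\R^d$ into cubes $Q_1,\dots,Q_{k_n}$ of side length $r_n$, defines the filtration $\mathcal{F}_\ell$ generated by the Poisson points falling in $Q_1,\dots,Q_\ell$, and writes the centered functional as a martingale difference sum $\sigma_n\sum_\ell \delta_\ell$, where $\delta_\ell$ is obtained by resampling the Poisson points in $Q_\ell$. The CLT then follows by verifying the three conditions of the Hall--Heyde martingale CLT (Theorem 3.2 of \cite{HallHeyde2014}): a bound on $\sup_n n^{-1}\E[\max_\ell \delta_\ell^2]$, the negligibility condition $n^{-1/2}\max_\ell|\delta_\ell| \plimarrow 0$, and the conditional variance condition $n^{-1}\sum_\ell \delta_\ell^2 \plimarrow 1$ (which is immediate from orthogonality). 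The fourth-moment hypothesis \eqref{usi} is tailored exactly to these verifications, and stabilization is not invoked explicitly in this lemma at all.

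Your Stein-method sketch is a legitimate alternative and would deliver a quantitative bound on the distance to normality, which the martingale argument does not. However, it requires assembling second-order difference operators and controlling their moments, which is more machinery than the paper needs; the martingale route is shorter here because the variance lower bound is assumed and no rate is sought. Note also that your claim that Theorem H.1 of \cite{leung2019normal} proceeds via Stein's method appears to be a misreading---the proof reproduced here is stated to be a minor modification of that theorem, and it is a martingale argument.
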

\begin{proof}
  This is a minor modification of the proof of Theorem H.1 in \cite{leung2019normal}. We will rewrite $\Lambda(r_n^{-1}\mathcal{P}_{nf}, \W) - \E[ \Lambda(r_n^{-1}\mathcal{P}_{nf}, \W)]$ as the sum of a martingale difference sequence. Partition $\R^d$ into cubes with side length $r_n$. Label those that intersect the support of $f$ as $Q_1, \dots, Q_{k_n}$, with respective centers $x_1, \dots, x_{k_n}$ labeled in increasing lexicographic order. For each $\ell = 1, \dots, k_n$, define $\mathcal{F}_\ell$ as the $\sigma$-field generated by the points of 
  \begin{equation*}
    \mathcal{P}_1 \medcap \left\{ \bigcup_{1 \leq m\leq \ell} \left\{ Q_m \times [0,\infty) \right\} \right\}.
  \end{equation*}
  
  \noindent We can then define the martingale differences $\delta_\ell = \E\left[ \Delta_{x_\ell} \mid \mathcal{F}_\ell \right]$, where
  \begin{equation*}
    \Delta_{x_\ell} = \sigma_n^{-1}\Lambda\left( r_n^{-1}\mathcal{P}_{nf}, \W \right) - \sigma_n^{-1}\Lambda\left( r_n^{-1} \left( (\mathcal{P}_{nf} \backslash Q_\ell) \cup (\mathcal{P}_{nf}' \cap Q_\ell) \right), \W \right).
  \end{equation*}

  \noindent Notice $\sigma_n\delta_\ell = \E\left[ \Lambda\left( r_n^{-1}\mathcal{P}_{nf}, \W \right) \mid \mathcal{F}_\ell \right] - \E\left[ \Lambda\left( r_n^{-1}\mathcal{P}_{nf}, \W \right) \mid \mathcal{F}_{\ell-1} \right]$, and
  \begin{equation*}
    \Lambda(r_n^{-1}\mathcal{P}_{nf}, \W ) - \E\left[ \Lambda(r_n^{-1}\mathcal{P}_{nf}, \W) \right] = \sigma_n \sum_{\ell=1}^{k_n} \delta_\ell. 
  \end{equation*}
  
  \noindent Furthermore, $\{\delta_\ell\}_{\ell=1}^{k_n}$ is a martingale difference sequence with filtration $\left\{ \mathcal{F}_\ell \right\}_{\ell=0}^{k_n}$, where $\mathcal{F}_0$ is the trivial $\sigma$-algebra. By orthogonality of martingale differences,
  % https://math.stackexchange.com/questions/515613/a-square-integrable-martingale-has-orthogonal-increments
  \begin{equation}
    \var\left( \Lambda(r_n^{-1}\mathcal{P}_{nf}, \W) \right) = \sigma_n \sum_{\ell=1}^{k_n} \E[\delta_\ell^2]. \label{varo}
  \end{equation}

  \noindent We complete the proof by verifying the conditions for a CLT for martingale difference arrays \citep[][Theorem 3.2]{HallHeyde2014}. There are three such conditions, the first of which is
  \begin{equation}
    \sup_{n\in\mathbb{N}} \E\left[ \max_{1\leq \ell\leq k_n} \frac{1}{n} \delta_\ell^2 \right] < \infty. \label{mds-one}
  \end{equation}

  \noindent The left-hand side is bounded above by 
  \begin{equation}
    \sup_{n\in\mathbb{N}} \frac{1}{n} \sum_{\ell=1}^{k_n} \E[\delta_\ell^2] \leq \sup_{n\in\mathbb{N}} \frac{k_n}{n} \max_{1\leq \ell\leq k_n} \sigma_n^{-1} \E\left[\Delta_{x_\ell}^2\right]. \label{mds-one2}
  \end{equation}

  \noindent This is finite by \autoref{usi}, the assumption that $\sigma_n$ is asymptotically non-degenerate, and the fact that $k_n = O(n)$, thereby establishing \autoref{mds-one}.

  The second condition needed for a martingale difference CLT is
  \begin{equation}
    n^{-1/2} \max_{1\leq \ell\leq k_n} \abs{\delta_\ell} \plimarrow 0. 
    \label{mds-two}
  \end{equation}

  \noindent By Markov's inequality, 
  \begin{equation*}
    \prob\left( n^{-1/2} \max_{1\leq \ell\leq k_n} \abs{\delta_\ell} \geq \epsilon \right) \leq \sum_{\ell=1}^{k_n} \frac{1}{n^{2} \epsilon^4} \E[\delta_\ell^4], 
  \end{equation*}

  \noindent which tends to zero by the arguments for finiteness of \autoref{mds-one2}. This proves \autoref{mds-two}.

  The last condition required for a martingale difference CLT is $n^{-1} \sum_{\ell=1}^{k_n} \delta_\ell^2 \plimarrow 1$, which holds trivially by \autoref{varo}.
\end{proof}

%---------------------------
The next three lemmas are restatements of de-Poissonization lemmas given in \S H.4 of \cite{leung2019normal}. In what follows, let $X$ be a draw from $f$ independent of $\tilde\rho_1, \tilde\rho_2, \dots$.

\begin{lemma}\label{2.12} 
  Let $m=1$. Suppose that the conclusion of \autoref{poi-clt} holds. Further suppose there exist $\gamma > 0.5$ and a sequence of constants $\{\alpha_n\}_{n\in\mathbb{N}}$ such that
  \begin{align}
    &\lim_{n\rightarrow\infty} \left( \sup_{n-n^\gamma \leq m \leq n+n^\gamma} \big| \E\left[ \Xi_X(\tau_{X,r_n}\X_m, \W\cc_{X,r_n}) \right] - \alpha_n \big| \right) = 0, \label{2.38} \\
    &\lim_{n\rightarrow\infty} \left( \sup_{n-n^\gamma \leq m < m' \leq n+n^\gamma} \big| \E\left[ \Xi_X(\tau_{X,r_n}\X_m, \W\cc_{X,r_n}) \Xi_X(\tau_{X,r_n}\X_{m'}, \W\cc_{X,r_n}) \right] - \alpha_n^2 \big| \right) = 0, \label{2.39} \\
    &\lim_{n\rightarrow\infty} \left( \sup_{n-n^\gamma \leq m < m' \leq n+n^\gamma} n^{-1/2} \E\left[ \Xi_X(\tau_{X,r_n}\X_m, \W\cc_{X,r_n})^2 \right] \right) = 0.
    \label{2.40}
  \end{align}

  \noindent If \autoref{xipoly} holds, then for $\sigma_n$ defined in \autoref{poi-clt}, $\alpha_n^2 \leq \sigma_n^2$. If additionally $\liminf_{n\rightarrow\infty} (\sigma_n-\alpha_n) > 0$, then
  \begin{equation*}
    n^{-1/2} (\sigma_n^2-\alpha_n^2)^{-1/2} \left( \Lambda(r_n^{-1}\X_n, \W) - \E\big[ \Lambda(r_n^{-1}\X_n, \W) \big] \right) \dlimarrow \mathcal{N}(0,1).  
  \end{equation*}
\end{lemma}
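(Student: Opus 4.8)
The plan is to adapt, nearly verbatim, the de-Poissonization lemma of §H.4 of \cite{leung2019normal}, applying the two modifications flagged in the proof of \autoref{master-clt}: the composition $\W\cc_{x,r}$ replaces $\W\bm{p}_{x,r}$ throughout (harmless, since Assumptions \ref{xistab}--\ref{xipoly} and the add-one cost \eqref{add1} are already phrased in terms of $\cc_{x,r}$), and no closed-form limit for $\sigma_n^2$ is invoked. Abbreviate $\Lambda_{\mathrm{P}} \equiv \Lambda(r_n^{-1}\mathcal{P}_{nf},\W)$ and $\Lambda_{\mathrm{B}} \equiv \Lambda(r_n^{-1}\X_n,\W)$. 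The device is the coupling $\Lambda_{\mathrm{P}} \stackrel{d}= \Lambda(r_n^{-1}\X_N,\W)$ for $N\sim\text{Poisson}(n)$ independent of $\{\tilde\rho_i\}_i$, so that the Poissonized and binomial functionals differ by a telescoping sum of add-one costs $\Xi_X(\tau_{X,r_n}\X_k,\W\cc_{X,r_n})$ indexed over the integers between $n$ and $N$.

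First I would prove $\alpha_n^2 \le \sigma_n^2$. Conditioning on $N$ and applying the law of total variance gives $n\sigma_n^2 = \E[\var(\Lambda(r_n^{-1}\X_N,\W)\mid N)] + \var(\E[\Lambda(r_n^{-1}\X_N,\W)\mid N])$. The conditional mean $m \mapsto \E[\Lambda(r_n^{-1}\X_m,\W)]$ has increments equal to the mean add-one cost, which by \eqref{2.38} is uniformly within $o(1)$ of $\alpha_n$ on $[n-n^\gamma,n+n^\gamma]$; since $\gamma>1/2$, $N$ concentrates on this range and the complementary event $\{|N-n|>n^\gamma\}$ contributes negligibly by the polynomial bound \autoref{xipoly}. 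Hence $\var(\E[\Lambda\mid N]) = \alpha_n^2\var(N)(1+o(1)) = n\alpha_n^2(1+o(1))$, and dropping the nonnegative first term yields the inequality.

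For the CLT, the core estimate is the $L^2$ approximation
\begin{equation*}
  \Lambda(r_n^{-1}\X_N,\W) - \Lambda_{\mathrm{B}} = \alpha_n(N-n) + o_p(\sqrt n),
\end{equation*}
which is precisely what \eqref{2.38}--\eqref{2.40} deliver: \eqref{2.38} centers each add-one cost at $\alpha_n$, the difference between \eqref{2.39} and \eqref{2.38} squared shows the add-one costs along the range are asymptotically uncorrelated so the variance of the telescoping sum equals the sum of individual variances, and \eqref{2.40} renders those $o(\sqrt n)$ in aggregate. Then $n^{-1/2}(\Lambda_{\mathrm{P}}-\E[\Lambda_{\mathrm{P}}]) = n^{-1/2}(\Lambda_{\mathrm{B}}-\E[\Lambda_{\mathrm{B}}]) + \alpha_n n^{-1/2}(N-n) + o_p(1)$. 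Since $\Lambda_{\mathrm{B}}$ is $\sigma(\{\tilde\rho_i\}_{i\le n},\W)$-measurable, it is genuinely \emph{independent} of $N$, so the two leading terms are independent; moreover $\var(\alpha_n n^{-1/2}(N-n)) = \alpha_n^2$ \emph{exactly} because $\var(N)=n$. The sum converges to $\mathcal{N}(0,\sigma_n^2)$ by the Poissonized CLT (\autoref{poi-clt}) and the count term is asymptotically $\mathcal{N}(0,\alpha_n^2)$; independence forces the variances to add, so the binomial term has normalized variance $\sigma_n^2-\alpha_n^2 + o(1)$ and, by the factorization of a normal limit into independent normal summands, converges to $\mathcal{N}(0,\sigma_n^2-\alpha_n^2)$. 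The hypothesis $\liminf_n(\sigma_n-\alpha_n)>0$ keeps $\sigma_n^2-\alpha_n^2=(\sigma_n-\alpha_n)(\sigma_n+\alpha_n)$ bounded away from zero, legitimizing the stated normalization.

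The main obstacle is the $L^2$ control of the telescoping sum of add-one costs; this is the substantive content of the Leung lemma and rests on the stabilization Assumptions \ref{xistab} and \ref{bpestab}, which guarantee that the add-one costs in the excess range depend only locally on the configuration and are therefore asymptotically uncorrelated, as quantified by \eqref{2.39}. Because we deliberately do not pass to a closed-form limit of $\sigma_n^2$, the argument must be carried out with the finite-$n$ quantities $\sigma_n^2-\alpha_n^2$ throughout---e.g.\ via a subsequence reduction of the joint convergence of $\big(n^{-1/2}(\Lambda_{\mathrm{B}}-\E[\Lambda_{\mathrm{B}}]),\,(N-n)/\sqrt n\big)$---rather than appealing to limiting variances. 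Verifying that this machinery transfers intact under the $\cc_{x,r}$ reparametrization is the one place where care, as opposed to routine second-moment estimation driven by \eqref{2.38}--\eqref{2.40}, is required.
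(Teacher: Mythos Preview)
Your proposal is correct and follows the same de-Poissonization strategy as the paper, which simply cites Lemma~H.8 of \cite{leung2019normal}: couple the Poissonized and binomial functionals via a telescoping sum of add-one costs, use \eqref{2.38}--\eqref{2.40} to reduce that sum to $\alpha_n(N-n)+o_p(\sqrt n)$, exploit independence of $\Lambda_{\mathrm{B}}$ and $N$, and extract the binomial CLT from the Poissonized one via characteristic functions (with the subsequence reduction you note to handle possibly non-convergent $\sigma_n^2,\alpha_n^2$). Your sketch is in fact more detailed than what the paper itself writes down.
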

\begin{proof}
  The proof is the same as that of Lemma H.8 of \cite{leung2019normal}.
\end{proof}

%---------------------------
\begin{lemma}\label{2.13}
  Let $m=1$. Suppose there exists a sequence of pairs of i.i.d.\ random variables $\{(\Delta_n,\Delta_n')\}_{n\in\mathbb{N}}$ such that for any sequence $\{\nu(n), \nu'(n)\}_{n\in\mathbb{N}}$ satisfying $\nu(n) < \nu'(n)$ and $\nu(n)/n, \nu'(n)/n \rightarrow 1$, we have
  \begin{equation}
    \max\left\{ \abs{\Xi_X(\tau_{X,r_n}\X_{\nu(n)}, \W\cc_{X,r_n}) - \Delta_n}, \abs{\Xi_X(\tau_{X,r_n}\X_{\nu'(n)}, \W\cc_{X,r_n}) - \Delta_n'} \right\} \plimarrow 0. \label{2.46}
  \end{equation}
  
  \noindent If \autoref{xibd} holds, then $\sup_n \E[\Delta_n] < \infty$, and \autoref{2.38}--\autoref{2.40} hold with $\alpha_n = \E[\Delta_n]$ and $\gamma = 0.75$.
\end{lemma}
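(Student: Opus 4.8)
This is a de-Poissonization step, and the plan is to promote the in-probability convergence of the add-one costs supplied by the hypothesis to convergence of their first and second moments, the promotion being powered by uniform integrability that comes from \autoref{xibd} together with the stabilization conditions (Assumptions \ref{xistab}--\ref{bpestab}). I would follow the corresponding argument in \S H.4 of \cite{leung2019normal}, substituting the projection $\cc_{x,r}$ (see \autoref{sgsstatic}) for the projection used there; as in the proof of \autoref{master-clt}, this substitution is cosmetic because every assumption has been modified in parallel. The first task is a uniform moment bound on the add-one cost, namely
\[
  \sup_{n} \ \sup_{m \in [n-n^\gamma,\, n+n^\gamma]} \E\big[ \abs{\Xi_X(\tau_{X,r_n}\X_m, \W\cc_{X,r_n})}^4 \big] < \infty, \qquad \gamma = 0.75.
\]
The add-one cost is a sum of differences of agent statistics $\xi$ over the agents whose statistics change when $X$ is inserted; by internal and external stabilization (\autoref{xistab}(a) and \autoref{bpestab}) the cardinality of this set has exponential tails, while each summand has a uniformly bounded eighth moment by \autoref{xibd}. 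A H\"older/Cauchy--Schwarz bound combining these two facts gives the display. Consequently the families $\{\Xi_X(\tau_{X,r_n}\X_m,\cdot)\}$, their squares, and (by Cauchy--Schwarz) their pairwise products are all uniformly integrable over the window.

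Taking $\nu(n)=n$ (paired with any $\nu'(n)>\nu(n)$ with $\nu'(n)/n\to1$) in the hypothesis yields $\Xi_X(\tau_{X,r_n}\X_{n},\W\cc_{X,r_n}) - \Delta_n \plimarrow 0$. I would transfer the $L^4$ control to the limits by a truncation estimate: for each level $K$, bound $\E[\abs{\Delta_n}^2 \wedge K]$ by $\E[\abs{\Xi_X(\tau_{X,r_n}\X_n,\cdot)}^2\wedge K]$ up to a remainder controlled by $\E[\min(2\sqrt{K}\,\abs{\Delta_n-\Xi_X(\tau_{X,r_n}\X_n,\cdot)},K)]$, which vanishes by bounded convergence. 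This gives $\sup_n \E[\abs{\Delta_n}^4] < \infty$, hence $\sup_n \E[\Delta_n] < \infty$, which is the first assertion; since $\Delta_n \stackrel{d}{=} \Delta_n'$ we also have $\E[\Delta_n']=\alpha_n$ for $\alpha_n\equiv\E[\Delta_n]$.

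To verify \eqref{2.38}--\eqref{2.40}, note first that \eqref{2.40} is immediate from the moment bound of the first paragraph: $n^{-1/2}\E[\Xi_X(\tau_{X,r_n}\X_m,\cdot)^2]\le C\,n^{-1/2}\to 0$ uniformly over the window. For \eqref{2.38} I would argue by contradiction: if the supremum over $m\in[n-n^\gamma,n+n^\gamma]$ did not vanish along some subsequence $n_k$, select witnesses $m_k$ and embed them in a full sequence $\nu(n)$ with $\nu(n)/n\to1$ (possible since $n^\gamma=o(n)$ for $\gamma=0.75>\tfrac12$); the hypothesis gives $\Xi_X(\tau_{X,r_n}\X_{\nu(n)},\cdot)-\Delta_n\plimarrow0$, and uniform integrability upgrades this to $\E[\Xi_X(\tau_{X,r_n}\X_{\nu(n)},\cdot)]-\alpha_n\to0$, contradicting the choice of $m_k$. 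Condition \eqref{2.39} is handled identically, now invoking the joint (via the max) in-probability convergence of $(\Xi_X(\tau_{X,r_n}\X_m,\cdot),\Xi_X(\tau_{X,r_n}\X_{m'},\cdot))$ to $(\Delta_n,\Delta_n')$ and the \emph{independence} of $\Delta_n$ and $\Delta_n'$: uniform integrability of the products yields $\E[\Xi_X(\tau_{X,r_n}\X_m,\cdot)\,\Xi_X(\tau_{X,r_n}\X_{m'},\cdot)]\to\E[\Delta_n\Delta_n']=\E[\Delta_n]\E[\Delta_n']=\alpha_n^2$. This is precisely where the posited i.i.d.\ structure of $(\Delta_n,\Delta_n')$ is used.

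\textbf{Main obstacle.} The delicate point is the moment transfer and the moment upgrades: because the limits $\Delta_n$ themselves drift with $n$, one cannot appeal to Fatou against a fixed limiting variable, so convergence in probability must be converted into convergence of moments purely through the uniform $L^4$ control established in the first paragraph. Making that conversion \emph{uniform} over the entire window $m\in[n-n^\gamma,n+n^\gamma]$, rather than along a single index sequence, is what forces the subsequence/contradiction device, and keeping the truncation estimates uniform simultaneously in $n$ and in the truncation level $K$ is the part demanding the most care.
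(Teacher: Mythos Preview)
Your proposal is correct and takes essentially the same approach as the paper, which simply defers to Lemma H.9 of \cite{leung2019normal} with the single modification that no limit of $\E[\Delta_n]$ is required (precisely the $n$-dependent $\alpha_n$ issue you flag as the main obstacle). Your reconstruction---uniform $L^4$ control on the add-one cost, upgrade of the in-probability convergence to moment convergence via uniform integrability, the subsequence/contradiction device for uniformity over the window $[n-n^\gamma,n+n^\gamma]$, and the use of independence of $(\Delta_n,\Delta_n')$ to factor the product expectation in \eqref{2.39}---matches that argument. One small remark: the lemma as stated only lists \autoref{xibd} as a hypothesis, whereas your derivation of the uniform fourth-moment bound on $\Xi_X$ also invokes stabilization (Assumptions \ref{xistab} and \ref{bpestab}); this is harmless in context since the lemma is only applied inside the proof of \autoref{master-clt} where those assumptions are in force, and indeed the analogous moment bound appears as the standing condition \eqref{usi} in \autoref{poi-clt}.
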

\begin{proof}
  The proof is the same as that of Lemma H.9 of \cite{leung2019normal}, except that we do not need a limit for $\E[\Delta_n]$. 
\end{proof}

%---------------------------
For the next lemma, define
\begin{align*}
  \tilde{\Xi}_{i,n} &= \Lambda(r_n^{-1}\X_n, \W) - \Lambda(r_n^{-1}(\X_n\backslash\{\tilde\rho_i\}), \W), \\
  \tilde{\Xi}^-_{i,n} &= \Lambda(r_n^{-1}(\X_{n+1}\backslash\{\tilde\rho_i\}), \W) - \Lambda(r_n^{-1}(\X_n\backslash\{\tilde\rho_i\}), \W).
\end{align*}

\begin{lemma}\label{2.14}
  Let $m=1$. Suppose there exists a sequence of random variables $\{\Delta_n\}_{n\in\mathbb{N}}$ such that (a) if $\Delta_n'$ denotes an independent copy of $\Delta_n$, then for any sequence $\{\nu(n)\}_{n\in\mathbb{N}}$ satisfying $\nu(n) \leq n$ and $\nu(n)/n \rightarrow 1$,
  \begin{equation}
    \max\left\{ \abs{\Xi_X(\tau_{X,r_n}\X_{\nu(n)-1}, \W\cc_{X,r_n}) - \Delta_n}, \abs{\tilde{\Xi}^-_{\nu(n),n} - \Delta_n'} \right\} \plimarrow 0, \label{2.52}
  \end{equation}

  \noindent (b) $\Delta_n$ is asymptotically non-degenerate, and (c)
  \begin{equation}
    \prob\left( \Xi_X(\tau_{X,r_n}\X_{\nu(n)-1}, \W\cc_{X,r_n}) \neq \tilde{\Xi}_{\nu(n),n} \right) \rightarrow 0. \label{2.53}
  \end{equation}

  \noindent If \autoref{xibd} holds, then 
  \begin{equation*}
    \liminf_{n\rightarrow\infty} \var\left( n^{-1/2}\Lambda(r_n^{-1}\X_n, \W) \right) > 0. 
  \end{equation*}
\end{lemma}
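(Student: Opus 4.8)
The plan is to follow the de-Poissonization lemmas of \S H.4 of \cite{leung2019normal}, of which this is a restatement; the role of the statement is to supply the non-degeneracy input $\liminf_n(\sigma_n-\alpha_n)>0$ needed in \autoref{2.12}. Writing $L_n = \Lambda(r_n^{-1}\X_n, \W)$, the first step is to represent $\var(L_n)$ through the contributions of individual points. Revealing $\tilde\rho_1,\dots,\tilde\rho_n$ (together with the attribute process $\W$) sequentially and using the resampling form of the resulting Doob martingale differences, each increment is, up to its conditional mean, the effect of a single point on $L_n$. That effect can be written either as the remove-one cost $\tilde\Xi_{\nu(n),n}$, which deletes $\tilde\rho_{\nu(n)}$ from the full configuration, or, after inserting a new point $\tilde\rho_{n+1}$ into the \emph{same} depleted configuration $\X_n\backslash\{\tilde\rho_{\nu(n)}\}$, as the add-one cost $\tilde\Xi^-_{\nu(n),n}$. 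The crucial observation is that, conditional on the common background $\sigma\big(\X_n\backslash\{\tilde\rho_{\nu(n)}\},\W\big)$, these two quantities are i.i.d., since both insert a fresh $f$-point into the same $(n-1)$-point background.

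Because $\tilde\Xi_{\nu(n),n}$ and $\tilde\Xi^-_{\nu(n),n}$ are conditionally i.i.d., the conditional variance of the point's effect equals $\tfrac12\,\E\big[(\tilde\Xi_{\nu(n),n}-\tilde\Xi^-_{\nu(n),n})^2\mid\cdot\big]$. Condition (c) identifies the remove-one cost $\tilde\Xi_{\nu(n),n}$ with the add-one-cost form $\Xi_X(\tau_{X,r_n}\X_{\nu(n)-1},\W\cc_{X,r_n})$ outside an event of vanishing probability, and condition (a) gives
\begin{equation*}
  \Xi_X(\tau_{X,r_n}\X_{\nu(n)-1},\W\cc_{X,r_n}) \plimarrow \Delta_n, \qquad \tilde\Xi^-_{\nu(n),n} \plimarrow \Delta_n',
\end{equation*}
with $\Delta_n'$ an independent copy of $\Delta_n$. \autoref{xibd} (bounded eighth, hence uniformly integrable squared, costs) upgrades these convergences in probability to convergence in $L^2$, whence $\tfrac12\,\E\big[(\tilde\Xi_{\nu(n),n}-\tilde\Xi^-_{\nu(n),n})^2\big] = \var(\Delta_n)+o(1)$, using that $\Delta_n$ and $\Delta_n'$ are i.i.d.

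Condition (b), the asymptotic non-degeneracy of $\Delta_n$, gives $\liminf_n\var(\Delta_n)>0$, so the limiting per-point contribution is strictly positive. Aggregating over the $n$ points — equivalently, carrying out the count-conditioning bookkeeping of \autoref{2.12} and \autoref{2.13} that ties the Poissonized variance rate to the binomial one — then yields $\liminf_n n^{-1}\var(L_n)>0$, i.e. $\liminf_n\var\big(n^{-1/2}\Lambda(r_n^{-1}\X_n,\W)\big)>0$, which is the claim.

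The identification of the limiting per-point contribution is routine once the conditions are in hand; the main obstacle is the aggregation. In the binomial model the total count is fixed, so the single-point contributions are dependent and the ordered martingale increments need not be constant across indices; the covariance corrections produced by this dependence are precisely what the independent copy $\Delta_n'$ and condition (c) are engineered to neutralize. Controlling these corrections uniformly over indices in the bulk (hence the $\nu(n)/n\to1$ phrasing of (a) and (c)) and over the Poissonized count near $n$ is where \autoref{xibd} does the real work. The second delicate point, also handled by \autoref{xibd}, is the passage from convergence in probability to convergence of second moments.
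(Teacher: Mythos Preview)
The paper's own proof is a one-line deferral: ``The proof is the same as that of Lemma H.10 of \cite{leung2019normal}.'' Your proposal reconstructs the Penrose--Yukich variance-lower-bound argument that underlies that lemma --- the conditional-i.i.d.\ pairing of the remove-one and add-one costs on the common $(n-1)$-point background, the passage from convergence in probability to $L^2$ via the moment bound, and the identification of the per-point contribution with $\var(\Delta_n)$ --- and this is indeed the approach taken in the cited source, so the proposal and the paper are aligned.

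One clarification on the logical structure: your phrasing ``carrying out the count-conditioning bookkeeping of \autoref{2.12} and \autoref{2.13}'' for the aggregation step reads as if Lemma~\ref{2.14} leans on those lemmas, but the dependency runs the other way --- Lemma~\ref{2.14} is a standalone lower bound that \emph{feeds} the non-degeneracy hypothesis $\liminf_n(\sigma_n-\alpha_n)>0$ into \autoref{2.12}. The aggregation in the underlying Lemma H.10 is done directly (essentially a jackknife/Hoeffding-decomposition argument on the binomial sample, not via the Poissonized variance), so the route you sketch for that step is slightly off in its attribution even if the destination is right.
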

\begin{proof}
  The proof is the same as that of Lemma H.10 of \cite{leung2019normal}
\end{proof}

%---------------------------
\begin{proof}[Proof of \autoref{master-clt}]
  Let $m=1$. The arguments that follow are minor modifications of the proof of Theorem H.2 of \cite{leung2019normal}. We verify the conditions of \autoref{2.13} and \ref{2.14}. Then \autoref{2.12} establishes the result. It suffices to verify assumption (b) of \autoref{2.14} and to prove that if $\{(\nu(n), \nu'(n))\}_{n\in\mathbb{N}}$ is a sequence satisfying $\nu(n) < \nu'(n)$, $\nu(n) < n$, and $\nu(n)/n, \nu'(n)/n \rightarrow 1$, that \autoref{2.46}, \autoref{2.52}, and \autoref{2.53} hold. 

  Following the coupling construction in the proof of Theorem H.2 of \cite{leung2019normal}, we can construct independent draws $X,Y$ from $f$ and inhomogeneous Poisson processes on $\R^d$ labeled $\mathcal{P}^a_{nf}$ and $\mathcal{P}^b_{nf}$ on the same probability space\footnote{Replace the set $F_X \times [0,nf(X)]$ in their coupling construction with $\{(x,t) \in F_X \times [0,\infty)\colon t \leq nf(x)\}$ and likewise with $F_Y \times [0,nf(Y)]$.} such that
  \begin{equation*}
    \W\cc_{X,r_n}(\mathcal{P}^a_{nf} \cup \{X\}) \indep \W\cc_{Y,r_n}(\mathcal{P}^b_{nf} \cup \{Y\}). 
  \end{equation*}
  
  \noindent Therefore, the following two random variables are independent:
  \begin{equation*}
    \Delta_n \equiv \Xi_X(\tau_{x,r_n}\mathcal{P}^a_{nf}, \W\cc_{X,r_n}) \quad\text{and}\quad \Delta'_n \equiv \Xi_Y(\tau_{x,r_n}\mathcal{P}^b_{nf}, \W\cc_{Y,r_n}).
  \end{equation*}

  \noindent The coupling also constructs $\Xi_X(\tau_{X,r_n}\X_{\nu(n)}, \W\cc_{X,r_n}), \Xi_Y(\tau_{Y,r_n}\X_{\nu'(n)}, \W\cc_{Y,r_n})$ on the same probability space.

  We now prove \autoref{2.46} and assumption (b) of \autoref{2.14}, which require some definitions. For any $R>0$, define the event
  \begin{equation*}
    E_X^n(R) = \left\{ \tau_{X,r_n}\X_{\nu(n)} \cap Q(X,R) = \tau_{X,r_n}\mathcal{P}_{nf}^a \cap Q(X,R) \right\}. 
  \end{equation*}
  
  \noindent Also define 
  \begin{equation*}
    \RR_\Lambda(X) = \max\{\RR_\Lambda(X, \tau_{X,r_n}\X_{\nu(n)}, \W\cc_{X,r_n}), \RR_\Lambda(X, \tau_{X,r_n}\mathcal{P}^a_{nf}, \W\cc_{X,r_n})\}, 
  \end{equation*}

  \noindent where $\RR_\Lambda(X, \tau_{X,r_n}\X_{\nu(n)}, \W\cc_{X,r_n})$ satisfies
  \begin{equation*}
    \Xi_X(\tau_{X,r_n}\X_{\nu(n)}, \W\cc_{X,r_n}) = \Xi_X(\tau_{X,r_n}\X_{\nu(n)} \cap Q(X,R), \W\cc_{X,r_n})
  \end{equation*}

  \noindent for any $R>\RR_\Lambda(X, \tau_{X,r_n}\X_{\nu(n)}, \W\cc_{X,r_n})$  and similarly for $\RR_\Lambda(X, \tau_{X,r_n}\mathcal{P}^a_{nf}, \W\cc_{X,r_n})$. 

  For states of the world in the event $E_X^n(R) \cap \{R > \RR_\Lambda(X)\}$, by definition of $\RR_\Lambda(X)$,
  \begin{multline*}
     \Xi_X(\tau_{X,r_n}\X_{\nu(n)}, \W\cc_{X,r_n}) = \Xi_X(\tau_{X,r_n}\X_{\nu(n)} \cap Q(X,R), \W\cc_{X,r_n}) \\ = \Xi_X(\tau_{X,r_n}\mathcal{P}^a_{nf} \cap Q(X,R), \W\cc_{X,r_n}) = \Xi_X(\tau_{X,r_n}\mathcal{P}^a_{nf}, \W\cc_{X,r_n}). 
  \end{multline*}

  \noindent Therefore, for any $\epsilon>0$,
  \begin{multline}
    \prob\left( \big|\Xi_X(\tau_{X,r_n}\X_{\nu(n)}, \W\cc_{X,r_n}) - \Xi_X(\tau_{X,r_n}\mathcal{P}^a_{nf}, \W\cc_{X,r_n})\big| > \epsilon \right) \\ \leq \prob(E_X^n(R)^c) + \prob(\RR_\Lambda(X) > R), \label{cm0}
  \end{multline}

  \noindent where $E_X^n(R)^c$ is the complement of $E_X^n(R)$. Under \autoref{xistab}, $\Lambda$ is $\RR_\Lambda$-stabilizing in the sense of Definition H.2 of \cite{leung2019normal} by Lemma H.11 of that paper. Hence, $\RR_\Lambda(X, \tau_{X,r_n}\X_{\nu(n)}, \W\cc_{X,r_n})$ and $\RR_\Lambda(X, \tau_{X,r_n}\mathcal{P}^a_{nf}, \W\cc_{X,r_n})$ can be constructed such that $\RR_\Lambda(X) = O_p(1)$. For such $\RR_\Lambda(X)$, we can choose $R>0$ large enough such that for all $n$ sufficiently large and any $\varepsilon > 0$, $\prob(\RR_\Lambda(X) > R) < \varepsilon/2$. By Lemma H.1 of \cite{leung2019normal}, for any such $R$, we can choose $n$ large enough such that $\prob(E_X^n(R)^c) < \varepsilon/2$. Combining these facts with \autoref{cm0}, we have $\abs{\Xi_X(\tau_{X,r_n}\X_{\nu(n)}, \W\cc_{X,r_n}) - \Delta_n} \plimarrow 0$, and by an identical argument, $\abs{\Xi_Y(\tau_{Y,r_n}\X_{\nu'(n)}, \W\cc_{Y,r_n}) - \Delta'_n} \plimarrow 0$. This establishes \autoref{2.46}. 
  
  To show assumption (b) of \autoref{2.14}, note that, following the argument above, $c'\Xi_{r_n^{-1}X}(r_n^{-1}\X_n, \W)$ converges to $\Delta_n$, so the latter is asymptotically non-degenerate. 
  
  Finally, the proof of \autoref{2.52} and \autoref{2.53} is similar to that of \autoref{2.46}.
\end{proof}

%----------------------------------------------------------------------
\section{Branching Process Lemmas}\label{introbp}
%----------------------------------------------------------------------

This section restates branching process results due to \cite{leung2019normal}, which are used to prove the CLTs. There is only some slight modification in notation relative to their case due to differences outlined at the start of \autoref{smaster}. We first define a branching process used to bound the sizes of components in $\bm{D}$ defined in \autoref{sgsstatic}. Let $d_z$ be the dimension of $Z_i$, $x \in \R^d$, $z \in \R^{d_z}$ and $r \geq 0$. Recall the definition of $\varphi(p,z;p',z')$ from \autoref{L_r} and $r_n = \omega_n^{-1}$. Let $\bar{f} = \sup_{p\in\R^d} f(p)$. Define $\mathfrak{X}_{r}(p,z)$ as the size of the branching process on $\R^{d+d_z}$ starting at a particle of type $(p,z)$, where the offspring of a type $(p',z')$ particle is given by a Poisson point process on $\R^d \times \R^{d_z}$ with intensity
\begin{equation}
  \text{d}\pi_{r}(p',z'; p'',z'') \equiv \kappa\bar{f} (1+r) \varphi(p',z';p'',z'') \,\text{d}\Phi^*(z'') \,\text{d}p'', \label{Dintens}
\end{equation}

\noindent where $\Phi^*$ is defined in \autoref{dfrag}. In brief, this branching process is generated as follows. We initialize the process at the ``first generation,'' which consists of a single particle $(p,z) \in \R^d \times \R^{d_z}$. The second generation consists of the ``offspring'' of $(p,z)$, which is the Poisson point process described above. We refer to each point of this process as a particle. The third generation consists of the offspring of the second-generation particles, which are realized according to a Poisson point process distributed as above. Being elements of $\R^d \times \R^{d_z}$, particles can be interpreted as types $\tau_i$. These processes are drawn independently conditional on the second-generation ``parent'' particles. This process is repeated indefinitely. The number of particles ultimately generated is $\mathfrak{X}_{r}(p,z)$.

This process is of interest because the expected number of offspring of a particle of type $(p,z)$ is
\begin{equation}
  \kappa\bar{f} (1+r) \int_{\R^d} \int_{\R^{d_z}} \varphi(p,z;p',z') \,\text{d}\Phi^*(z') \,\text{d}p', \label{poiDexp}
\end{equation}

\noindent which is an upper bound on the expected conditional degree of an agent in $\bm{D}$ by \eqref{r90bf0wj2}. \autoref{Jsd} below clarifies the relation between $\mathfrak{X}_{r_n}(p,z)$ and the size of the component in $\bm{D}$ of an agent with type $(p,z)$.

We next define a fixed-depth branching process used to bound the sizes of $K$-neighborhoods in $\bm{A}$. Recalling the definitions in \autoref{sparsity}, let
\begin{equation}
  \tilde{p}_1(p,z;p',z') = \tilde{\Phi}_\zeta\left( \tilde V^{-1}(\norm{p-p'}, 0) \right). \label{tilp_r}
\end{equation}

\noindent This is an upper bound on the conditional linking probability for a pair of agents in the network $\bm{A}$. Let $\tilde{\mathfrak{X}}^K_{r}(p',z')$ denote the size of the branching process on the type space $\R^d \times \R^{d_z}$ that terminates after $K+1$ generations, starting at a particle of type $(p',z')$, whose the offspring distribution is given by a Poisson point process on $\R^d \times \R^{d_z}$ with intensity
\begin{equation}
  \text{d}\tilde{\pi}_{r}(p',z'; p'',z'') = \kappa \bar{f} (1+r) \tilde{p}_1(p',z';p'',z'') \,\text{d}\Phi^*(z'') \,\text{d}p''. \label{Mintens}
\end{equation}

\noindent This is generated the same way as $\mathfrak{X}_r(p,z)$, except the intensity measure is different, and once the $(K+1)$-th generation is born, no further offspring are generated.  As with $\mathfrak{X}_{r}(p,z)$, this process is of interest because the expected number of offspring of a particle of type $(p,z)$ is an upper bound on the conditional expected degree of an agent $(p,z)$ in $\bm{A}$ by calculations similar to \autoref{r90bf0wj2}. \autoref{Jsd} below clarifies the relation between $\tilde{\mathfrak{X}}^K_{r_n}(p,z)$ and the size of the $K$-neighborhood in $\bm{A}$ of an agent with $(p,z)$.

%----------------------
\subsection{Stochastic Dominance}\label{bpsd}
%----------------------

We state a lemma used in \autoref{staticproof} to verify \autoref{xistab}. It shows that sizes of relevant sets \eqref{J_i} are stochastically dominated by the sizes of certain branching processes. It follows that strategic neighborhoods and $K$-neighborhoods in $\bm{A}$ are also stochastically dominated. Then using results from the next subsection, we can establish that the sizes of relevant sets are asymptotically bounded or have exponential tails.

Let $x,y \in \R^d$, and $\nu(n)$ satisfy $\nu(n)<n$, $\nu(n)/n\rightarrow 1$. We assume the network is realized according to model \autoref{model_static_generic}\footnote{For model \autoref{model_dynamic_generic}, everything is the same except we use period-0 attributes $\W^0\cc_{x,r_n}$.} with
\begin{equation}
  \X = \{\tilde\rho_i\}_{i=1}^N \cup \{\tau_{x,r_n^{-1}}p\} \label{Ms}
\end{equation}

\noindent for any $x,p \in \R^d$ (these quantities are defined in \autoref{shldefs}). Note that $\tau_{x,r_n}\tau_{x,r_n^{-1}}p = p$. Let $\bm{A}$ and $\bm{D}$ be defined under this model as in \autoref{sstaticass}. Also recall from that section the definition of the $\bm{D}$-component of an agent positioned at $p \in \tau_{x,r_n}\X$, denoted $C_p \equiv C(p, \tau_{x,r_n}\X, \W\cc_{x,r_n}, D)$ and strategic neighborhood $C_p^+$. Recall from \autoref{staticproof} the definition of the $K$-neighborhood $\mathcal{N}_{\bm{A}}(p,K)$, relevant set $J_p$, and radius of stabilization $\RR_n(p)$. \label{Cx'}

Let $J_p(\tau_{x,r_n}\X, \W\cc_{x,r_n})$ be the relevant set of the agent positioned at $p$ under model in part (a) of the proof of \autoref{static_xistab}. The next lemma shows that this set is stochastically bounded by the following branching process. For an initial particle $(p,\bm{Z}(\cc_{x,r_n}p))$, let
\begin{equation}
  \bm{B}_{r}^K(p,\bm{Z}(\cc_{x,r_n}p)) \label{Bxr}
\end{equation}

\noindent be the set of particles in a branching process after $K$ generations with intensity given by \autoref{Mintens}. This is our branching-process approximation of the $K$-neighborhood of an agent in $\bm{A}$. Then to approximate $\bm{D}$-component sizes, for each particle $(p,z) \in \bm{B}_{r}^K(p,\bm{Z}(\cc_{x,r_n}p))$, initiate independent branching processes with intensities given by \autoref{Dintens} whose sizes consequently have the same distribution as $\mathfrak{X}_{r}(p,z)$. Lastly, for each particle generated by the latter process, initiate a branching process that runs for only one generation with intensity given by \autoref{Mintens}. Let $\hat{\mathfrak{X}}_{r}^K(p,\bm{Z}(\cc_{x,r_n}p))$ denote the size of the overall process. \label{hatfrak}

\begin{lemma}\label{Jsd}
  For any $\epsilon > 0$ and $n$ sufficiently large,
  \begin{equation*}
    \prob( \abs{J_p(\tau_{x,r_n}\X, \W\cc_{x,r_n})} > \epsilon ) \leq \prob( \hat{\mathfrak{X}}_{r_n}^K(p,\bm{Z}(\cc_{x,r_n}p)) > \epsilon). 
  \end{equation*}
\end{lemma}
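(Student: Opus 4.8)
The plan is to establish the stochastic dominance by constructing an explicit coupling between the relevant set $J_p$ and the branching process $\hat{\mathfrak{X}}_{r_n}^K(p,\bm{Z}(\cc_{x,r_n}p))$, exploiting the layered structure of the definition of $J_p$ in \eqref{J_i}. Recall that $J_p$ is the union $\bigcup\{C_y^+\colon y \in \mathcal{N}_{\bm{A}}(p,K)\}$, so the natural strategy is to dominate each of the three nested objects in turn: the $K$-neighborhood $\mathcal{N}_{\bm{A}}(p,K)$, the $\bm{D}$-components $C_y$ of its members, and the additional boundary agents appended to form each strategic neighborhood $C_y^+$. Each of these three layers corresponds exactly to one of the three stages in the construction of $\hat{\mathfrak{X}}_{r_n}^K$ described preceding the lemma statement (the fixed-depth process \eqref{Bxr} for the $K$-neighborhood, the processes with intensity \eqref{Dintens} for the $\bm{D}$-components, and the one-generation processes for the boundary).

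The first step is to dominate the $K$-neighborhood of $p$ in $\bm{A}$ by the $K$-depth branching process $\bm{B}_{r_n}^K(p,\bm{Z}(\cc_{x,r_n}p))$. The key fact here is that $\tilde p_1(p,z;p',z')$ from \eqref{tilp_r} is, by \autoref{sparsity}, an upper bound on the conditional linking probability in $\bm{A}$, and that the Poisson intensity \eqref{Mintens} dominates the point-process intensity governing which agents appear as neighbors. One reveals the neighbors of $p$ generation by generation and couples each newly discovered agent to an offspring particle, using the fact that the agent positions $\tau_{x,r_n}\X$ can be dominated by a Poisson point process (via the factor $(1+r_n)$ absorbing the binomial-versus-Poisson discrepancy and $\kappa\bar f$ bounding the intensity $n f(\cdot)$). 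The second step dominates each $\bm{D}$-component $C_y$ by the process $\mathfrak{X}_{r_n}$: since $\bm{D}$ is a subnetwork of $\bm{A}$ with $D_{p,p'}=A_{p,p'}\mathcal{R}^c_{p'}$, the expected conditional degree in $\bm{D}$ is bounded by \eqref{poiDexp} through the inequality \eqref{r90bf0wj2}, so revealing the component via breadth-first exploration and coupling each discovered agent to an offspring particle of intensity \eqref{Dintens} yields the domination. The third step handles the extra boundary agents in $C_y^+\setminus C_y$, each of which is an $\bm{A}$-neighbor of some member of $C_y$, and hence is captured by the one-generation processes of intensity \eqref{Mintens} appended in the final stage.

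The main obstacle will be handling the overlap and the correct nesting of the coupling so that no agent is double-counted in a way that breaks the inequality, while simultaneously ensuring that the ``reveal'' procedure is consistent across the three stages (i.e.\ that the particles already generated to cover $\mathcal{N}_{\bm{A}}(p,K)$ serve as the correct starting particles for the $\bm{D}$-component processes, which in turn seed the final one-generation processes). Because the union in \eqref{J_i} may count an agent belonging to several strategic neighborhoods only once, whereas the branching process counts each generated particle separately, the domination is in the favorable direction: the branching process overcounts, so $\abs{J_p} \le \hat{\mathfrak{X}}_{r_n}^K$ holds on the coupled space. The qualifier ``for $n$ sufficiently large'' enters precisely because the binomial point process $\tau_{x,r_n}\X$ is dominated by the Poisson process of intensity absorbing the $(1+r_n)$ slack only once $r_n = \omega_n^{-1}$ is small enough, i.e.\ once $n$ is large. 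Since this lemma is a direct restatement (with only the notational change from $\bm{p}_{x,r}$ to $\cc_{x,r}$ noted at the start of \autoref{smaster}) of the corresponding stochastic-dominance result in \cite{leung2019normal}, I would conclude by invoking that coupling construction and verifying that the modified projection $\cc_{x,r_n}$ leaves every step intact, since it affects only how attributes are assigned to positions and not the intensities \eqref{Dintens} or \eqref{Mintens} governing the dominating processes.
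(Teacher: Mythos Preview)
Your proposal is correct and follows essentially the same approach as the paper. The paper's proof is terser: it invokes Lemma C.1 of \cite{leung2019compute} for the stochastic dominance of $\abs{C_p^+}$ by $\mathfrak{X}_{r_n}$, notes that an analogous argument gives dominance of $\abs{\mathcal{N}_{\bm{A}}(p,K)}$ by $\tilde{\mathfrak{X}}_{r_n}^K$, and then concludes directly from the three-stage construction of $J_p$ and $\hat{\mathfrak{X}}_{r_n}^K$ --- exactly the layered coupling you describe.
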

\begin{proof}
  Lemma C.1 of \cite{leung2019compute} shows that
  \begin{equation*}
    \prob( \abs{C_p^+} > \epsilon ) \leq \prob( \mathfrak{X}_{r_n}(p,\bm{Z}(\cc_{x,r_n}p)) > \epsilon) 
  \end{equation*}

  \noindent for $N=n$, which easily generalizes to $N \sim \text{Poisson}(n)$. A similar argument yields
  \begin{equation*}
    \prob( \abs{\mathcal{N}_{\bm{A}}(p,K)} > \epsilon ) \leq \prob( \tilde{\mathfrak{X}}_{r_n}^K(p,\bm{Z}(\cc_{x,r_n}p)) > \epsilon)
  \end{equation*}

  \noindent for any $K$. Then the result follows from construction of $J_p(\tau_{x,r_n}\X, \W\cc_{x,r_n})$ and $\hat{\mathfrak{X}}_{r_n}^K(p,\bm{Z}(\cc_{x,r_n}p))$.
\end{proof}

%----------------------
\subsection{Tail Bounds}\label{bp}
%----------------------

The lemmas below are used to show that the sizes of the branching processes defined in the previous subsection have exponential tails, following a line of argument due to \cite{turova_asymptotics_2012}. Let $g^\alpha_{r}(p,z) = \E[\alpha^{\mathfrak{X}_{r}(p,z)}]$, and let $T_{r}$ be the functional satisfying
\begin{equation*}
  (T_{r} \circ h)(p,z) = \int_{\R^d} \int_{\R^{d_z}} h(p',z') \text{d}\pi_r(p',z'; p,z). 
\end{equation*}

The next lemma is one of our main results in this section, establishing that the size of the branching process has exponential tails. Define
\begin{equation*}
  \psi_{r}(p,z) = \kappa \bar{f} (1+r) \int_{\R^d} \left( \int_{\R^{d_z}} \varphi(p,z;p',z')^2 \,\text{d}\Phi^*(z'') \right)^{1/2} \,\text{d}p',  
\end{equation*}

\noindent and recall the definition of $\mathcal{T}$ from \autoref{Sset}.

\begin{lemma}\label{Dexptail}
  Fix $x \in \R^d$. Suppose $f$ is bounded away from zero and infinity, and Assumptions \ref{dfrag}, \ref{sparsity}, and \ref{dreg} hold. Then for some $\alpha>1$,
  \begin{equation*}
    \sup_{(p,z)\in\mathcal{T}} g^\alpha_{r}(p,z) < \infty, 
  \end{equation*}

  \noindent for $r$ sufficiently small.
\end{lemma}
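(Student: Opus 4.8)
The plan is to use the branching property of $\mathfrak{X}_r(p,z)$ to derive a fixed-point equation for the generating function $g^\alpha_r$, and then to show this equation admits a uniformly bounded solution for some $\alpha>1$ by exploiting the subcriticality encoded in \autoref{dfrag}. The starting observation is that $\psi_r(p,z)=(1+r)h(p,z)$ for $h$ as in \autoref{dfrag}, so that $\norm{\psi_r}_{\bm m}=(1+r)\norm{h}_{\bm m}$; since $\norm{h}_{\bm m}<1$ by \autoref{dfrag}, we have $\norm{\psi_r}_{\bm m}<1$ for all $r$ sufficiently small. This strict inequality is the quantitative form of subcriticality that drives the entire argument.

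First I would derive the integral equation for the generating function. Because the offspring of a type-$(p,z)$ particle form a Poisson point process with intensity $\text{d}\pi_r(\cdot\,;p,z)$, and each offspring independently spawns an i.i.d.\ copy of the process, the probability generating functional of a Poisson process yields
\[
g^\alpha_r(p,z)=\alpha\,\exp\Big\{\int_{\R^d}\int_{\R^{d_z}}\big(g^\alpha_r(p',z')-1\big)\,\text{d}\pi_r(p',z';p,z)\Big\}
=\alpha\,\exp\big\{(T_r\circ(g^\alpha_r-1))(p,z)\big\}.
\]
At $\alpha=1$ the a.s.\ finiteness of the total progeny --- which holds because $\norm{\psi_r}_{\bm m}<1$ renders the process subcritical --- gives $g^1_r\equiv1$, so $(p,z)\mapsto g^1_r(p,z)-1$ is the zero element of the Banach space $\mathcal B$ of functions on $\mathcal T$ with finite mixed norm $\norm{\cdot}_{\bm m}$.

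Second, I would run a continuation/contraction argument in $\mathcal B$. Writing $u=g^\alpha_r-1$, the equation becomes $u=\alpha\exp\{T_r u\}-1=:\Phi_\alpha(u)$. The crucial estimate is that $T_r$ maps $\mathcal B$ into itself with controllable norm: applying the Cauchy--Schwarz inequality in the $z'$-variable to $\int\varphi(p,z;p',z')\,u(p',z')\,\text{d}\Phi^*(z')$ and integrating in $p'$ bounds $\abs{(T_r u)(p,z)}$ by $\psi_r(p,z)\,\norm{u}_{\bm m}$, whence $\norm{T_r u}_{\bm m}\le\norm{\psi_r}_{\bm m}\,\norm{u}_{\bm m}$. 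Since $\norm{\psi_r}_{\bm m}<1$, the linearization of $\Phi_\alpha$ at $u=0$, namely $\alpha\,T_r$ evaluated at $\alpha=1$, has operator norm $\norm{\psi_r}_{\bm m}<1$, so $I-\Phi_1'(0)$ is boundedly invertible. An implicit-function (or monotone-iteration) argument then produces, for $\alpha$ in a neighborhood of $1$, a solution $u_\alpha\in\mathcal B$ with $\norm{u_\alpha}_{\bm m}$ small; as $g^\alpha_r=1+u_\alpha$ is pointwise finite and of finite mixed norm, and the $p'$-integrals converge by the exponential spatial decay of $\varphi$ guaranteed by \autoref{sparsity}, one obtains $\sup_{(p,z)\in\mathcal T}g^\alpha_r(p,z)<\infty$ for some $\alpha>1$. \autoref{dreg} enters to keep the mean operator non-degenerate and bounded, so that the continuation remains in the regime where the fixed point is the true generating function rather than a spurious branch.

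I expect the main obstacle to be the continuum-of-types bookkeeping: controlling the exponential nonlinearity $\exp\{T_r u\}$ uniformly over $\mathcal T$ requires passing from the $L^2(\Phi^*)$-in-$z$ control that Cauchy--Schwarz delivers to a genuinely uniform (in $p$) bound, which is exactly why the mixed norm $\norm{\cdot}_{\bm m}$ --- a supremum in $p$ of an $L^2$ norm in $z$ --- is the correct space and why the slack $(1+r)\norm{h}_{\bm m}<1$ must be quantitative. Verifying that the contraction is strict and that the branch selected at $\alpha=1$ extends to $\alpha>1$ without $g^\alpha_r$ jumping to $+\infty$ is the delicate step; this is precisely the part that parallels the argument of \cite{turova_asymptotics_2012} and that \cite{leung2019normal} adapt to the present setting.
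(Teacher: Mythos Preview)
The paper's own proof is a one-line citation to Lemma~C.3 of \cite{leung2019compute}, so there is no self-contained argument in the paper to compare against. Your outline is exactly the Turova-style argument that underlies that cited lemma: the fixed-point equation $g^\alpha_r=\alpha\exp\{T_r(g^\alpha_r-1)\}$, the Cauchy--Schwarz bound $\abs{(T_r u)(p,z)}\le\psi_r(p,z)\norm{u}_{\bm m}$ giving $\norm{T_r}\le\norm{\psi_r}_{\bm m}=(1+r)\norm{h}_{\bm m}<1$ on the mixed-norm space, and then an implicit-function/iteration step to push from $\alpha=1$ to some $\alpha>1$. You have also correctly identified that the uniform-in-$(p,z)$ bound on $g^\alpha_r$ comes from bootstrapping the pointwise inequality $g^\alpha_r(p,z)\le\alpha\exp\{\psi_r(p,z)\norm{u_\alpha}_{\bm m}\}$ together with $\sup_{(p,z)}\psi_r(p,z)<\infty$, the latter following from the exponential spatial decay in \autoref{sparsity}. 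The one place where your sketch is thin is the branch-selection issue---ensuring the fixed point you construct for $\alpha>1$ coincides with the actual generating function rather than a spurious solution---which in the referenced argument is handled by a monotone truncation (working with the process killed at generation $k$ and passing $k\to\infty$); you flag this yourself, and it is indeed the step that requires the most care.
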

\begin{proof}
  This is a special case of Lemma C.3 of \cite{leung2019compute}.
\end{proof}

%-----------------------
\begin{lemma}\label{MKexptail}
  Define $\tilde{g}^{\alpha,K}_{r}(p,z) = \E[\alpha^{\tilde{\mathfrak{X}}^K_{r}(p,z)}]$. If \autoref{sparsity} holds, then there exists $\alpha \geq 1$ such that $\sup_{r\leq \kappa} \sup_{(p,z) \in \mathcal{T}} \E[\tilde{g}^{\alpha,K}_{r}(p,z)] < \infty$ for any $K$, with $\mathcal{T}$ defined in \autoref{Sset}.
\end{lemma}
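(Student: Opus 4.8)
The plan is to induct on the depth $K$, exploiting the self-similar structure of the fixed-depth branching process together with the probability generating functional (PGFL) of a Poisson point process. The key quantitative input is that \autoref{sparsity}(c) forces the total offspring intensity to be uniformly bounded, which makes the fixed-depth moment generating function finite for \emph{any} $\alpha$ (in sharp contrast to the infinite-depth process treated in \autoref{Dexptail}).

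First I would bound the total mass of the offspring intensity. Since $\tilde p_1(p,z;p',z')=\tilde\Phi_\zeta(\tilde V^{-1}(\norm{p-p'},0))$ depends on its arguments only through $\norm{p-p'}$ and $\Phi^*$ is a probability measure, integrating \autoref{Mintens} over $(p',z')$ gives
\begin{equation*}
  \Lambda_r(p,z) \equiv \int_{\R^d}\int_{\R^{d_z}} \text{d}\tilde\pi_r(p,z;p',z') = \kappa\bar{f}(1+r)\int_{\R^d} \tilde\Phi_\zeta\big(\tilde V^{-1}(\norm{p-p'},0)\big)\,\text{d}p'.
\end{equation*}
By \autoref{sparsity}(c), $\tilde\Phi_\zeta(\tilde V^{-1}(s,0))\leq c_1 e^{-c_2 s}$, so a change of variables to hyperspherical coordinates yields $\int_{\R^d} c_1 e^{-c_2\norm{u}}\,\text{d}u = C_d<\infty$, whence $\Lambda_r(p,z)\leq \kappa\bar{f}(1+\kappa)C_d\equiv \bar\Lambda$ uniformly over $(p,z)$ and $r\leq\kappa$.

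Next I would set up the recursion. A depth-$K$ process equals the root together with independent depth-$(K-1)$ subtrees rooted at each offspring of the root; conditioning on the offspring point process and using that these subtrees are independent given their types, the PGFL of the Poisson offspring process gives
\begin{equation*}
  \tilde g^{\alpha,K}_r(p,z) = \alpha\,\exp\!\Big(\int_{\R^d}\int_{\R^{d_z}} \big(\tilde g^{\alpha,K-1}_r(p',z')-1\big)\,\text{d}\tilde\pi_r(p,z;p',z')\Big),
\end{equation*}
with base case $\tilde g^{\alpha,0}_r(p,z)=\alpha$ (the root alone). Fix any $\alpha>1$ and suppose inductively that $M_{K-1}\equiv \sup_{r\leq\kappa}\sup_{(p,z)}\tilde g^{\alpha,K-1}_r(p,z)<\infty$. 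Since $\tilde{\mathfrak{X}}^{K-1}_r\geq 1$ we have $0\leq \tilde g^{\alpha,K-1}_r-1\leq M_{K-1}-1$ on the positive measure $\tilde\pi_r$, so bounding the integral by $(M_{K-1}-1)\Lambda_r(p,z)\leq (M_{K-1}-1)\bar\Lambda$ yields
\begin{equation*}
  \tilde g^{\alpha,K}_r(p,z)\leq \alpha\,\exp\!\big((M_{K-1}-1)\bar\Lambda\big)\equiv M_K<\infty,
\end{equation*}
uniformly in $(p,z)$ and $r\leq\kappa$. This closes the induction and proves the claim, with the same $\alpha$ serving every fixed $K$.

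The argument is essentially routine; the only points requiring care are the justification of the PGFL identity — i.e.\ the factorization over offspring, which follows from independence of the subtrees conditional on offspring types and the standard formula $\E[\prod_{x\in\Pi}h(x)]=\exp(\int(h-1)\,\text{d}\Lambda)$ applied with $h=\tilde g^{\alpha,K-1}_r$ — and the uniform finiteness of $\bar\Lambda$, which is exactly where \autoref{sparsity}(c) enters. I expect the main (mild) obstacle to be verifying that the intensity mass is finite uniformly in $(p,z)$ and $r$; this is precisely the feature that distinguishes the present fixed-depth setting from the infinite-depth process in \autoref{Dexptail}, where finiteness of the generating function is delicate and forces $\alpha$ near $1$ under the condition $\norm{h}_{\bm{m}}<1$, whereas here the fixed depth $K$ makes any $\alpha$ admissible.
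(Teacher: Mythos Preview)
Your argument is correct. The paper itself does not supply a proof here; it simply cites Lemma I.8 of \cite{leung2019normal}. Your induction on $K$ via the Poisson PGFL, using that \autoref{sparsity}(c) makes the total offspring mass $\Lambda_r(p,z)$ uniformly bounded by some $\bar\Lambda<\infty$, is the natural self-contained argument and almost certainly mirrors what the cited reference does. The only mildly delicate step---validity of the PGFL identity $\E[\prod_{x\in\Pi}h(x)]=\exp(\int(h-1)\,\text{d}\Lambda)$ when $h=\tilde g^{\alpha,K-1}_r$ may exceed one---is justified because the induction hypothesis gives $h\leq M_{K-1}<\infty$ and the intensity has finite total mass $\bar\Lambda$, so $\int h\,\text{d}\tilde\pi_r<\infty$ and the identity follows by conditioning on the Poisson count. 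Your observation that any $\alpha>1$ works (not just $\alpha$ near $1$) is a correct feature of the fixed-depth setting.
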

\begin{proof}
  This is Lemma I.8 of \cite{leung2019normal}.
\end{proof}

%----------------------
\begin{lemma}\label{Mdist}
  For any $x,y \in \R^d$ and $K>0$, 
  \begin{equation*}
    \lim_{C\rightarrow\infty} \limsup_{n\rightarrow\infty} \prob\left( \max_{(p',z') \in \bm{B}_{\kappa}^K(p,\bm{Z}(\cc_{x,r_n}p))} \norm{p-p'} > C \right) = 0. 
  \end{equation*}
\end{lemma}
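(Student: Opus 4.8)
The plan is to control the spatial extent of the tree by a first-moment (Markov) argument, exploiting the fact that the positional structure of $\bm{B}_\kappa^K$ is free of both $n$ and the root type. Indeed, by \autoref{tilp_r} the kernel $\tilde{p}_1(p',z';p'',z'') = \tilde{\Phi}_\zeta(\tilde V^{-1}(\norm{p'-p''},0))$ depends on its arguments only through the spatial displacement $\norm{p'-p''}$, and the intensity \autoref{Mintens} with $r=\kappa$ carries no dependence on $n$. Hence, integrating out offspring types against the probability measure $\Phi^*$, the positions of the offspring of a particle located at $p'$ form a Poisson point process on $\R^d$ with translation-invariant intensity $q(p''-p')\,\text{d}p''$, where $q(u) \equiv \kappa\bar{f}(1+\kappa)\,\tilde{\Phi}_\zeta(\tilde V^{-1}(\norm{u},0))$, and by \autoref{sparsity}(c), $q(u) \leq \kappa\bar{f}(1+\kappa)\,c_1 e^{-c_2\norm{u}}$. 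In particular the law of $\{\norm{p-p'}\colon (p',z')\in\bm{B}_\kappa^K(p,z)\}$ does not depend on $n$ or on the root type $z \equiv \bm{Z}(\cc_{x,r_n}p)$, so the $\limsup_{n\rightarrow\infty}$ in the statement is vacuous and it suffices to bound the resulting fixed probability as $C\rightarrow\infty$.

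Next I would pass to the first moment. Writing $N_{>C}$ for the number of particles of $\bm{B}_\kappa^K(p,z)$ at spatial distance exceeding $C$ from the root $p$, the event in the lemma is exactly $\{N_{>C}\geq 1\}$, so Markov's inequality gives
\[
  \prob\Big( \max_{(p',z') \in \bm{B}_{\kappa}^K(p,z)} \norm{p-p'} > C \Big) \leq \E[N_{>C}].
\]
Since the process terminates after $K+1$ generations and the mean offspring measure has translation-invariant density $q$, the mean number of generation-$k$ particles at displacement $u$ from the root equals the $k$-fold convolution $q^{*k}(u)$ (with $q^{*0}=\delta_0$, contributing nothing for $C>0$). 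By the mean-measure (Campbell) formula,
\[
  \E[N_{>C}] = \sum_{k=1}^{K}\int_{\norm{u}>C} q^{*k}(u)\,\text{d}u,
\]
a finite sum because $K$ is a fixed constant and $m\equiv\int_{\R^d} q(u)\,\text{d}u<\infty$ (the exponential bound on $q$ makes it integrable).

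It then remains to show that each tail $\int_{\norm{u}>C} q^{*k}(u)\,\text{d}u \rightarrow 0$ as $C\rightarrow\infty$. Normalizing, let $\bar{q}=q/m$ be a probability density and $V_1,\dots,V_k$ be i.i.d.\ with density $\bar{q}$, so that $\int_{\norm{u}>C} q^{*k}(u)\,\text{d}u = m^k\,\prob(\norm{V_1+\cdots+V_k}>C)$. Since $\{\norm{V_1+\cdots+V_k}>C\}\subseteq\bigcup_{i=1}^k\{\norm{V_i}>C/k\}$, a union bound yields $\prob(\norm{V_1+\cdots+V_k}>C)\leq k\,\prob(\norm{V_1}>C/k)$, and passing to polar coordinates the exponential bound on $q$ gives $\prob(\norm{V_1}>R) = m^{-1}\int_{\norm{u}>R}q(u)\,\text{d}u \leq C' R^{d-1}e^{-c_2 R}$ for $R$ large, which tends to zero. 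Summing over $k=1,\dots,K$ shows $\E[N_{>C}]\rightarrow 0$ as $C\rightarrow\infty$, proving the claim.

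The argument is essentially a first-moment estimate, so there is no serious obstacle; the only point requiring care is that the exponential tail of $q$ must survive the $k$-fold convolution, which is why I route the computation through the sum of i.i.d.\ displacements and a union bound rather than attempting a pointwise bound on $q^{*k}$. The finiteness of the number of generations $K$ is essential throughout, as it keeps the sum finite and lets the union bound lose only the factor $k$ in the per-jump distance.
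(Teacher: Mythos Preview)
Your argument is correct. The key observations are all sound: the kernel $\tilde p_1$ in \eqref{tilp_r} depends only on the spatial displacement, so the positional part of $\bm{B}_\kappa^K$ is a finite-depth branching random walk whose law is free of both $n$ and the root type; Markov's inequality on the count $N_{>C}$ reduces the problem to a first-moment computation; the mean measure at depth $k$ is the $k$-fold convolution $q^{*k}$; and the union bound over the $k$ individual jumps, combined with the exponential tail from \autoref{sparsity}(c), yields the required decay. The sum over $k=1,\dots,K$ is finite, so nothing is lost.

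The paper does not give a self-contained proof here; it simply invokes Lemma I.10 of \cite{leung2019normal} with the substitution $\bm{p}_{x,r_n}\mapsto\cc_{x,r_n}$. Judging from the closely related explicit bound in \autoref{MKN} (which cites the adjacent Lemmas I.11--I.12 of the same reference and produces a tail of the form $K\big(c_1\kappa\bar f\,(r/K)^d e^{-c_2 r/K}\big)^K$), the external argument is almost certainly the same first-moment/union-bound computation you carry out, so your route and the cited one should coincide in substance. Your write-up has the advantage of being self-contained; the paper's citation has the advantage of brevity.
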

\begin{proof}
  This follows from the proof of Lemma I.10 of \cite{leung2019normal}. Just replace $\bm{p}_{x,r_n}$ with $\cc_{x,r_n}$.
\end{proof}

%----------------------
\subsection{Distance Bounds}\label{sexpstab}
%----------------------

We use the branching process results to show that sizes of $K$-neighborhoods, sizes of relevant sets, and radii of stabilization have exponential tails. These results are used to verify \autoref{bpestab}. We assume the network is realized according to model \autoref{model_static_generic} with 
\begin{equation*}
  \X = \{\tilde\rho_i\}_{i=1}^N \cup \{x,y\},  
\end{equation*}

\noindent where $x,y\in\R^d$ and either $N=m$, a constant in $[n/2,3n/2]$ for every $n$, or $N \sim \text{Poisson}(n)$, independent of all other primitives. Recall from \autoref{staticproof} the definitions of $D_{p,p'}$ and $M_{p,p'}$, $\bm{D}$-components $C_p$, strategic neighborhoods $C_p^+$, $K$-neighborhoods $\mathcal{N}_{\bm{A}}(p,K)$, and relevant sets $J_p$.

The first lemma shows that the distance between the ego and any alter in her $K$-neighborhood in $\bm{A}$ has exponential tails.

\begin{lemma}[$K$-Neighborhoods]\label{MKN}
  Under \autoref{sparsity}, there exists $\tilde{n}>0$ such that
  \begin{equation*} 
    \sup_{n>\tilde{n}} \sup_{m\in[n/2,3n/2]} \sup_{x,y\in\R^d} \prob\left( \max_{p\in \mathcal{N}_{\bm{A}}(x,K)} r_n^{-1}\norm{x-p} > r \right) \leq 3K(c_1\kappa \bar{f} (r/K)^d e^{-c_2 r/K})^K 
  \end{equation*}

  \noindent for $K,r$ sufficiently large.
\end{lemma}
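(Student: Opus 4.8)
The plan is to bound the distance to the farthest agent in the $K$-neighborhood $\mathcal{N}_{\bm{A}}(x,K)$ by controlling the geometric spread contributed by each of the $K$ ``hops'' along a path in $\bm{A}$. First I would recall from \autoref{Jsd} that the size and geometric extent of $\mathcal{N}_{\bm{A}}(x,K)$ are stochastically dominated by the corresponding quantities of the fixed-depth branching process $\bm{B}_{r_n}^K(x,\bm{Z}(\cc_{x,r_n}x))$ defined in \eqref{Bxr}, whose offspring intensity is $\text{d}\tilde{\pi}_{r_n}$ from \eqref{Mintens}. Since each link in $\bm{A}$ corresponds to a generation in this branching process, the event $\{\max_{p\in\mathcal{N}_{\bm{A}}(x,K)} r_n^{-1}\norm{x-p} > r\}$ is contained (up to stochastic dominance) in the event that the accumulated displacement over $K$ generations exceeds $r$.

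The key decomposition is the triangle-inequality observation that if the total rescaled distance from $x$ to a $K$th-generation particle exceeds $r$, then at least one of the $K$ individual generation-to-generation displacements must exceed $r/K$. A union bound over the $K$ generations then reduces the problem to bounding the probability that a single offspring is born at rescaled distance greater than $r/K$ from its parent. For a single hop, the expected number of offspring at rescaled distance exceeding $\delr := r/K$ is, by the intensity \eqref{Mintens} and the change of variables to rescaled coordinates (using $n\omega_n^{-d}=\kappa$), bounded by $\kappa\bar{f}(1+r_n)\int_{\norm{p'}>\delr}\tilde p_1(\,\cdot\,;p')\,\text{d}p'$, where $\tilde p_1$ is the upper bound on the linking probability from \eqref{tilp_r}. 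By \autoref{sparsity}(c), $\tilde p_1(p,z;p',z')=\tilde\Phi_\zeta(\tilde V^{-1}(\norm{p-p'},0))\le c_1 e^{-c_2\norm{p-p'}}$, so integrating this exponentially decaying tail over $\{\norm{p'}>\delr\}$ in hyperspherical coordinates produces a bound of order $c_1\kappa\bar f\,\delr^{\,d}e^{-c_2\delr}$ (the polynomial factor $\delr^{\,d}$ coming from the surface area of the sphere of radius $\delr$, with the leading behavior dominated by the exponential). Feeding this per-hop bound through the union bound over $K$ generations and the branching structure gives the factor $K\,(c_1\kappa\bar f\,(r/K)^d e^{-c_2 r/K})^K$ after accounting for the $K$-fold product of hop probabilities along a path; the constant $3$ absorbs the overhead from the Poissonization / binomial coupling (the $(1+r_n)$ factor is at most a bounded constant for large $n$, and the supremum over $m\in[n/2,3n/2]$ and over $N\sim\text{Poisson}(n)$ is handled uniformly as in the stochastic-dominance step).

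I would take the suprema over $n>\tilde n$, $m\in[n/2,3n/2]$, and $x,y\in\R^d$ at the end, noting that the branching-process bound from \autoref{Jsd} together with the intensity \eqref{Mintens} is uniform in the centering position $x$ (positions enter only through differences) and in $y$, and that for $n$ large we have $r_n\le\kappa$ so that $(1+r_n)$ is uniformly bounded. The requirement that $K,r$ be ``sufficiently large'' is what makes the per-hop probability small enough that the $K$th power is meaningful and the polynomial prefactor is dominated.

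The main obstacle I anticipate is the bookkeeping that converts the single-hop tail bound into the clean product form $(c_1\kappa\bar f(r/K)^d e^{-c_2 r/K})^K$: one must verify that the geometric displacement along a path of length $K$ genuinely factorizes into $K$ independent hop contributions under the branching-process law, that the union bound over which hop is ``large'' yields the leading factor $K$ (rather than, say, a combinatorial factor growing with the branching width), and that the polynomial surface-area term is correctly tracked through the change of variables. This is essentially the content of the argument in Lemma I.10 of \cite{leung2019normal} (cf.\ \autoref{Mdist}), adapted to the rescaled positions via $\cc_{x,r_n}$ in place of $\bm{p}_{x,r_n}$, so I would lean on that machinery and the exponential tail of the branching-process size (\autoref{MKexptail}) to control the width, reserving the explicit per-hop integral computation as the one genuinely new calculation.
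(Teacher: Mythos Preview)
Your argument has a genuine gap at the step where you combine the pigeonhole observation with the ``$K$-fold product of hop probabilities.'' The pigeonhole step says that if a $K$th-generation particle lies at distance $>r$ from $x$, then \emph{at least one} of the $K$ hops along that path has length $>r/K$; a union bound over which hop is large then contributes the factor $K$. But the remaining $K-1$ hops along that same path are unconstrained---they merely need to exist. Writing $\beta \equiv c_1\kappa\bar f\int_{\|q\|>r/K} e^{-c_2\|q\|}\,dq$ for the per-hop tail and $\mu \equiv c_1\kappa\bar f\int_{\R^d} e^{-c_2\|q\|}\,dq$ for the full expected offspring count, your route actually produces a bound of order $K\,\mu^{K-1}\beta$, whose exponential decay in $r$ is only $e^{-c_2 r/K}$. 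It does \emph{not} yield $K\beta^K$ with decay $e^{-c_2 r}$: there is no mechanism in a union-bound argument that replaces the $K-1$ unconstrained hop factors $\mu$ by tail factors $\beta$. Since $\mu/\beta\to\infty$ as $r\to\infty$ for fixed $K$, the pigeonhole bound is strictly weaker than the stated one and cannot recover it.

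The bound in the lemma (which the paper obtains via Lemmas I.11--I.12 of \cite{leung2019normal}, not I.10 as you reference) is reached instead by bounding the probability, via Markov, by the expected number of length-$K$ paths from $x$ whose endpoint lies at distance $>r$, and evaluating that expectation directly. After the Mecke/Campbell formula and the change of variables to increments $q_j=p_j-p_{j-1}$, the expectation is at most
\[
(c_1\kappa\bar f)^K\int_{\R^{dK}}\bm{1}\Big\{\textstyle\sum_{j=1}^K\|q_j\|>r\Big\}\prod_{j=1}^K e^{-c_2\|q_j\|}\,dq_j.
\]
The point is that $\prod_j e^{-c_2\|q_j\|}=e^{-c_2\sum_j\|q_j\|}$, so the constraint $\sum_j\|q_j\|\ge\|\sum_j q_j\|>r$ delivers the full $e^{-c_2 r}$ in one stroke; the polynomial $(r/K)^{dK}$ then arises from the upper tail of the sum of $K$ independent radial $\Gamma(d,c_2)$ variables (Stirling on $\Gamma(dK)$ supplies the $K^{-dK}$). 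That is where the $K$th power in the displayed bound originates---not from multiplying $K$ per-hop tail probabilities.
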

\begin{proof}
  This follows from the proofs of Lemmas I.11 and I.12 in \cite{leung2019normal}.
\end{proof}

%-----------------------
\begin{lemma}\label{DC}
  Suppose there exist constants $\tilde{n},b_1,b_2,\epsilon_1>0$ such that
  \begin{equation}
    \sup_{n>\tilde{n}} \sup_{m\in[n/2,3n/2]} \sup_{x,y\in\R^d} \prob\left( \abs{J_x} > r \right) \leq b_1 e^{-b_2 r^{\epsilon_1}}.
    \label{AJexp}
  \end{equation}
  
  \noindent Then under \autoref{sparsity}, there exist constants $a_1,a_2,\epsilon_2>0$ such that
  \begin{equation*}
    \sup_{n>\tilde{n}} \sup_{m\in[n/2,3n/2]} \sup_{x,y\in\R^d} \prob\left( \max_{p\in J_x} r_n^{-1}\norm{x-p} > r \right) \leq a_1e^{-a_2r^{\epsilon_2}}. 
  \end{equation*}
\end{lemma}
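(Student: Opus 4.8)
The plan is to prove that an exponential tail on the size of the relevant set $J_x$ transfers to an exponential tail on the spatial radius $\max_{p\in J_x} r_n^{-1}\norm{x-p}$. The key observation, already used at the end of the proof of \autoref{static_xistab} (see the passage around \autoref{ltpJi}), is that under the event $\abs{J_x}\leq B$ we automatically have $J_x \subseteq \mathcal{N}_{\bm{A}}(x,B)$, because $J_x$ is built from agents indirectly connected to $x$ through $\bm{D}$ or $\bm{A}$, and $\bm{D}$ is a subnetwork of $\bm{A}$; hence every element of a relevant set of cardinality at most $B$ lies within path distance $B$ of $x$ in $\bm{A}$. This lets me bound the spatial radius by the spatial radius of a $B$-neighborhood, for which \autoref{MKN} already supplies an exponential tail.

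First I would decompose, by the law of total probability, for a threshold $B$ to be chosen as a function of $r$,
\begin{equation*}
  \prob\left( \max_{p\in J_x} r_n^{-1}\norm{x-p} > r \right) \leq \prob\left( \abs{J_x} > B \right) + \prob\left( \max_{p\in J_x} r_n^{-1}\norm{x-p} > r \medcap \abs{J_x} \leq B \right).
\end{equation*}
The first term is controlled directly by the hypothesis \autoref{AJexp}. For the second term I would use the inclusion $J_x \subseteq \mathcal{N}_{\bm{A}}(x,B)$ valid on $\{\abs{J_x}\leq B\}$ to write
\begin{equation*}
  \prob\left( \max_{p\in J_x} r_n^{-1}\norm{x-p} > r \medcap \abs{J_x} \leq B \right) \leq \prob\left( \max_{p\in \mathcal{N}_{\bm{A}}(x,B)} r_n^{-1}\norm{x-p} > r \right),
\end{equation*}
and then bound the right-hand side using \autoref{MKN} with $K=B$, all uniformly over $n>\tilde{n}$, $m\in[n/2,3n/2]$, and $x,y\in\R^d$.

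The crux is the calibration of $B$ against $r$. From \autoref{AJexp} the first term is $\leq b_1 e^{-b_2 B^{\epsilon_1}}$, while from \autoref{MKN} the second is (up to constants) of order $3B\big(c_1\kappa\bar{f}(r/B)^d e^{-c_2 r/B}\big)^B$, whose dominant behavior is governed by $\exp\{-c_2 r + dB\log(r/B)\}$. The plan is to set $B = r^{\delta}$ for a suitable $\delta\in(0,1)$ so that both terms are stretched-exponentially small in $r$: with $B=r^\delta$ the first term is $\exp\{-b_2 r^{\delta\epsilon_1}\}$, and in the exponent of the second term the linear piece $-c_2 r$ dominates the correction $dB\log(r/B) = O(r^\delta \log r)$, leaving the second term of order $\exp\{-c_2 r/2\}$ for $r$ large. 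Taking the larger of the two decay rates and absorbing constants, I obtain the claimed bound $a_1 e^{-a_2 r^{\epsilon_2}}$ with $\epsilon_2 = \min\{\delta\epsilon_1, 1\}$, uniformly in the relevant parameters.

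The main obstacle I anticipate is purely bookkeeping around uniformity and the regime of validity: \autoref{MKN} holds only for $K$ and $r$ ``sufficiently large,'' so I must verify that the choice $B=r^\delta$ keeps both $B$ and $r/B$ in the admissible range as $r\to\infty$, and that the constants from \autoref{AJexp} and \autoref{MKN} can be combined into a single pair $(a_1,a_2)$ that works uniformly over $n>\tilde{n}$, $m\in[n/2,3n/2]$, and $x,y\in\R^d$. There is also a minor point that \autoref{MKN} is stated for the $K$-neighborhood of a generic agent $x$ under model \autoref{model_static_generic} with $\X=\{\tilde\rho_i\}_{i=1}^N\cup\{x,y\}$, which matches exactly the setup assumed here, so no additional argument is needed to align the two models. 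Everything else is routine estimation of the two stretched exponentials.
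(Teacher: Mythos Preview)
Your proposal is correct and follows the natural route: split on $\{\abs{J_x}>B\}$, use the inclusion $J_x\subseteq\mathcal{N}_{\bm A}(x,B)$ on the complement (exactly the step justified around \autoref{ltpJi}), invoke \autoref{MKN} with $K=B$, and calibrate $B=r^\delta$ so that both stretched-exponential terms are summable. The paper itself does not give an argument here, simply citing Lemma~I.13 of \cite{leung2019normal}; your approach uses precisely the ingredients the paper makes available (\autoref{MKN} and the $J_x\subseteq\mathcal{N}_{\bm A}(x,\abs{J_x})$ inclusion) and is almost certainly what that referenced lemma does.
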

\begin{proof}
  This is Lemma I.13 of \cite{leung2019normal}.
\end{proof}

%-----------------------
\begin{lemma}\label{JCN} 
  Under Assumptions \ref{dfrag}, \ref{sparsity}, and \ref{dreg}, there exist positive constants $\tilde{n},b_1,b_2,\epsilon$ such that
  \begin{equation*}
    \sup_{n>\tilde{n}} \sup_{m\in[n/2,3n/2]} \sup_{x,y\in\R^d} \prob\left( \abs{J_x(\tau_{x,r_n}\tilde\X_n \cup\{x,y\}, \W\cc_{x,r_n})} > r \right) \leq b_1 e^{-b_2 r^\epsilon}. 
  \end{equation*}
\end{lemma}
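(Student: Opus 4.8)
The plan is to reduce the claim to a tail bound on the dominating branching process $\hat{\mathfrak{X}}^K_{r_n}$ and then show that this nested process inherits uniform exponential tails from its constituent layers. First, to dispose of the second special point, I would apply the add/remove inclusion of \autoref{remove} --- whose proof is purely combinatorial (about directed paths in $\bm{D}$ and paths in $\bm{A}$) and hence applies verbatim to any two distinct points of the ambient set, here $x$ and $y$ --- to get $J_x(\tau_{x,r_n}\tilde\X_n\cup\{x,y\}) \subseteq J_x(\tau_{x,r_n}\tilde\X_n\cup\{x\}) \cup J_y(\tau_{x,r_n}\tilde\X_n\cup\{y\})$, so that $\abs{J_x(\tau_{x,r_n}\tilde\X_n\cup\{x,y\})}$ is bounded by the sum of two single-special-point relevant sets. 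By the translation-invariance argument in the proof of \autoref{static_xistab}, each of these has the law of a relevant set of the added point in a one-point model of the form \eqref{Ms}, to which \autoref{Jsd} applies, giving $\prob(\abs{J_p(\tau_{x,r_n}\X,\W\cc_{x,r_n})}>r)\le\prob(\hat{\mathfrak{X}}^K_{r_n}(p,\bm{Z}(\cc_{x,r_n}p))>r)$ for $n$ large. Since the dominating process has intensities \eqref{Dintens}--\eqref{Mintens} depending on the configuration only through $r_n$ (the $(1+r)$ cushion there absorbing the binomial range $m\in[n/2,3n/2]$ as well as the Poisson case), the right-hand side does not see $m$ or $N$, so uniformity over $m$ and over binomial/Poisson is automatic; the $\sup_H$ present in \autoref{binexpstab} is likewise covered because deleting points only shrinks $J_x$ by the monotonicity \eqref{Jsub}. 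It therefore suffices to prove $\prob(\hat{\mathfrak{X}}^K_{r_n}(p,z)>r)\le b_1 e^{-b_2 r^\epsilon}$ uniformly over $n>\tilde n$ and $(p,z)\in\mathcal{T}$.

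Next I would exploit the three-layer construction of $\hat{\mathfrak{X}}^K_{r_n}$ recalled in \autoref{bpsd}: an outer $K$-generation process $\bm{B}^K_{r_n}$ with $\abs{\bm{B}^K_{r_n}}\stackrel{d}{=}\tilde{\mathfrak{X}}^K_{r_n}$, a middle layer of independent $\bm{D}$-component processes $\mathfrak{X}_{r_n}$ attached to each outer particle, and an inner layer of $1$-depth processes $\tilde{\mathfrak{X}}^1_{r_n}$. The two exponential-moment inputs are \autoref{MKexptail}, which supplies some $\alpha>1$ with $\sup_{r\le\kappa}\sup_{(p,z)\in\mathcal{T}}\E[\alpha^{\tilde{\mathfrak{X}}^K_{r}(p,z)}]<\infty$ for every fixed depth (covering both the outer $K$-depth process and the inner $1$-depth process), and \autoref{Dexptail}, which supplies $\sup_{(p,z)\in\mathcal{T}}\E[\alpha^{\mathfrak{X}_{r}(p,z)}]<\infty$ for $r$ small. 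Because $r_n=\omega_n^{-1}\to 0$, both hold at $r=r_n$ for all large $n$, and both are uniform over the type space $\mathcal{T}$, which is precisely the uniformity I need when taking the supremum over the (random) ego type $\bm{Z}(\cc_{x,r_n}p)$ and over $x,y$.

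The crux is the compounding argument that a bounded-depth nesting of processes each carrying a uniform finite exponential moment again carries one, hence has a genuine exponential tail. Working inside out, I would first treat $\mathfrak{X}^{23}_{r_n}(x',z')=\mathfrak{X}_{r_n}(x',z')+\sum_{(x'',z'')\in\mathcal{T}_{r_n}(x',z')}\tilde{\mathfrak{X}}^1_{r_n}(x'',z'')$ from the proof of \autoref{static_xistab}: conditioning on the middle process and using the conditional independence $\tilde{\mathfrak{X}}^1_{r_n}(x'',z'')\indep\mathcal{T}_{r_n}(x',z')\mid(x'',z'')$ built into the construction, the generating function factorizes as $\E[\beta^{\mathfrak{X}^{23}_{r_n}(x',z')}]=\E[(\beta\,\E[\beta^{\tilde{\mathfrak{X}}^1}])^{\mathfrak{X}_{r_n}(x',z')}]$; since $\beta\,\E[\beta^{\tilde{\mathfrak{X}}^1}]\to 1<\alpha$ as $\beta\to 1^+$, choosing $\beta>1$ close enough to $1$ keeps the base below the threshold $\alpha$ of \autoref{Dexptail} and yields a uniform finite exponential moment for $\mathfrak{X}^{23}_{r_n}$. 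I would then repeat the same factorization for $\hat{\mathfrak{X}}^K_{r_n}=\sum_{(x',z')\in\bm{B}^K_{r_n}}\mathfrak{X}^{23}_{r_n}(x',z')$, using the uniform exponential moment of $\abs{\bm{B}^K_{r_n}}\stackrel{d}{=}\tilde{\mathfrak{X}}^K_{r_n}$ from \autoref{MKexptail}, to obtain a uniform finite exponential moment of $\hat{\mathfrak{X}}^K_{r_n}$, and conclude by Markov's inequality. This delivers the bound with $\epsilon=1$; stating it for a general $\epsilon\le 1$ only weakens the conclusion and matches the $\exp\{-c r^\epsilon\}$ form required in \autoref{binexpstab}.

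The main obstacle I anticipate is holding the coupling constants together across the layers while preserving uniformity in $n$ and in the particle types: each compounding step shrinks the admissible base $\beta>1$, so I must verify that after a bounded number (three, for fixed $K$) of compositions the required $\beta$ stays strictly above $1$. This is exactly where the uniformity of \autoref{MKexptail} and \autoref{Dexptail} over $\mathcal{T}$ and over $r\le\kappa$ is indispensable: a uniform exponential-moment bound $\sup_{(p,z),r}\E[\alpha^{\cdot}]<\infty$ forces $\sup_{(p,z),r}\E[\beta^{\cdot}]\to 1$ \emph{uniformly} as $\beta\to 1^+$ (split the expectation at a large truncation level and bound the tail by the uniform moment), so the admissible base can be chosen uniformly. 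A secondary point to handle carefully is that the middle and inner subprocesses are attached to particles whose types are random outputs of the preceding layer; the factorizations above are legitimate precisely because, conditional on a parent's type, each child subtree is independent of the remainder of the tree, which is the independence structure of the construction in \autoref{introbp}.
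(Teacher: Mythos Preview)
The paper's ``proof'' of this lemma is a one-line citation to Lemma I.14 of \cite{leung2019normal}, so there is no in-paper argument to compare against at the level of detail you give. Your reconstruction is the natural one and almost certainly coincides with what that external lemma does: dominate $|J_x|$ by the nested branching process $\hat{\mathfrak{X}}^K_{r_n}$ via \autoref{Jsd}, then propagate the uniform finite exponential moments supplied by \autoref{Dexptail} and \autoref{MKexptail} through the three layers by MGF factorization, exploiting the conditional-independence structure built into the construction in \autoref{introbp}. The preliminary reduction to single-special-point configurations via \autoref{remove} is a clean way to place yourself in the setting of \autoref{Jsd}, and the compounding argument you outline (choose $\beta>1$ small enough at each stage so that the inner MGF stays below the outer threshold $\alpha$) is correct; your observation that a uniform bound $\sup\E[\alpha^X]<\infty$ forces $\sup\E[\beta^X]\to 1$ uniformly as $\beta\to 1^+$ is exactly what makes the stage-wise shrinkage of $\beta$ harmless for a bounded number of layers.

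One small point to tighten: your claim that the $(1+r)$ factor in the intensities \eqref{Dintens}--\eqref{Mintens} ``absorbs the binomial range $m\in[n/2,3n/2]$'' is not quite right as stated, since that factor is evaluated at $r=r_n\to 0$ and therefore does not by itself cover the factor $m/n\le 3/2$ in the expected degree. The fix is routine---dominate instead by the process with intensity constant $\tfrac{3}{2}\kappa\bar f$ (equivalently, with $r$ fixed at some value in $(0,1/2]$), and note that \autoref{MKexptail} is stated for all $r\le\kappa$ while \autoref{Dexptail} holds for $r$ in a neighborhood of zero thanks to the strict inequality $\norm{h}_{\bm{m}}<1$ in \autoref{dfrag}, which leaves slack for a fixed multiplicative cushion. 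With that adjustment your argument goes through and delivers the bound with $\epsilon=1$.
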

\begin{proof}
  This is Lemma I.14 of \cite{leung2019normal}.
\end{proof}

%----------------------------------------------------------------------

\end{document}